\documentclass[aps,10pt,onecolumn,amsmath,amssymb,floatfix,notitlepage,a4paper,prd,nofootinbib,superscriptaddress]{revtex4-2}
\usepackage{amssymb,amsthm,amsfonts,amstext,amsmath}
\usepackage[english]{babel}
\usepackage{graphicx}
\usepackage{float}
\usepackage{xcolor}
\usepackage{array}
\usepackage{tcolorbox}
\usepackage{enumitem}
\usepackage{subcaption}
\usepackage{aliascnt}
\usepackage{hyperref}
\usepackage{cleveref}
\usepackage{orcidlink}

\crefname{appsec}{appendix}{appendices}
\Crefname{appsec}{Appendix}{Appendices}
\crefname{table}{problem}{problems}
\Crefname{table}{Problem}{Problems}

\def\be{\begin{equation}}
\def\ee{\end{equation}}
\def\bea{\begin{eqnarray}}
\def\eea{\end{eqnarray}}

\def\bma{\begin{mathletters}}
\def\ema{\end{mathletters}}
\def\P{{\cal P}}
\def\A{{\cal A}}

\def\q0{\underline{0}}

\def\H{{\cal H}}
\def\T{{\cal T}}
\def\P{{\cal P}}

\def\S{{\cal S}}
\def\C{{\mathbb C}}
\def\id{{\mathbb I}}

\def\EE{\mathbb{E}}
\def\OO{{\cal O}}
\def\W{{\cal W}}
\def\V{\mathbb{V}}

\def\M{{\cal M}}
\def\H{{\cal H}}
\def\AA{I}
\def\U{{\cal U}}
\def\K{{\cal K}}
\def\J{{\cal J}}
\def\B{{\cal B}}
\def\R{\mathbb{R}}
\def\RR{{\cal R}}
\def\Z{\mathbb{Z}}
\def\CC{{\cal C}}

\def\N{\mathbb{N}}
\def\NN{{\cal N}}

\def\tr{\mbox{tr}}
\def\W{{\cal W}}
\def\Q{{\cal Q}}

\def\one{\leavevmode\hbox{\small1\normalsize\kern-.33em1}}

\def\norm#1{\lVert {#1} \rVert}
\def\supp{\operatorname{supp}}
\def\chull{\operatorname{ch}}

\def\slim#1{\underset{#1}{\operatorname{s-lim}}\,}
\def\sstarlim#1{\underset{#1}{\operatorname{s*-lim}}\,}

\def\Al{\hat{\cal A}}

\def\Int#1{\mathrm{int}(#1)} 

\def\EE#1#2{\langle #1, E#2 \rangle} 

\def\bra#1{\langle#1|} \def\ket#1{|#1\rangle}
\def\braket#1#2{\langle#1|#2\rangle}

\def\proj#1{\ket{#1}\!\bra{#1}}

\def\id{{\mathbb I}}

\usepackage{soul}

\newtheorem{theorem}{Theorem}

\newtheorem{defin}[theorem]{Definition}

\newtheorem{remark}[theorem]{Remark}

\newtheorem{lemma}[theorem]{Lemma}

\newtheorem{prop}[theorem]{Proposition}

\newtheorem{corollary}[theorem]{Corollary}

\begin{document}
\title{Causality and realizability of local operations in quantum field theory}

\author{\orcidlink{0000-0003-4827-1806} Jan Mandrysch}
\affiliation{Institute for Quantum Optics and Quantum Information (IQOQI), Austrian Academy of Sciences, Boltzmanngasse 3, A-1090 Vienna, Austria}
\author{\orcidlink{0009-0003-5570-3512} Robin Simmons}
\affiliation{University of Vienna, Faculty of Physics, Vienna Doctoral School in Physics, and Vienna Center for Quantum Science and Technology (VCQ), Boltzmanngasse 5, A-1090 Vienna, Austria}
\affiliation{Institute for Quantum Optics and Quantum Information (IQOQI), Austrian Academy of Sciences, Boltzmanngasse 3, A-1090 Vienna, Austria}
\author{\orcidlink{0000-0003-0717-3927} Miguel Navascu\'es}
\affiliation{Institute for Quantum Optics and Quantum Information (IQOQI), Austrian Academy of Sciences, Boltzmanngasse 3, A-1090 Vienna, Austria}

\begin{abstract}
As noted by Sorkin, regarding quantum instruments whose Kraus operators are localizable within some spacetime region as operations accessible therein leads to superluminal communication. This so-called Sorkin paradox can be resolved by further constraining the set of allowed local operations in quantum field theory (QFT). In this spirit, Fewster and Verch proposed a framework for local QFT operations that generalizes non-relativistic quantum measurement theory and does not lead to Sorkin-like paradoxes. Shortly afterwards, Jubb (and later Oeckl) identified the minimal conditions that QFT instruments must satisfy to be compatible with Einstein's causality. In this work, we study both approaches in the quantum field theory of the free scalar field. First, we prove that a very wide class of causal instruments is FV-realizable: namely, those whose measurement channels are random displacements of the field operators. As we show, this class allows implementing arbitrary instruments in a heralded, probabilistic way, as well as non-demolition measurements deterministically. Second, we construct examples of causal channels that do not admit an approximate FV realization. Some of such causal, not FV-realizable channels violate basic physical principles, so they should not be part of any measurement theory for QFT. Third, we investigate the difficulty of characterizing the set of QFT channels that can be generated through the composition of several FV schemes. In this regard, we find a countable family of simple QFT channels for which no Turing machine can discriminate between channels within or far away from the implementable set.\end{abstract}

\maketitle

\tableofcontents

\section{Introduction}
 Algebraic quantum field theory (AQFT) seeks to provide a general framework to describe quantum fields within a curved spacetime, by positing a correspondence between regions of spacetime and local operator algebras. Traditionally, causality has been enforced by demanding that local algebras corresponding to causally disconnected spacetime regions commute. This axiom, microcausality, has indeed the effect of guaranteeing Einstein's causality in experiments involving two parties conducting operations on two causally disconnected spacetime regions \cite{HK70a,HK70b}. However, as pointed out by Sorkin, microcausality alone is not enough to prevent violations of causality in experiments where operations defined on three or more spacetime regions are at play \cite{Sorkin1993-SORIMO}. To construct his acausal example, which nowadays we refer to as Sorkin's paradox, Sorkin implicitly assumed that any quantum instrument with Kraus operators belonging to a local algebra was \emph{realizable} within its associated spacetime region. As shown in \cite{MuchVerch23}, similar assumptions allow one to produce signalling effects in classical field theories as well. In both scenarios, the take-home message is that the set of realizable local operations must be constrained beyond microcausality to ensure consistency with general relativity.

In this direction, Fewster and Verch proposed in 2020 a covariant measurement model for quantum fields \cite{local_meas_QFT}. In their framework --the FV framework--, a feasible local operation (or quantum instrument) is specified through a local interaction between the target QFT and a probe QFT, followed by a measurement of the latter. Remarkably, communication networks that integrate quantum instruments admitting an FV realization avoid superluminal signalling \cite{impossible_meas}. 

Building on \cite{local_meas_QFT, impossible_meas}, Jubb noticed that demanding instruments to fulfill the so-called \emph{past support non-increasing (PSNI)} property is sufficient to preclude the occurrence of Sorkin-like effects \cite{Jubb_2022}. Furthermore, for scalar quantum fields compliance with the PSNI property is \emph{necessary} to avoid violations of causality. Jubb also noted that unitaries generated by the product of two smeared fields are, in general, acausal; contrarily, displacements and noisy measurements of the smeared quantum field are causal (i.e., they satisfy PSNI). This last point is also corroborated in the more recent paper \cite{Oeckl_2026}, where Oeckl calls the PSNI condition \emph{causal transparency}. 

The PSNI property provides a straightforward criterion to certify the non-causal nature of a given QFT instrument. In contrast, certifying that a given instrument is not FV-realizable is a non-trivial task: it amounts to proving that there do not exist a probe, a target-probe interaction and a probe measurement that generate the desired effective instrument on the target. For this reason, a characterization of the set of instruments available through FV schemes has so far been elusive.

Nonetheless, there has been some progress. In \cite{Fewster2023} it is shown that FV-realizable measurements are tomographically complete, i.e., they suffice to single out the underlying local QFT state. The more recent work of \cite{fewster2025measurementpreparationprotocolsquantum} goes beyond: it shows that, for Klein-Gordon scalar fields, any Positive Operator Valued Measure (POVM) localizable inside a bounded spacetime region admits an FV realization within a slightly larger region \cite{fewster2025measurementpreparationprotocolsquantum}. The interaction proposed in this work swaps the local algebras of target and probe fields in the region where the POVM elements are localized. A subsequent measurement of the probe with said POVM thus induces on the target the desired POVM. It has to be noted, though, that the protocol effectively defines a demolition measurement, since the target's local state prior to the measurement is fully destroyed thereafter. In this regard, a prior paper by two of us had established that, at least, nondemolition measurements of the smeared field are asymptotically FV realizable \cite{field_meas}. 

The above works leave several open problems. One of them is whether FV schemes allow realizing arbitrary POVMs in a non-demolition way. Another question is whether all causal instruments admit an FV realization. Since the FV framework is perceived to be ``more physical'' than just demanding causality, any causal, non-FV instrument would be viewed with suspicion. Finally, we lack a systematic method or algorithm to tell whether a given quantum instrument admits or not an FV realization. Does such a method even exist?

In this work, we solve these three problems. Namely, we prove that any local POVM admits an asymptotic FV realization which is non-demolition, that there exist causal QFT channels that cannot be approximated through an FV scheme and that determining which channels can is an undecidable problem.

The key to all these results is a family of QFT operations that we introduce here: the class of \emph{Weyl instruments}, i.e., instruments whose Kraus operators are elements of the Weyl $\star$-algebra. Each such Weyl instrument is specified through a finite number of parameters; the problem of determining if a given Weyl instrument is causal or admits an FV realization can thus be formulated as a computational problem with a finite-sized input.

With regards to causality, we characterize the set of causal Weyl instruments and show that it is closed under composition within a communication network. This legitimizes the intuition, formulated in \cite{Jubb_2022}, that the outcome of a causal instrument with interaction zone $\OO$ is accessible to any observer acting in the total future\footnote{The total future of $\OO$ is the set of spacetime points which lie in the common causal future over all points in $\OO$.} of $\OO$. We also single out a particular class of causal Weyl instruments: the \emph{diagonal} Weyl instruments, whose measurement channels are convex combinations of field displacements. Remarkably, among the set of Weyl instruments defined over a finite-dimensional, `generic' subspace $\Q$ of smeared fields, all causal Weyl instruments are diagonal. Moreover, for general $\Q$, any Positive Operator Valued Measure (POVM) whose operators belong to the von Neumann algebra generated by  $e^{i\Q}$ is asymptotically implementable through a sequence of diagonal Weyl instruments that act trivially on any smeared field commuting with $\Q$.

We next turn to FV realizability. We show that all diagonal instruments admit an asymptotic FV realization in Klein-Gordon fields, and so any POVM can be asymptotically FV-realized in a non-demolition way. 

By contrast, we also produce concrete examples of causal Weyl channels that cannot be reproduced through asymptotic FV schemes, or compositions thereof. We find such counterexamples by exploiting the connection, first pointed out in \cite{Gisin_2024}, between feasible QFT operations and \emph{entanglement-assisted} bipartite quantum operations. The latter are known to produce so-called \emph{quantum} correlations and were proven to be a strict subset of the set of \emph{non-signalling} bipartite instruments by showing that the non-signalling set allows generating supra-quantum correlations \cite{causal_vs_localizable}. Following this idea, we construct a causal Weyl instrument that, if realized, would allow two separate parties to produce correlations impossible to approximate through bipartite quantum systems. On the contrary, we show that FV-realizable channels can only be used to generate quantum correlations. Thus, our channel cannot be approximated through FV schemes.

We next propose a model to integrate FV schemes within a communication network and consider the problem of deciding whether a Weyl channel is FV-network-implementable, or otherwise far away from being FV-network-implementable. Assuming that the split property holds for all QFTs involved, we show this problem to be uncomputable for a countable class of Weyl channels. We do so by exploiting once more the connection with quantum nonlocality: since the approximate characterization of quantum correlations is undecidable \cite{mipre}, so is the approximate characterization of FV-network-implementable Weyl channels.

Finally, we link the physical realizability of a QFT instrument with an old research program in quantum foundations: the derivation of the set of bipartite quantum correlations from physical principles \cite{Popescu1994-POPQNA,non_trivial}. As we show, the correlations generated in Bell experiments in a scalar QFT where all causal operations were realizable would only be limited by Einstein's causality. Since some such causal correlations violate a number of physical principles
\cite{no_advantage, inf_causality,mac_loc, local_orth}, said QFT would be unphysical. In other words: the set of causal instruments is too large to be the basis for a QFT measurement theory. This leaves the FV framework as the only standing prescription --for the time being-- to define local operations in QFT.

The structure of the paper is as follows: in \Cref{sec:prelim} we fix our notation and collect all ingredients from the literature needed to formulate and prove our results. In \Cref{sec:weylops}, given an abstract $\star$-Weyl algebra, we introduce the classes of Weyl and diagonal Weyl instruments and discuss how these approximate operator-POVMs and -instruments. Our main results are given in \Cref{sec:causality,sec:implementability}, where we assume the setting of a free scalar QFT on a, possibly curved, spacetime. In \Cref{sec:causality}, we characterize \emph{causal} Weyl instruments, prove that they are closed under composition and explain in which sense diagonal Weyl instruments are generic among the set of causal Weyl instruments. In \Cref{sec:implementability}, we prove that all diagonal Weyl instruments are FV-realizable, that there exist causal channels impossible to approximate by FV-realizable ones, and that telling whether one channel can be approximated by composing several FV schemes is an undecidable problem. In \Cref{sec:discussion}, we argue, in the light of the results in \Cref{sec:implementability}, that certain local causal channels must be unphysical in any reasonable QFT measurement theory. Finally, in \Cref{sec:conclusion} we present our conclusions.

\section{Preliminaries}\label{sec:prelim}
In this section, we briefly introduce basic notions of the AQFT framework, quantum measurement theory, causal QFT operations, the FV formalism, and Bell nonlocality. The first three topics are indispensable to understand the paper. The FV formalism is the subject of \Cref{sec:implementability}, while quantum nonlocality is only required to follow \Cref{sec:causal_not_FV,sec:undecidability}.

\subsection{Spacetime and local regions}\label{subsec:spacetime}
Let us first fix some basic notation. Given a topological space and a set $\RR$ of points thereof, we denote by $\Int{\RR}$ and $\overline{\RR}$ its interior and closure, respectively. A regular closed set of points is the closure of an open set.

Let $\M$ be a spacetime manifold\footnote{Technically, this takes $\M$ to be a oriented and time-oriented smooth paracompact nonempty manifold with finitely many connected components. If needed, its metric is denoted by $g_{\mu\nu}$ and assumed to have signature $(+,-,...,-)$.}. Given an arbitrary set ${\cal R}\subset\M$, we denote by $J^{+}({\cal R})$ ($J^{-}({\cal R})$) its causal future (past). Similarly, we define $N^{+}(\RR)$ ($N^{-}(\RR)$) as the set of points $x\in\M$ for which there exists a past-(future-)directed light-like curve from $x$ to some $y\in\RR$, but no time-like curve connecting $x$ and $y$\footnote{In \cite{Min19} $N^\pm(\RR)$ is denoted as $\mathcal{E}^\pm(\RR)$. Note also that in view of \cite[Thm.~2.22]{Min19} any curve as given in the definition must be an achronal light-like geodesic (up to reparametrization).}. We will also make use of $J({\cal R}):=J^+({\cal R})\cup J^-({\cal R})$ and $N({\cal R}):=N^+({\cal R})\cup N^-({\cal R})$. The in-region $\RR^-$, respectively, out-region $\RR^+$ with respect to $\RR$ is the set of points in $\M$ outside the causal future/past of $\RR$, i.e., $\RR^\pm := \M \setminus \overline{J^\mp(\RR)}$. Similarly, the causal complement of $\RR$ is given as $\RR^\perp = \RR^+ \cap \RR^- = \M \setminus \overline{J(\RR)}$. Its future (past) Cauchy development $D^+({\cal R})$ ($D^-({\cal R})$) is the set of points $x\in\M$ such that any past-inextendible (future-inextendible) causal curve starting in $x$ intersects $\RR$. We denote by $D({\cal R})=D^+({\cal R})\cup D^-({\cal R})$ the Cauchy development of ${\cal R}$. The total future (past) of $\RR$, denoted as $\RR^\vee$ ($\RR^\wedge$), is the set of all $x\in\M$ that are in the causal future (past) of every point in $\RR$. We say that two sets $\RR,\RR'\subset\M$ are causally disconnected (denoted as $\RR\perp \RR'$) when $\RR\subset(\RR')^\perp$. They are causally orderable if $J^+(\RR)\cap J^-(\RR')=\emptyset$ or $J^+(\RR')\cap J^-(\RR)=\emptyset$ holds: in the first case, we write $\RR\succ \RR'$; in the second, $\RR \prec \RR'$; and if both hold then $\RR \perp \RR'$ follows. We say that a set ${\cal R}\subset \M$ is causally convex if any causal curve starting in $x\in {\cal R}$ and ending in $y\in {\cal R}$ is fully contained in $\RR$. The causal hull $\chull({\cal R})$ of ${\cal R}$ is the smallest causally convex set containing ${\cal R}$.
If ${\cal R}$ is nonempty, open, causally convex and has finitely many connected components, we call it a \emph{region}. A region is called bounded if its closure is compact.

A Cauchy surface is any surface with the property that any time-like inextendible curve intersects it at exactly one point. In the following, we will assume the given spacetime $\M$ to be globally hyperbolic, i.e., it contains no closed time-like curves and the causal hull of any compact set is compact. Equivalently, the spacetime $\M$ admits a foliation in terms of Cauchy surfaces $(t,\Sigma_t)_{t \in \mathbb{R}}$. Note that our conventions imply that, if $\M$ is globally hyperbolic, then any region ${\cal R} \subset \M$ defines a globally hyperbolic spacetime in its own right. Further, if $\RR \subset \M$ is a region, its causal future/past, causal complement, future/past Cauchy development, the interior of its total future/past define regions as well. More explicitly: $J^\pm(\RR)$ and $J(\RR)$ are causally convex by definition. Then, if $\RR \subset \M$ is open, $J^\pm(\RR)$ and $J(\RR)$ are open (see e.g. \cite[Remark~2.1(a), Lemma~2.3]{BR07}). Next, for any subset $\RR \subset \M$, $\RR^\perp = \M \setminus \overline{J(\RR)}$ is open by definition. $\RR^\perp$ is causally convex, since any causal curve with two end points which intersects $\overline{J(\RR)}$ has also one of its endpoints in $\overline{J(\RR)}$. Thus, $\RR^\perp$ is also a region. Next, it is straightforward to show that $D^\pm(\RR)$ and $D(\RR)$ are regions, if $\RR \subset \M$ is a region (see Proposition \ref{prop:D_plus} in Appendix \ref{app:lemmas_causal}). The total future/past is an (possibly infinite) intersection of causal future/past regions, thus they are also causally convex and their interiors define regions, too.

\subsection{Algebraic QFT}\label{subsec:aqft}
An algebraic QFT (AQFT) on a globally hyperbolic spacetime $\M$ is defined in terms of a $\star$-algebra $\Al$ and a net of $\star$-subalgebras $\Al(\RR)$ labelled by bounded regions $\RR \subset \M$ which all share the same unit element $1$ and satisfy the following standard axioms \cite{AQFT_intr}: for all bounded regions $\RR, {\cal T} \subset \M$,
\begin{enumerate}
\item Isotony: $\Al(\RR)\subset \Al({\cal T})$, if $\RR\subset {\cal T}$.
\item Microcausality: $[\Al(\RR),\Al({\cal T})]=0$, for $\RR \perp {\cal T}$.
\item The timeslice property: $\Al(\RR)=\Al(D(\RR))$.
\end{enumerate}

The algebras $\Al(\RR)$ are referred to as \emph{local algebras} and its Hermitian elements are commonly considered as the observables measurable within the spacetime region $\RR$. For an unbounded region or an arbitrary subset $\OO \subset \M$, it is customary to define
\begin{equation}
    \Al(\OO) = \bigvee_{\text{region }\RR \subset \OO} \Al(\RR)
\end{equation}
which inherits conditions 1.--3. from above.

An AQFT has different classes of inequivalent representations (often called sectors). To fix one of those, we still need one more ingredient: a state. If $\Al$ is a C$^\star$-algebra, given any algebraic state, i.e., a positive linear normalized functional $\omega$ on $\Al$, through the GNS construction we obtain a separable Hilbert space $\H=\H_\omega$, a vector $\Omega=\Omega_\omega \in \H$ and, to every region $\RR \subset \M$, an associated operator algebra $\A(\RR) = \A_\omega(\RR) \subset B(\H)$ with three significant properties:
\begin{enumerate}
 \item $\A(\RR)\Omega \subset \H$ is dense
 \item $\A(\RR)$ is closed under the strong operator topology (SOT); this renders $\A(\RR)$ a von-Neumann algebra and is equivalent to $\A(\RR)=\A(\RR)''$. Denoting the limit with respect to SOT by $\slim{}$, closedness under SOT entails that for any (strongly) convergent sequence or net $(A_k)_k \subset \A(\RR)$,
\begin{equation}
    A:= \slim{k} A_k \in \A(\RR).
\end{equation}
\item If $\omega$ is pure, then $\A(\M) = B(\H)$, and the operator algebras $\{ \A(\RR) \}$ define an AQFT on $B(\H)$. 
\end{enumerate}
If $\Al$ is merely a $\star$-algebra, the associated operator algebras are generally unbounded. However, in favourable cases (see \Cref{subsec:freeqft} below) the setting of an AQFT on $B(\H)$ still arises. 

In \Cref{sec:undecidability} we will demand an AQFT on $B(\H)$ to satisfy two more axioms: the \emph{Reeh-Schlieder property} and \emph{the split property}. The first demands that, if $\RR$ is bounded, $A \in \A(\RR)$ with $A\Omega = 0$ implies $A=0$. The second demands that, for regions $\RR, \S$, with $\overline{\RR}\subset \S$, there exists a type-I factor\footnote{That is, an operator algebra isomorphic to $B(\H)$ with trivial center.} ${\cal C}$ satisfying
\begin{equation}
\A(\RR)\subset {\cal C}\subset \A(\S).
\label{split_prop}
\end{equation}
The split property has an important consequence for experiments involving two or more strictly space-like separated regions. Let $\RR,\T$ be two regions with $\overline{\RR}\perp\overline{\T}$, and let $\S$ be a region such that $\overline{\RR}\subset \S$, $\S\perp \T$. Since ${\cal C}$ is a type-I factor, there exists a unitary $U$ and Hilbert spaces $\H_1$,$\H_2$ such that $U\H=\H_1\otimes \H_2$, with $U{\cal C}U^*=B(\H_1)\otimes \id_2$, $U{\cal C}'U^*=\id_1\otimes B(\H_2)$, where ${\cal C}'$ denotes the commutant of ${\cal C}$. By isotony and Einstein's causality, it follows that
\begin{align}
U\A(\RR) U^*\subset B(\H_1)\otimes \id_2, U\A(\T) U^*\subset \id_1\otimes B(\H_2).
\end{align}
The split property thus implies that the local algebras of strictly space-like separated regions can be regarded to act on independent quantum systems.

Finally, the existence of AQFT states such that the associated operator algebras obey the Reeh-Schlieder and split property is known in arbitrary globally hyperbolic spacetimes (at least for free fields) and it includes the quasifree Hadamard states on ultrastatic spacetimes, e.g., ground and thermal states; see \cite{splitproperty} for further discussion and references. 

\subsection{Free fields}\label{subsec:freeqft}

We next define the quantum field theory of a free scalar field. Let us assume that the classical field satisfies an equation of motion of the form $K\phi(x)=0$, where $K$ is a differential operator which we assume to be Green-hyperbolic \cite{Greenhyperbolic}, e.g., the Klein-Gordon field $K = \square_g +m^2$. The corresponding AQFT is determined by specifying the associated net of local algebras. For this, we consider test functions given as elements of $C_0^\infty(\M)$, the space of compactly supported smooth functions on $\M$, and the retarded (advanced) Green's operator for $K$, denoted by $E^{+}$ ($E^{-}$). Given $f \in C_0^\infty(\M)$, the smooth functions $\phi^\pm=E^\pm f$ satisfy $K\phi^\pm=f$ and $\supp \phi^{\pm} \subset J^{\pm}(\supp f)$. We define $E:=E^--E^+$ and the induced pre-symplectic form on $C_0^\infty(\M)$ by $\EE{f}{g} := \int_\M f Eg$ using integration of test functions across spacetime. There are now two conventional schemes to form the $\star$-algebra describing our quantum fields. The (polynomial) field algebra $\hat{\cal P}$ is the unital $\star$-algebra with generators $\{\Phi(f):f\in C_0^\infty(\M)\}$, where we assume that, for all $f,g \in C_0^\infty(\M)$ and $\alpha \in \mathbb{R}$, the map $\Phi:C_0^\infty(\M)\to\hat{{\cal P}}$ satisfies the following properties:
\begin{align}
&\Phi(\alpha f+g)=\alpha \Phi(f)+\Phi(g),\nonumber\\
&\Phi(f)^*=\Phi(f^\ast),\nonumber\\
&\Phi(Kf)=0,\nonumber\\
&[\Phi(f),\Phi(g)]=i\EE{f}{g}.
\label{eqs_field}
\end{align}

On the other hand, the Weyl field algebra $\hat{\cal W}$ is the unital $^\star$-algebra with unitary generators $\{ e^{i\Phi(f)} : f \in C_0^\infty(\M)\}$ satisfying the Weyl relation
\begin{equation}
e^{i\Phi(f)}e^{i\Phi(g)}=e^{-\frac{i}{2}\EE{f}{g}}e^{i\Phi(f+g)}.
\label{Weyl_rels}
\end{equation}
Our notation suggests $e^{i\Phi(f)}$ to be formed as the imaginary exponential of the smeared field $\Phi(f)$, which strictly speaking only applies when $\Phi(f)$ is given as a self-adjoint Hilbert space operator satisfying eqs. (\ref{eqs_field}) in which case the Weyl relation \eqref{Weyl_rels} can be viewed as a consequence of the Baker-Campbell-Hausdorff identity. 

In both settings, polynomial $\Al= \hat{\cal P}$ or Weyl $\Al = \hat{\cal W}$, for an arbitrary subset $\OO \subset \M$, the algebras $\Al(\OO)$ are defined as $\star$-subalgebras to $\Al$ by restricting to generators labeled by test functions with support contained in $\OO$. With this definition, the axioms given in the preceding section are known to hold (see \cite[Section 4, Appendix C]{local_meas_QFT} for the polynomial case) and we can add a localization property known as \emph{Haag property}: namely, if $A \in \Al$ commutes with all elements of $\Al({\cal T}^\perp)$ for some region ${\cal T} \subset \M$, then $A \in \Al(\RR)$ where $\RR$ is any region\footnote{In non-scalar theories, one might need to restrict this property to apply only to connected $\RR$.} containing $\overline{{\cal T}}$. In some cases, explicitly stated, we will require our operator representations $\pi$ to have the following continuity property: given a sequence (or net) of test functions $(g_\lambda)_\lambda \subset C_0^\infty(\M)$ with $\lim_{\lambda \to \infty} g_\lambda = g$ (in test function topology), it holds that
\begin{equation}\label{eq:weylcontinuity}
    \slim{\lambda \to \infty} \pi(e^{i\Phi(g_\lambda)}) = \pi(e^{i\Phi(g)}).
\end{equation}
This property is known to hold on arbitrary globally hyperbolic spacetimes in the GNS representation of any quasifree state with distributional two-point function \cite[Appendix~E]{Fewster2023}; in particular, for any quasifree Hadamard state.

\subsection{Quantum measurement theory}\label{subsec:measurements}
For $A\in \N$, an $A$-outcome measurement in $B(\H)$ is expressed through a Positive Operator Valued Measure (POVM), i.e., an $A$-tuple of operators $M := (M_a)_{a=1}^A \in B(\H)^A$ such that
\begin{equation}
M_a\geq 0, \sum_{a} M_a=1.
\end{equation}
For a given state $\omega:B(\H)\to\C$, the probability of obtaining the result $a$ when we measure $M$ is given by:
\begin{equation}
P(a|\omega, M)=\omega(M_a).    
\end{equation}

To describe the state of the quantum system after it has been measured, POVMs do not suffice: one needs to use \emph{instruments}. A general quantum instrument is defined by an $A$-tuple of completely positive maps $\Omega:=(\Omega_a)_a$ on $B(\mathcal{H})$, satisfying the normalization condition
\begin{equation}
\sum_a\Omega_a(1)=1.
\end{equation}
Applying instrument $\Omega$ to a state $\omega$ results in the outcome $a$ with probability
\begin{equation}
P(a|\omega,\Omega)=\omega(\Omega_a(1)).
\label{instr2POVM}
\end{equation}
In case this probability is non-zero, the post-measurement state is
\begin{equation}
\omega_a(\bullet)=\frac{1}{P(a|\omega,\Omega)}\omega\circ\Omega_a(\bullet).
\end{equation}
By eq. (\ref{instr2POVM}), each instrument $(\Omega_a)_a$ implements the POVM $(\Omega_a(1))_a$. Note, however, that different instruments can implement the same POVM.

An instrument with a single outcome is called a \emph{channel}. To each instrument one associates the \emph{measurement channel}
\begin{equation}
    \bar{\Omega} := \sum_a \Omega_a,
\end{equation}
which allows describing the non-selective state update: any state $\omega$ acted on by $\Omega$ will evolve to 
\begin{equation}
    \omega'( \bullet) = \omega \circ \bar{\Omega} (\bullet) = \sum_aP(a|\omega,\Omega)\omega_a(\bullet),
\end{equation}
when ignoring the measurement outcomes.

States that can be represented through density matrices, i.e., where $\omega(A) = \operatorname{tr}(\rho A)$ for some $\rho \in B(\H)$ with $\tr \rho = 1$ and all $A \in B(\H)$, are called \emph{normal}. \emph{Normal instruments} are those instruments which preserve the class of normal states, i.e., where
\begin{equation}
    \omega_a := \omega\circ\Omega_a
\end{equation}
defines a normal state for all $a$ if $\omega$ is a normal state. Normal instruments are exactly those for which each of the completely positive maps $\Omega_a$ can be expressed through Kraus operators, i.e., there exists a countable family $(K_a^j)_{a,j} \subset B(\H)$ such that for all $a$, $\Omega_a$ can be given in terms of the strongly convergent sum
\begin{equation}
\Omega_a(\bullet)=\sum_jK_a^j\bullet (K_a^j)^*.
\end{equation}

It is often important that measurements leave some part of the system's state intact; loosely speaking, we call this a non-demolition measurement. More specifically, let $M:=(M_a)_a$ be a POVM and $\Omega$ an instrument implementing $M$, i.e., $M_a = \Omega_a(1)$. Then $\Omega$ is called \emph{non-disturbing} outside of a $\star$-subalgebra $\A \subset B(\H)$ if 
\begin{equation}\label{eq:nondist}
    \left. \overline{\Omega}(\bullet)\right\rvert_{\A'} = \bullet,
\end{equation}
i.e., $\overline{\Omega}(X) = X$ for all $X \in B(\H)$ which commute with $\A$. If $\Omega$ is normal, this implies that the Kraus operators associated with $\Omega$ commute with $\A'$, see Proposition \ref{prop:normallocality}. If, in addition, $\A$ is a von-Neumann algebra, i.e., $\A=\A''$, this implies that the Kraus operators associated with $\Omega$ lie in $\A$. A particular example is the generalized L\"{u}ders instrument defined via $\Omega_a := \sqrt{M_a} \bullet \sqrt{M_a}$, which is non-disturbing outside of the algebra generated by the $M_a$.

\subsection{Causal operations in QFT}

From the axioms of AQFT it appears natural to define local operations as normal instruments with Kraus operators taken from a corresponding local algebra. While microcausality then implies that there will be no superluminal signalling in any bipartite scenario, superluminal signalling scenarios can occur in scenarios with three or more parties; confer \Cref{fig:sorkin} for the canonical example known as Sorkin's paradox. 

\begin{figure}[H]
\centering
\includegraphics[width=.65\textwidth]{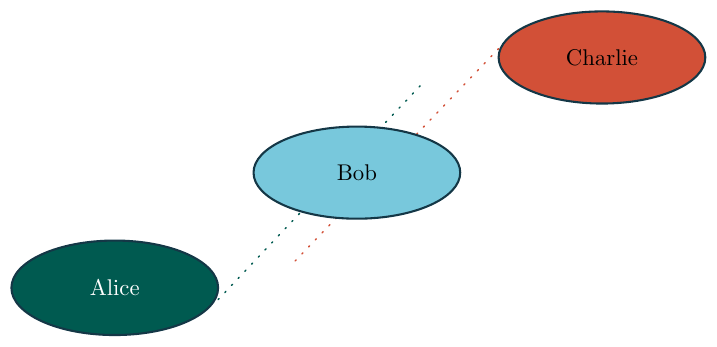}
\caption{Sorkin's paradox. Alice's and Charlie's zones of control are spacelike separated. Nonetheless, superluminal signalling between Alice and Charlie can occur if there are no restrictions on the allowed operations in Bob's region.}
\label{fig:sorkin}
\end{figure}

The usual response to the paradox is that the class of local operations must be restricted by further assumptions to guarantee relativistic causality, i.e., the absence of superluminal signalling scenarios. A criterion for operations to be causal, which we review below, was established in \cite{Jubb_2022}.

Let $\Omega$ be an instrument\footnote{In analogy with the operator setting, we mean here an appropriately normalized collection of completely positive maps acting on the $\star$-algebra $\A$.} associated with an AQFT $\A$ defined over a globally hyperbolic spacetime $\M$. We say that $\Omega$ is \emph{localizable within} the compact $\OO \subset \M$---or \emph{local to} $\OO$ for short---if its measurement channel $\bar{\Omega}$ satisfies
\begin{equation}
\left.\bar{\Omega}(\bullet)\right|_{\A(\OO^\perp)}=\bullet.
\end{equation}
Note that $\Omega$ can be local to various $\OO$; e.g.: if $\Omega$ is local to $\OO_1 \subset D(\OO_2)$, then it is also local to $\OO_2$ since $\OO_2^\perp \subset \OO_1^\perp$.

As we will see below, to discuss whether $\Omega$ is causal or not, it is very relevant where we consider it to be localized. We refer to such a choice $\OO$ as $\Omega$'s \emph{interaction zone}. 
Intuitively, $\OO$ represents the zone of spacetime where one needs to interact with the theory to implement $\Omega$ so that the map can act on elements of the algebra $\A(\OO^+)$, see \Cref{fig:jubb}.

\begin{figure}[H]
\centering
\includegraphics[width=.65\textwidth]{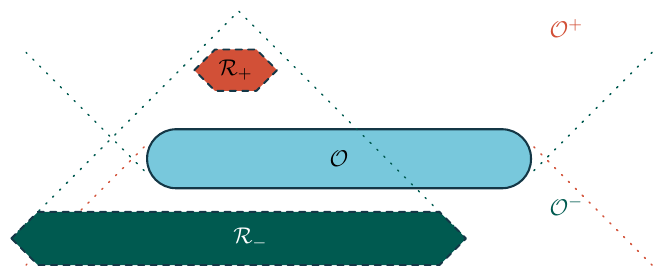}
\caption{Interaction zone $\OO$ and the PSNI property. Region $\OO^+$ ($\OO^-$) is above (below) the red (green) dashed line.}
\label{fig:jubb}
\end{figure}

It is proven in \cite{Jubb_2022} that an instrument local to $\OO$ does not lead to Sorkin-type causal paradoxes if (and even 'iff' for free scalar fields) it satisfies the \emph{past support non-increasing (PSNI) condition with respect to $\OO$}, namely, the condition that, for any pair of regions $\RR_\pm\subset \OO^\pm$ such that $\overline{\RR_{+}}\subset D(\RR_-)$, it holds that
\begin{equation}
\bar{\Omega}(\A(\RR_{+}))\subset \A(\RR_-).
\label{PSNI_def}
\end{equation}
We see that the PSNI condition depends\footnote{In \cite{Jubb_2022}, the dependence on the (arbitrary) choice for $\OO$ is not stressed, but implicit in the definition of PSNI.} on the chosen interaction zone $\OO$: the greater $\OO$ is, the less restrictive the PSNI condition becomes. Note, though, that taking the causal hull of $\OO$ does not alter the condition. 

\subsection{The FV formalism}\label{subsec:fv}

The FV formalism provides a framework to discuss quantum measurement theory for AQFTs, as defined in \Cref{subsec:aqft}. For a pedagogic introduction, we refer to \cite{fewsternotes}. The formalism is based on von-Neumann's concept of measurements, where the allowed measurements and instruments on a target are those induced by coupling to it a probe; measuring and discarding it subsequently. In the FV case, targets and probes are modeled as AQFTs, say $\A$ and $\B$, and it is required that also the coupled theory $\CC$ forms an AQFT. As a standing assumption, it is further required that outside of some compact coupling zone $\K \subset \M$, the coupled theory $\CC$ is isomorphic to the uncoupled theory $\U := \A\otimes \B$. Then, the out- and in-regions $\K^\pm$ both contain a Cauchy surface for $\M$ and it can be shown that there must exist isomorphisms $\tau_\pm: \U(\mathcal K^\pm) \rightarrow \mathcal C(\mathcal K^\pm)$, called response maps. Combined, they provide the map $\tau_-^{-1}\circ \tau_+: \U(\mathcal K^+) \rightarrow \U(\mathcal K^-)$ which extends, based on the timeslice property, to a $\star$-automorphism $\Theta$ on $\U$, called \emph{scattering morphism}, allowing us to evolve the observables of the uncoupled theory with respect to the coupling.

The scattering morphism is known to have several elementary yet important properties \cite[Prop. 3.1]{local_meas_QFT}, which we recall here and use without further reference. Firstly, given an enlarged compact $\hat{\mathcal K}$ containing $\mathcal K$, the scattering morphism $\hat\Theta$ derived using $\hat{\mathcal K}$ agrees with the usual one, $\hat\Theta=\Theta$. Secondly, the scattering morphism is a quantum channel, i.e., $\Theta$ is unital and completely positive\footnote{Note that $\Theta$ is a $\star$-isomorphism, and so $\Theta(\mathbb I)=\mathbb I$, $\Theta(O^*O)=\Theta(O)^*\Theta(O)>0$ for $\mathcal U \ni O\neq 0$. For any $n\in \N$ $M_n(\mathcal U)\cong \mathcal U\otimes M_n(\C)$ is also a $\star$-algebra. It follows that $\Theta_n\cong \Theta\otimes \mathbb I_n$ is a $\star$-isomorphism, and thus positive for all $n$.}. Thirdly, $\Theta$ is local to $\K$: If $\mathcal L\subset \mathcal K^\perp$, then $\Theta$ acts trivially on $\U(\mathcal L)$. Finally, $\Theta$ is PSNI: If $\mathcal {R}\subset \mathcal K^+, \mathcal S\subset \mathcal K^-$ with $D(\mathcal S)\supset \mathcal R$, then $\Theta (\U(\mathcal R)) \subseteq \U(\mathcal S)$. 

In a von Neumann measurement, the evolution is (approximately) $U=e^{-i A\otimes P}$ and a measurement on the probe side of a variable $X$ conjugate to $P$ induces a measurement of $A$ on the target side, as $U^*(\mathbb I\otimes X)U=A\otimes \mathbb I$. Likewise, in the FV framework, a measurement scheme induces an observable on the target $\mathcal A$. Specifically, a measurement scheme for the target observable $A$ is a selection of probe theory $\mathcal B$, probe state $\sigma$, probe observable $B$, and coupling $\Theta$ such that 
\begin{equation}\label{eq:inducedobs}
    A = \eta_\sigma\circ \Theta (\mathbb I\otimes B),
\end{equation}
where $\eta_\sigma:\mathcal A\otimes \mathcal B\rightarrow \mathcal A$ is the conditional expectation that corresponds to `tracing out' the probe state $\sigma$ and extends linearly from $\eta_\sigma(A\otimes B) = \sigma(B) A$. The target's evolution, or nonselective update, then corresponds to the channel 
\begin{equation}
    \overline{\Omega}:A\mapsto \eta_\sigma\circ\Theta(A\otimes \mathbb I).
    \label{induced_chan}
\end{equation}
Choosing a POVM $M$ on the probe, one obtains the selective updates
\begin{equation}\label{eq:inducedinstr}
    \Omega_a:A\mapsto \eta_\sigma\circ\Theta(A\otimes M_a),
\end{equation}
which together form the induced instrument $\Omega:=(\Omega_a)_a$, with measurement channel $\overline{\Omega}$. Specifically, $\Omega$ is the instrument that describes the evolution of $\A$ due to a measurement of the induced POVM $N=(N_a)_{a=1}^A$ on $\A$, where
\begin{equation}\label{eq:inducedpovm}
    N_a=\eta_\sigma\circ \Theta (\mathbb I\otimes M_a).
\end{equation}
That this instrument is local to and PSNI with respect to $\K$ follows immediately from the properties of $\Theta$. 

Deciding whether a channel $\bar{\Omega}$, an instrument $\Omega$ or a POVM $N_a$ can be induced entails finding a probe $\mathcal B$, a probe state $\sigma$, a probe POVM $M$ and a coupling $\Theta$ such that eqs. \eqref{induced_chan}, \eqref{eq:inducedinstr} or \eqref{eq:inducedpovm} respectively hold---or proving that no such $\mathcal B,\sigma,M,\Theta$ exist. If $\Omega_a$ can indeed be induced, we call such an instrument \emph{FV-realizable}. In the case of the free scalar field, we say that an instrument is \emph{asymptotically FV-realizable} in a given representation $\pi$ if there exists a net of FV schemes such that the corresponding net of induced instruments $(\Omega^n)_n$ satisfies
\begin{equation}
\lim_{n\to\infty}\omega(\Omega_a^n(A))=\omega(\Omega_a(A)), \forall a,  
\end{equation}
for any $A\in \pi(\hat{W})$ and any state $\omega$ in the folium of the abstract state generating $\pi$.

The notion of asymptotic FV-realizability also extends to POVMs: a POVM $N$ is asymptotically FV-realizable in some representation $\pi$ if there exists a net of FV schemes such that the associated POVMs $(N^n)_n$ satisfy
\begin{equation}
\lim_{n\to\infty}\omega(N_a^n)=\omega(N_a),
\end{equation}
for any state $\omega$ belonging to the folium of the GNS state of $\pi$.

As of now, it is known that any local POVM is FV-realizable \cite{fewster2025measurementpreparationprotocolsquantum} and that canonical measurements of free smeared fields, e.g. Gaussian measurements, can be realized asymptotically \cite{field_meas}. Sorkin's impossible measurements make it clear that many instruments cannot be asymptotically FV-realizable, as they are acausal, e.g.: ideal measurements \cite{albertini2023idealmeasurementsrealscalar}. As a concrete demonstration, we pick a simple example of an FV realizable instrument that does not even require a probe field: indeed, setting $\B = \C \id$ meets the axioms of AQFT trivially. Consider $\mathcal U$ to be the theory of a free real scalar field $\Phi$ (as constructed in \Cref{subsec:freeqft}) and $\mathcal C_f$ the corresponding one when $\Phi$ is subject to the influence of a classical external source $f \in C_0^\infty(\M)$. The associated scattering map $\Theta$ can be easily found (e.g., by methods discussed in Duetsch \cite{D_tsch_2001}) and we obtain  
\begin{equation}
    \Theta_f:e^{i\Phi(g)}\mapsto e^{i\Phi(f)}e^{i\Phi(g)}e^{-i\Phi(f)} = e^{-i\EE{f}{g}} e^{i\Phi(g)},
\end{equation}
for $\supp g \subset (\supp f)^+$. Thus, smeared field displacements or ``kicks'' are FV realizable. By contrast, adjoint actions with unitaries such as $e^{i\Phi(f)^2}$ are not FV realizable, since they are acausal \cite{Jubb_2022}.

At least three open questions on FV-realizability remain: 1) Which measurements can be realized in a non-demolition way, apart from measurements of the smeared field? 2) Are there causal instruments that cannot be asymptotically FV-realized? 3) If so, which causal instruments admit an asymptotic FV realization?

\subsection{Quantum nonlocality}
\label{sec:quant_nl}
Consider a scenario where two separate parties, whom in the following we shall call Alice and Bob, conduct experiments in two strictly space-like separated regions, see \Cref{fig:Bell}.

\begin{figure}[H]
\centering
\includegraphics[width=0.7\textwidth]{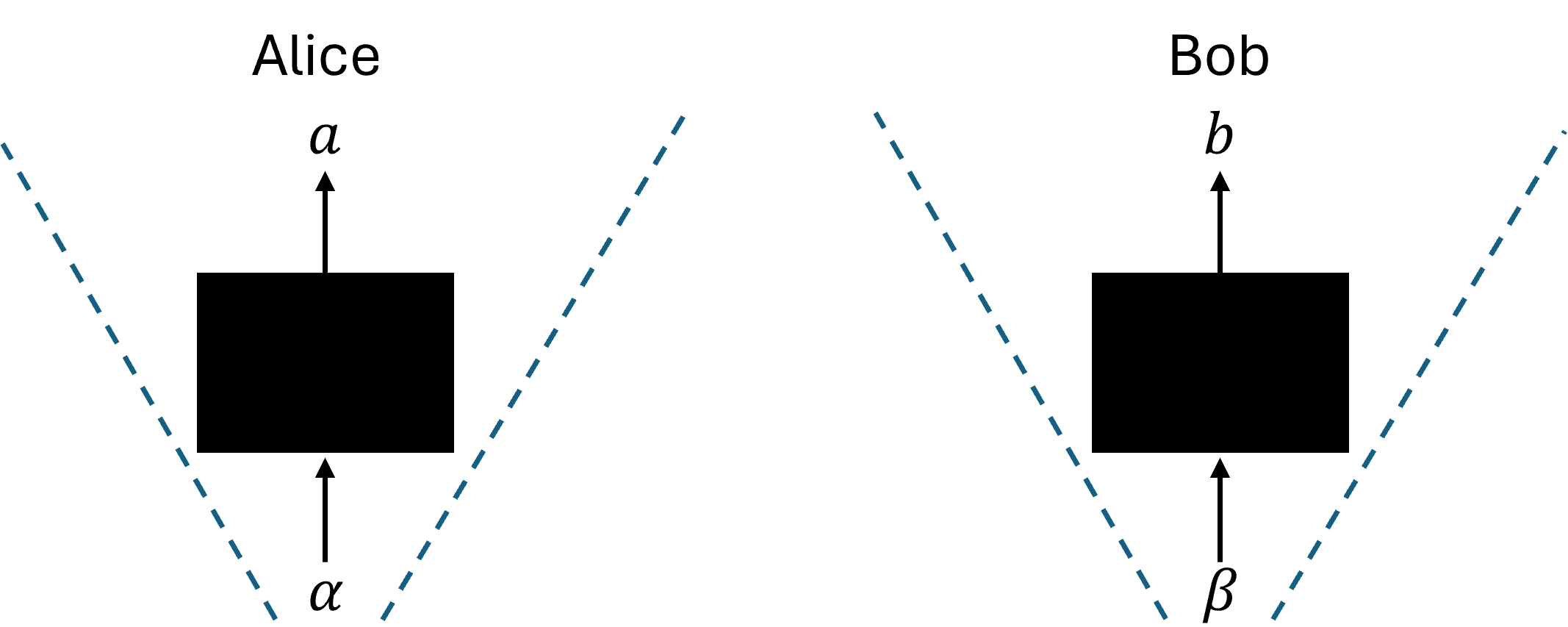}
\caption{Causal diagram of a Bell experiment.}
\label{fig:Bell}
\end{figure}

Denote by $\alpha,\beta \in \{0,...,N-1\}$ Alice's and Bob's respective choices of experiment and by $a,b \in \{0,...,M-1\}$ their respective outcomes. Each pair $(M,N)$ defines a different \emph{Bell scenario}. If Alice and Bob repeat this experiment many times, then they will be able to estimate the vector of probabilities
\begin{equation}
P=(P(a,b|\alpha,\beta):a,b,\alpha,\beta).    
\end{equation}
In quantum nonlocality, $P$ is dubbed a behavior, a box, a distribution or, simply, correlations. It is usually denoted by $P(a,b|\alpha,\beta)$: whether $P(a,b|\alpha,\beta)$ refers to a single probability or a vector thereof is clear from the context. We next introduce three relevant sets of behaviors, denoted by $C_{\mathrm{ns}}$, $C_{\mathrm{qa}}$ and $C_{\mathrm{qc}}$, which will play an important role in \Cref{sec:causal_not_FV,sec:undecidability}.

A behavior $P(a,b|\alpha,\beta)$ is called \emph{non-signalling} if
\begin{align}
&\sum_{a}P(a,b|\alpha,\beta)=P(b|\beta),\mbox{ independent of }\alpha,\nonumber\\
&\sum_{b}P(a,b|\alpha,\beta)=P(a|\alpha),\mbox{ independent of }\beta.
\label{NS_conds}
\end{align}
The first condition implies that Alice cannot signal to Bob through her choice of $\alpha$; the second condition means the same, but reversing the roles. Following standard terminology, we call $C_{\mathrm{ns}}$ the set of non-signalling behaviors \cite{Paulsen_2016}.

A behavior $P(a,b|\alpha,\beta)$ is called \emph{quantum} if it has a quantum representation, i.e., there exist two Hilbert spaces $\H_A,\H_B$, a normalized vector $\ket{\psi}\in \H_A\otimes \H_B$ and POVMs $\{M^A_{a|\alpha}\}_{a,\alpha}\subset B(\H_A), \{M^B_{b|\beta}\}_{b,\beta}\subset B(\H_B)$ such that
\begin{align}
&\sum_aM^A_{a|\alpha}= \sum_bM^B_{b|\beta}=1,\forall \alpha, \beta,\nonumber\\
&P(a,b|\alpha,\beta)=\bra{\psi}M^A_{a|\alpha}\otimes M^B_{b|\beta}\ket{\psi}.
\end{align}
The closure\footnote{Since we are considering finite sets, all norms on the space of behaviors are equivalent. A typical norm is given by the ``box norm" defined below.} of the set $C_{\mathrm{qs}}$ of all quantum behaviors is usually denoted by $C_{\mathrm{qa}}$ \cite{Paulsen_2016}.

The related set of behaviors $C_{\mathrm{qc}}$ is defined by demanding that there exist a joint Hilbert space $\H_{AB}$, a vector $\ket{\psi}\in\H_{AB}$ and POVMs $\{M^A_{a|\alpha}\}_{a,\alpha}, \{M^B_{b|\beta}\}_{b,\beta}\subset B(\H_{AB})$ satisfying:
\begin{align}
&\sum_aM^A_{a|\alpha}=\sum_bM^B_{b|\beta}=1,\forall \alpha, \beta,\nonumber\\
&[M_{a|\alpha},M_{b|\beta}]=0,\forall a,b,\alpha,\beta,\nonumber\\
&P(a,b|\alpha,\beta)=\bra{\psi}M^A_{a|\alpha}M^B_{b|\beta}\ket{\psi}.
\end{align}
$C_{\mathrm{qc}}$ is a relaxation of $C_{\mathrm{qs}}$: indeed, given any quantum representation $(\H_A,\H_B,\ket{\psi},M^A,M^B)$ of $P(a,b|\alpha,\beta)\in C_{qs}$, define $\H_{AB}:=\H_A\otimes\H_B$, $\tilde{M}^A_{a|\alpha}:= M^A_{a|\alpha}\otimes \id_{\H_B}$, and $\tilde{M}^B_{a|\alpha}:= \id_{\H_A}\otimes M^B_{b|\beta}$ so that $(\H_{AB},\ket{\psi},\tilde{M}^A,\tilde{M}^B)$ represents $P(a,b|\alpha,\beta)$ within the ``commuting'' paradigm associated to $C_{\mathrm{qc}}$.

The three sets $C_{\mathrm{qa}}, C_{\mathrm{qc}}, C_{\mathrm{ns}}$ are convex and closed, and the inclusion relations $C_{\mathrm{qa}}\subset C_{\mathrm{qc}}\subset C_{\mathrm{ns}}$ are strict. In the Bell scenario $N=M=2$, an example of a non-signalling behavior that cannot be approximated by elements of $C_{\mathrm{qc}}$ is the famous Popescu-Rohrlich box $P_{PR}$ \cite{khalfin1985quantum, Popescu1994-POPQNA}, with
\begin{equation}
P_{PR}(a,b|\alpha,\beta):=\tfrac{1}{2}\delta_{a\oplus b,\alpha\beta},\forall a,b,\alpha,\beta\in\{0,1\},
\label{PR_box}
\end{equation}
where $a\oplus b$ denotes the sum modulo $2$ of the outcome bits $a,b$. The existence of distributions $P(a,b|\alpha,\beta)\in C_{\mathrm{qc}}$ such that $P(a,b|\alpha,\beta)\not\in C_{\mathrm{qa}}$ is also known: it is a consequence of the recent negative resolution of Connes' embedding conjecture \cite{mipre}.  

For any Bell scenario $(M,N)$, a \emph{non-local game} $G$ is a pair $G=(p, V)$, where $p:\{0,...,N-1\}^2\to \R^+$ is a probability distribution, and $V:\{0,...,M-1\}^2\times \{0,...,N-1\}^2\to\{0,1\}$ is a score function. Given a behavior $P$, its game value is
\begin{equation}
G(P):=\sum_{\alpha,\beta}p(\alpha,\beta)\sum_{a,b}P(a,b|\alpha,\beta)V(a,b,\alpha,\beta).
\end{equation}
Intuitively, the ``game'' consists of a referee sampling the inputs $\alpha,\beta$ from the distribution $p(\alpha,\beta)$ and sending $\alpha$ ($\beta$) to Alice (Bob). Alice (Bob) inputs $\alpha$ ($\beta$) in her (his) box and sends the referee the outcome $a$ ($b$). Alice and Bob win the game if $V(a,b,\alpha,\beta)=1$; otherwise, they lose. The game value $G(P)$ therefore denotes the probability that Alice and Bob win the game through strategy $P$. It is easy to see that, for any two distributions $P,\tilde{P}\in C_{\mathrm{ns}}$,
\begin{equation}
|G(P)-G(\tilde{P})|\leq \|P-\tilde{P}\|,
\end{equation}
where, for any $Q:\{0,...,M-1\}^2\times \{0,...,N-1\}^2\to \R$, we define the ``box norm'':
\begin{equation}
\|Q\|:=\max_{\alpha,\beta}\sum_{a,b}|Q(a,b|\alpha,\beta)|.
\label{def:box_norm}
\end{equation}

For any game $G$ and each of the sets of correlations $\{C_x\}_x$, with $x=qa, qc, ns$, we define the maximum achievable game value as:
\begin{equation}
C_x(G):=\max_{P\in C_x}G(P).
\end{equation}

It is through non-local games that we know that the inclusion relations $C_{\mathrm{qa}}\subset C_{\mathrm{qc}}\subset C_{\mathrm{ns}}$ are strict. Consider, for instance, the Clauser-Horne-Shimony-Holt (CHSH) game $G_{CHSH}$ \cite{chsh}, defined in the $(2,2)$ Bell scenario as $p(\alpha,\beta)=\frac{1}{4}$, $V(a,b,\alpha,\beta)=\delta_{a\oplus b,\alpha\beta}$. It was shown by Tsirelson \cite{Cirelson1980} that
\begin{equation}
C_{\mathrm{qa}}(G_{CHSH})=C_{\mathrm{qc}}(G_{CHSH})=\frac{2+\sqrt{2}}{4}\approx 0.854.
\label{tsirelson_bound}
\end{equation}
However, one can verify that the Popescu-Rorhlich box $P_{PR}$ defined in eq. (\ref{PR_box}) satisfies $G_{CHSH}(P_{PR})=1$.

We finish this introduction by stating a celebrated result in quantum nonlocality, based on what from now on we refer to as \Cref{problem:MIP}. 

\begin{table}[H]
\centering
\begin{tabular}{|ll|}
     \hline
     INPUT: & a Bell scenario $(M,N)$ and a non-local game $G$ defined therein such that\\
     & either $C_{\mathrm{qa}}(G)=1$ or $C_{\mathrm{qa}}(G)<\frac{1}{2}$.\\
     \hline
     OUTPUT: & ``YES'' in the first case and ``NO'' in the second.\\
     \hline
\end{tabular}
\captionsetup{name=Problem}
\caption{Deciding a nonlocal game value.}
\label{problem:MIP}
\end{table}
This problem is undecidable \cite{mipre}. In fact, it is Recursively Enumerable-complete, which means that it is equivalent to the problem of deciding if a Turing machine will halt.

\section{Weyl operations}\label{sec:weylops}

In this section we define and motivate a large class of POVMs and instruments, whose elements, resp. Kraus operators, are elements of the Weyl algebra. The results in this section do not rely on test functions or spacetime. We thus consider an arbitrary fixed symplectic space $(X,\sigma)$ while keeping in mind the application to field theory, where $X = C_0^\infty(\M)/KC_0^\infty(\M)$ and $\sigma=E$ (see \Cref{subsec:freeqft}). The polynomial algebra over $(X,\sigma)$, denoted by $\hat{\P}$, is then given as the unital $\star$-algebra with Hermitian generators $\{ R_x : x \in X \}$ such that $x \mapsto R_x$ is linear and $[R_x,R_y] = i\sigma(x,y)$. By analogy with quantum optics, we call any nonzero $R_x$ a \emph{quadrature} and any pair of quadratures $(R_1,R_2)$ with $[R_1,R_2] = i$ a \emph{mode}. A pair of modes $(R_1,R_2)$, $(S_1,S_2)$ will be called \emph{independent} if $[R_j,S_k] = 0$ for all $j,k=1,2$. Note also that, for any non-zero quadrature, there exists another (non-unique) quadrature, called a \emph{conjugate}, such that together they form a mode. The Weyl algebra over $(X,\sigma)$ can then (with a slight abuse of notation) be defined as the $\star$-algebra with unitary generators $\{ e^{iR} : R \text{ quadrature}\}$ such that
\begin{equation}\label{eq:abstractweylrel}
    e^{iR}e^{iS} = e^{-\frac{1}{2} [R,S]} e^{i(R+S)}
\end{equation}
and will be denoted by $\hat{\W}$. 

In this work, we will be interested in algebras generated by a finite number of quadratures. Consequently, for a \emph{subspace of quadratures} $\Q$, i.e., any $\R$-linear subspace of $\{ R_x : x \in X\}$, we will denote by $\hat{\W}(\Q)$ the $\star$-algebras generated by $\{e^{iR} : R\in \Q\}$. Note that for any element $w \in \hat{\W}$, there exists a finite-dimensional subspace of quadratures $\Q$ such that $w \in \hat{\W}(\Q)$. Given any representation $\pi$ of $\hat{\W}$, we will also consider the von-Neumann algebra $\A_\pi(\Q) := \overline{\pi(\hat{\W}(\Q))}$, with closure being taken in SOT.

In the construction of POVMs with elements of the Weyl algebra, one faces the problem of guaranteeing that a certain element $w$ of $\hat{\W}$ satisfies $\pi(w)\geq 0$. In this regard, a simple \emph{sufficient} condition is that $w$ admits a sum of Hermitian squares (SOS) decomposition, i.e., there exist $n\in \N, \{d_l\}_{l=1}^n\subset \hat{\W}$ such that
\begin{equation}
    w=\sum_{l=1}^n d_ld^*_l.
\label{SOS_decomp_q}
\end{equation}
For any representation $\pi$ of $\hat{\W}$ it then follows that $\pi(w)=\sum_j \pi(d_l)\pi(d_l)^*\geq 0$.

If we find a subspace of quadratures $\Q$ such that \eqref{SOS_decomp_q} holds with $\{ d_l\}_l \subset \hat{\W}(\Q)$, then we say that $w$ is SOS over $\Q$, implying also $w \in \hat{\W}(\Q)$. 

That we are not loosing much by relying on this simpler notion of positivity is the content of the following lemma and remark:

\begin{lemma}\label{lemma:Archimedeanity}
    For any subspace of quadratures $\Q$ and any representation $\pi$ of $\hat{\W}$, let $w \in \hat{\W}(\Q)$ with $w = w^\ast$ and $\pi(w) > 0$. Then, $w$ is SOS over $\Q$.
\end{lemma}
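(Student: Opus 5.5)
The plan is to combine an Archimedean Positivstellensatz for $\star$-algebras with the fact that the Weyl relations admit an essentially unique C$^\star$-norm. I read $\pi(w)>0$ as strict positivity, $\pi(w)\geq \epsilon 1$ for some $\epsilon>0$. This is the natural reading, since mere positive semi-definiteness is known to be insufficient in Positivstellens\"atze.

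\textbf{Step 1: $\hat{\W}(\Q)$ is Archimedean.} Let $\Sigma$ be the cone of finite sums $\sum_l d_l d_l^*$ with $d_l\in\hat{\W}(\Q)$. Every generator $u=e^{iR}$ is unitary, so $1-uu^*=0\in\Sigma$. Moreover
\begin{equation*}
2-(u+u^*)=(1-u)(1-u)^*\in\Sigma,
\end{equation*}
and similarly with $u$ replaced by $iu$. Every element of $\hat{\W}(\Q)$ is a finite linear combination of Weyl unitaries, using \eqref{eq:abstractweylrel}. Combining these identities with the triangle inequality, I will show that for every $h=h^*\in\hat{\W}(\Q)$ there is $c>0$ with $c-h\in\Sigma$. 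Hence the quadratic module $\Sigma$ is Archimedean.

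\textbf{Step 2: invoke the Positivstellensatz.} For Archimedean quadratic modules in unital $\star$-algebras, the Positivstellensatz of Schm\"udgen, Cimpri\v{c} and Jacobi applies. It says: if $\rho(w)\geq\epsilon 1$ for every $\star$-representation $\rho$ of $\hat{\W}(\Q)$ by bounded operators, then $w\in\Sigma$. Equivalently, if $w\geq\epsilon$ in the enveloping C$^\star$-algebra $C^\star_{\max}(\hat{\W}(\Q))$, then $w$ is SOS over $\Q$. Step 1 guarantees that this enveloping algebra exists, since the seminorm $\sup_\rho\norm{\rho(\cdot)}$ is finite.

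\textbf{Step 3 (main obstacle): pass from the single representation $\pi$ to all representations.} The hypothesis only concerns $\pi$, which moreover is a representation of the larger algebra $\hat{\W}$. I will argue via uniqueness of the C$^\star$-norm.

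First, $\hat{\W}(\Q)$ is a twisted group algebra of the discrete abelian group $(\Q,+)$ with cocycle $e^{-\frac{i}{2}\sigma}$. This holds even when $\sigma|_\Q$ is degenerate. Abelian groups are amenable, so the full and reduced twisted group C$^\star$-algebras coincide, and every C$^\star$-seminorm on $\hat{\W}(\Q)$ that is a norm equals the maximal one.

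Second, the restriction of $\pi$ to $\hat{\W}(\Q)$ is faithful at the C$^\star$-level. By Slawny's theorem, the Weyl C$^\star$-algebra over the nondegenerate space $(X,\sigma)$ is simple, so $\pi$ is isometric on it. Its C$^\star$-subalgebra generated by $\hat{\W}(\Q)$ is the closure of the twisted group algebra of $\Q$ in the norm of $C^\star(X)$. That norm is a C$^\star$-norm, hence equal to the maximal norm by the previous paragraph.

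Consequently $\norm{\pi(a)}=\norm{a}_{\max}$ for all $a\in\hat{\W}(\Q)$. By spectral permanence for C$^\star$-subalgebras, $\pi(w)\geq\epsilon$ implies $w\geq\epsilon$ in $C^\star_{\max}(\hat{\W}(\Q))$. Step 2 then concludes that $w$ is SOS over $\Q$.

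The delicate points are the faithfulness and norm-uniqueness arguments for a possibly degenerate $\Q$. If these fail, I would fall back on decomposing $\Q$ as a direct sum of its radical and a symplectic complement. One then treats the radical, which gives a commutative group algebra, via the Fourier transform, and uses Slawny's theorem on the symplectic part.
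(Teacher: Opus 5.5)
Your Steps 1 and 2 agree with the paper: Archimedeanity via identities like $(1-\mu e^{-iR})(1-\mu e^{-iR})^*=1+|\mu|^2-\mu e^{-iR}-\mu^*e^{iR}$, then Schm\"udgen's Archimedean Positivstellensatz. For symplectic $\Q$ your Step 3 also works, and it is the paper's route. There Slawny's theorem makes the maximal C$^\star$-completion of $\hat{\W}(\Q)$ simple, so every C$^\star$-norm on $\hat{\W}(\Q)$ is the maximal one.

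\textbf{The gap} is Step 3 when $\sigma|_{\Q}$ is degenerate. This is the typical case: every one-dimensional, indeed every odd-dimensional, $\Q$ is degenerate, e.g.\ the $\Q=\R\Phi(f)$ used for $\Omega^{f,n}$. Amenability of $(\Q,+)$ only says that the full and reduced twisted group C$^\star$-algebras agree. It does not make the C$^\star$-norm on the algebraic twisted group algebra unique.

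Here is a counterexample. Take $\Q=\R R$, so $\hat{\W}(\Q)\cong\C[\R_d]$, and let $\rho(e^{itR})$ be multiplication by $e^{itx}$ on $L^2([0,1])$.
\begin{itemize}
\item $\rho$ is faithful on $\hat{\W}(\Q)$, since a finite exponential sum vanishing on an interval vanishes identically. Hence it defines a C$^\star$-norm.
\item Yet $\norm{\rho(1-e^{i\epsilon R})}=|1-e^{i\epsilon}|$, while the maximal norm of $1-e^{i\epsilon R}$ is $2$.
\item Worse, $w=\tfrac12(e^{i\epsilon R}+e^{-i\epsilon R})-\tfrac12$ satisfies $\rho(w)\geq\cos\epsilon-\tfrac12>0$ for small $\epsilon$. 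But $w$ is negative in the one-dimensional representation $e^{itR}\mapsto e^{it\pi/\epsilon}$, so it is not SOS.
\end{itemize}
After noting that the norm inherited from $C^\star(X)$ is a C$^\star$-norm, your argument uses nothing else about it. It would therefore apply verbatim to $\rho$ and prove a false statement.

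\textbf{What is missing} is an argument that uses quadratures outside $\Q$ conjugate to its radical. As the example shows, nothing internal to $\hat{\W}(\Q)$ can work, so your fallback (``Fourier transform on the radical'') does not supply this.

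The paper proceeds as follows.
\begin{itemize}
\item Write $\Q=\mathrm{span}\{\vec R\}+\mathrm{span}\{\vec S\}$, with the $S_j$ commuting with all of $\Q$.
\item Pick $T_j\in X$ commuting with the $\vec R$'s such that the $(S_j,T_j)$ are independent modes.
\item Apply the symplectic case to $\tilde{\Q}=\Q+\mathrm{span}\{\vec T\}$. This gives an SOS decomposition of $w$ over $\tilde{\Q}$, with Gram matrix $Z$ and $\vec T$-coefficients $\vec h_j$.
\item Average the identity $w=e^{-i\vec\theta\cdot\vec S}we^{i\vec\theta\cdot\vec S}$ over a finitely supported distribution of $\vec\theta$ that kills every Gram entry with $\vec h_j\neq\vec h_k$.
\item The block-diagonal Gram matrix yields squares $\tilde d\,e^{-i\vec h\cdot\vec T}e^{i\vec h\cdot\vec T}\tilde d^*=\tilde d\tilde d^*$ with $\tilde d\in\hat{\W}(\Q)$.
\end{itemize}

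Alternatively, within your framework you could prove that $\norm{\pi(a)}$ equals the maximal norm on $\hat{\W}(\Q)$, which is true. Conjugation by $e^{i\vec\eta\cdot\vec T}$ multiplies the central unitaries $e^{i\vec\theta\cdot\vec S}$ by the characters $e^{\pm i\vec\eta\cdot\vec\theta}$. This forces their joint spectrum to be the whole Bohr dual. That route, too, hinges on the $\vec T$'s.
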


The proof of this lemma is given in \Cref{app:Archimedean}, by combining the simplicity of the Weyl algebra with Schmmüdgen's positivstellensatz for Archimedean quadratic modules \cite{Schmudgen2009}.

\begin{remark}
\label{remark:Archimedeanity}
    Let $\Q$ be a finite subspace of quadratures. Lemma \ref{lemma:Archimedeanity} implies that, for any $w\in \hat{\W}(\Q)$ with $\pi(w)\geq 0$ and any $\epsilon > 0$, it holds that $w+\epsilon 1$ is SOS over $\Q$.
\end{remark}

We next introduce a useful characterization of the SOS property, in terms of positive semidefinite matrices, subject to linear constraints.
\begin{remark}\label{remark:zdecomp}
    $w\in \hat{\W}$ is SOS over $\Q$ iff there exist $Q_1,\ldots,Q_s\in \Q$ and an $s\times s$ positive semidefinite matrix $Z$ such that
    \begin{equation}
    w=\sum_{j,k=1}^sZ(j,k)e^{iQ_j}e^{-iQ_k}.
    \label{SOS_Z}
    \end{equation}
    Indeed, let $w$ admit the SOS decomposition (\ref{SOS_decomp_q}), for some $\{d_l\}_l\subset \hat{\W}(\Q)$. Then we find $s\in\N$ and $Q_1,...,Q_s\in \Q$ such that $d_l=\sum_{j=1}^s c_l^je^{iQ_j}$. Inserting this into \eqref{SOS_decomp_q}, we find \eqref{SOS_Z} with $Z = \sum_l \proj{c_l} \geq 0$ for vectors $\ket{c_l} := (c_l^j)_j$. Similarly, for any $w$ admitting a decomposition (\ref{SOS_Z}), with $Z\geq 0$, we can find vectors $\{\ket{c_l}\}_l$ such that $Z=\sum_l\proj{c_l}$, in which case $w$ admits an SOS decomposition (\ref{SOS_decomp_q}) with $d_l$ given as above.
\end{remark}

\subsection{Weyl POVMs and Weyl instruments}
We are ready to introduce a family of POVMs that will play an important role in this work. Ideally, we would like to define the set of all tuples $(M_a)_a$ of Hermitian elements of $\hat{\W}$ with the property that, for an operator representation $\pi$ of $\hat{\W}$, $(\pi(M_a))_a$ defines a POVM. 

In this respect, by our earlier remarks, we know that (finitely many) elements of the Weyl $\star$-algebra $\hat{\W}$ can be regarded as elements of $\hat{\W}(\Q)$, for some finite-dimensional subspace of quadratures $\Q$. In addition, by Remark \ref{remark:Archimedeanity} we do not lose much if, rather than enforcing the condition $\pi(M_a)\geq 0$, we simply demand that $M_a$ be SOS over $\Q$. Taking Remark \ref{remark:zdecomp} into account, we arrive at the following definition:
\begin{defin}
Let $A \in \N$ and let $\Q$ be a subspace of quadratures. An $A$-outcome \emph{Weyl POVM} over $\Q$ is a tuple $M := (M_a)_{a=1}^A$ of elements of $\hat{\W}(\Q)$ such that, for some $s\in\N$, there exist a tuple of different quadratures $Q := (Q_j)_{j=1}^s \in \Q^s$ and a tuple $Z:=(Z_a)_{a=1}^A$ of positive semidefinite $s\times s$ matrices with
\begin{equation}
M_a=\sum_{j,k=1}^sZ_a(j,k)e^{iQ_j}e^{-iQ_k} 
\label{sos}
\end{equation}
and $\sum_a M_a=1$. The pair $(Q,Z)$ will be called a \emph{matrix representation} of POVM $M$.
\end{defin}
Note that the normalization condition is equivalent to $\bar{Z} := \sum_a Z_a$ satisfying
\begin{equation}\label{znorm}
    \tr (\bar{Z}) = 1, \qquad \sum_{\substack{j,k=1\\j\neq k}}^s \bar{Z}(j,k) e^{iQ_j} e^{-iQ_k} = 0.    
\end{equation}

From all the above, it follows that, for any Weyl POVM $(M_a)_a$, the tuple of operators $(\pi(M_a))_a$ defines a POVM.

One wonders \emph{how much} we might be missing by restricting ourselves to implementing Weyl POVMs. As the next proposition shows, not much.
\begin{prop}
\label{prop:approx_POVM}
Let $(M_a)_{a=1}^A \in B(\H)^A$ be a POVM and $\pi$ an irreducible representation of $\hat{\W}$ on $B(\H)$. Then, there exists a sequence $(M^n)_n$ of Weyl POVMs such that
\begin{equation}
\slim{n\to\infty} \pi(M^n_a)=M_a, \forall a.    
\end{equation}
Further, if $(M_a) \subset \A_\pi(\Q)$ for a subspace of quadratures $\Q$, then $M_a^n$ can be chosen as elements of $\hat{\W}(\Q)$.
\end{prop}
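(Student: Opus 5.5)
The plan is to approximate the square roots of the POVM elements by Weyl-algebra elements using Kaplansky's density theorem. Exact normalization is then restored by letting the last outcome absorb the remainder, and Lemma \ref{lemma:Archimedeanity} certifies that this remainder is SOS.

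First, the density input. Since $\pi$ is irreducible, $\pi(\hat{\W})' = \C\id$, so the von Neumann algebra generated by $\pi(\hat{\W})$ is $B(\H)$. In the second part of the statement we instead work in $\A_\pi(\Q)$, which is the strong closure of $\pi(\hat{\W}(\Q))$ by definition. Both cases are treated uniformly with $\mathcal{N}\in\{B(\H),\A_\pi(\Q)\}$ and the $\star$-algebra $\mathcal{D}\in\{\pi(\hat{\W}),\pi(\hat{\W}(\Q))\}$ strongly dense in it. We apply Kaplansky's density theorem not to $\mathcal{N}$ itself but to the matrix amplification $M_A(\mathcal{N})$, in which $M_A(\mathcal{D})$ is strongly dense. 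The relevant element is the column contraction
\begin{equation}
V := \sum_{a=1}^{A} \sqrt{M_a}\otimes \ket{a}\!\bra{1}, \qquad V^*V = \sum_a M_a = 1 .
\end{equation}
Kaplansky gives a net in the unit ball of $M_A(\mathcal{D})$ converging $\star$-strongly to $V$. Compressing to the first column, we obtain elements $d_a\in\mathcal{D}$ with $\sum_a d_a^* d_a \le 1$ and $d_a\to\sqrt{M_a}$ $\star$-strongly. Because $\H$ is separable, the strong topology is metrizable on bounded sets, so the net can be replaced by a sequence $(d_a^n)_n$. We pick preimages $\tilde d_a^n\in\hat{\W}$ (or $\hat{\W}(\Q)$) with $\pi(\tilde d_a^n)=d_a^n$.

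Next, the construction and normalization. Choose $\delta_n\downarrow 0$ and define
\begin{equation}
M^n_a := (1-\delta_n)\,(\tilde d^n_a)^*\tilde d^n_a \quad (a<A), \qquad M^n_A := 1-\sum_{a<A} M^n_a .
\end{equation}
For $a<A$ each $M^n_a$ is manifestly SOS over $\Q$ (single square), and $\sum_a M^n_a = 1$ holds exactly in the algebra. The element $M^n_A$ is Hermitian and satisfies
\begin{equation}
\pi(M^n_A)\ge 1-(1-\delta_n)\sum_a (d_a^n)^*d_a^n \ge \delta_n 1 > 0 .
\end{equation}
Lemma \ref{lemma:Archimedeanity} therefore applies and shows that $M^n_A$ is SOS over $\Q$. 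Remark \ref{remark:zdecomp} then yields matrix representations for all outcomes. Taking the union of the quadrature tuples and padding the $Z_a$ with zeros, all outcomes share a common tuple of distinct quadratures, so $(M^n_a)_a$ is a Weyl POVM.

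Finally, convergence. Products of uniformly bounded, $\star$-strongly convergent sequences converge strongly, so $\pi(M^n_a)\to M_a$ strongly for $a<A$. For the last outcome, $\pi(M^n_A)\to 1-\sum_{a<A}M_a = M_A$ strongly. The main obstacle is reconciling exact normalization with positivity inside the Weyl algebra, since inverses or square roots of $\sum_a d_a^*d_a$ are unavailable there. This is handled by two choices: the column Kaplansky argument, which gives $\sum_a d_a^*d_a\le 1$ with norm control, and the $\delta_n$ slack, which makes the remainder strictly positive so that Lemma \ref{lemma:Archimedeanity} can be invoked.
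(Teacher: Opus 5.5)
Your proof is correct, but it takes a different route from the paper's.

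\textbf{The paper's route.} The paper approximates each $c_a$ with $c_ac_a^*=M_a$ separately by a $\star$-strongly convergent Weyl sequence $c_a^j$. It therefore only knows that $c^j=\sum_a c_a^j(c_a^j)^*\to 1$ strongly, and has to renormalize in several steps:
\begin{itemize}
\item the uniform boundedness principle gives $\pi(c^j)\le C$;
\item a Stone--Weierstrass polynomial $p(t;\epsilon)$ approximates $1$ on $[0,1]$ and $t^{-1/2}$ on $[1,C]$, and is rescaled so that $t\,p(t;\epsilon)^2<1$ on $[0,C]$;
\item the renormalized elements are $p(c^j;\epsilon)\,c_a^j$;
\item a functional-calculus estimate shows that $p(\pi(c^j);\epsilon_j)\to 1$ strongly, which gives convergence.
\end{itemize}

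\textbf{Your route.} Applying Kaplansky's density theorem to the column $V$, viewed in the $A\times A$ matrix amplification of $\mathcal{N}$, gives you $\sum_a d_a^*d_a\le 1$ directly. A scalar factor $1-\delta_n$ then already makes the remainder strictly positive, and convergence is immediate.

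\textbf{Shared ingredients.} Both arguments use Lemma~\ref{lemma:Archimedeanity} to certify that the remainder term is SOS (you place it in outcome $A$, the paper in outcome $1$). Both use separability of $\H$ to pass from nets to sequences.

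\textbf{Trade-off.} Your route is shorter and avoids the polynomial approximation and the uniform-boundedness step. It relies on Kaplansky in its $\star$-strong form for a unital $\star$-algebra that need not be norm closed. That version is valid, for instance via the self-adjoint $2\times 2$ dilation of $V$. The paper's argument is more hands-on but longer.

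\textbf{Two small points.}
\begin{itemize}
\item $V^*V=1\otimes\proj{1}$, not $1$; only its $(1,1)$ entry matters.
\item The sequence should be extracted from the matrix net itself, not entrywise, so that the bound $\sum_a (d^n_a)^*d^n_a\le 1$ is preserved. This is what your construction does implicitly.
\end{itemize}
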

The proof invokes the Weyl positivstellensatz (namely, Lemma \ref{lemma:Archimedeanity}), and the uniform boundedness principle \cite{Rudin1991FunctionalAnalysis}; the reader can find it in \Cref{app:approx}.

Let us now turn to instruments. Our discussion of Weyl POVMs suggests us to define the following class of instruments.
\begin{defin}
Let $A\in \N$ and let $\Q$ be a subspace of quadratures. An $A$-outcome Weyl instrument over $\Q$ is a tuple $\Omega := (\Omega_a)_{a=1}^A$ of maps on $\hat{\W}$ such that, for some $s\in\N$, there exist a tuple of different quadratures $Q:=(Q_j)_{j=1}^s \subset \Q$ and positive semidefinite $s\times s$ matrices $Z:= (Z_a)_{a=1}^A$ with
\begin{equation}
\Omega_a(X)=\sum_{j,k=1}^sZ_a(j,k)e^{iQ_j} Xe^{-iQ_k}, X \in \hat{\W},
\label{SOS_intr}
\end{equation}
and $\bar{\Omega}(1) = \sum_a \Omega_a(1)=1$. Analogously to the case of Weyl POVMs, the pair $(Q,Z)$ will be called a \emph{matrix representation} of instrument $\Omega$. A Weyl \emph{channel} is a Weyl instrument with $A=1$.
\end{defin}
Normalization is again equivalent to the normalization condition for (matrix representations of) Weyl POVMs, i.e., eq. \eqref{znorm}. Thus, any Weyl POVM can be implemented through a Weyl instrument and any Weyl instrument implements a Weyl POVM. Note further that any Weyl instrument over $\Q$ admits a Kraus-type decomposition with Kraus operators given by $K_a^l := \sum_j c_{a,l}^j e^{iQ_j} \in \hat{\W}(\Q)$ based on the decompositions $Z_a = \sum_l \proj{c_{a,l}}$ (see Remark \ref{remark:zdecomp}). Thus, given an operator representation $\pi$ of $\hat{\W}$, each Weyl instrument $\Omega$ determines a normal instrument $\Omega^\pi$ (on $B(\H)$), simply by replacing its Kraus operators $K_a^l$ by $\pi(K_a^l)$. 

How expressive are Weyl instruments, in comparison with normal instruments? The answer is: extremely! In \Cref{app:approx}, the reader can find a proof for the following analog of Proposition \ref{prop:approx_POVM}.
\begin{prop}
\label{prop:approx_instr}
Let $\Omega$ be a normal instrument (on $B(\H)$) and $\pi$ an irreducible representation of $\hat{\W}$ on $B(\H)$. Then, there exists a sequence of Weyl instruments $(\Omega^n)^n$ such that, for any $X\in B(\H)$,
\begin{equation}
\slim{n\to\infty} \Omega_a^{n,\pi}(X))=\Omega_a(X), \forall a.
\end{equation}
Moreover, for any subspace of quadratures $\Q$, if the Kraus operators of $\Omega$ are in $\A_\pi(\Q)$, then one can choose the Weyl instruments $(\Omega^n)^n$ to be defined over $\Q$.
\end{prop}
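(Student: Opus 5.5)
The plan is to reduce the instrument case to a POVM-type approximation, via a Stinespring/Kraus dilation and Proposition \ref{prop:approx_POVM}. Write $\Omega_a(X)=\sum_j K_a^j X (K_a^j)^*$, with the Kraus family countable. I will first truncate: for each $n$, keep finitely many indices $j\le J_n$. Then $T_n:=\sum_a\sum_{j\le J_n}K_a^j(K_a^j)^*\le 1$ converges strongly to $1$, and the truncated maps converge to $\Omega_a(X)$ strongly for every $X$. This holds because the tails $\sum_{j>J_n}K_a^jX(K_a^j)^*$ are bounded in norm on $\psi$ by $\|X\|\,\langle\psi,\sum_{j>J_n}K_a^j(K_a^j)^*\psi\rangle$ (after the usual Cauchy--Schwarz step on the column operator). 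The normalization is restored by adding a harmless extra Kraus operator $\sqrt{1-T_n}$ to outcome $a=1$; its contribution tends to $0$ strongly.

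Next, I approximate each of the finitely many Kraus operators $K_a^j$ by elements of $\pi(\hat\W)$, or of $\pi(\hat\W(\Q))$ in the localized case. By irreducibility, $\pi(\hat\W)''=B(\H)$. By Kaplansky density, every element of $B(\H)$, resp.\ of $\A_\pi(\Q)$, is a $\star$-strong limit of a bounded net from $\pi(\hat\W)$, resp.\ $\pi(\hat\W(\Q))$. Since $\H$ is separable, the net can be taken to be a sequence on bounded sets. So I obtain $k_a^{j,n}\in\hat\W(\Q)$ with $\pi(k_a^{j,n})\to K_a^j$ $\star$-strongly and $\sup_n\|\pi(k_a^{j,n})\|<\infty$. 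Joint strong continuity of multiplication on bounded sets then gives $\pi(k)X\pi(k)^*\to KXK^*$ strongly, using $\star$-strong convergence for the right factor. Each candidate map $\Omega^n_a(X)=\sum_j k_a^{j,n}X(k_a^{j,n})^*$ has a matrix representation of the form \eqref{SOS_intr}: expand each $k$ in Weyl unitaries, collect the distinct quadratures, and use $Z_a=\sum_j\proj{c_{a,j}}\ge0$, as in Remark \ref{remark:zdecomp}.

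The main obstacle is normalization: $\sum_a\Omega^n_a(1)$ is close to $1$ but not equal to it. My fix is to rescale and complete. Set $S_n:=\sum_{a,j}k_a^{j,n}(k_a^{j,n})^*$, which is self-adjoint in $\hat\W(\Q)$, and choose $\epsilon_n\to0$. Pick $c_n\ge\|\pi(S_n)\|+\epsilon_n$, so that $c_n\to 1$ by the Kaplansky bound; this requires also choosing the approximants with norm $\le\|K\|$ plus small corrections, or simply taking $c_n:=\max(1,\|\pi(S_n)\|)+\epsilon_n$. Then $w_n:=1-S_n/c_n$ satisfies $\pi(w_n)\ge\epsilon_n/c_n>0$, so Lemma \ref{lemma:Archimedeanity} makes $w_n$ SOS over $\Q$, say $w_n=\sum_l d_ld_l^*$. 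Define $\Omega^n_a(X):=c_n^{-1}\sum_jk_a^{j,n}X(k_a^{j,n})^*$, and for $a=1$ add $\sum_l d_lXd_l^*$. This is a Weyl instrument over $\Q$ normalized exactly. It remains to show the extra term tends to $0$ strongly. Its trace against $1$ is $\pi(w_n)=1-\pi(S_n)/c_n\to0$ strongly, since $\pi(S_n)\to\sum K K^*=1$ and $c_n\to1$; this holds in the truncated-plus-completion version, where the completion operator is itself approximated. A positive term $\sum_l d_lXd_l^*$ with $\|X\|\le1$ is dominated by $\pi(w_n)$, which gives $\|\sum_l\pi(d_l)X\pi(d_l)^*\psi\|^2\le\|X\|^2\|\pi(w_n)^{1/2}\psi\|\,\|\pi(w_n)\|^{1/2}\dots\to0$ via Cauchy--Schwarz. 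Ensuring $c_n\to1$ is the delicate point. I expect it requires some care to guarantee $\|\pi(S_n)\|\to1$ rather than merely staying bounded. If norm control fails, I will instead apply Proposition \ref{prop:approx_POVM} to the POVM $(\Omega_a(1))_a$ as a normalization reference, or use a diagonal sequence argument over the truncation and density parameters.
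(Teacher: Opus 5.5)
Your architecture matches the paper's: truncate the Kraus family and put $\sqrt{1-T_n}$ into outcome $1$, approximate the finitely many Kraus operators inside $\pi(\hat{\W}(\Q))$, renormalize, and absorb the defect into outcome $1$ via Lemma \ref{lemma:Archimedeanity}. The gap is in the renormalization, which is the crux of the proof. You flag it but do not close it.

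Dividing by a scalar $c_n\geq\max(1,\norm{\pi(S_n)})+\epsilon_n$ only works if $c_n\to1$. If $c_n\to c>1$ along a subsequence, the rescaled Kraus terms converge to $c^{-1}\Omega_a(X)$ and $\pi(w_n)\to(1-c^{-1})1\neq0$, so nothing converges to $\Omega$. Your inference ``$c_n\to1$ by the Kaplansky bound'' is not valid, for two reasons:
\begin{itemize}
\item Strong convergence $\pi(S_n)\to1$ only gives $\liminf_n\norm{\pi(S_n)}\geq1$, by lower semicontinuity of the norm.
\item Kaplansky applied to each $K_a^j$ separately only gives $\norm{\pi(S_n)}\leq\sum_{a,j}\norm{K_a^j}^2$, which is generally well above $1$; it is already $2$ for $K_1=P$, $K_2=1-P$.
\end{itemize}
Nothing in a per-operator construction rules out behaviour like $k_1=P+\proj{e}$, $k_2=1-P$ with $e\in(1-P)\H$ a unit vector far out in a basis. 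Both are contractions, and both are close to $P$ and $1-P$ on any fixed finite set of vectors. Yet $k_1k_1^*+k_2k_2^*=1+\proj{e}$ has norm $2$. Your fallbacks (using $(\Omega_a(1))_a$ as a normalization reference, or a further diagonal argument) do not control $\norm{\sum_i k_ik_i^*}$ at fixed truncation level, so they do not repair this.

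The paper never needs norm control. Following the proof of Proposition \ref{prop:approx_POVM}, it multiplies the approximants on the left by $p(S_n;\epsilon)$. Here $p(\cdot;\epsilon)$ is a polynomial approximating $\min\{1,t^{-1/2}\}$ on $[0,C]$, with $C\geq\sup_n\norm{\pi(S_n)}$, rescaled so that $tp(t;\epsilon)^2<1$ on $[0,C]$. This has three consequences:
\begin{itemize}
\item The new Kraus operators $p(S_n;\epsilon)k$ stay in $\hat{\W}(\Q)$.
\item Their sum $S_np(S_n;\epsilon)^2$ has $\pi(\cdot)$ bounded strictly below $1$, so Lemma \ref{lemma:Archimedeanity} applies to the defect.
\item $p(\pi(S_n);\epsilon_n)\to1$ strongly, simply because $\pi(S_n)\to1$ strongly and the continuous functional calculus is strongly continuous on bounded sets.
\end{itemize}
In effect, only the part of the spectrum above $1$ gets shrunk.

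Alternatively, your Kaplansky route can be rescued by applying Kaplansky once, to the whole row $(K_1,\ldots,K_m)$ (including $\sqrt{1-T_n}$). Regard it as an element of $M_m(\A_\pi(\Q))$, the strong closure of $M_m(\pi(\hat{\W}(\Q)))$; it has norm $\norm{\sum_iK_iK_i^*}^{1/2}=1$. The first rows of the contractive approximants give $\star$-strongly convergent $k_i^n\in\hat{\W}(\Q)$ with $\sum_i\pi(k_i^n)\pi(k_i^n)^*\leq1$ exactly. Then $c_n:=1+\epsilon_n$ does the job.

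A minor slip: your tail estimate is missing a square root. It should read $\norm{X}\,\langle\psi,\sum_{j>J_n}K_a^j(K_a^j)^*\psi\rangle^{1/2}$.
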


Therefore, a Weyl instrument is to a normal instrument what a Weyl POVM is to a general POVM. 

\subsection{Diagonal Weyl instruments}
\label{sec:diagonal}

There is an important subclass of Weyl instruments, which we will call diagonal. It is important in two regards:
\begin{enumerate}
    \item Diagonal Weyl instruments are prototypical for instruments with nice causality and realizability properties. As we will show and exploit in \Cref{sec:causality,sec:implementability}, if $\hat{\W}$ is associated with a QFT on spacetime as described in \Cref{subsec:freeqft}, then any diagonal Weyl instrument is causal and, moreover, realizable in Klein-Gordon fields within the FV-framework.

    \item Diagonal Weyl instruments allow inducing other, more complicated quantum operations. As we will show in this section, any POVM whatsoever can be asymptotically implemented through diagonal instruments. Moreover, any Weyl instrument admits a probabilistic implementation through a diagonal instrument, i.e., we show that there is a diagonal instrument which induces the Weyl instrument with certain nonzero probability.
\end{enumerate}
The two claims in Item 2 can be proven without invoking a field theory setting, so in this section we continue working with an abstract $\star$-Weyl algebra $\hat{\W}$. We start by defining diagonal Weyl instruments:
\begin{defin}\label{def:diag}
A Weyl instrument $\Omega$ over $\Q$ with outcomes in $A$ is \emph{diagonal} if its measurement channel $\bar{\Omega} = \sum_{a} \Omega_a$ takes the form
\begin{equation}\label{diagonal_def}
\bar{\Omega} =\sum_{j=1}^sp_je^{iQ_j}\bullet e^{-iQ_j},
\end{equation}
for some weights $\{p_j\}_j\subset \R_+$, $\sum_jp_j=1$ and $Q_1,...,Q_s\in\Q$.
\end{defin}
Given the matrix representation $(Q,Z)$ of a Weyl instrument, diagonality is equivalent to 
\begin{equation}
    \bar{Z}:=\sum_aZ_a=\operatorname{diag}(p_1,\ldots,p_s),    
\end{equation}
for some $p_j > 0$ with $\sum_j p_j =1$. Namely, diagonal instruments are those Weyl instruments $(Z,Q)$ for which the measurement channel matrix $\bar{Z}$ is diagonal. Alternatively, diagonality of a Weyl instrument $\Omega$ can be seen equivalent to
\begin{equation}
\bar{\Omega}(e^{iQ})=K(Q)e^{iQ},
\label{charac_diagonal_alt}
\end{equation}
for all quadratures $Q$ and for some function $K$ mapping quadratures to complex numbers.

It is easy to see that any Weyl instrument $(\Omega_a)_{a=1}^A$ can be probabilistically implemented through a diagonal Weyl instrument $(\tilde{\Omega}_a)_{a=1}^{A+1}$ in the following sense: for some $\mu\in (0,1]$, when the diagonal instrument $\tilde{\Omega}$ is applied, outcome $A+1$ will occur with probability $1-\mu$, independently of the state. Any other outcome $a=1,...,A$ for a state $\omega$ on $\hat{\W}$ will occur with probability $p_a := \mu \omega(\Omega_a(1))$ and will leave a post-selected state
\begin{equation}
\omega_a(\bullet)=\frac{\omega\left(\Omega_a(\bullet)\right)}{\omega\left(\Omega_a(1)\right)},
\end{equation}
namely, as if we had updated $\omega$ with respect to the original instrument $\Omega$. Conditioned on the outcome of $\tilde{\Omega}$ not being $A+1$, instrument $\tilde{\Omega}$ will therefore implement instrument $\Omega$. In quantum information science, one would say that $\tilde{\Omega}$ provides a \emph{heralded implementation} of $\Omega$ with success probability $\mu$.

Let us see why such a diagonal instrument $\tilde{\Omega}$ must always exist. Let $(Q,Z)$ be a matrix representation of $\Omega$, with $(Z_a)_a\subset \C^{s\times s}$. If $\bar{Z}=\sum_{a=1}^AZ_a$ is a diagonal matrix, then $\Omega$ itself is diagonal, and we can take $\tilde{\Omega}=\Omega$, $\mu=1$. If $\bar{Z}$ is not diagonal, let $\mu$ be the largest number such that, for some probability distribution $(p_1,...,p_s)$, it holds that
\begin{equation}
\mbox{diag}(p_1,...,p_s)-\mu\bar{Z}\geq 0.
\end{equation}
The fact that
\begin{equation}
\mbox{diag}\left(\frac{1}{s},...,\frac{1}{s}\right)-\frac{1}{sz_{\max}}\bar{Z}=\frac{1}{s}(\id-\frac{1}{z_{\max}}\bar{Z})\geq 0,
\end{equation}
where $z_{\max{}}>0$ denotes the largest eigenvalue of $\bar{Z}$, implies that $\mu >0$.

Then, the instrument $\tilde{\Omega}=(Q,\tilde{Z})$, with
\begin{equation}
\tilde{Z}_a= \left\lbrace \begin{matrix} \mu Z_a, & \mbox{ for }a=1,...,A,\nonumber\\ 
\mbox{diag}(p_1,...,p_n)-\mu\bar{Z}, & \mbox{ for }a=A+1 \end{matrix} \right.
\end{equation}
is diagonal by construction and provides a heralded implementation of $\Omega$ with success probability $\mu$.

Another reason why diagonal Weyl instruments are relevant is that Weyl POVMs and even general POVMs can be asymptotically implemented through a diagonal Weyl instrument defined over the same subspace of quadratures. Our main result is:

\begin{theorem}
\label{theo:diagonal_POVM}
    Given any POVM $(M_a)_a \in B(\H)^A$ and any irreducible representation $\pi$ of $\hat{\W}$ on $B(\H)$, there exists a sequence $(\Omega^n)_n$ of diagonal Weyl instruments such that
\begin{equation}
    \slim{n\to\infty} \pi(\Omega^n_a(1)) = M_a, \forall a.
\end{equation}
    Moreover, if $(M_a)_a \in \A_\pi(\Q)^A$ for some subspace of quadratures $\Q$, then the instruments $(\Omega^n)_n$ can be defined over $\Q$.
\end{theorem}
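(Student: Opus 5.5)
The plan is to reduce to Weyl POVMs and then rewrite a Weyl POVM in a ``Toeplitz'' form whose measurement matrix is automatically diagonal. By Proposition \ref{prop:approx_POVM}, we may pick Weyl POVMs $M^n$, over $\Q$ whenever $(M_a)_a\subset\A_\pi(\Q)$, with $\pi(M^n_a)\to M_a$ strongly. Mixing each one slightly with the trivial POVM, $M^n_a\mapsto(1-\epsilon_n)M^n_a+\epsilon_n/A$, changes nothing in the limit. It also makes every element strictly positive, so Remark \ref{remark:Archimedeanity} leaves an $\epsilon$-slack in every SOS certificate. It therefore suffices to show the following: for every Weyl POVM $M$ over $\Q$ with elements bounded below by $\epsilon$, and every $\delta>0$, there is a diagonal Weyl instrument $\Omega$ over $\Q$ with $\|\pi(\Omega_a(1))-\pi(M_a)\|<\delta$. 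Operator-norm closeness then gives the strong convergence by a diagonal argument.

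For the key step, use the Weyl relations to expand each element as a finite Weyl ``Fourier series'', $M_a=\sum_{P\in F}m_a(P)e^{iP}$, with $F\subset\Q$ finite and the $e^{iP}$ for distinct $P$ linearly independent. This independence holds by simplicity of the Weyl algebra, so the normalisation reads $\sum_a m_a(P)=\delta_{P,0}$. Now let $\Lambda_N\subset\Q$ be a large finite grid of quadratures inside the lattice generated by $F$, and let $g_N$ be a normalised Fejér-type window on $\Lambda_N$. Take as index set $T\in\Lambda_N$ and set
\begin{equation}
Z_a(T,T')=g_N(T)\overline{g_N(T')}\,m_a(T-T')\,e^{-\frac{i}{2}[T,T']},
\end{equation}
where the phase is chosen so that $e^{iT}e^{-iT'}=e^{\frac{i}{2}[T,T']}e^{i(T-T')}$ is cancelled. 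Two things are then immediate. First, $\sum_aZ_a(T,T')=|g_N(T)|^2\delta_{T,T'}$, so the matrix $\bar Z$ is diagonal, which is exactly the diagonality criterion following Definition \ref{def:diag}. Second, $\sum_{T,T'}Z_a(T,T')e^{iT}e^{-iT'}=\sum_P\big(\sum_T g_N(T)\overline{g_N(T-P)}\big)m_a(P)e^{iP}$, which converges to $M_a$ in norm as $N\to\infty$ because the window autocorrelation tends to $1$ on the finite set $F$. A final renormalisation by an $o(1)$ correction, absorbed using the $\epsilon$-slack, restores $\sum_a\Omega_a(1)=1$ exactly.

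The main obstacle is showing that each $Z_a$ is positive semidefinite. This is a twisted Bochner statement: the function $P\mapsto m_a(P)$, with the symplectic cocycle phase, should be ``positive definite'' precisely because $M_a\geq0$. My first attempt is to use the SOS certificate $M_a=\sum_l d_ld_l^*$ with $d_l=\sum_jc^j_le^{iQ_j}$ and to choose the grid and window compatibly with the $Q_j$. Then $m_a(T-T')e^{-\frac{i}{2}[T,T']}$ can be written as a sum over $l$ of Gram-type kernels $\sum_{j,k}c^j_l\overline{c^k_l}\,[\,T-Q_j=T'-Q_k\,]\times\text{phase}$. Each such kernel is manifestly PSD, being the Gram matrix of the vectors $T\mapsto c_l^{j}e^{i\theta}$ obtained by shifting $T$ by $Q_j$. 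The trouble is that a single $P$ can arise from several pairs $(j,k)$, so this is only correct up to the overlap multiplicities near the grid boundary. The Fejér averaging is designed to make those boundary errors $O(1/N)$, small compared with the $\epsilon$-slack, and I expect handling this error rigorously to be the delicate part of the proof.
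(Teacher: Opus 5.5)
Your construction is correct and takes a different route from the paper. However, the step you single out as the main obstacle is not one: positivity of $Z_a$ holds exactly, with no boundary error.

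\textbf{Positivity is exact.} Let $\rho$ be the regular representation of $\hat{\W}(\Q)$ on $\ell^2(\Q)$, with $\Q$ regarded as a discrete group, given by $\rho(e^{iP})\delta_S=e^{-\frac{i}{2}\sigma(P,S)}\delta_{P+S}$. This is the GNS representation of the tracial state $e^{iP}\mapsto\delta_{P,0}$. A direct computation gives $\langle\delta_T,\rho(M_a)\delta_{T'}\rangle=m_a(T-T')e^{-\frac{i}{2}\sigma(T,T')}$, which is exactly your kernel $K_a$ (reading your $[T,T']$ as $\sigma(T,T')$).

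A Weyl POVM element is SOS by definition, so $\rho(M_a)=\sum_l\rho(d_l)\rho(d_l)^*\geq0$. Restricting $T,T'$ to $\Lambda_N$ only takes a principal submatrix, hence $Z_a=D_gK_aD_g^*\geq0$ with $D_g=\mathrm{diag}(g_N)$.

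In your Gram picture, the vectors are $\rho(d_l)^*\delta_T$, supported on $\{T-Q_j\}_j$. They must be regarded as elements of $\ell^2(\Q)$, not of $\ell^2(\Lambda_N)$. The pairs $(j,k)$ contributing to the $(T,T')$ entry are then precisely those with $T-Q_j=T'-Q_k$, i.e.\ $Q_j-Q_k=T-T'$. There is no multiplicity mismatch, whether or not $T-Q_j$ lies in the grid.

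\textbf{Normalisation is also exact.} Since $\rho(1)=\id$, you get $\sum_aK_a=\id$, so $\sum_a\Omega_a(1)=\sum_T|g_N(T)|^2=1$ exactly. The mixing with the trivial POVM, the $\epsilon$-slack from Remark~\ref{remark:Archimedeanity}, and the final renormalisation can all be dropped.

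Concretely, take $\vec R$ a $\Z$-basis of the free, finitely generated group spanned by $F$, write $P=\vec p\cdot\vec R$, and use the uniform window on $\{\vec t\cdot\vec R:\|\vec t\|_\infty\leq N\}$. The induced element is $\sum_{P\in F}\prod_i\bigl(1-\tfrac{|p_i|}{2N+1}\bigr)m_a(P)e^{iP}$, which converges to $M_a$ in norm.

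One small correction: linear independence of the $e^{iP}$ follows from this same representation, via $\rho(e^{iP})\delta_0=\delta_P$. Simplicity of the algebra alone does not give it.

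\textbf{Comparison with the paper.} The paper instead proves Lemma~\ref{lemma_asymp}, precomposing the canonical instrument with the displacement average $\Omega'$ over $\|\vec c\|\leq N$.
\begin{itemize}
\item The matrix of $\Omega_a\circ\Omega'$ is positive for free, being that of a composition of CP maps.
\item In rows and columns indexed by bulk points it coincides with $(2N+1)^{-n}$ times your $K_a$.
\item The cut-off $\|\vec c\|\leq N$ truncates it near the boundary, so after summing over $a$ an off-diagonal boundary block of trace $O(1/N)$ survives.
\item The paper removes that block with the sign twirl over the $(2m+1)^n$ sublattice classes (the auxiliary channel $\Omega''$) and a renormalisation, ending with the POVM $(1-\epsilon)M_a+\epsilon/A$.
\end{itemize}
You truncate the output index set rather than the displacement set. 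This preserves positivity and the identity $\sum_aK_a=\id$ simultaneously, so no repair is needed.

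The paper's route buys a structural statement about arbitrary Weyl channels, with an operational reading: the original instrument followed by a random displacement, up to an $\epsilon$-branch. Its approximant is an explicit convex mixture with $M_a$. Yours gives a Fej\'er-type smoothing of $M_a$, produced by an instrument whose action on general $X$ has no simple relation to the original one. That difference is irrelevant for Theorem~\ref{theo:diagonal_POVM}. For the theorem itself your argument is shorter and more elementary, and it turns your ``twisted Bochner'' intuition into a one-line proof.
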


Theorem \ref{theo:diagonal_POVM} is a consequence of the following structural result for Weyl channels, proven in \Cref{app:structural}.
\begin{lemma}
\label{lemma_asymp}
Let $\Omega$ be a Weyl channel over $\Q$. For arbitrarily small $\epsilon>0$ there exist Weyl channels $\Omega', \Omega''$ over $\Q$ such that
\begin{equation}
\Omega_\epsilon:=(1-\epsilon)\Omega\circ\Omega'+\epsilon\Omega''
\label{approx_channel}
\end{equation}
is diagonal.
\end{lemma}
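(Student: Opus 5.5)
The plan is to reduce the statement to the heralding construction given just before Theorem \ref{theo:diagonal_POVM}. Let $(Q,Z')$ be a matrix representation of the composed channel $\Omega\circ\Omega'$. Suppose we find a probability vector $(p_m)_m$ with
\begin{equation}
D:=\operatorname{diag}(p_1,\ldots,p_{s'})\geq(1-\epsilon)Z'.
\end{equation}
Then we set $\Omega''$ to be the Weyl map with matrix $Z'':=\epsilon^{-1}\bigl(D-(1-\epsilon)Z'\bigr)\geq 0$. The map $\Omega_\epsilon$ in \eqref{approx_channel} then has matrix $D$, so it is diagonal. Normalization of $\Omega''$ comes for free: $D$ is diagonal with unit trace, so its Weyl map is unital, and $\Omega\circ\Omega'(1)=1$, hence $\Omega''(1)=1$. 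So the whole task is to choose $\Omega'$ such that the heralding parameter
\begin{equation}
\mu(Z'):=\max\{\mu:\ \exists\,p,\ \operatorname{diag}(p)-\mu Z'\geq0\}
\end{equation}
satisfies $\mu(Z')\geq1-\epsilon$.

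For $\Omega'$ I would take a random displacement $\Omega'(X)=\sum_{l}q_l\,e^{iP_l}Xe^{-iP_l}$, with the $P_l$ ranging over a large finite grid $\Lambda_N$ inside the (finite-dimensional) span $V\subset\Q$ of the quadratures $Q_j$. The weights $q_l$ should be a smooth, slowly varying profile, e.g.\ a normalized discretized Gaussian or Fejér kernel of width $N$. By the Weyl relation \eqref{eq:abstractweylrel},
\begin{equation}
\Omega\circ\Omega'(X)=\sum_{j,k,l}Z(j,k)\,q_l\,\phi_{jkl}\,e^{i(Q_j+P_l)}Xe^{-i(Q_k+P_l)},
\end{equation}
where $\phi_{jkl}$ are explicit phases linear in $[Q_j-Q_k,P_l]$.

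The crucial step is to reindex this sum by the \emph{total} quadrature $R=Q_j+P_l$, merging all terms that share the same $R$. If the grid is chosen as a lattice containing the differences $Q_j-Q_k$, every $R$ arises from many $l$'s. After reindexing, the matrix $Z'$ is indexed by lattice points $R,R'$, with entries
\begin{equation}
Z'(R,R')=\sum_{j,k}Z(j,k)\,q_{R-Q_j}\,\phi\,\delta_{R-Q_j,\,R'-Q_k}.
\end{equation}
When the $Q_j$ are not rationally related, I would embed them in a lattice of higher rank. Because $q$ varies slowly on the scale of the $Q_j$, $Z'$ is approximately a Kronecker-type product of $Z$ with the diagonal matrix $\operatorname{diag}(q_R)$, twisted by the phases. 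Conjugating by a diagonal unitary to absorb the phases, I would compare $Z'$ with a block structure in which each block $Z$ appears with weight $\approx q_R$. I would then choose $p_R:=\lambda q_R$ and show $\operatorname{diag}(p)-(1-\epsilon)Z'\geq0$. Here the fact that the shifts spread a single block over many lattice sites, with the normalization constraint \eqref{znorm} on $\bar Z=Z$, should force the off-diagonal weight per site to be at most $O(1/N)$ relative to the diagonal weight. A Schur-test or Gershgorin bound would then give $\mu(Z')\geq1-O(1/N)$.

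The main obstacle is precisely this last quantitative estimate. A naive Gershgorin bound fails, because a single block $Z$ is not diagonally dominant. What must be exploited is that merging terms with equal total quadrature averages the phases $\phi_{jkl}$ over $l$. These averages are Fourier transforms of the slowly varying profile $q$ evaluated at the nonzero symplectic frequencies $[Q_j-Q_k,\cdot\,]$, and they decay as $N\to\infty$. If the commutators $[Q_j-Q_k,P]$ vanish for all $P\in V$, the phases give no decay. I would then enlarge the grid to include quadratures in $\Q$ conjugate to $Q_j-Q_k$, since a nondegenerate partner is needed to dephase. Alternatively I would use the normalization identity \eqref{znorm} directly: off-diagonal terms with commuting, hence operator-wise cancelling, exponentials can be reshuffled into the diagonal. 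I expect handling these two mechanisms uniformly to be the technically delicate part, and I would first treat the case $s=2$ explicitly to calibrate the weights.
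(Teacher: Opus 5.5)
Your reduction is sound and matches how the paper concludes. A unit-trace diagonal $D\geq(1-\epsilon)Z'$ gives $Z''=\epsilon^{-1}(D-(1-\epsilon)Z')\geq0$ with unital Weyl map. Your $\Omega'$ (displacements averaged over a box in a lattice spanned by integer-independent quadratures, then merging terms with equal total quadrature) is also essentially the paper's.

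The gap is the decisive estimate, which you leave open and attribute to the wrong mechanism. After merging, a single entry involves no averaging over $l$. For each $j$ the shift is fixed, $P_l=R-Q_j$, and the Weyl relation gives
\begin{equation}
Z'(R,R+d)=e^{\frac{1}{2}[d,R]}\sum_{j}Z(j,k_j)\,q_{R-Q_j}\,e^{\frac{1}{2}[Q_j,d]},\qquad Q_{k_j}=Q_j+d .
\end{equation}
This is a sum of at most $s$ terms, one per $j$, and no Fourier transform of $q$ appears. When $q$ is constant on $\{R-Q_j\}_j$, the entry equals $q\,e^{\frac12[d,R]}$ times the coefficient of $e^{-id}$ in $\Omega(1)$. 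That coefficient vanishes for $d\neq0$, because $\Omega(1)=1$ and Weyl elements with distinct quadratures are linearly independent; this is eq.~\eqref{sum_lemma}. Hence, with \emph{uniform} weights on a box of side $2N+1$, every row and column of $Z'$ indexed by $\vec{j}$ with $\|\vec{j}\|\leq N-m$ (in lattice coordinates) is \emph{exactly} diagonal. These rows carry trace $1-O(1/N)$, and only a boundary block of trace $O(1/N)$ is non-diagonal. This cancellation, forced by normalization rather than dephasing, is the core of the paper's proof. Your smooth profile would also work, but only through this same identity, which makes bulk off-diagonal entries $O(q_R/N)$.

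Your dephasing picture fails outright when $\Q$ is isotropic. Take $\Q=\operatorname{span}\{Q\}$, $Q_j=(j-1)Q$ for $j=1,2,3$, and $Z\geq0$ with $Z(1,3)=0$, $Z(1,2)=-Z(2,3)=a\neq0$. Every phase is trivial, and $\Q$ contains no conjugate partner; leaving $\Q$ is forbidden by the statement. The off-diagonal terms cancel only on $X=1$, not as \emph{maps}, so they cannot be ``reshuffled into the diagonal'', and the channel is genuinely non-diagonal. Yet in the bulk $Z'(R,R+Q)=q\,(Z(1,2)+Z(2,3))=0$. Similarly, the case $s=2$ you meant to calibrate on is automatically diagonal by \eqref{znorm}, since $e^{\pm i(Q_1-Q_2)}$ are independent.

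Once the bulk is exactly diagonal, the boundary needs no delicate estimate. Its entries have range at most $2m$, so it is dominated by a diagonal matrix whose trace is a constant $C(m,n)$ times its own. You can get this from $|Y(j,k)|\leq\frac12(Y(j,j)+Y(k,k))$ plus Gershgorin, or from the paper's sign-twirl over a $(2m+1)^n$-colouring; either gives $\mu(Z')\geq1-O(1/N)$. Your worry that a block $Z$ is not diagonally dominant concerns the unmerged matrix and disappears once the normalization cancellation is used.
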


With Lemma \ref{lemma_asymp}, Theorem \ref{theo:diagonal_POVM} is proven as follows:

\begin{proof}[Proof of Theorem \ref{theo:diagonal_POVM}] Let $(M_a)_{a=1}^A$ be a Weyl POVM with matrix representation $(Q,Z)$, with $Q_1,...,Q_n\in {\cal Q}$. Let $(\Omega_a)_a$ be the canonical instrument realization of $(M_a)_a$, namely,
\begin{equation}
\Omega_a(X)=\sum_{j,k}Z_a(j,k)e^{iQ_j}Xe^{-iQ_k}.
\end{equation}
Consider then the Weyl channel $\Omega=\sum_a\Omega_a$. By Lemma \ref{lemma_asymp}, we have that there exist Weyl channels $\Omega',\Omega''$ such that eq. (\ref{approx_channel}) holds. We thus define the Weyl instrument
\begin{equation}
\tilde{\Omega}_a:=(1-\epsilon) \Omega_a\circ\Omega'+\epsilon A^{-1}\Omega''.
\end{equation}
By eq. (\ref{approx_channel}), it holds that $\sum_a\tilde{\Omega}_a$ is diagonal. Moreover, from the equation above, the induced POVM $M'$ with $M'_a=\tilde{\Omega}_a(1) = (1-\epsilon)M_a + \epsilon A^{-1}1$ satisfies
\begin{equation}
\|M'_a-M_a\| \leq \epsilon (\norm{M_a} + A^{-1}\norm{1}) \leq \epsilon\left(1+A^{-1}\right),
\end{equation}
so that $\tilde{\Omega}$ comes arbitrarily close to implementing the Weyl POVM $M$. Combining this result with Proposition \ref{prop:approx_POVM}, we also get arbitrarily close to implementing any POVM $M \subset B(\H)^A$.\end{proof}

\section{Causality of Weyl instruments}
\label{sec:causality}

In this section we explore the causality properties of Weyl instruments and channels. We take them to act on the Weyl $\star$-algebra $\hat{\W}$ associated with a given QFT on a fixed spacetime $\M$, as described in \Cref{subsec:freeqft}. Since the Weyl algebra is linearly spanned by its generators (which we denote by $e^{i\Phi(h)}$, $h \in C_0^\infty(\M)$), to study the causality properties of a channel it suffices to investigate its action on $e^{i\Phi(h)}$.

In the following, for any pair of test functions $f,g\in C_0^\infty(\M)$, we say that $f,g$ are equivalent if $f-g=Kr$, for some $r\in C_0^\infty(\M)$, where $K$ denotes the hyperbolic differential operator defining the field's classical equation of motion. We denote by $[f]$ the resulting equivalence class and remark that $\Phi(f) = \Phi([f])$. We say that $s=[f]$ is \emph{localizable} in a region $\RR \subset \M$ iff $e^{i\Phi(s)} \in \hat{\W}(\RR)$, i.e., there exists $g \in C_0^\infty(\M)$ with $s = [g]$ and $\supp g \subset \RR$. We recall that, for a compact $\OO \subset \M$, $\OO^\pm := \M \setminus J^\mp(\OO)$ defines the in/out-region with respect to $\OO$. 

The proofs of most of our results on causality rely on one or several of the following lemmas: 
\begin{lemma}
\label{lemma:localization}
$e^{i\Phi(s)}$ is localizable in region $\RR$ iff there exists a Cauchy surface $\Sigma \subset \M$ such that $\supp Es \cap \Sigma \subset \RR$.
\end{lemma}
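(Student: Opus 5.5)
The plan is to prove the two directions separately, using the standard Cauchy-data description of solutions of Green-hyperbolic equations.

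\emph{Forward direction.} Assume $e^{i\Phi(s)}$ is localizable in $\RR$, so that $s=[g]$ with $\supp g\subset\RR$. Then $Es=Eg$ and $\supp Eg\subset J(\supp g)$. Since $\supp g$ is compact, I would take a Cauchy surface $\Sigma$ that meets $\supp g$. This is possible: given any point $x\in\supp g$, globally hyperbolic spacetimes admit a Cauchy surface through $x$, and more generally one through any compact achronal piece. The difficulty is that I need $J(\supp g)\cap\Sigma\subset\RR$, and this does not hold for an arbitrary $\Sigma$ through one point.

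A cleaner route is to use causal convexity of $\RR$ together with the time-slice construction, applied in reverse. Choose Cauchy surfaces $\Sigma_\pm$ of $\M$ lying to the future and past of $\supp g$. Because $\RR$ is open and causally convex, it is a globally hyperbolic spacetime in its own right, so I can pick a Cauchy surface $\Sigma_\RR$ of $\RR$ that passes through $\supp g$ in a suitable sense. I would then try to extend $\Sigma_\RR$ to a Cauchy surface of $\M$ while keeping $J(\supp g)\cap\Sigma$ inside $\RR$. This extension step is the technical obstacle. The simplest fix I would attempt first is the following: pick a smooth spacelike Cauchy surface $\Sigma$ of $\M$ that intersects $J(\supp g)$ only within a thin neighbourhood of $\supp g$. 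Such a $\Sigma$ can be obtained by bending a Cauchy surface so that it hugs $\supp g$. This works because $\chull(\supp g)\subset\RR$ is compact, so $J(\supp g)\cap\Sigma$ can be made close to $\chull(\supp g)$ and hence contained in the open set $\RR$. I expect to cite known results on Cauchy surfaces through compact spacelike acausal sets (Bernal--S\'anchez).

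\emph{Reverse direction.} Assume $\supp Es\cap\Sigma\subset\RR$ for some Cauchy surface $\Sigma$. I will use the standard construction. Let $\chi$ be a smooth function equal to $1$ in the future of a Cauchy surface $\Sigma_+$ slightly to the future of $\Sigma$ and $0$ in the past of $\Sigma_-$ slightly to the past of $\Sigma$. Set $g:=K(\chi\, Es)$. Then $g$ is supported in $J^+(\Sigma_-)\cap J^-(\Sigma_+)\cap\supp Es$. The well-known identity $E g= Es$ gives $[g]=s$; this relies on $\chi Es$ and $(\chi-1)Es$ being solutions of past- and future-compact support respectively.

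It then remains to check that $\supp g\subset\RR$ when $\Sigma_\pm$ are chosen close enough to $\Sigma$. The set $\supp Es\cap\Sigma$ is compact because $Es$ has spacelike compact support. Its Cauchy development within $\RR$, namely $D(\RR\cap\Sigma)\cap\RR$, contains an open neighbourhood of that set. I would choose $\Sigma_\pm$ inside a small slab around $\Sigma$ so that the slab intersected with $\supp Es\subset J(\supp Es\cap\Sigma)$ stays inside $\RR$. Compactness of $\supp Es\cap\Sigma$ together with openness of $\RR$ should allow this.

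The main obstacle is the forward direction: producing the Cauchy surface adapted to $\RR$.
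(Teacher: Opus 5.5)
The paper does not prove this lemma itself; it cites Jubb (Sec.~V) and the AQFT textbook. Your reverse direction is correct; the forward direction has a genuine gap.

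\textbf{Reverse direction.} This is the standard cut-off argument, and it works after three small repairs.
\begin{itemize}
\item With the paper's convention $E=E^--E^+$ ($E^+$ retarded), your choice gives $E K(\chi Es)=-Es$. Take $g:=K((1-\chi)Es)$ instead.
\item $\chi Es$ and $(1-\chi)Es$ are past-/future-compact \emph{functions}, not solutions. Passing from $Eg=Es$ to $[g]=s$ needs exactness, $\ker E=K C_0^\infty(\M)$.
\item $\supp g\subset\RR$ follows because the compact sets $\supp Es\cap J^+(\Sigma_-)\cap J^-(\Sigma_+)$ decrease to $\supp Es\cap\Sigma\subset\RR$ as the slab thins. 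The remark about $D(\RR\cap\Sigma)$ is unnecessary. If $\Sigma$ is only a topological Cauchy surface, sandwich it between smooth spacelike ones.
\end{itemize}

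\textbf{Forward direction: the gap.} You leave this direction open, and your proposed fix is not justified. Compactness of $\chull(\supp g)$ says nothing about where a Cauchy surface meets the causal shadows of $\supp g$. A spacelike Cauchy surface through the hull can tilt towards the light cone. For example, in $1+1$ Minkowski take $(-\tau,0)\in\supp g$ and $\Sigma=\{t=(1-\eta)|x|\}$, smoothed at the origin. Then $J^+((-\tau,0))\cap\Sigma=\{|x|\le\tau/\eta\}$, which leaves any bounded $\RR$ once $\eta$ is small. So ``bending $\Sigma$ to hug $\supp g$'' is exactly the claim that needs proof. Extending all of $\Sigma_\RR$ is also the wrong target: it is non-compact and may run into the null boundary of $\RR$.

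\textbf{The missing idea.} Extend only a compact piece, and use that an acausal Cauchy surface is met exactly once by each inextendible causal curve.
\begin{enumerate}
\item Let $K:=\supp g$ and let $\Sigma_\RR$ be a smooth spacelike Cauchy surface of the globally hyperbolic spacetime $\RR$ (one per component).
\item By causal convexity, $\Sigma_\RR$ is acausal in $\M$.
\item Also by causal convexity, every inextendible causal curve of $\M$ through $K$ meets $\RR$ in a single, $\RR$-inextendible segment. Hence it meets $\Sigma_\RR$, necessarily inside $J(K)\cap\Sigma_\RR=J_\RR(K)\cap\Sigma_\RR$, which is compact.
\item Choose a compact smooth $S\subset\Sigma_\RR$ with boundary containing this set. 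By Bernal--S\'anchez, extend $S$ to a smooth spacelike, hence acausal, Cauchy surface $\Sigma$ of $\M$.
\item Every inextendible causal curve through $K$ now meets $\Sigma$ at a point of $S$, and by acausality nowhere else.
\item Every point of $J(K)\cap\Sigma$ lies on such a curve. Therefore $\supp Eg\cap\Sigma\subset J(K)\cap\Sigma\subset S\subset\RR$.
\end{enumerate}
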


\begin{lemma}\label{lemma:psni_simpler}
Let $\Omega$ be a channel localized in a compact $\OO\subset \M$. $\Omega$ is causal with respect to $\OO$ iff, for all pairs of regions $\RR_{\pm}\subset\OO^{\pm}$, with $\overline{\RR_{+}}\subset D^+(\RR_{-})$, it holds that $\Omega(\A(\RR_{+}))\subset \A(\RR_-)$.
\end{lemma}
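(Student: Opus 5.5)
The forward direction is immediate. If $\Omega$ is causal (PSNI) with respect to $\OO$, then for any pair $\RR_\pm\subset\OO^\pm$ with $\overline{\RR_+}\subset D^+(\RR_-)$ we also have $\overline{\RR_+}\subset D(\RR_-)$, since $D^+(\RR_-)\subset D(\RR_-)$. So \eqref{PSNI_def} applies directly. The work is in the converse: I will reduce a general pair with $\overline{\RR_+}\subset D(\RR_-)$ to a pair covered by the restricted hypothesis. This needs an auxiliary region $\S_-\subset\OO^-$ with $\overline{\RR_+}\subset D^+(\S_-)$ and $\A(\S_-)\subset\A(\RR_-)$.

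\textbf{Construction.} I will take $\S_- := D(\RR_-)\cap\OO^-$.
\begin{itemize}
\item \emph{It is a region.} $D(\RR_-)$ is a region by Proposition~\ref{prop:D_plus}, and $\OO^-$ is open and causally convex. An intersection of causally convex sets is causally convex, so $\S_-$ is open and causally convex. It is nonempty because $\RR_-\subset D(\RR_-)\cap\OO^-$. If one insists on finitely many connected components, I would replace $\S_-$ by the union of those components that meet $\overline{\RR_+}$ or $\RR_-$. Compactness of $\overline{\RR_+}$ (or working componentwise) should make this finite, but this bookkeeping needs a check.
\item \emph{Algebra inclusion.} Isotony and the timeslice property give $\A(\S_-)\subset\A(D(\RR_-))=\A(\RR_-)$.
\end{itemize}

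\textbf{Key geometric step: $\overline{\RR_+}\subset D^+(\S_-)$.} Take $x\in\overline{\RR_+}\subset D^+(\RR_-)\cup D^-(\RR_-)$.
\begin{itemize}
\item If $x\in D^+(\RR_-)$, every past-inextendible causal curve from $x$ meets $\RR_-\subset\S_-$, so $x\in D^+(\S_-)$.
\item If $x\in D^-(\RR_-)$, it suffices to show $x\in\S_-$, since $\S_-\subset D^+(\S_-)$. We already have $x\in D(\RR_-)$, so we need $x\in\OO^-=\M\setminus\overline{J^+(\OO)}$. Here I use global hyperbolicity: for compact $\OO$, the set $J^+(\OO)$ is closed, so $\OO^-=\M\setminus J^+(\OO)$. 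Since $x\in D^-(\RR_-)$, any future-inextendible causal curve from $x$ reaches some $y\in\RR_-$, so $x\in J^-(y)$. If $x$ were in $J^+(\OO)$, then $y\in J^+(\OO)$ as well, contradicting $y\in\RR_-\subset\OO^-$. Hence $x\in\OO^-$.
\end{itemize}

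\textbf{Conclusion.} The pair $(\RR_+,\S_-)$ now satisfies the restricted hypothesis, which gives
\[
\Omega(\A(\RR_+))\subset\A(\S_-)\subset\A(\RR_-).
\]
This is exactly the PSNI condition \eqref{PSNI_def}.

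I expect the main obstacle to be the second bullet of the key geometric step: the topological care about closures (closedness of $J^+(\OO)$ for compact $\OO$, and the fact that the hypothesis concerns $\overline{\RR_+}$ rather than $\RR_+$). The finite-components technicality in the definition of a region is a secondary concern.
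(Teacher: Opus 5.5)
Your argument is correct, but it goes the opposite way from the paper's.

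\textbf{The paper's route.} Proposition \ref{proposition:psniplus} keeps $\RR_-$ fixed and modifies the future region. It picks $\RR'_-$ with $\overline{\RR'_-}\subset\RR_-$, splits $\RR_+$ into its parts lying in $D^\pm(\RR'_-)$, pushes each part into $J^+(\cdot)\cap D^+(\RR'_-)$, and takes the causal hull. This yields $\T_+\subset\OO^+$ with $\overline{\T_+}\subset D^+(\RR_-)$ and $\RR_+\subset D(\T_+)$, so the timeslice property is applied on the future side, $\A(\RR_+)\subset\A(\T_+)$.

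\textbf{Your route.} You keep $\RR_+$ and enlarge the past region to $\S_-=D(\RR_-)\cap\OO^-$, so timeslice is applied on the past side. This is shorter and fully explicit: there is no auxiliary shrinking, splitting or hull, and the geometry reduces to $\OO^-$ being a past set.

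\textbf{What you need to add.} You use causal convexity of all of $D(\RR_-)$, but Proposition \ref{prop:D_plus} only covers $D^+$. The mixed case takes one line. Let $z$ lie on a future-directed causal curve from $x_0\in D^-(\RR_-)$ to $x_1\in D^+(\RR_-)$, and suppose $z\notin D(\RR_-)$. Then both segments, from $x_0$ to $z$ and from $z$ to $x_1$, must meet $\RR_-$, which forces $z\in\RR_-$, a contradiction. In the other ordering, $x_0\in D^+(\RR_-)$ before $x_1\in D^-(\RR_-)$, we get $r_0\le x_0\le z\le x_1\le r_1$ with $r_0,r_1\in\RR_-$, and causal convexity of $\RR_-$ applies.

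\textbf{Finitely many components.} Your worry here resolves cleanly, but not via compactness: regions in the PSNI condition need not be bounded, so $\overline{\RR_+}$ need not be compact. Instead, each $x\in\S_-$ is joined to some $y\in\RR_-$ by a causal segment. That segment stays in $D(\RR_-)$ by causal convexity, and it stays in $\OO^-$ because it lies entirely below $x$ or entirely below $y$. Hence every component of $\S_-$ meets $\RR_-$, so $\S_-$ has at most as many components as $\RR_-$ and is a genuine region.
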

Note that, rather than demanding $\overline{\RR_{+}}\subset D(\RR_{-})$ like in the PSNI condition (\ref{PSNI_def}), we enforce the stronger constraint $\overline{\RR_{+}}\subset D^+(\RR_{-})$.

\begin{lemma}\label{lemma:weyllocality}
Let $\Omega$ be a Weyl instrument. Then, $\Omega$ is local to a compact $\OO$ iff its Kraus operators are localizable in any region $\RR$ containing $\OO$. In particular, $\Omega$ can be assumed to have a matrix representation $(Q,Z)$ with $Q_j = \Phi(f_j)$ and $\mbox{Supp}(Ef_j) \subset J(\OO)$.
\end{lemma}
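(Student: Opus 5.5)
The plan is to prove the two directions separately, with the substance lying in the "only if" direction. The "if" direction is direct. Suppose the Kraus operators $K_a^l=\sum_j c_{a,l}^j e^{i\Phi(f_j)}$ are localizable in every region $\RR\supset\OO$. Take $X\in\hat{\W}(\OO^\perp)$; then $X\in\hat{\W}(\RR')$ for some region $\RR'$ with $\overline{\RR'}\subset\OO^\perp$. Choose a region $\RR\supset\OO$ that is causally disconnected from $\RR'$; this is possible because $\OO$ is compact and $\overline{\RR'}\cap\overline{J(\OO)}=\emptyset$. By microcausality each $K_a^l$ commutes with $X$, so
\begin{equation*}
\bar{\Omega}(X)=\sum_{a,l}K_a^l X (K_a^l)^*=X\,\bar{\Omega}(1)=X .
\end{equation*}

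For the "only if" direction, work directly with the matrix representation $(Q,Z)$, $Q_j=\Phi(f_j)$, and choose it so that the $[f_j]$ are pairwise distinct. Apply $\bar{\Omega}$ to a generator $e^{i\Phi(h)}$ with $\supp h\subset\OO^\perp$. The Weyl relations give
\begin{equation*}
e^{i\Phi(f_j)}e^{i\Phi(h)}e^{-i\Phi(f_k)}=e^{-i\langle f_j,Eh\rangle}\,e^{i\Phi(f_j)}e^{-i\Phi(f_k)}\,e^{i\Phi(h)}
\end{equation*}
up to the corresponding phase bookkeeping, so that
\begin{equation*}
\bar{\Omega}(e^{i\Phi(h)})=\sum_{j,k}\bar Z(j,k)\,e^{-i\langle f_j,Eh\rangle}\,e^{i\Phi(f_j)}e^{-i\Phi(f_k)}\,e^{i\Phi(h)} .
\end{equation*}
Locality demands that this equal $e^{i\Phi(h)}$. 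Since $\bar{\Omega}(1)=1$, we may subtract $\bar\Omega(1)e^{i\Phi(h)}$ and use linear independence of the Weyl generators. For each fixed difference class $[f_j-f_k]$ this yields
\begin{equation*}
\sum_{(j,k):[f_j-f_k]=d}\bar Z(j,k)\bigl(e^{-i\langle f_j,Eh\rangle}-1\bigr)\times\text{phase}=0 \quad\text{for all such } h .
\end{equation*}

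Next I use that $h\mapsto\langle f_j,Eh\rangle$ for $\supp h\subset\OO^\perp$ is a linear functional. Scaling $h\to th$ gives, for each $d$, an exponential polynomial in $t$ that vanishes identically. Such an expression vanishes only if the terms group by equal frequencies $\langle f_j,Eh\rangle$. Combining this with positivity of $\bar Z$, I aim to show the following: after merging indices $j$ whose $f_j$ agree as functionals on $\OO^\perp$, i.e. $E(f_j-f_{j'})$ vanishes on $\OO^\perp$, one can rewrite the Kraus operators so that every surviving $f_j$ satisfies $Ef_j=0$ on $\OO^\perp$, that is, $\supp Ef_j\subset\overline{J(\OO)}$.

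The concrete route I would try is the following. Choose a reference index $j_0$ and factor out the unitary $e^{i\Phi(f_{j_0})}$. Then show that each $e^{i\Phi(f_j-f_{j_0})}$ commutes with $\hat{\W}(\OO^\perp)$, and that the common factor $e^{i\Phi(f_{j_0})}$ can be absorbed, since the global unitary conjugation it induces must also act trivially on $\OO^\perp$. Once $\supp Ef_j\cap\OO^\perp=\emptyset$, I apply Lemma~\ref{lemma:localization}. For any region $\RR\supset\OO$, choose a Cauchy surface $\Sigma$ through $\RR$ such that $\Sigma\cap J(\OO)\subset\RR$; this uses compactness of $\OO$ and global hyperbolicity. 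It follows that $e^{i\Phi(f_j)}$, and hence every Kraus operator, is localizable in $\RR$. The refined representation $(Q,Z)$ with $\supp Ef_j\subset J(\OO)$ is then the "in particular" clause, up to taking closures.

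I expect the main obstacle to be the middle step. There I must pass from the vanishing of the combined sum to individual support conditions on each $Ef_j$. Off-diagonal entries of $\bar Z$ can cancel, and Kraus representations are not unique. The tool for this is positivity of $\bar Z$, via Cauchy–Schwarz $|\bar Z(j,k)|^2\le\bar Z(j,j)\bar Z(k,k)$. I would first treat the diagonal terms $d=0$: these force $\sum_j\bar Z(j,j)e^{-i\langle f_j,Eh\rangle}=1$ for all $h$. Since a convex combination of phases equal to $1$ requires every phase to be $1$, each $f_j$ with $\bar Z(j,j)>0$ must annihilate $E\,C_0^\infty(\OO^\perp)$. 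This gives $Ef_j=0$ on $\OO^\perp$ directly, and it handles the obstacle provided the distinct-$[f_j]$ normalization keeps the $d=0$ class free of off-diagonal pairs.
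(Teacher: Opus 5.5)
Your proof is correct, but its key step takes a different route from the paper's.

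The paper first proves, for an arbitrary normal instrument, that locality forces every Kraus operator to commute with $\A(\OO^\perp)$. This is Proposition~\ref{prop:normallocality}, via the vanishing of $\sum_n (K_n^*R-RK_n^*)^*(K_n^*R-RK_n^*)$. It then gets localizability in any region $\RR\supset\OO$ from the Haag property. Only after that does it read off the condition on each $f_j$ from the Kraus operators, by linear independence of the Weyl generators, once indices with $Z_a(j,j)=0$ for all $a$ have been discarded.

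You bypass the Kraus operators. You expand $\bar\Omega(e^{i\Phi(h)})$ in the Weyl basis. Because the quadratures are distinct, the $d=0$ coefficient is $\sum_j \bar Z(j,j)e^{-i\langle f_j,Eh\rangle}$. Positivity together with $\tr\bar Z=1$ turns locality into the statement that a convex combination of phases equals $1$. This forces $\langle f_j,Eh\rangle=0$ whenever $\bar Z(j,j)>0$. Your argument is more elementary and self-contained, but it is specific to Weyl instruments, whereas the paper's commutant argument works for any normal instrument.

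Three points to tidy up:
\begin{itemize}
\item Say explicitly that indices with $\bar Z(j,j)=0$ are dropped. Positivity of each $Z_a$ kills the whole corresponding row and column, and this is what produces the ``in particular'' representation.
\item The detour of factoring out $e^{i\Phi(f_{j_0})}$ and ``absorbing'' it is unnecessary, since the $d=0$ argument already handles every index.
\item Your Cauchy-surface claim $\Sigma\cap J(\OO)\subset\RR$ is asserted without proof. This step is most easily done as in the paper. Each $e^{i\Phi(f_j)}$ commutes with $\hat{\W}(\OO^\perp)$, hence with $\hat{\W}(\T^\perp)$ for any region $\T$ with $\OO\subset\T$ and $\overline{\T}\subset\RR$. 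The Haag property then places it in $\hat{\W}(\RR)$.
\end{itemize}

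Finally, $J(\OO)$ is already closed for compact $\OO$ in a globally hyperbolic spacetime, so no closure is needed in the support statement.
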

The first lemma is proven in \cite[Section V]{Jubb_2022} or \cite[Thm.~3.3.1]{AQFT_adv}; the remaining two are original, and the reader can find a proof thereof in \Cref{app:lemmas_causal}.

\begin{remark}
Despite the intuition that the interaction zone $\OO$ is ``where the experimenter acts'', there exist Weyl instruments $(Q,Z)$, causal with respect to $\OO$, such that, for all $j$, there exists no $f_j\in C_0^\infty(\OO)$ with $Q_j = \Phi(f_j)$. Example: let $\OO=\OO_1\cup\OO_2$ as in Figure \ref{fig:weird_test_function}, and define the Weyl channel $\Omega(\bullet)=e^{i\Phi(f)}\bullet e^{-i\Phi(f)}$. 

\begin{figure}
\centering
\includegraphics[width=.45\textwidth]{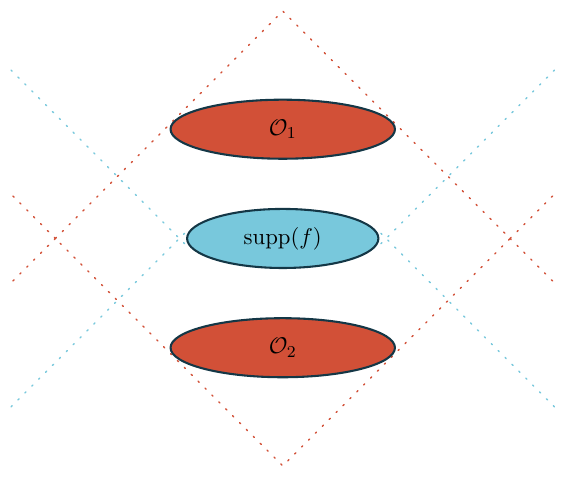}
\caption{A causal displacement channel whose test function cannot be represented in $\OO$.}
\label{fig:weird_test_function}
\end{figure}

Taking $f\in C_0^\infty(\M)$ such that $\mbox{Supp}(f)\subset \chull(\OO)$, we guarantee that $\Omega$ is local to $\OO$, as its Kraus operator will commute with any element of $\A(\chull(\OO)^\perp)=\A(\OO^\perp)$. Moreover, since $\Omega$ is Weyl diagonal, it is also causal with respect to $\OO$, see section \ref{sec:diagonalcausal}. However, as the picture shows, $f$ can be chosen such that $\mbox{Supp}(f) \not\subset D(\OO)$, in which case $[f]$ cannot be localized (in general) in $\OO$.

\end{remark}

\subsection{Characterization of causal Weyl instruments}
\label{sec:causal_charac}
We next provide a simple characterization of the set of Weyl instruments complying with the PSNI condition. Before proceeding, though, we need to introduce a couple of definitions. Consider a Weyl channel $\Omega$ with matrix representation $(Q,Z)$, with $Q_j=e^{-i\Phi(f_j)}$, $f_j\in C_0^\infty(\M)$, for all $j$, and $[f_j]\not=[f_k]$, for all $j\not=k$. We define the set ${\cal D}$ of differences between test functions
\begin{equation}
{\cal D}=\left\{[f_j-f_k]:j,k\right\}
\end{equation}
and, for any $d\in {\cal D}$, the associated set of semi-sums
\begin{equation}
{\cal S}(d)=\left\{\left[\frac{f_j+f_k}{2}\right]:[f_j-f_k]=d\right\}.
\end{equation}
Then, the action of $\Omega$ on a generator $e^{i\Phi(h)}$ can be expressed as:
\begin{equation}
\Omega(e^{i\Phi(h)})=\sum_{d\in {\cal D}}\Omega^{(d)}(e^{i\Phi(h)}),
\end{equation}
with
\begin{equation}
\Omega^{(d)}(e^{i\Phi(h)}):=\sum_{s\in {\cal S}(d)}c^d_se^{-i\EE{s}{h}}e^{i\Phi(h+d)},
\label{expr_omega_d}
\end{equation}
where, for $s=\left[\frac{f_j+f_k}{2}\right]$, $d=[f_j-f_k]$,
\begin{equation}
c^d_s:=Z(j,k)e^{i\frac{\EE{f_j}{f_k}}{2}}.
\end{equation}
We warn the reader that the linear map $\Omega^{(d)}$ is neither an instrument nor an outcome thereof: in fact, for $d\not=0$, $\Omega^{(d)}(\bullet)$ maps Hermitian to non-Hermitian elements of $\hat{\W}$.

With this notation, the normalization condition $\Omega(1)=1$ can be expressed as
\begin{align}
\sum_{s\in {\cal S}(d)}c^d_s=&0, \mbox{ for } d\not=0,\nonumber\\
&1, \mbox{ for } d=0.
\end{align}

Now, consider a potential interaction zone $\OO\subset \M$ for $\Omega$, with $J(\OO)\supset \cup_j\mbox{Supp}(Ef_j)$. Given $d\in {\cal D}$, call ${\cal P}(d)$ the set of partitions $\{A_1,...,A_n\}$ of ${\cal S}(d)$ such that there exists a region $\RR_{+}\subset \OO^+$ satisfying these two conditions:
\begin{enumerate}
    \item $J^{-}(\overline{\RR_{+}})\not\supset \mbox{int}\left(\mbox{Supp}(Ed)\right)\cap \OO^-=:G_d$.
    \item For $j,k=1,...,n$, $j>k$, there exists $x\in \RR_{+}$ with $(Es)(x)\not=(Et)(x)$, for all $s\in A_j$, $t\in A_k$.
\end{enumerate}
\begin{remark}
\label{remark:G_d}
$G_d\subset J^-(\OO)$. This follows from the relations $\mbox{Supp}(Ed)\subset J^-(\OO)\cup J^+(\OO)$, $\OO^-=\M\setminus J^+(\OO)$.
\end{remark}

\begin{remark}
Establishing that some partition $\{A_1,...,A_n\}$ belongs to ${\cal P}(d)$ amounts to finding points $\{x_{jk}:j>k\}\subset \OO^+$ with $J^-\left(\{x_{jk}:j>k\}\right)\not\supset G_d$ and such that
\begin{equation}
(Es)(x_{jk})\not=(Et)(x_{jk}), \mbox{ for all } s\in A_j, t\in A_k.
\label{not_equal_points}
\end{equation}
 Indeed, if said region $\RR_{+}$ exists, then there must exist points $\{x_{jk}:j>k\}\subset \RR_{+}$ satisfying the equation above. Moreover, $J^-\left(\{x_{jk}:j>k\}\right)\subset J^-\left(\RR_{+}\right)\not\supset G_d$. Conversely, if there exist points $\{x_{jk}:j>k\}\subset \OO^+$ with $J^-\left(\{x_{jk}:j>k\}\right)\not\supset G_d$ satisfying eq. (\ref{not_equal_points}), then there exist sufficiently small neighborhoods $B_{jk}$ of $x_{jk}$ within $\OO^+$ such that $J^-\left(\overline{B}\right)\not\supset G_d$, with $B=\cup_{j>k}B_{jk}$. In that case, taking $\RR_+=\chull(B)$, it follows that $\RR_+ \subset \OO^+$ and $J^- (\RR_+)=J^-(B)\not\supset G_d$.
\end{remark}

In Appendix \ref{app:causal_charact}, we derive the following characterization of causal Weyl channels.
\begin{theorem}
\label{theo:causal}
Let $\Omega$ be a Weyl channel, local with respect to $\OO$. Then $\Omega$ is causal with respect to $\OO$ iff, for any non-zero $d\in {\cal D}$ and any partition $\{A_1,...,A_n\}\in {\cal P}(d)$, it holds that
\begin{equation}
\sum_{s\in A_j}c_s^d=0, \mbox{ for } j=1,...,n.
\label{eq:null_sum}
\end{equation}
\end{theorem}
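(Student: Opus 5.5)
The plan is to reduce causality to a statement about the single generators $e^{i\Phi(h)}$ through Lemma \ref{lemma:psni_simpler}, and then read off the condition coefficient by coefficient. Fix regions $\RR_\pm\subset\OO^\pm$ with $\overline{\RR_+}\subset D^+(\RR_-)$, and let $h$ be supported in $\RR_+$. Since $\overline{\RR_+}\subset D^+(\RR_-)$, the class $[h]$ is also localizable in $\RR_-$ (Lemma \ref{lemma:localization}). By the decomposition above,
\begin{equation*}
\Omega(e^{i\Phi(h)})=\sum_{d\in{\cal D}}\Big(\sum_{s\in{\cal S}(d)}c^d_s e^{-i\EE{s}{h}}\Big)e^{i\Phi(h+d)} .
\end{equation*}
The generators $e^{i\Phi(h+d)}$ for distinct classes $h+d$ are linearly independent in $\hat{\W}$. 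Moreover, $\hat{\W}(\RR_-)$ is spanned by generators localizable in $\RR_-$. Hence $\Omega(e^{i\Phi(h)})\in\hat{\W}(\RR_-)$ iff, for every $d$ with $[d]$ not localizable in $\RR_-$, the scalar coefficient $F_d(h):=\sum_{s}c^d_s e^{-i\EE{s}{h}}$ vanishes. The case $d=0$ is always harmless, which is why only non-zero $d$ appear in the statement. By linearity and SOT-density, the same holds on $\A(\RR_\pm)$.

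Next I analyse $F_d$ as a function of $h\in C_0^\infty(\RR_+)$. Since $\EE{s}{h}=\int h\,Es$ only sees $Es|_{\RR_+}$, I group ${\cal S}(d)$ into classes of equal restriction $Es|_{\RR_+}$. Characters $h\mapsto e^{-i\int h\,u}$ for distinct smooth $u$ are linearly independent on the vector space $C_0^\infty(\RR_+)$; this is the standard Dedekind/Artin independence of distinct characters. So $F_d\equiv0$ iff the sum of $c^d_s$ over each class vanishes.

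The geometric step links condition 1 to non-localizability. Using Lemma \ref{lemma:localization} together with Remark \ref{remark:G_d}, I want to show the following equivalence. There exists an admissible $\RR_-\subset\OO^-$ with $\overline{\RR_+}\subset D^+(\RR_-)$ in which $[d]$ is not localizable iff $J^-(\overline{\RR_+})\not\supset G_d$. For ($\Leftarrow$), I choose $\RR_-$ to be a thin neighbourhood of a piece of a Cauchy surface of the region $\OO^-$ inside $J^-(\overline{\RR_+})$, adjusted so that $\overline{\RR_+}\subset D^+(\RR_-)$. Any Cauchy surface then misses a part of $\supp Ed$ in $G_d$ outside $J^-(\overline{\RR_+})$, hence outside $\RR_-$. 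For ($\Rightarrow$), if $G_d\subset J^-(\overline{\RR_+})$, then $\supp Ed$ near $\RR_-$ can be captured on a Cauchy surface lying in $\RR_-\cup(\text{region where } Ed=0)$. Some care with closures and interiors is needed here; this is where the definition of $G_d$ via $\mathrm{int}$ enters.

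Both directions then follow from this equivalence. For necessity, take a partition $\{A_j\}\in{\cal P}(d)$ with its witnessing $\RR_+$, and choose $\RR_-$ as above so that $d$ is non-localizable. By condition 2, the equal-restriction partition on $\RR_+$ refines $\{A_j\}$. Thus each block sum is a sum of class sums, which vanish by the character argument. For sufficiency, given $\RR_\pm$ with $d$ not localizable in $\RR_-$, condition 1 holds for $\RR_+$. I then need the equal-restriction partition of ${\cal S}(d)$ to lie in ${\cal P}(d)$, i.e.\ condition 2 for it. The issue is that condition 2 demands a single point separating all pairs between two classes, possibly after shrinking $\RR_+$. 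I expect this to be the main obstacle. Each $Es-Et$ with $s,t$ in different classes is a non-vanishing smooth solution on $\RR_+$, so its nonzero set is open and non-empty. I need the finitely many such open sets to intersect. My first attempt is to show each nonzero set is dense in $\RR_+$, via unique continuation for $K$ applied to the solution $E(s-t)$. Failing that, I would refine the partition by an iterative shrinking of $\RR_+$, which keeps condition 1 valid since $J^-$ only decreases.
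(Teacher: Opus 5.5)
Your structure matches the paper's proof. You reduce to $\overline{\RR_+}\subset D^+(\RR_-)$ via Lemma \ref{lemma:psni_simpler}, split the channel into the maps $\Omega^{(d)}$, and use a geometric equivalence that is exactly the paper's Lemma \ref{lemma:causal}, followed by linear independence of characters. Applying Artin's lemma on all of $C_0^\infty(\RR_+)$, together with the partition of ${\cal S}(d)$ by equality of $Es|_{\RR_+}$, is a slightly cleaner substitute for the paper's finite family $h(\alpha)$ and its maximally refined partitions. Your necessity direction is complete. One small point on the ($\Leftarrow$) half of the geometric lemma: fix $x\in G_d\setminus J^-(\overline{\RR_+})$ and take the Cauchy surface through $x$. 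Then choose $\RR_-$ thin enough that $x\notin J(\overline{\RR_-})$. With an arbitrary Cauchy surface, $x$ may lie in $J^+(\RR_-)$ and the argument breaks.

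The real gap is in sufficiency. You leave open the step you call the main obstacle, and your first proposed fix would fail. Hyperbolic operators have no unique continuation from open sets, so the nonzero set of $E(s-t)$ need not be dense in $\RR_+$. Indeed $E(s-t)$ is supported in $J(\OO)$, so it vanishes identically on $\RR_+\cap\OO^\perp$, which can be a nonempty open part of $\RR_+$. Shrinking $\RR_+$ does not help either: it coarsens the equal-restriction partition and only gives $F_d(h)=0$ for $h$ supported in the smaller region.

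In fact the obstacle does not exist. Let $C_1,\dots,C_n$ be the classes of ${\cal S}(d)$ under $Es|_{\RR_+}=Et|_{\RR_+}$, with pairwise distinct common restrictions $u_1,\dots,u_n$. For a pair $j>k$, every difference $Es-Et$ with $s\in C_j$, $t\in C_k$ is the same function $u_j-u_k\neq 0$ on $\RR_+$. So any $x_{jk}\in\RR_+$ with $u_j(x_{jk})\neq u_k(x_{jk})$ separates all such $s,t$ at once. Condition 2 also allows a different point for each pair, so no intersection of open sets is ever needed. Condition 1 you already get from the non-localizability of $d$ in $\RR_-$ via the geometric lemma. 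Hence $\{C_1,\dots,C_n\}\in{\cal P}(d)$ with witness $\RR_+$ itself. The hypothesis then makes each class sum vanish, so $F_d(h)=0$ for all $h\in C_0^\infty(\RR_+)$. This is the step the paper simply asserts in its sufficiency argument.
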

For Weyl channels $\Omega$ with an interaction zone $\OO$ for which the sets of partitions $\P(d)$ are computable, determining the causality of $\Omega$ with respect to $\OO$ is therefore a decidable problem. The same considerations apply to Weyl instruments.

\subsection{Diagonal Weyl instruments}
\label{sec:diagonalcausal}
In the previous section, we provided a full characterization of causal Weyl instruments. In this section, we focus on diagonal Weyl instruments (see \Cref{sec:diagonal}), for which the membership problem is trivial. Indeed, due to relation (\ref{charac_diagonal_alt}), any diagonal Weyl instrument is PSNI with respect to any region to which it is local. E.g.: if $\Omega$ has the matrix representation $(Q,Z)$ with $Q_j = \Phi(f_j)$, then $\Omega$ is PSNI with respect to any compact containing $\cup_j \supp f_j$.

How rare are diagonal Weyl instruments among the greater set of causal Weyl instruments? To address this question in a meaningful way, we must first define a notion of typicality over test functions.
\begin{defin}\label{def:generic}
Given a tuple of regular compact sets $(\K_j\subset \M)_{j=1}^m$, we say that $(f_j \in C^\infty(\K_j))_j$ are \emph{generic} test functions with respect to $(\K_j)_j$ if, for any open set $V\subset \M$, with
\begin{equation}
V\cap J(\K_j)\not=\emptyset, \forall j, 
\label{causal_cond_generic}
\end{equation}
it holds that the functions
\begin{equation}
\left\{\left.E f_j\right|_{V}\right\}_j
\end{equation}
are linearly independent. We say that a QFT is \emph{generic} if, for any $m\in\N$ and any regular compact sets $(\K_j)_{j=1}^m \subset \M$, there exist test functions $(f_j)_j$ that are generic with respect to $(\K_j)_j$. Similarly, we say that test functions and QFTs are \emph{null-generic} if we replace $J(\K_j)$ in \eqref{causal_cond_generic} by $N(\K_j)$.
\end{defin}
In Appendix \ref{app:generic}, we prove that the Klein-Gordon fields in Minkowski spacetime in 1+1 and 1+3 dimensions are generic QFTs. Similarly, we show that any scalar QFT with a commutator function $E$ associated with a normally hyperbolic equation of motion is null-generic. This includes the Klein-Gordon theory in $d \geq 2$ spacetime dimensions.

The reason to call any such $n$-tuple of functions ``generic" is that, in any (null) generic QFT, (null) generic test functions are dense among the set of test functions defined in $C_0^\infty(\K_1)\times ...\times C_0^\infty(\K_m)$, see Remark \ref{remark:density}.

Coming back to Weyl instruments, recall they they are always defined over a finite-dimensional subspace of quadratures $\Q$, with $\Q=\mbox{span}\{[f_j]\}_{j=1}^m$. Let us assume that $\mbox{Supp}(f_j)\subset \OO$, for all $j$, for some regular compact $\OO\subset \M$. We next show that, if $\{f_j\}_j$ are (null) generic with respect to the compacts $(\OO,...,\OO)$, then any causal Weyl instrument over $\Q$ with interaction zone $\OO$ must be diagonal. This is a consequence of the following result:
\begin{theorem}
\label{theo:generic_diagonal}
Let $\OO\subset \M$ be a regular compact, and let $\Q$ be a subspace spanned by the quadratures $\Phi(f_j), j=1,...,n$, with $f_j\in C_0^\infty(\OO),\forall j$. Let $\RR\subset \OO^+$ be a region and let $\T\subset {\cal O}^-$ be an open set, with $J^-(\overline{\RR})\cap \T=\emptyset$. Define the sets of functions:
\begin{equation}
{\cal F}_\RR:=\left\{\left.Ef_j\right|_{\RR}\right\}_{j=1}^n,{\cal F}_\T:=\left\{\left.Ef_j\right|_{\T}\right\}_{j=1}^n.
\label{cond_lin_indep}
\end{equation}
If the functions in ${\cal F}_\RR$ and in ${\cal F}_\T$ are linearly independent, then the only causal Weyl channels over $\Q$ with interaction zone $\OO$ are diagonal.
\end{theorem}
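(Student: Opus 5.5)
The plan is to apply the characterization of Theorem \ref{theo:causal} to the partition of each ${\cal S}(d)$ into singletons. The given region $\RR$ and open set $\T$ will certify that this partition belongs to ${\cal P}(d)$ for every non-zero $d$. This forces every coefficient $c^d_s$ with $d\neq 0$ to vanish, hence every off-diagonal entry of $Z$ to vanish, which is diagonality.

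\emph{Setup.} Let $\Omega$ be a causal Weyl channel over $\Q$ with matrix representation $(Q,Z)$. Write $Q_j=\Phi(g_j)$ with $g_j=\sum_l \lambda_{jl} f_l$, where $\lambda_j\in\R^n$. We may take the $g_j$ to have pairwise distinct classes $[g_j]$. Since $\supp g_j\subset\OO$, the channel is local to $\OO$, so Theorem \ref{theo:causal} applies. For an instrument, the same argument is applied to its measurement channel $\bar\Omega$, since diagonality is a property of $\bar Z$.

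A first observation is that linear independence of ${\cal F}_\RR$ (resp.\ ${\cal F}_\T$) says exactly this: for any $\mu\in\R^n\setminus\{0\}$, the function $E(\sum_l\mu_l f_l)$ does not vanish identically on $\RR$ (resp.\ on $\T$). In particular the classes $[f_l]$ are linearly independent. Consequently, for elements of $\mathrm{span}\{f_l\}$, distinct classes correspond to distinct coefficient vectors, and any non-zero class has $E$ not identically zero on $\RR$ and on $\T$.

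\emph{Checking that singletons lie in ${\cal P}(d)$.} Fix a non-zero $d=[g_j-g_k]\in{\cal D}$ and partition ${\cal S}(d)$ into singletons $\{s\}$. For the second condition of ${\cal P}(d)$: if $s\neq t$ in ${\cal S}(d)$, then $s-t$ is a non-zero class in the span, so $E(s-t)$ is not identically zero on $\RR$. This gives a point $x_{st}\in\RR$ with $(Es)(x_{st})\neq(Et)(x_{st})$, and this handles each pair of blocks separately.

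For the first condition: $d$ is a non-zero class in the span, so $Ed$, which is continuous, is non-zero at some point of $\T$. It is therefore non-zero on an open neighbourhood $U\subset\T\subset\OO^-$. Then $U\subset\mathrm{int}(\supp Ed)\cap\OO^-=G_d$. Since $\T\cap J^-(\overline{\RR})=\emptyset$, we get $J^-(\overline{\RR})\not\supset G_d$. With $\RR_+:=\RR\subset\OO^+$, both conditions hold.

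\emph{Conclusion via Theorem \ref{theo:causal}.} Causality now gives $c^d_s=0$ for every $d\neq0$ and every $s\in{\cal S}(d)$. The pair $(d,s)=([g_j-g_k],[(g_j+g_k)/2])$ determines $[g_j]=s+d/2$ and $[g_k]=s-d/2$, hence determines $(j,k)$ uniquely because the classes $[g_j]$ are distinct. So each $c^d_s$ equals a single $Z(j,k)e^{i\EE{g_j}{g_k}/2}$, and we conclude $Z(j,k)=0$ for all $j\neq k$. Thus $\bar Z$ is diagonal, with non-negative diagonal entries summing to $1$ by \eqref{znorm}, and $\Omega$ is diagonal.

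The main point needing care is this last uniqueness step together with the first condition of ${\cal P}(d)$. One must check that $G_d$ genuinely contains an open piece of $\T$, which uses continuity of $Ed$ and the fact that $\supp Ed$ is a closed set containing $U$. One must also ensure that distinct $Q_j$ can be taken with distinct classes: if two coincide, their Kraus terms can be merged by redefining $Z$, so this is without loss of generality.
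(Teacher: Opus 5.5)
Your proposal is correct and follows the paper's proof essentially step for step. You use linear independence of ${\cal F}_\T$ to get $G_d\not\subset J^-(\overline{\RR})$, linear independence of ${\cal F}_\RR$ to place the singleton partition of ${\cal S}(d)$ in ${\cal P}(d)$, and then Theorem \ref{theo:causal} to make every off-diagonal coefficient vanish; your extra care with the interior in $G_d$ (via an open set where $Ed\neq 0$) and with $(d,s)$ determining $(j,k)$ uniquely fills in points the paper passes over quickly.
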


Now, for any regular compact $\OO\subset\M$, it is always possible to find a region $\RR\subset \OO^+$ and an open set $\T\subset \OO^-$ satisfying $J(\overline{\RR})\cap\T=\emptyset$, $\RR\cap J(\OO)\not=\emptyset$ ($\RR\cap N(\OO)\not=\emptyset$) and $\T\cap J(\OO)\not=\emptyset$ ($\T\cap N(\OO)\not=\emptyset$). Hence, any tuple of (null) generic test functions $(f_j)_j \in C_0^\infty(\OO)^n$ will satisfy eq. (\ref{cond_lin_indep}). Putting everything together, we have that, for the Klein-Gordon field, causal Weyl channels over ``typical" subspaces of quadratures $\Q$ are all diagonal.

\begin{proof}[Proof of Theorem \ref{theo:generic_diagonal}]
Let $\Omega$ be a Weyl channel over $\Q$, with matrix representation $(Q,Z)$ such that $Q_k=\Phi(\vec{t}_k\cdot \vec{f}), \vec{t}_k \in \R^n, \forall k$, with $\vec{f}=(f_1,...,f_n)$. Note that the condition $Q_j\not=Q_k$, for all $j\not=k$ implies, by linear independence of the $f$'s, that $\vec{t}_j\not=\vec{t}_k$. Let $d:=[\vec{d}\cdot \vec{f}]\in {\cal D}(\Omega)$, with $d\not=0$ (equivalently, $\vec{d}\not=0$). By the linear independence of the functions in ${\cal F}_{\T}$, we have that $\left.Ed\right|_{\T}\not=0$, or $\T\cap \supp(Ed)\not=\emptyset$. Since $J^-(\overline{\RR})\cap \T=\emptyset$, $\T\subset \OO^-$, it follows that $G_d = \supp(Ed)\cap \OO^-\not\subset J^-(\overline{\RR})$.

Now, let $\S(d)=\{s^l\}_{l=1}^L$, with $s^l=\vec{u}^l\cdot\vec{f},\forall l$. For $j>k$, it holds that $\vec{u}^j-\vec{u}^k\not=0$. By the linear independence of the functions in ${\cal F}_{\RR}$ it follows that 
\begin{equation}
E\left.(\vec{u}^j-\vec{u}^k)\cdot \vec{f}\right|_{\RR}\not=0.    
\end{equation}
In other words, for any $j>k$, there exists a point $x_{jk}\in \RR$ such that 
\begin{equation}
Es^j(x_{jk})\not=Es^k(x_{jk}).
\end{equation}
This implies that the partition $\{\{1\}, \{2\},...,\{L\}\}$ belongs to $\P(d)$. If $\Omega$ is causal, it follows, by Theorem \ref{theo:causal}, that $c_s^d=0,\forall s\in \S(d)$. Since this holds for all non-zero $d\in {\cal S}$, channel $\Omega$ must be diagonal.
    
\end{proof}

\subsection{Communication networks of causal Weyl instruments}
\label{sec:networks}
Given a causal Weyl instrument $\Omega$, and a family thereof $\Xi=\{\Xi^a\}_a$, it is tempting to use the measurement outcome of $\Omega$ to decide which instrument $\{\Xi^a\}_a$ to implement. The resulting instrument $\Omega'$ would then have a double outcome $(a,b)$, associated to the map:
\begin{equation}
\Omega'_{(a,b)}(\bullet)=\Omega_{a}\circ \Xi^a_b(\bullet).
\end{equation}
What we have just done is \emph{feed-forward} $\Omega$'s measurement outcome to $\Xi$. It is straightforward to verify that $(\Omega'_{(a,b)})_{a,b}$ is indeed an instrument. The question is under which conditions $\Omega'$ is causal, given that its constituent instruments $\Omega,\{\Xi^a\}_a$ are all causal.

Assume that all the instruments in $\{\Xi^a\}_a$ are causal with respect to the same interaction zone $\OO(\Xi)$, and let $\Omega$ be causal with respect to $\OO(\Omega)$. Following the intuition, suggested in \cite{Jubb_2022}, that the outcomes of a QFT instrument with interaction zone $\OO$ are only accessible to observers in the total future of $\OO$, we would expect $\Omega'$ to be causal if $\OO(\Omega),\OO(\Xi)$ are causally orderable and $\OO(\Xi)\subset \OO(\Omega)^\vee$; see \Cref{fig:network} (left) for a simple example.

We can go beyond the composition of two instruments and think of integrating multiple families of instruments within a complex communication network, see \Cref{fig:network} (right) for a more complicated example. 
\begin{figure}[ht]
\centering
\includegraphics[width=0.3\textwidth]{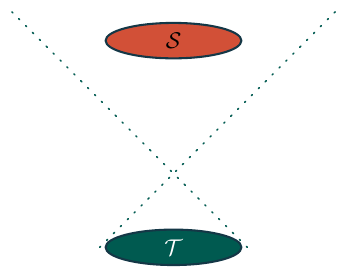}
\includegraphics[width=0.45\textwidth]{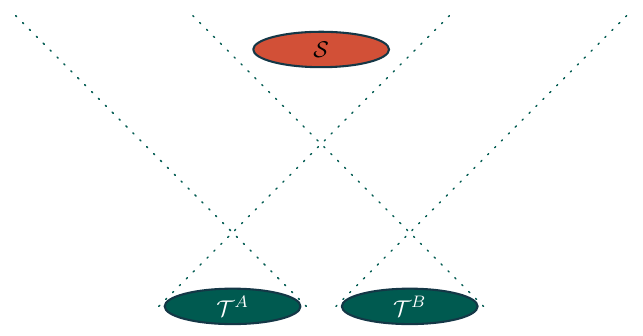}
\caption{Two examples of communication networks feeding forward outcomes of instruments localized in the interaction zone $\T$ (left) or $\T^A$ and $\T^B$ (right) to instruments localized in the interaction zone $\S$.}
\label{fig:network}
\end{figure}

In an $n$-node network, one would start from a finite list of causally orderable compact subsets of the spacetime manifold $\M$, $\OO_1\prec....\prec \OO_n$. Each network node $j\in \{1,...,n\}$ would be associated to a causal Weyl instrument $\Omega^j$ with interaction zone $\OO_j$ and outcomes $c_j$, or a family thereof $(\Omega^{j,c(\wedge_j)})_j$ conditioned on the outcomes $c(\wedge_j)$ which are accessible at node $j$. Here we define $\wedge_j := \{ l \in \{ 1,\ldots,n\} : \OO_j \subset \OO_l^\vee \}$ and, for any subset $I \subset \{ 1,\ldots,n\}$, $c(I)$ denotes the subvector of $c := (c_1,\ldots,c_n)$ associated with the indices $I$. The effective instrument $\Omega := (\Omega_c)_c$ implemented on the QFT would then be
\begin{equation}
\Omega_c(\bullet):= \Omega^{1,c(\wedge_1)}_{c_1} \circ \ldots \circ \Omega^{n,c(\wedge_n)}_{c_n}(\bullet).
\label{composed_instr}
\end{equation}

If our interpretation of the outcomes of a causal Weyl instrument is correct, then one would expect $\Omega$ to be causal as well. This is actually the case.

\begin{theorem}\label{theorem:causal_network}
Let $(\OO_1,...,\OO_n)$ be a vector of compacts in causal order from past to future and, for $j=1,\ldots,n$, let the Weyl instruments $\left\{(\Omega^{j,c(\wedge_j)}_{c_j})_{c_j}:c(\wedge_j)\right\}$ be causal with interaction zone $\OO_j$. Then, the instrument (\ref{composed_instr}) is causal with respect to $\bigcup_j \OO_j$.
\end{theorem}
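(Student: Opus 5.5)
The plan is to verify the two defining properties of causality for the measurement channel $\bar\Omega=\sum_c\Omega_c$ of (\ref{composed_instr}), with $\OO:=\bigcup_j\OO_j$: locality to $\OO$, and the simplified PSNI criterion of Lemma \ref{lemma:psni_simpler}.

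\textbf{Locality.} By Lemma \ref{lemma:weyllocality}, every Kraus operator of every $\Omega^{j,c(\wedge_j)}$ can be taken localizable in any region containing $\OO_j$. It therefore commutes with $\A(\OO^\perp)\subset\A(\OO_j^\perp)$. Hence each $\Omega^{j,\cdot}_{c_j}$ acts on $X\in\A(\OO^\perp)$ as $X\mapsto X\,\Omega^{j,\cdot}_{c_j}(1)$. Summing over outcomes from $c_n$ down to $c_1$ and using normalization at each stage gives $\bar\Omega(X)=X$.

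\textbf{PSNI.} I argue by induction on $n$, peeling off nodes from the past. Fix regions $\RR_\pm\subset\OO^\pm$ with $\overline{\RR_+}\subset D^+(\RR_-)$. Using the causal order $\OO_1\prec\dots\prec\OO_n$, choose a chain of intermediate regions $\RR_+=\RR^{(n)},\RR^{(n-1)},\dots,\RR^{(0)}=\RR_-$ such that:
\begin{itemize}
\item $\RR^{(j)}\subset\OO_j^+\cap\OO_{j+1}^-\cap\dots$, i.e.\ in the out-region of $\OO_1,\dots,\OO_j$ and the in-region of the later compacts;
\item $\overline{\RR^{(j)}}\subset D^+(\RR^{(j-1)})$.
\end{itemize}
I would build these by intersecting $D^+(\RR_-)$ with slabs between Cauchy surfaces that separate consecutive $\OO_j$ (this uses global hyperbolicity and $J^+(\OO_j)\cap J^-(\OO_{j-1})=\emptyset$), enlarging slightly to get open regions.

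The difficulty is the conditioning. Summing over $c_n$ gives $\bar\Omega^{n,c(\wedge_n)}$, which maps $\A(\RR^{(n)})$ into $\A(\RR^{(n-1)})$ by Lemma \ref{lemma:psni_simpler}, but the result still depends on $c(\wedge_n)$. So I cannot simply sum each node independently, and this is the main obstacle. I would handle it by processing outcomes in reverse order and showing the following. For fixed later outcomes, the map
\[
X\mapsto\sum_{c_j}\Omega^{j,\cdot}_{c_j}\circ(\text{later maps, depending on }c_j)(X)
\]
sends $\A(\RR^{(n)})$ into $\A(\RR^{(j-1)})$. The key idea is that $c_j$ can only influence nodes $l$ with $\OO_l\subset\OO_j^\vee$.

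For such $l$, I would use the Weyl form (\ref{SOS_intr}) and the decomposition (\ref{expr_omega_d}) of $\Omega^{l}$. An output term $e^{i\Phi(h+d)}$, with $d$ a difference of test functions localized in $\OO_l$, has $\supp E d\subset J(\OO_l)\subset J^+(\OO_j)$ up to the part in $J^-(\OO_l)$. Then the relevant conjugation by $e^{iQ}$ from node $j$, with $Q$ localized near $\OO_j$, sees only commutators with the region in $J^+(\OO_j)$. Combined with the characterization Theorem \ref{theo:causal} applied to node $j$, this should show that the partial sum over $c_j$ of the conditioned compositions reduces to the same cancellations (\ref{eq:null_sum}) that make $\bar\Omega^{j}$ causal. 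Concretely, I would expand everything in generators $e^{i\Phi(h)}$ and track, via Lemma \ref{lemma:localization}, the Cauchy-surface support of each $E(h+\sum d)$.

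The hope is that the localization bookkeeping closes: terms not localizable in $\RR^{(j-1)}$ appear with coefficient sums over blocks of a partition in ${\cal P}(d)$, and these vanish by Theorem \ref{theo:causal}. Running the induction down to $j=1$ then gives $\bar\Omega(\A(\RR_+))\subset\A(\RR_-)$, and Lemma \ref{lemma:psni_simpler} concludes. I expect this coefficient-tracking step, keeping the partition condition valid in the presence of outcome-dependent later maps, to be where most of the work lies.
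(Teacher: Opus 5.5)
Your locality argument is fine, your chain $\RR^{(n)},\dots,\RR^{(0)}$ is exactly what the paper builds in Lemma \ref{lemma:causal_ordering}, and you correctly identify the feed-forward of outcomes as the crux. That step, however, is left as a hope, and the mechanism you propose for it would not work. In general only $\bar\Omega^{j}$ is PSNI; the individual outcome maps $\Omega^{j}_{c_j}$ are not. Once later maps depend on $c_j$, the coefficient of a non-localizable term $e^{i\Phi(h+d)}$ is no longer an unweighted block sum of $\bar Z$-entries, as in Theorem \ref{theo:causal}. It is a sum over $c_j$ of $Z_{c_j}$-entries, weighted by $c_j$-dependent phases produced by the later maps, and \eqref{eq:null_sum} gives no control over such weighted sums.

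The paper's example \eqref{eq:non_diag_ex} already shows this. For $\RR_+$ meeting $J^+(\mathcal{S})$, the terms of $\Omega'(e^{i\Phi(h)})$ containing one factor $e^{\pm i\Phi(g)}$ carry the coefficient $\frac14\bigl(e^{-i\langle f_+,Eh\rangle}-e^{-i\langle f_-,Eh\rangle}\bigr)$. This is generically nonzero, so nothing cancels. $\Omega'$ is causal there for a different reason: $e^{i\Phi(g)}$ is itself localizable in $\RR_-$.

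Your inductive claim is also mis-quantified. The earlier outcomes $c_1,\dots,c_{j-1}$ must be held fixed while $c_j,\dots,c_n$ are summed. With later outcomes fixed the claim fails, because a single $\Omega^{n}_{c_n}$ need not map $\A(\RR^{(n)})$ into $\A(\RR^{(n-1)})$.

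The missing idea is a geometric dichotomy that makes cancellations unnecessary exactly when an outcome is actually used. Set $O_n=O$ and $O_{j-1}=\sum_{c_j}\Omega^{j,c(\wedge_j)}_{c_j}(O_j)$. Prove by downward induction that $O_j\in\A(\RR^{(j)})$, while tracking which earlier outcomes $O_j$ depends on. There are two cases at node $j$.

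\textbf{Case 1: $O_j$ does not depend on $c_j$.} Then the sum is $\bar\Omega^{j,c(\wedge_j)}(O_j)$, and PSNI of node $j$ gives $O_{j-1}\in\A(\RR^{(j-1)})$.

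\textbf{Case 2: $O_j$ depends on $c_j$.}
\begin{enumerate}
\item Some node $k>j$ with $j\in\wedge_k$ must have $\OO_k\not\perp\RR^{(k)}$. Since $\RR^{(k)}\subset\OO_k^+$, it follows that $\RR^{(k)}$ meets $J^+(\OO_k)$.
\item As $\OO_k\subset\OO_j^\vee$, some point of $\RR^{(k)}\subset D^+(\RR^{(j-1)})$ lies in the causal future of every point of $\OO_j$.
\item Because $\OO_j\cap J^-(\RR^{(j-1)})=\emptyset$, this forces $\OO_j\subset D^+(\RR^{(j-1)})$.
\item By Lemma \ref{lemma:weyllocality}, Proposition \ref{prop:D_plus} and the timeslice axiom, every Kraus operator of every $\Omega^{j,\cdot}_{c_j}$ lies in $\A(\RR^{(j-1)})$.
\item Hence each outcome map separately preserves $\A(\RR^{(j-1)})$, and $O_{j-1}$ merely inherits a dependence on $c(\wedge_j)$.
\end{enumerate}

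This is where the total-future hypothesis $\OO_k\subset\OO_j^\vee$ is essential. Your heuristic only uses that later zones lie in $J^+(\OO_j)$, which is not enough. Once the dichotomy is in place, no Weyl-coefficient bookkeeping via Theorem \ref{theo:causal} is needed at all.
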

The reader can find a proof and a remark on the generalization of this result to arbitrary normal instruments at the end of \Cref{app:causal_networks}.

Theorem \ref{theorem:causal_network} allows constructing instances of non-diagonal causal Weyl instruments through the composition of diagonal instruments. For example, let $\S,\T\subset \M$ be two regular compacts such that $\S\cap \T=\emptyset$ and $\S \subset \T^\vee$, and let $f_+,f_-\in C_0^\infty(\S)$, $g\in C_0^\infty(\T)$. We next define the diagonal instrument $\Omega$, with matrix representation $((1,e^{i\Phi(g)}), (Z_+,Z_-))$, where
\begin{equation}
Z_\pm:=\frac{1}{4}\begin{pmatrix}
        1 & \pm1 \\
        \pm1 & 1
    \end{pmatrix}.
\end{equation}
Consider now the channel $\Omega'$, given by:
\begin{equation}\label{eq:non_diag_ex}
\Omega'(\bullet)=\sum_{a=\pm}\Omega_a\left( e^{i\Phi(f_a)}\bullet e^{-i\Phi(f_a)}\right).
\end{equation}
$\Omega'$ is the result of feedforwarding the measurement outcome $a$ of $\Omega$, with interaction zone $\T$, to the family of diagonal channels $(e^{i\Phi(f_a)}\bullet e^{-i\Phi(f_a)})_a$, with interaction zone $\S$. Since all channels are diagonal, and thus causal, and $\S \subset \T^\vee\cap \T^+$, it follows that $\Omega'$ is also causal. However, due to $[f_+]\not=[f_-]$, its expansion shows genuinely off-diagonal terms:
\begin{align}
\Omega'(\bullet)=&\frac{1}{4}\sum_{a=\pm}\left(e^{i\Phi(f_{a})}\bullet e^{-i\Phi(f_{a})}+e^{i\Phi(f_{a}+g)}\bullet e^{-i\Phi(f_{a}+g)}\right)\nonumber\\
&+a\left(e^{-\frac{iE(f_a,g)}{2}}e^{i\Phi(f_{a})}\bullet e^{-i\Phi(f_{a}+g)}+e^{\frac{iE(f_a,g)}{2}}e^{i\Phi(f_{a}+g)}\bullet e^{-i\Phi(f_{a})}\right).
\end{align}

In conclusion, even though causal Weyl instruments defined over \emph{generic} subspaces of quadratures are all diagonal, this is not true for causal Weyl instruments defined over \emph{general} subspaces.

\section{Weyl instruments in the FV framework}
\label{sec:implementability}

In this section, we will analyze how Weyl instruments fit within the FV formalism presented in \Cref{subsec:fv}. We will prove, in \Cref{subsec:withinfv}, that, for Klein-Gordon fields, all diagonal Weyl instruments admit an asymptotic FV realization. We next show, in \Cref{sec:causal_not_FV}, that certain causal Weyl channels cannot be approximated arbitrarily well through asymptotic FV schemes, even if we allow the coupling zone to be arbitrarily large (but bounded). Finally, in \Cref{sec:undecidability} we consider the problem of characterizing which Weyl instruments can be asymptotically induced through the composition of several FV schemes. In this regard, we construct a countable family of Weyl channels for which we prove that there is no general algorithm to tell apart implementable from far-from-implementable channels. 

As standing assumptions, we suppose our target theory to be the Weyl algebra of the real scalar field $\Phi$, as defined in \Cref{subsec:freeqft}. For our FV schemes, we will invoke probe theories of the same type and denote the associated probe fields by $\Psi$.

\subsection{Weyl instruments with an FV realization}\label{subsec:withinfv}

First, we will show that all diagonal Weyl instruments admit an asymptotic FV realization. From the results of \Cref{sec:weylops}, it thus follows that the FV framework allows the asymptotic realization of arbitrary non-demolition measurements and the heralded implementation of arbitrary QFT instruments, causal or not.

We assume that the target QFT $\Phi$ is a Klein-Gordon field in an arbitrary hyperbolic spacetime. As stated in section \ref{subsec:aqft}, we only consider representations $\pi_\Phi$ of $\Phi$ that stem, via the GNS construction, from a quasi-free state with distributional covariance on the abstract Weyl algebra. Denoting the representing Hilbert space by $\H_\Phi$ and the representing vector by $\ket{\phi}$, we suppose that the Weyl instruments act on $B(\H_\Phi)$. Note that, by the Reeh-Schlieder property, we have that $\overline{\operatorname{span} e^{i\Phi(f)} \ket{\phi}} = \H_\Phi$. Moreover, we suppose an analogous setup for the probe theory $\Psi$ and its representation. To lighten up the notation, in this section we will omit most representation symbols. That is, by expressions of the form $e^{i\Phi(f)}, e^{i\Psi(f)}$ we will respectively mean $\pi_\Phi(e^{i\Phi(f)})$, $\pi_\Psi(e^{i\Psi(f)})$.

Let $\Omega=(\Omega_a)_{a=1}^A$ be a diagonal Weyl instrument over $\Q$, and let $(Q,Z)$ be a matrix representation thereof such that $Q_j = \Phi(f_j)\in\Q$, $f_j \in C^\infty_0(\M)$ with $\supp f_j \subset \OO$ for all $j=1,\ldots,s$ and
\begin{equation}
\bar{Z}=\sum_aZ_a=\mbox{diag}(p_1,...,p_s), \quad \sum_j p_j = 1, \, p_j \geq 0.   
\end{equation}
To prove that $\Omega$ can be asymptotically realized in the FV framework, we next construct a probe state $\omega'_\lambda$, a probe POVM $(M^\lambda_a)_a$ and a local scattering morphism $\Theta_\lambda$ such that, for any $g\in C_0^\infty(\M)$,
\begin{equation}
    \slim{\lambda \to 0} \epsilon_\lambda\circ\Theta_\lambda\left(M_a^\lambda \otimes e^{i\Phi(g)}\right)=\Omega_a(e^{i\Phi(g)}),
\end{equation}
with $\epsilon_\lambda(A_\Psi\otimes B_\Phi):=\omega_\lambda'(A)\otimes B_\Phi$.

To this aim, let $h_1,...,h_{2n}\in C_0^\infty(\OO)$, such that the vector of quadratures $\vec{R}=(R_1,...,R_{2n})$, with $R_j:=\Psi(h_j)$, forms $n$ independent modes $M_j := (R_{2j-1},R_{2j})$, $j=1,..,n$, and such that $\Q\subset \mbox{span}\{R_j\}_j$.

We next define a projector that only acts non-trivially in those $n$ modes:
\begin{equation}
\Pi:=\frac{1}{(2\pi)^{n}}\int d\vec{\xi} e^{-\frac{\vec{\xi}^2}{4}}e^{-i\vec{\xi}\cdot \vec{R}}.
\end{equation}
It can be verified, by the Weyl relations, that $\Pi^2=\Pi$. In fact, $\Pi$ is a projector onto the all-zeros number state of the modes $(M_j)_j$, i.e.,
\begin{equation}
\Pi=\proj{0}\otimes \id_{\H^\perp}, 
\label{proj}
\end{equation}
where $\H_\Psi^\perp$ denotes the Hilbert space corresponding to the remaining infinitely many modes of the probe QFT and the state $\ket{0}$ satisfies
\begin{equation}
\bra{0}e^{i\vec{\xi}\cdot\vec{R}}\ket{0}=e^{-\frac{\vec{\xi}^2}{4}}, \forall \vec{\xi}\in\R^{2n}.
\label{values_vacuum}
\end{equation}

Let $\omega$ be the (quasi-free) vacuum state of the probe system, i.e.,
\begin{equation}
\omega\left(e^{i\Psi(f)}\right)=e^{-\frac{\Gamma(f,f)}{4}},
\end{equation}
with covariance form $\Gamma$ defined in terms of the anticommutator $\{,\}$ as:
\begin{equation}
\Gamma(f,g)=\langle \{\Psi(f),\Psi(g)\}\rangle
\end{equation}
By the Reeh-Schlieder property, it holds that $\omega(\Pi)\not=0$. Define then the state
\begin{equation}
\omega'(\bullet)=\frac{\omega(\Pi\bullet\Pi)}{\omega(\Pi)}.
\end{equation}
$\omega'$ can be understood as the result of effecting a heterodyne measurement on $\omega$ and post-selecting on the measurement result $\vec{0}$. Let $f,g\in C^\infty_0(\M)$ be such that $[\Psi(f),R_k]=[\Psi(g),R_k]=0$, for all $k$. Standard quantum optics tells us\footnote{See, e.g., eq. (99) of \cite{brask2022gaussianstatesoperations}, noting that the author's definition of covariance matrix is ours divided by two.} that $\omega'$ is also quasi-free, with covariance matrix $\Gamma'$ satisfying:
\begin{align}
&\Gamma'(h_j,h_k) =\delta_{jk},\\
&\Gamma'(\vec{h},f) =\Gamma'(\vec{h},g)=0,\nonumber\\
&\Gamma'(f,g)=\Gamma(f,g)- \Gamma(f,\vec{h})^T(\tilde{\Gamma}+\id)^{-1}\Gamma(g,\vec{h}),
\label{cova_prime}
\end{align}
where $\tilde{\Gamma}$ is the $2n\times 2n$ sub-covariance matrix of $\omega$ for the $n$ modes $\{(R_{2k-1},R_{2k})\}_k$, i.e.,
\begin{equation}
\tilde{\Gamma}_{jk}=\Gamma(h_j,h_k).
\end{equation}

For $\lambda\in\R$ sufficiently small, we choose the initial state $\omega_\lambda'$ of the probe to be
\begin{equation}
\omega'_{\lambda}(\bullet)=\omega'(S_\lambda^* \bullet S_\lambda),\qquad S_\lambda:=\sum_j\sqrt{p_j}e^{i\frac{\Psi(f_j)\cos(\lambda)}{\lambda}}
\end{equation}
Note that $\omega'_\lambda$ is not normalized. Indeed, for $j=1,...,s$, let $Q_j=\vec{f}_j\cdot \vec{R}$. From eq. (\ref{cova_prime}), we have that
\begin{align}
&\omega_\lambda'(1)=\sum_{j,k}\sqrt{p_jp_k}e^{\frac{i}{2}\vec{f}_j\sigma \vec{f}_k}e^{-\frac{(\vec{f}_j-\vec{f}_k)^2}{\lambda^2}}\not=1,
\end{align}
where\footnote{Note that, for $\vec{\alpha},\vec{\beta}\in \R^{2n}$, it holds that $E(\vec{\alpha}\cdot\vec{h},\vec{\beta}\cdot\vec{h})=\vec{\alpha}\cdot\sigma\vec{\beta}$.}
\begin{equation}
\sigma:=\bigoplus_{k=1}^n\left(\begin{array}{cc}0&1\\-1&0\end{array}\right).
\end{equation}

In the limit $\lambda\to 0$, the terms with $j\not=k$ vanish, and we are left with $\sum_j p_j=1$. That is, $\lim_{\lambda\to 0}\omega'_\lambda(1)=1$. Since we are presenting an asymptotic scheme to realize $\Omega$ in the limit $\lambda\to 0$, for simplicity we skip the normalization of $\omega'_\lambda$.

Next, let us define the probe observable. First, we define
\begin{equation}
\tilde{Z}_a(j,k):=\frac{1}{\sqrt{p_jp_k}}Z_a(j,k),    
\end{equation}
and let $\mu^\lambda$ be the largest non-negative real number such that
\begin{equation}
\mu^\lambda \sum_{j=1}^s e^{i\frac{\Psi(f_j)}{\lambda}}\Pi e^{-i\frac{\Psi(f_j)}{\lambda}}\leq \id_\Psi.
\end{equation}
From eqs. (\ref{proj}) and (\ref{values_vacuum}), in the limit $\lambda\to 0$, the projectors $(\Pi_j)_j$, with
\begin{equation}
\Pi_j:=e^{i\frac{\Psi(f_j)}{\lambda}}\Pi e^{-i\frac{\Psi(f_j)}{\lambda}}    
\end{equation}
become orthogonal, which implies that $\lim_{\lambda\to 0}\mu^\lambda =1$. Written as a function of $\tilde{Z},\mu^\lambda$, the POVM we propose to take effect on the probe is:
\begin{align}
&M_a^\lambda=\mu^\lambda\sum_{j,k}\tilde{Z}_a(j,k)e^{i\frac{\Psi(f_j)}{\lambda}}\Pi e^{-i\frac{\Psi(f_k)}{\lambda}}, a=1,...,A;\nonumber\\
&M_{A+1}^\lambda=\id_\Psi-\sum_{a=1}^sM^\lambda_a.
\end{align}
As we will see, in the limit $\lambda\to 0$, the probability that this measurement scheme returns outcome $A+1$ is zero. In this scheme, the processing zone, i.e., the zone where the measurement of the probe takes place, will be $\OO$ itself.

We next detail the interaction between probe and target QFTs. Following the measurement scheme of \cite{fewster2025measurementpreparationprotocolsquantum}, we choose three regions $L,L_-, L_+\subset M$ and two non-intersecting Cauchy surfaces $\Sigma_+, \Sigma_-\subset\M$, with $\Sigma_+\succ \Sigma_-$, such that 
\begin{enumerate}
    \item $L_+, L_-$, are, respectively, in the future of $\Sigma_+$ and past of $\Sigma_-$.
    \item $L_+\cup L_-\subset L\subset D^+(L_-)$.
    
    \item $\OO\subset L_+$.
\end{enumerate}
Call $S$ the spacetime region between surfaces $\Sigma_+,\Sigma_-$. Then, the interaction zone can be any compact $\K$ that contains a neighborhood of $L\cap S$. This implies, in particular, that $\OO\subset \K^+$. A sufficient condition for a compact $\K$ that guarantees the existence of $L_\pm, L, S$ is that $\K=\overline{\K_0}$, for some bounded region $\K_0$ with $\OO\subset D^+(\K_0)\cap \K_0^+$.

We use the same interaction as that described in \cite{fewster2025measurementpreparationprotocolsquantum}\footnote{The notation in \cite{fewster2025measurementpreparationprotocolsquantum} is, however, different: the author calls ``$s$" what we here denote by ``$\lambda$".}, which depends on a function of the form $e^{i\lambda\theta(x)}$, where $\theta(x)$ is a smooth function such that $\theta(x)=0$ for $x\in J^+(\Sigma_+)$, and $\theta(x)=1$, for $x\in J^-(\Sigma_-)$. The interaction term is of the form $\lambda$ `times' $r$, where $r$ is a local term quadratic in the fields and its first differentials, that vanishes outside $\K$ and is analytic in $\lambda$. The interaction thus vanishes for $\lambda\to 0$.

For general $g\in C_0^\infty(\K^+)$, by basic perturbation theory it holds that
\begin{align}
&\Theta_\lambda\left(\Psi(g)\right)=\Psi(g_\lambda)+\lambda\Phi(g_\lambda'),\nonumber\\
&\Theta_\lambda\left(\Phi(g)\right)=\Phi(\tilde{g}_\lambda)+\lambda\Phi(\tilde{g}_\lambda'),
\end{align}
for $g_\lambda, g'_\lambda, \tilde{g}_\lambda, \tilde{g}'_\lambda \in C_0^\infty(\K^+)$ such that the limits $\lim_{\lambda\to 0}$ of $g_\lambda, g'_\lambda, \tilde{g}_\lambda, \tilde{g}'_\lambda$ exist within $C_0^\infty({\cal K}^+)$ and such that $\lim_{\lambda\to 0}g_\lambda = \lim_{\lambda\to 0} \tilde{g}_\lambda=g$.

Remarkably, the results of \cite{fewster2025measurementpreparationprotocolsquantum} imply that 
\begin{align}
&\Theta_\lambda\left(\Psi(f)\right)=\cos(\lambda)\Psi(f)+\sin(\lambda)\Phi(f),\nonumber\\
&\Theta_\lambda\left(\Phi(f)\right)=\cos(\lambda)\Phi(f)-\sin(\lambda)\Psi(f),
\end{align}
for all $f\in C^\infty_0(\OO)$.

Now that the FV scheme has been specified, let us calculate the effective instrument induced on the target system. Through painstaking calculations, it can be shown that
\begin{prop}
\label{prop:lengthy_calculation}
For all $g\in C_0^\infty(\M)$, the identity
\begin{equation}
\slim{\lambda\to 0}\epsilon_\lambda\circ\Theta_\lambda\left(e^{i\frac{\Psi(f_j)}{\lambda}}\Pi e^{-i\frac{\Psi(f_k)}{\lambda}} \otimes e^{i\Phi(g)}\right)=\sqrt{p_jp_k}e^{i\Phi(f_j)}e^{i\Phi(g)}e^{-i\Phi(f_k)}.
\end{equation}
\end{prop}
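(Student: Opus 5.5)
The plan is to exploit that $\Theta_\lambda$ is a $\star$-automorphism, so that it acts on Weyl generators by exponentiating the transformed fields. For $f_j\in C_0^\infty(\OO)$ the exact rotation formula gives
\begin{equation}
\Theta_\lambda\big(e^{i\Psi(f_j)/\lambda}\big)=e^{i\frac{\cos\lambda}{\lambda}\Psi(f_j)}\,e^{i\frac{\sin\lambda}{\lambda}\Phi(f_j)} .
\end{equation}
The two factors commute because probe and target fields commute. Since $h_1,\dots,h_{2n}\in C_0^\infty(\OO)$, the same formula applies inside the integral defining $\Pi$. It gives $\Theta_\lambda(\Pi)$ as a Gaussian integral of $e^{-i\vec\xi\cdot(\cos\lambda\,\vec R+\sin\lambda\,\Phi(\vec h))}$, which tends strongly to $\Pi\otimes\id$ as $\lambda\to 0$.

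For the target generator I use the perturbative expansion: $\Theta_\lambda(e^{i\Phi(g)})=e^{i\Phi(\tilde g_\lambda)+i\lambda\Psi(\tilde g'_\lambda)}$, where I read the second term as a probe field. Because $e^{i\Psi(f_j)/\lambda}\Pi$ and $e^{i\Phi(g)}$ act on different tensor factors, and $\Theta_\lambda$ preserves products, I can rewrite $\Theta_\lambda(e^{i\Psi(f_j)/\lambda}\Pi e^{-i\Psi(f_k)/\lambda}\otimes e^{i\Phi(g)})$ as an ordered product of the images computed above. I will choose the ordering so that all $1/\lambda$-displacements of the probe sit next to the state.

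Next I apply $\epsilon_\lambda$, i.e.\ evaluate the probe factor in $\omega'(S_\lambda^*\bullet S_\lambda)$, with $S_\lambda=\sum_l\sqrt{p_l}\,e^{i\frac{\cos\lambda}{\lambda}\Psi(f_l)}$. The displacements $e^{\pm i\frac{\cos\lambda}{\lambda}\Psi(f_j)}$ coming from $\Theta_\lambda$ combine with those of $S_\lambda$ into $e^{i\frac{\cos\lambda}{\lambda}\Psi(f_j-f_l)}$. Since $\omega'=\omega'(\Pi\bullet\Pi)$ and eqs.~(\ref{proj}), (\ref{values_vacuum}) hold, the terms with $l\neq j$ (or $m\neq k$) carry Gaussian factors of the form $e^{-|\vec f_j-\vec f_l|^2\cos^2\lambda/\lambda^2}$ and vanish. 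The diagonal terms $l=j$, $m=k$ leave the weight $\sqrt{p_jp_k}$ times a probe expectation of $\Pi$ in $\omega'$, which equals $1$, up to the contribution of the probe part $e^{i\lambda\Psi(\tilde g'_\lambda)}$.

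On the target side, $\frac{\sin\lambda}{\lambda}\to 1$ and $\tilde g_\lambda\to g$. The continuity property (\ref{eq:weylcontinuity}) of quasifree representations then yields strong convergence of the target factors to $e^{i\Phi(f_j)}e^{i\Phi(g)}e^{-i\Phi(f_k)}$, modulo ordering phases. To pass limits through products I use uniform boundedness of unitaries together with joint strong continuity of multiplication on bounded sets.

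The main obstacle is the term $e^{i\lambda\Psi(\tilde g'_\lambda)}$. Commuting it past the $1/\lambda$-displacements produces phases $e^{\pm i\cos\lambda\,E(f_j,\tilde g'_\lambda)}$, which are of order one and do not vanish as $\lambda\to0$. Meanwhile, reordering $e^{i\Phi(g)}$ relative to $e^{i\Phi(f_j)}$ produces $e^{\mp iE(f_j,g)}$. My plan is to show that these two phases cancel exactly. The argument is that $\Theta_\lambda$ preserves the commutator $[\Psi(f_j)/\lambda,\Phi(g)]=0$, which gives
\begin{equation}
\tfrac{\sin\lambda}{\lambda}E(f_j,\tilde g_\lambda)+\cos\lambda\,E(f_j,\tilde g'_\lambda)=0,
\end{equation}
and in the limit this ties $E(f_j,\lim\tilde g'_\lambda)$ to $-E(f_j,g)$. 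The remaining expectation $\omega'(e^{i\lambda\Psi(\cdot)})$ tends to $1$, since $\omega'$ is quasifree with finite covariance $\Gamma'$. Once this phase bookkeeping is done, the limit equals $\sqrt{p_jp_k}\,e^{i\Phi(f_j)}e^{i\Phi(g)}e^{-i\Phi(f_k)}$, as claimed.
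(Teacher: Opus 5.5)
Your architecture is the paper's: apply $\Theta_\lambda$ factor by factor, split each image into commuting probe and target parts, evaluate the probe part in $\omega'_\lambda$, and get strong convergence of the target part from (\ref{eq:weylcontinuity}). However, the step you single out as the main obstacle is not one. Keep the natural order $\Theta_\lambda(e^{i\Psi(f_j)/\lambda})\,\Theta_\lambda(\Pi)\,\Theta_\lambda(e^{i\Phi(g)})\,\Theta_\lambda(e^{-i\Psi(f_k)/\lambda})$. Then the $1/\lambda$-displacements already sit at the two ends of the probe factor, next to $S_\lambda^*$ and $S_\lambda$. The target factor also already converges to $e^{i\Phi(f_j)}e^{i\Phi(g)}e^{-i\Phi(f_k)}$ in the right order. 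Any other ordering produces probe and target phases that cancel identically for every $\lambda$, because $\Theta_\lambda$ maps commuting elements to commuting elements; that is all your identity expresses. The paper instead splits $g=\vec g\cdot\vec h+\tilde g$ with $E(\tilde g,h_j)=0$ and shows that the residual probe term $T_\lambda=\Psi(g'_\lambda)$ commutes with every $R_k$. This gives $\Gamma'(\vec h,g'_\lambda)=0$ and lets the quasi-free expectation factorize in (\ref{comp_expr}).

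The genuine gap is in the off-diagonal terms, which is where the paper's proof does its real work. Your justification ($\omega'=\omega'(\Pi\bullet\Pi)$ plus (\ref{values_vacuum})) yields the Gaussian factors $e^{-|\vec f_j-\vec f_l|^2\cos^2\lambda/\lambda^2}$ only if the operator between the displacements is $\Pi\otimes\id$. It is actually $\Theta_\lambda(\Pi)$, a $\vec\xi$-integral whose target component $e^{-i\sin\lambda\,\vec\xi\cdot\Phi(\vec h)}$ prevents you from splitting the probe expectation into two vacuum overlaps. Its strong convergence to $\Pi\otimes\id$ does not help here. It is sandwiched between probe vectors displaced by $\frac{\cos\lambda}{\lambda}(\vec f_l-\vec f_j)$ and $\frac{\cos\lambda}{\lambda}(\vec f_m-\vec f_k)$, and these are only weakly null. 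When only one side is displaced, pairing a weakly null vector with a strongly convergent one does rescue you, but not when both sides are.

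For fixed $\vec\xi$, the quasi-free expectation carries only the Gaussian of the net displacement $\frac{\cos\lambda}{\lambda}(\vec f_j-\vec f_l-\vec f_k+\vec f_m)$. This factor is absent whenever $\vec f_j-\vec f_l=\vec f_k-\vec f_m\neq 0$, e.g.\ already for $j=k$, $l=m\neq j$, and then the modulus of the integrand does not decay in $\lambda$ at all. There the suppression appears only after integrating a phase linear in $\vec\xi$, with frequency of order $|\vec f_l-\vec f_j|/\lambda$, against $e^{-\vec\xi^2/2}$ and the target vector $\varphi(\lambda,\sin\lambda\,\vec\xi)$. That vector is merely bounded and continuous. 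You therefore need a Riemann--Lebesgue-type argument: cut to $\|\vec\xi\|\le r$, replace $\varphi$ by $\varphi(0,0)$ there, and bound the tail. This is exactly what Proposition \ref{prop:integral_tocha} supplies, via the two vector conditions (\ref{first_vector_cond}) and (\ref{second_vector}).
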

See Appendix \ref{app:lengthy_calculation} for a proof.

It follows that: 
\begin{align}
&\slim{\lambda\to 0}\epsilon_\lambda\circ\Theta_\lambda\left(M_a^\lambda\otimes e^{i\Phi(g)}\right)=\nonumber\\
&=\slim{\lambda\to 0}\mu^\lambda\sum_{j,k}\tilde{Z}_a(j,k) \epsilon_\lambda\circ\Theta_\lambda\left([e^{i\frac{\Psi(f_j)}{\lambda}}\Pi e^{-i\frac{\Psi(f_k)}{\lambda}}]\otimes e^{i\Phi(g)}\right)\nonumber\\
&=\sum_{j,k}\tilde{Z}_a(j,k)\sqrt{p_jp_k}e^{i\Phi(f_j)}e^{i\Phi(g)}e^{-i\Phi(f_k)} \nonumber\\
&=\sum_{j,k}Z_a(j,k)e^{i\Phi(f_j)}e^{i\Phi(g)}e^{-i\Phi(f_k)}\nonumber\\
&=\Omega_a\left(e^{i\Phi(g)}\right).
\end{align}

The FV scheme detailed above is therefore an asymptotic FV realization of $\Omega$.

The fact that all diagonal Weyl instruments are asymptotically realizable, together with Theorem \ref{theo:diagonal_POVM}, implies that, for any subspace of quadratures $\Q$, any POVM $(M_a)_a \subset \A(\Q)^A$ can be realized within the FV framework in a \emph{non-demolition way}, i.e., through an FV-realizable instrument $\Omega=(\Omega_a)_a$ with the property that
\begin{equation}
\overline{\Omega}(X) = \sum_a\Omega_a(X)=X,
\end{equation}
for any $X$ commuting with $\A(\Q)$. In this sense these instruments are \emph{non-disturbing} outside of $\A(\Q)$. Moreover, from Proposition \ref{prop:approx_instr} and the discussion in \Cref{sec:diagonal}, any Weyl instrument $\Omega$, causal or not, can be realized in a heralded way through an asymptotic FV scheme, and any normal instrument whatsoever can be \emph{approximately} realized probabilistically.

\subsection{Causal Weyl channels without an FV realization}
\label{sec:causal_not_FV}
We have just shown that any diagonal Weyl instrument over $\Q$ can be asymptotically realized within the FV framework. Since for a generic, finite-dimensional subspace of quadratures $\Q$ diagonal Weyl instruments are the only causal Weyl instruments over $\Q$, an optimistic reader might thus conclude that perhaps \emph{all} causal Weyl instruments admit an FV realization. 

In this section we show this not to be the case. In fact, there exists a simple family of causal Weyl channels that cannot be approximated arbitrarily well through an FV scheme, no matter how large we take its interaction zone $\OO$ to be.

Our proof exploits an analogy between QFT and quantum information science, firstly pointed out by Gisin and del Santo \cite{Gisin_2024}. Consider the set of bipartite quantum instruments, i.e., instruments of the form $(\Omega_{ab})_{ab}$, with $\Omega_{ab}:B(\H_A)\otimes B(\H_B)\to B(\H_A)\otimes B(\H_B)$, mapping a bipartite quantum system into itself. If $\Omega$ is to be implemented by two separate parties Alice and Bob, respectively acting on the Hilbert spaces $\H_A$, $\H_B$, and obtaining the results $a,b$, then $\Omega$ must be of the form:
\begin{equation}
\Omega_{ab}(\bullet)=(\bra{\psi}_{A'B'}\otimes\id_{AB})\xi_a^{AA'}\otimes_{2,4} \eta_b^{BB'}(\id_{A'B'}\otimes\bullet_{AB})(\ket{\psi}_{A'B'}\otimes\id_{AB}),
\end{equation}
where $A'$, $B'$ are ancillary systems under the respective control of Alice and Bob; $\ket{\psi}_{A'B'}$ is the (in general, entangled) initial state of systems $A'B'$; $(\xi^{AA'}_a)_a$, $(\eta^{BB'}_b)_b$ are instruments respectively acting on $B(\H_{A'}\otimes\H_A)$, $B(\H_{B'}\otimes\H_B)$; and $(A' \otimes A) \otimes_{2,4} (B' \otimes B) = A' \otimes B' \otimes A \otimes B$. In the following, we call instruments of the form $\Omega$ \emph{entanglement-assisted bipartite}\footnote{In the quantum information literature, they are actually called \emph{localizable} \cite{Gisin_2024}, \cite{causal_vs_localizable}, but that name collides with standard QFT terminology.}. It is a celebrated result \cite{Vaidman} that, for any bipartite POVM $(M^{AB}_c)_c \subset B(\H_A\otimes\H_B)$, there exists an entanglement-assisted bipartite instrument $(\Omega_{ab})_{ab}$ and a processing function $f$ such that
\begin{equation}
M^{AB}_c=\sum_{a,b:f(a,b)=c}\Omega_{ab}(1).
\end{equation}
In other words: any bipartite POVM can be implemented by realizing an entanglement-assisted instrument and postprocessing its local outcomes $a,b$. The resulting instrument $(\Omega_c)_c$ is said to admit an entanglement-assisted realization.

In this quantum information scenario, the analog of Sorkin's paradox arises when we in addition expect an entanglement-assisted realization to also reproduce Lüder's rule, i.e., when we demand that $(\Omega_c)_c$ act as:
\begin{equation}
\Omega_c(\bullet)=\sqrt{M^{AB}_c}\bullet \sqrt{M^{AB}_c}
\label{instr_cenutrio}
\end{equation}
While the instrument above satisfies $\Omega_c(1)=M^{AB}_c$, in general it does not admit an entanglement-assisted realization. In fact, in most cases an instrument of the form (\ref{instr_cenutrio}) will be signalling, meaning that there exists an initial state $\rho_{AB}$, two channels $\{D^\alpha:B(\H_A)\to B(\H_A)\}_{\alpha=0,1}$ and a POVM $(N_b)_b \in B(\H_B)^M$ such that the distribution of Bob's outcome $b$
\begin{equation}
P(b|\alpha)=\tr\left\{\rho_{AB} \left(D^\alpha\otimes\id_B\right)\circ \bigg( \sum_c \Omega_c \bigg) (\id_A\otimes N_b) \right\}
\end{equation}
depends on $\alpha$. If Alice implements operation $D^\alpha$ before the instrument $\Omega$ acts upon the joint system $\H_{A}\otimes\H_{B}$, she is therefore able to transmit Bob information about her bit choice $\alpha$. If $\Omega$ can be implemented faster than the time it takes for light to travel from Alice's to Bob's lab, then Einstein's causality is violated.

The authors of \cite{Gisin_2024} interpreted the FV framework \cite{local_meas_QFT} as a restriction on the set of QFT instruments, which in a quantum information scenario would correspond to demanding bipartite instruments to be entanglement-assisted. It is easy to see that entanglement-assisted instruments preclude signalling, and thus the quantum information analog of the Sorkin paradox is avoided.

Let us next focus on channels, i.e., instruments with just one outcome. Quantum information theory has studied the relation between entanglement-assisted bipartite channels and non-signalling bipartite channels, i.e., bipartite channels $\Omega:B(\H_{A})\otimes B(\H_B)\to B(\H_{A})\otimes B(\H_B)$ with the property that
\begin{align}
&\Omega(\id_A \otimes\bullet)=\id_A\otimes\Omega^B(\bullet),\nonumber\\
&\Omega(\bullet\otimes\id_B)=\Omega^A(\bullet)\otimes \id_B,
\end{align}
for some channels $\Omega^A$, $\Omega^B$. Since such channels do not violate causality, in principle they might accept an entanglement-assisted realization. 

However, in general they do not: while all entanglement-assisted channels are non-signalling, the converse is not true. This is shown in \cite{causal_vs_localizable}, through the following argument. Fix a Bell scenario $(M,M)$, let $P\in C_{\mathrm{ns}}$, $P\not\in C_{\mathrm{qa}}$, and consider the channel $\Omega:B(\C^M)^{\otimes 2}\to B(\C^M)^{\otimes 2}$ given by:
\begin{equation}
\Omega(\bullet)=\sum_{a,b,\alpha,\beta=0}^{M-1}P(a,b|\alpha,\beta)\ket{\alpha}\bra{a}\otimes\ket{\beta}\bra{b}\bullet    \ket{a}\bra{\alpha}\otimes\ket{b}\bra{\beta},
\end{equation}
where $\{\ket{n}\}_{n=0}^{M-1}$ is an orthonormal basis for $\C^M$. It can be verified that this channel is non-signalling. However, it cannot be realized through entanglement-assisted operations. Indeed, let $\rho_{AB}=\proj{0}^{\otimes 2}$ be the initial quantum state and define the unitary $U\ket{n}=\ket{n+1\mbox{ (mod } M\mbox{)}}$. Then we have that
\begin{equation}
\tr\left(\rho_{AB} U^{-\alpha}\otimes U^{-\beta}\Omega(\proj{a}\otimes\proj{b})U^{\alpha}\otimes U^{\beta}\right)=P(a,b|\alpha,\beta).
\end{equation}
In words: by carrying out local operations $U^\alpha, U^\beta$ before the action of channel $\Omega$, and then conducting local measurements $(\proj{a})_a$, $(\proj{b})_b$ immediately afterwards, Alice and Bob are able to generate a distribution $P\not\in C_{\mathrm{qa}}$. It is easy to prove, though, that, with local operations and local measurements the only distributions Alice and Bob can generate with an entanglement-assisted bipartite channel are in $C_{\mathrm{qs}}$. We arrive at a contradiction.

Now, replace ``entanglement-assisted bipartite channel'' by ``FV-realizable Weyl channel'' and ``non-signalling channel'' by ``causal Weyl channel'' and we are back to QFT. To prove the existence of causal Weyl channels that do not have an FV realization, we will thus mimic the quantum information argument. Namely, first we will show, in \Cref{sec:nl_conn}, that Bell experiments involving an FV-realizable channel can only give rise to correlations in $C_{\mathrm{qc}}$. Next, we will introduce, in \Cref{sec:box_channels}, the notion of box channels: a family of causal Weyl channels that are in one-to-one correspondence with distributions $P\in C_{\mathrm{ns}}$. We will further show that specific Bell experiments can generate the hidden distribution $P$ defining the box channel. Finally, in \Cref{subsec:unrealizability} we will define a class of box channels for which such $P$-generating Bell experiments can be realized arbitrarily far away from the zone where the Kraus operators of the channel are localized. Thus, any (causal) box channel with distribution $P\not\in C_{\mathrm{qc}}$ will be unrealizable within the FV framework (evne asymptotically), independently of how large we take its interaction zone to be.

\subsubsection{Connection with quantum nonlocality}
\label{sec:nl_conn}
We next model a Bell experiment around the action of a channel with interaction zone $\OO$. 

\begin{defin}[Bell setup]
    Given a subset $\OO \subset \M$, we say that regions $\RR_\pm^X$, $\U^X\subset \M$, for $X=A,B$, form a \emph{Bell setup around} $\OO$ if they satisfy the Bell conditions:
    \begin{align}
    &\RR_\pm^X\subset \OO^\pm, \overline{\RR^X_+}\subset D^+(\RR^X_-),\U^X\subset \RR_{-}^X,\mbox{ for }X=A,B,\nonumber\\
    &\overline{\RR^A_+}\perp \overline{\RR^B_+},\nonumber\\
    &\overline{\U^A}\perp \overline{\RR^B_-},\overline{\U^B}\perp \overline{\RR^A_-},
    \label{conds_Bell_QFT}
    \end{align}
    We say that the Bell setup is \emph{spacelike} if, in addition, 
    \begin{equation}\label{cond_split}
        \overline{\RR^A_-} \perp \overline{\RR^B_-}.
    \end{equation}
\end{defin}
The reader can see these relations depicted in \Cref{fig:Bell_exp_QFT}. Note that they also imply $\overline{\U^A}\perp \overline{\U^B}$.
\begin{figure}[H]
\centering
\includegraphics[width=0.7\textwidth]{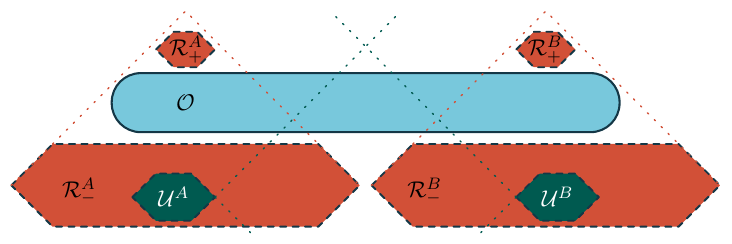}
\caption{A spacelike Bell setup around $\OO$.}
\label{fig:Bell_exp_QFT}
\end{figure}

Intuitively, in the Bell experiment Alice (Bob) will implement an operation labeled by $\alpha$ ($\beta$), local to region $\U^A$ ($\U^B$), and then conduct a measurement localized in $\RR^A_+$ ($\RR^B_+$), obtaining an outcome $\alpha$ ($\beta$). We next prove that, if $\Omega$ admits an asymptotic FV realization, the resulting distribution $P(a,b|\alpha,\beta)$ must be quantum.
\begin{prop}
\label{prop:FV_box}
Let $\Omega$ admit an FV realization in the interaction zone $\OO$ and let the regions $\RR_\pm^X, \U^X$, with $X=A,B$, form a Bell setup around $\OO$. Let $(M^A_a)_{a=0}^{M-1}\in B(\H)^M$, $(M^B_b)_{b=0}^{M-1}\in B(\H)^M$ be POVMs respectively localizable in $\RR^A_+$, $\RR^B_+$, and let $(\Lambda^{A,\alpha})_{\alpha=0}^{N-1}$ $(\Lambda^{B,\beta})_{\beta=0}^{N-1}$ be normal channels respectively local to $\U^A, \U^B$. Then, for any QFT state $\omega$, the distribution $P(a,b|\alpha,\beta)$ defined by
\begin{equation}
P(a,b|\alpha,\beta):= \omega\circ\Lambda^{A,\alpha}\circ\Lambda^{B,\beta}\circ \Omega(M^A_aM^B_b)
\label{Bell_corrs}
\end{equation}
satisfies $P\in C_{\mathrm{qc}}$. The result also holds for $\Omega$ asymptotically FV-realizable if the POVM elements of $M^A, M^B$ live in $\pi(\hat{\W})$.
\end{prop}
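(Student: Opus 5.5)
The plan is to write the FV realization explicitly and push Alice's and Bob's operations through the scattering morphism, so that $P$ takes the form of a commuting-operator behaviour. Let $\Omega(X)=\eta_\sigma\circ\Theta(X\otimes 1)$ with coupling zone $\K\subset\OO$, as in \Cref{subsec:fv}. The key facts I will use are the locality and PSNI properties of $\Theta$ with respect to $\K$ (hence with respect to $\OO$).

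First, apply $\Theta$ to the product $M^A_aM^B_b\otimes 1$. Since $\Theta$ is a $\star$-homomorphism, $\Theta(M^A_aM^B_b\otimes 1)=\Theta(M^A_a\otimes 1)\,\Theta(M^B_b\otimes 1)$. Because $\RR^X_+\subset\OO^+$ and $\overline{\RR^X_+}\subset D^+(\RR^X_-)$, PSNI gives
\begin{equation}
\tilde M^X := \Theta(M^X\otimes 1)\in \U(\RR^X_-).
\end{equation}
Here I use the von Neumann closures of the local algebras: in the operator representation the POVMs live in weak closures, and $\Theta$ is assumed implementable or normal. Both $\tilde M^A_a$ and $\tilde M^B_b$ are positive, and $\sum_a\tilde M^A_a=1$. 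Moreover $\tilde M^A_a$ and $\tilde M^B_b$ commute, because $M^A_a$ and $M^B_b$ commute by microcausality ($\overline{\RR^A_+}\perp\overline{\RR^B_+}$) and $\Theta$ is a homomorphism.

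Second, move the channels $\Lambda^{A,\alpha}$ and $\Lambda^{B,\beta}$ past $\Omega$. These channels act on the target only, and $\eta_\sigma$ is the partial state on the probe, so $\Lambda\circ\eta_\sigma=\eta_\sigma\circ(\Lambda\otimes\mathrm{id})$. Hence
\begin{equation}
P(a,b|\alpha,\beta)=(\omega\otimes\sigma)\bigl((\Lambda^{A,\alpha}\otimes\mathrm{id})\circ(\Lambda^{B,\beta}\otimes\mathrm{id})(\tilde M^A_a\tilde M^B_b)\bigr).
\end{equation}
Since $\Lambda^{B,\beta}$ is normal and local to $\U^B$, its Kraus operators commute with $\A(\U^{B\perp})$ (Proposition \ref{prop:normallocality}); tensored with the identity on the probe, they commute with $\U(\RR^A_-)$ because $\overline{\U^B}\perp\overline{\RR^A_-}$. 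Consequently, for $X$ in that algebra and arbitrary $Y$,
\begin{equation}
\Lambda^{B,\beta}(XY)=X\Lambda^{B,\beta}(Y).
\end{equation}
Thus $\Lambda^{B,\beta}(\tilde M^A_a\tilde M^B_b)=\tilde M^A_a\,\hat M^B_{b|\beta}$ with $\hat M^B_{b|\beta}:=\Lambda^{B,\beta}(\tilde M^B_b)$. This is again a POVM in $\beta$, and it still commutes with $\U(\RR^A_-)$. The reason is that $\tilde M^B_b$ commutes with $\U(\RR^A_-)$ (when $\RR^A_-\perp\RR^B_-$) or at least with $\U(\U^A)$, and the Kraus operators of $\Lambda^{B,\beta}$ also commute with it.

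Applying $\Lambda^{A,\alpha}$ in the same way defines $\hat M^A_{a|\alpha}$, which commutes with all $\hat M^B_{b|\beta}$. The behaviour is then
\begin{equation}
P(a,b|\alpha,\beta)=\langle\Psi,\hat M^A_{a|\alpha}\hat M^B_{b|\beta}\Psi\rangle,
\end{equation}
where $\Psi$ is the GNS/purification vector of $\omega\otimes\sigma$, which gives $P\in C_{\mathrm{qc}}$.

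The main obstacle is the commutation bookkeeping in the non-spacelike case. There $\RR^A_-$ and $\RR^B_-$ need not be spacelike, so I cannot rely on $[\tilde M^A,\tilde M^B]=0$ via regions and must instead use the homomorphism property of $\Theta$. I also need the Bell conditions $\overline{\U^A}\perp\overline{\RR^B_-}$ and $\overline{\U^B}\perp\overline{\RR^A_-}$ to ensure that $\Lambda^{A,\alpha}$ acts trivially on $\hat M^B_{b|\beta}$, and symmetrically for $\Lambda^{B,\beta}$; the order in which the two channels are applied should be checked carefully. I would order the argument as: localize $\tilde M^B$ in $\RR^B_-$; then use that $\Lambda^{B,\beta}$ produces operators commuting with $\U(\U^A)$, since its Kraus operators commute with $\U(\U^{B\perp})\supset\U(\U^A)$.

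For the asymptotic statement, apply the above to each FV scheme $\Omega^n$ to get $P_n\in C_{\mathrm{qc}}$. Since the POVMs lie in $\pi(\hat\W)$, the definition of asymptotic realizability yields $P_n\to P$ pointwise, using normality of $\omega\circ\Lambda^{A,\alpha}\circ\Lambda^{B,\beta}$, which lies in the folium. Closedness of $C_{\mathrm{qc}}$ then concludes the proof.
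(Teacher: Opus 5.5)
Your proposal is correct and follows essentially the same route as the paper. You localize $\tilde M^X=\Theta(M^X\otimes 1)$ in $\RR^X_-$ via PSNI and get $[\tilde M^A_a,\tilde M^B_b]=0$ from the homomorphism property of $\Theta$. You then use $\overline{\U^A}\perp\overline{\U^B}$, $\overline{\U^A}\perp\overline{\RR^B_-}$ and $\overline{\U^B}\perp\overline{\RR^A_-}$ to pull the Kraus operators through, which yields the commuting POVMs $\Lambda^{A,\alpha}(\tilde M^A_a)$, $\Lambda^{B,\beta}(\tilde M^B_b)$, and you settle the asymptotic case by closedness of $C_{\mathrm{qc}}$.

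One aside is wrong: without the spacelike condition, $\Lambda^{B,\beta}(\tilde M^B_b)$ need not commute with all of $\U(\RR^A_-)$. What you actually use, and what suffices, is that it commutes with $\tilde M^A_a$ and with the Kraus operators of $\Lambda^{A,\alpha}$.
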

\begin{proof}
Suppose that $\Omega$ is FV realizable. Then, there exists a probe theory $\A_P$, initially in state $\varphi$, and a scattering morphism $\Theta:\A_S\otimes\A_P\to \A_S\otimes\A_P$ describing the interaction between probe and target fields. Since $\Omega$ describes a channel, we do not need to consider any measurement on the probe. Now, let $(A^\alpha_j)_j$ ($(B^\beta_k)_k$) be the Kraus operators of instrument $\Lambda^{A,\alpha}$ ($\Lambda^{B,\beta}$) and define the instruments $\tilde{\Lambda}^{A,\alpha}, \tilde{\Lambda}^{B,\beta}$ through 
\begin{align}
&\tilde{\Lambda}^{A,\alpha}(s\otimes t)=\Lambda^{A,\alpha}(s)\otimes t,\nonumber\\
&\tilde{\Lambda}^{B,\beta}(s\otimes t)=\Lambda^{B,\beta}(s)\otimes t.
\end{align}
It is immediate that the Kraus operators of these instruments are, respectively,
\begin{equation}
\tilde{A}^\alpha_j = A^\alpha_j \otimes \id_P, \tilde{B}^\beta_k = B^\beta_k \otimes \id_P.
\end{equation}

With this notation, we have that
\begin{align}\label{bellboxcorrelations}
P(a,b|\alpha,\beta)&=\omega\otimes\varphi\left(\tilde\Lambda^{A,\alpha}\circ\tilde\Lambda^{B,\beta}\circ\Theta\left(M^A_aM^B_b\otimes\id_P\right)\right),\nonumber\\
&=\omega\otimes\varphi\left(\tilde\Lambda^{A,\alpha}\circ\tilde\Lambda^{B,\beta}\left(\tilde{M}^A_a\tilde{M}^B_b\right)\right),\nonumber\\
&=\sum_{j,k}\omega\otimes\varphi\left(\tilde{A}^\alpha_j\tilde{B}^\beta_k\tilde{M}^A_a\tilde{M}^B_b(\tilde{B}^\beta_k)^*(\tilde{A}^\alpha_j)^*\right)
\end{align}
where we defined the POVMs
\begin{equation}
\tilde{M}^X_x:=\Theta\left(M^X_x\otimes\id_P\right), X=A,B, \, x=a,b.
\end{equation}
For $X=A,B$, due to the condition $\overline{\RR^X_+}\subset D^+(\RR_-^X)$, we have that $\tilde{M}^X_x$ is localizable in $\RR_-^X$. From the conditions $\overline{\U^A}\perp \overline{\U^B}$, $\overline{\U^B}\perp \overline{\RR^A_{-}}$, $\overline{\U^A}\perp \overline{\RR^B_{-}}$ and Lemma \ref{lemma:weyllocality}, we further infer that
\begin{equation}
[\tilde{A}, \tilde{B}]=[\tilde{B}, \tilde{M}^A_a]=[\tilde{A}, \tilde{M}^B_b]=0,
\end{equation}
where $\tilde{A}$ stands for $\tilde{A}_j^\alpha$ and $(\tilde{A}_j^\alpha)^*$; and $\tilde{B}$, for $\tilde{B}_k^\beta$ and $(\tilde{B}_k^\beta)^*$. In addition, since $\Theta$ is an isomorphism, we have that $[\tilde{M}^A_a,\tilde{M}^B_b]=[M^A_a,M^B_b]=0$. These commutation relations allow us to express $P(a,b|\alpha,\beta)$ as
\begin{equation}
P(a,b|\alpha,\beta)=\omega\otimes \varphi\left(\tilde{M}^A_{a|\alpha}\tilde{M}^B_{b|\beta}\right),
\end{equation}
with 
\begin{equation}
\tilde{M}^A_{a|\alpha}=\tilde\Lambda^{A,\alpha}\left(\tilde{M}^A_a\right),\tilde{M}^B_{b|\beta}=\tilde\Lambda^{B,\beta}\left(\tilde{M}^B_b\right),
\end{equation}
Noting that $(\tilde{M}^A_{a|\alpha})_a$ $(\tilde{M}^B_{b|\beta})_b$ are POVMs for all $\alpha,\beta$ and that
$[\tilde{M}^A_{a|\alpha},\tilde{M}^B_{b|\beta}]=0$, we have that $P(a,b|\alpha,\beta)\in C_{\mathrm{qc}}$. Since $C_{\mathrm{qc}}$ is closed, if the POVM elements of $M^A, M^B$ belong to $\pi(\hat{\W})$, the statement of the Proposition also holds for asymptotic FV realizations.    
\end{proof}

\subsubsection{Box channels}
\label{sec:box_channels}
For arbitrary $f\in C_0^\infty(\M)$, $n\in \N$, we define the Weyl instruments:
\begin{align}
&D^{f}(\bullet):=e^{i\Phi(f)}\bullet e^{-i\Phi(f)},\nonumber\\
&\Omega^{f,n}_\gamma(\bullet):=(\Lambda^{f,n}_\gamma)^*\bullet \Lambda^{f,n}_\gamma,\mbox{ for } \gamma=0,...,n-1,
\end{align}
where
\begin{equation}
\Lambda^{f,n}_\gamma:=\frac{1}{n}\sum_{k=0}^{n-1}e^{i\frac{2\pi k(\gamma-\Phi(f))}{n}}.
\label{kraus_disc_meas}
\end{equation}

Note that $D^{f,n}$ and $\Omega^{f,n}$ are diagonal Weyl instruments, and thus causal, for interaction zone $\OO=\supp f$. That $\Omega^{f,n}$ is a diagonal is a direct consequence of the identity:
\begin{equation}\label{eq:kronecker}
    \frac{1}{n} \sum_{\gamma=0}^{n-1} e^{i\frac{2\pi(l-k)\gamma}{n}} = \delta_{kl},\forall k,l\in\{0,...,n-1\}.
\end{equation}
More explicitly, 
\begin{align}
\sum_{\gamma=0}^{n-1}\Omega^{f,n}_\gamma(\bullet)&=\sum_{\gamma=0}^{n-1}(\Lambda^{f,n}_\gamma)^*\bullet \Lambda^{f,n}_\gamma\nonumber\\
&=\frac{1}{n^2}\sum_{j,k=0}^{n-1}e^{i\frac{2\pi j\Phi(f)}{n}}\bullet e^{-i\frac{2\pi k\Phi(f)}{n}}\left(\sum_{\gamma=0}^{n-1}e^{i2\pi\gamma(k-j)}\right)\nonumber\\
&=\frac{1}{n}\sum_{j,k=0}^{n-1}e^{i\frac{2\pi j\Phi(f)}{n}}\bullet e^{-i\frac{2\pi k\Phi(f)}{n}}\delta_{j,k}\nonumber\\
&=\frac{1}{n}\sum_{j=0}^{n-1}e^{i\frac{2\pi j\Phi(f)}{n}}\bullet e^{-i\frac{2\pi j\Phi(f)}{n}}.
\end{align}

For QFT states $\omega$ satisfying
\begin{equation}
\omega((\phi(f)-\tilde{\gamma})^2)\ll 1,
\end{equation}
for some $\tilde{\gamma},\in \{0,...,n-1\}$, it is the case that
\begin{equation}
\omega\circ\Omega^{f,n}_\gamma(\bullet) \approx \omega(\bullet)\delta_{\gamma,\tilde{\gamma}}.
\label{meas_rel}
\end{equation}
Indeed, for such states $\omega$ the distribution $\omega(d\Pi(\Phi(f)))$ is highly peaked at $\tilde{\gamma}$. Thus, the right action of the Kraus operator $\Lambda^{f,n}_\gamma$, as defined in (\ref{kraus_disc_meas}), is of the form
\begin{equation}
\omega(\bullet \Lambda^{f,n}_\gamma)\approx \omega(\bullet)\frac{1}{n}\sum_{k=0}^{n-1}e^{i\frac{2\pi k(\gamma-\tilde{\gamma})}{n}}=\omega(\bullet)\delta_{\gamma,\tilde{\gamma}},
\end{equation}
and similarly for the left action of $(\Lambda^{f,n}_\gamma)^*$. Now, let $\hat{f}\in C^\infty_0(\M)$, with
\begin{equation}
[\Phi(\hat{f}),\Phi(f)]=i,    
\end{equation}
and let $\omega$ be a QFT state with
\begin{equation}
\omega(\Phi(\hat{f})^2)\ll 1.
\end{equation}
Then, for $\gamma\in\{0,...,n\}$, the state 
\begin{equation}
\omega^\gamma(\bullet)=\omega\circ D^{\gamma f}(\bullet)    
\end{equation}
will satisfy
\begin{equation}
\omega^\gamma((\Phi(\hat{f})-\gamma)^2)\ll 1.
\end{equation}
From eq. (\ref{meas_rel}) and the above equation, it follows that
\begin{equation}
\omega\circ D^{\tilde{\gamma}f}\circ \Omega^{\hat{f},n}_\gamma (\bullet) \approx \omega\circ D^{\gamma f}(\bullet) \delta_{\gamma,\tilde{\gamma}}.
\label{comp_displ_meas_rel}
\end{equation}

Next, we define another ingredient:
\begin{defin}[Box setup]
We say that the compacts $\S^X,\T^X \subset \M$ and test functions $f_X,g_X$, for $X=A,B$, form a \emph{box setup} if they satisfy the conditions:
\begin{align}
&\supp f_X \subset \S^X, \supp g_X \subset \T^X, \nonumber\\
&\S^X\subset (\T^X)^\vee,\nonumber\\
&\S^A\perp \S^B,\T^A\perp \T^B.
\label{conds_box_channel}
\end{align}
We call it \emph{spacelike} if, in addition, 
\begin{equation}
\T^A\perp \S^B, \T^B\perp \S^A.
\label{space-like_conds}
\end{equation}
\end{defin}
An example of a spacelike box setup is depicted in \Cref{fig:box_channel_regions}.
\begin{figure}
\centering
\includegraphics[width=0.6\textwidth]{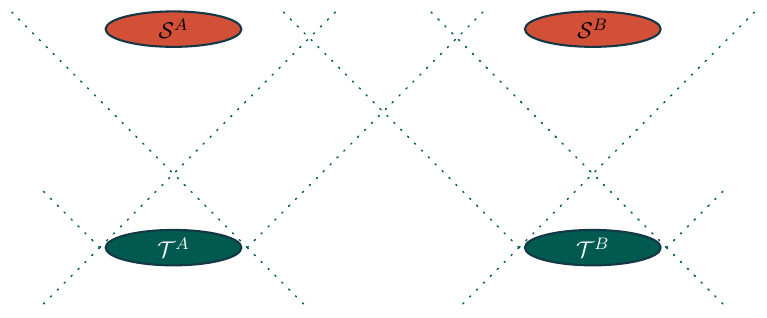}
\caption{Spacetime zones for a spacelike box channel.}
\label{fig:box_channel_regions}
\end{figure}

Now, we define a class of channels that we will be playing with within this section and the next.
\begin{defin}
For $X=A,B$ consider compacts $\S^X, \T^X\subset \M$ and test functions $f_X, g_X\in C_0^\infty(\M)$ forming a box setup.
Let $(M,N)$ be a Bell scenario, and let $P(a,b|\alpha,\beta)\in C_{\mathrm{ns}}$ be a correlation therein. The Weyl channel defined by
\begin{equation}
\Omega:=\sum_{a,b,\alpha,\beta}P(a,b|\alpha,\beta)\Omega^A_{\alpha}\circ\Omega^B_{\beta}\circ D^{a f_A}\circ D^{b f_B},
\label{counterexample_channel}
\end{equation}
with $\Omega^A:=\Omega^{g_A,N}$, $\Omega^B:=\Omega^{g_B,N}$, will be called a \emph{box channel with distribution $P$}.
\end{defin}

\begin{remark}
    Box channels can be seen as a generalization of the non-diagonal example given in \eqref{eq:non_diag_ex}. Specifically, in the case where $P(a,b|\alpha,\beta)=\delta_{a\alpha}\delta_{b\beta}$ and the box setup is spacelike, the associated box channel with distribution $P$ reduces to the composition of two mutually commuting instruments
    \begin{equation}
        \Omega'^A=\sum_a \Omega^A_a\circ D^{af_A}, \;\; \Omega'^B=\sum_b \Omega^B_b\circ D^{bf_B},
    \end{equation}
    of the form given in eq  \eqref{eq:non_diag_ex}. As in that example, for some choices of $f_A, f_B, g_A, g_B$ these are non-diagonal. 
\end{remark}

The next proposition states that all box channels are causal.
\begin{prop}
\label{prop:causal_box}
Any box channel is causal for any interaction zone $\OO$ containing $\S^A, \S^B, \T^A, \T^B$, no matter the choice of $f_A, f_B, g_A, g_B$.
\end{prop}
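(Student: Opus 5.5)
Our plan is to verify the PSNI condition directly, using the reduced form of Lemma \ref{lemma:psni_simpler}. Fix regions $\RR_\pm\subset\OO^\pm$ with $\overline{\RR_+}\subset D^+(\RR_-)$ and a generator $e^{i\Phi(h)}$ with $[h]$ localizable in $\RR_+$. We must show $\Omega(e^{i\Phi(h)})\in\hat{\W}(\RR_-)$. Since $\RR_+\subset\OO^+$ and all of $\S^X,\T^X$ lie in $\OO$, we have $\supp h\perp$ or in the future of each of these compacts. In particular $E(f_X,h)$ and $E(g_X,h)$ are well defined numbers, and the displacements act by phases.

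\textbf{Explicit computation.} First we compute $\Omega(e^{i\Phi(h)})$. By the Weyl relations, $D^{af_A}\circ D^{bf_B}(e^{i\Phi(h)})=e^{-i(aE(f_A,h)+bE(f_B,h))}e^{i\Phi(h)}$, up to sign conventions. The diagonal instruments $\Omega^X_\gamma$ act on $e^{i\Phi(h)}$ as
\begin{equation}
\Omega^X_\gamma(e^{i\Phi(h)})=\frac{1}{N^2}\sum_{k,l}e^{\frac{2\pi i(l-k)\gamma}{N}}e^{-i\frac{2\pi k}{N}\Phi(g_X)}e^{i\Phi(h)}e^{i\frac{2\pi l}{N}\Phi(g_X)}.
\end{equation}
This yields terms $e^{i\Phi(h+\frac{2\pi(l-k)}{N}g_X)}$ times explicit phases. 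Summing over $a,b$ with weights $P(a,b|\alpha,\beta)$ gives a finite combination of generators $e^{i\Phi(h+d_Ag_A+d_Bg_B)}$, with $d_X\in\frac{2\pi}{N}\Z$.

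\textbf{Localization.} Since $\Omega^A,\Omega^B$ are applied after the displacements, in the Heisenberg order the displacements act first on $h$. Each resulting term therefore involves $h$ plus multiples of $g_A,g_B$. Now $h$ is localized in $\RR_-$ because $\overline{\RR_+}\subset D^+(\RR_-)$. The question is whether $g_X$ is localizable in $\RR_-$, and in general it is not.

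We then split into cases according to the position of $\RR_+$ relative to $J^+(\S^X)$. Suppose first that $\RR_+$ does not meet $J^+(\S^X)$. Then $E(f_X,h)=0$, so the $X$-displacement phase is trivial. Summing over the corresponding outcome uses non-signalling: $\sum_a P(a,b|\alpha,\beta)=P(b|\beta)$ is independent of $\alpha$. The $\Omega^A$ channel then appears only summed over $\alpha$, where it is diagonal, i.e.\ a convex mixture of displacements by multiples of $g_A$. This acts on $e^{i\Phi(h)}$ by a scalar, so no $g_A$ term survives. This handles the off-diagonal ($d_A\neq0$) terms via the $\delta$-identity \eqref{eq:kronecker}.

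Suppose instead that $\RR_+$ does meet $J^+(\S^X)$. Then, since $\S^X\subset(\T^X)^\vee$, every point of $\RR_+$ that sees $\S^X$ is in the future of all of $\T^X$. We will then argue that $\supp Eg_X\cap\Sigma$ lies in $J^-(\overline{\RR_+})\cap\RR_-$ for a suitable Cauchy surface $\Sigma$. This is the main obstacle. We must show via Lemma \ref{lemma:localization} that $[g_X]$ is localizable in $\RR_-$, or else that the combination $h+d_Xg_X$ is. We would try to use that $\T^X\subset J^-(x)$ for some $x\in\RR_+$, so that $\T^X\subset J^-(\RR_+)\cap J^-(\OO)$, and then pick a Cauchy surface of $\RR_-$ between $\T^X$ and $\RR_+$. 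If this fails in general, a fallback is Theorem \ref{theorem:causal_network}. In that route we rewrite $\Omega$ as a network in which Alice's and Bob's outcomes feed forward from $\T^X$ to $\S^X$, with shared randomness realizing $P$. This is not directly available for $P\notin C_{\mathrm{qa}}$, so the direct case analysis is the preferred route.
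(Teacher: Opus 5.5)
\textbf{Verdict: there is a gap at the key localization step; the rest of the argument is sound.}

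Your case split (on whether $\RR_+$ meets $J^+(\S^X)$) reproduces the paper's four-case argument. So do the use of non-signalling to marginalize the side whose displacement acts trivially on $e^{i\Phi(h)}$, and the observation that a summed diagonal channel multiplies every generator by a scalar. All of that is correct.

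The gap is the step you call the main obstacle, which is the core of the paper's first case. When $\RR_+$ meets $J^+(\S^X)$, you must show that $g_X$, and hence every Kraus operator of the instruments $\Omega^X$, is localizable in $\RR_-$. Your proposal does not prove this, and the construction you sketch cannot work:
\begin{itemize}
\item A Cauchy surface of $\RR_-$ cannot lie to the future of $\T^X$, because $\RR_-\subset\OO^-$ contains no point of $J^+(\OO)\supset J^+(\T^X)$.
\item On any Cauchy surface $\Sigma$ to the future of $\T^X$, one has $\supp Eg_X\cap\Sigma\subset J^+(\T^X)$, which is disjoint from $\RR_-$.
\end{itemize}
The part of $\RR_-$ that matters lies to the \emph{past} of $\T^X$.

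Your fallback via Theorem~\ref{theorem:causal_network} fails for the full channel when $P$ is nonlocal, as you note. The paper only invokes that theorem in the mixed case, after marginalization has reduced the channel to the genuine feed-forward network $\sum_{a,\alpha}P(a|\alpha)\Omega^A_\alpha\circ D^{af_A}$. When both sides see their displacement zones there is nothing to marginalize, so a direct localization argument is unavoidable.

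The missing idea is a curve-concatenation argument.
\begin{itemize}
\item Take $y\in\RR_+\cap J^+(\S^X)$. Since $\S^X\subset(\T^X)^\vee$, transitivity gives $y\in(\T^X)^\vee$.
\item Fix $x\in\T^X$ and an arbitrary past-inextendible causal curve $\gamma$ from $x$. Prepend to it a past-directed causal curve from $y$ to $x$.
\item The concatenation starts in $\overline{\RR_+}\subset D^+(\RR_-)$, so it meets $\RR_-$. It cannot do so on the first segment: all of its points lie in $J^+(x)\subset J^+(\OO)$, while $\RR_-\subset\OO^-$.
\item Hence $\gamma$ meets $\RR_-$, i.e.\ $\T^X\subset D^+(\RR_-)$.
\end{itemize}
Since $D^+(\RR_-)$ is a region (Proposition~\ref{prop:D_plus}), isotony and the timeslice property give $\A(D^+(\RR_-))=\A(\RR_-)$. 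So $e^{i\Phi(cg_X)}\in\A(\RR_-)$ for all $c\in\R$. Combined with $e^{i\Phi(h)}\in\A(\RR_-)$ and the Weyl relations, every term $e^{i\Phi(h+d_Ag_A+d_Bg_B)}$ left by your case analysis is localizable in $\RR_-$, and the argument closes.
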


\begin{proof}
Let $\RR_{\pm}$ be regions in $\OO^\pm$, respectively, with $\overline{\RR_{+}}\subset D^+(\RR_{-})$, and choose an arbitrary $h\in C_0^\infty(\RR^+)$. We distinguish four cases:
\begin{enumerate}
    \item $\overline{\RR_{+}}\cap (\T^A)^\vee\not=\emptyset,\overline{\RR_+}\cap (\T^B)^\vee\not=\emptyset$. In that case, for any $x\in \T^A$, there is $y\in \overline{\RR_{+}}$ with a causal, past-directed curve starting in $y$ and ending in $x$. Continuing the curve, we would arrive at $\RR_{-}$, due to $\overline{\RR_{+}}\subset D^+(\RR_{-})$, $\RR_-\subset\OO^-$, which implies that $\T^A\subset D^+(\RR_{-})$, and so $\Phi(g_A)$ can be localized in $\RR_{-}$. Similarly, $\Phi(g_B)$ can be localized in $\RR_{-}$. Moreover,
    $O_{ab}:=D^{a f_A}\circ D^{b f_B}(e^{i\Phi(h)})$ is proportional to $e^{i\Phi(h)}$ and thus localizable in $\overline{\RR_{+}}$ and, by $\overline{\RR_{+}}\subset D^+(\RR_{-})$, also in $\RR_{-}$. The effect of the channel $\Omega^A_{\alpha}\circ\Omega^B_{\beta}$ on $O_{ab}$ is to multiply it by elements localizable in $\RR_{-}$. We conclude that
    \begin{equation}
    \Omega\left(e^{i\Phi(h)}\right)=\sum_{a,b,\alpha,\beta}P(a,b|\alpha,\beta)\Omega^A_{\alpha}\circ\Omega^B_{\beta}(O_{ab})
    \end{equation}
    is localizable in $\RR_{-}$.
    
    \item $\overline{\RR_{+}}\cap (\T^A)^\vee \not=\emptyset,\overline{\RR_{+}}\cap (\T^B)^\vee=\emptyset$. In that case, $\overline{\RR_{+}}\perp \S^B$, and so $[\Phi(f_B),\Phi(h)]=0$. Thus,
    \begin{align}
    &\Omega(e^{i\Phi(h)})=\sum_{a,b,\alpha,\beta}P(a,b|\alpha,\beta)\Omega^A_{\alpha}\circ\Omega^B_{\beta}\circ D^{a f_A}
    (e^{i\Phi(h)})\nonumber\\
    &=\sum_{a,\alpha}P(a|\alpha)\Omega^A_{\alpha}\circ\sum_\beta\Omega^B_{\beta}\circ D^{a f_A}(e^{i\Phi(h)})\nonumber\\
    &=\sum_\beta\Omega^B_{\beta}\circ\sum_{a,\alpha}P(a|\alpha)\Omega^A_{\alpha}\circ D^{a f_A}(e^{i\Phi(h)})\nonumber\\
    &=\overline{\Omega}^B\circ\sum_{a,\alpha}P(a|\alpha)\Omega^A_{\alpha}\circ D^{a f_A}(e^{i\Phi(h)}),
    \end{align}
    where we made use of $\T^A\perp\T^B$ to argue that we can permute the order of $\Omega^A,\Omega^B$ and of the no-signalling condition $\sum_{b}P(a,b|\alpha,\beta)=P(a|\alpha)$, independent of $\beta$. Define
    \begin{equation}
    \tilde{\Omega}^A(\bullet):=\sum_{a,\alpha}P(a|\alpha)\Omega^A_{\alpha}\circ D^{a f_A}(\bullet).
    \end{equation}
    This is the channel that results from feed-forwarding the outcome $a$ of the diagonal Weyl instrument
    \begin{equation}
    \hat{\Omega}^A_a:=\sum_{\alpha}P(a|\alpha)\Omega^A_\alpha
    \end{equation}
    to the channel $D^{a f_A}$. It follows that $\tilde{\Omega}^A_a$ is a causal channel with interaction zone ${\cal S}^A\cup {\cal T}^A$, and thus $\sum_a\tilde{\Omega}_a^A(e^{i\Phi(h)})\in\A(\RR_-)$. Hence, $\Omega(e^{i\Phi(h)})=\overline{\Omega}^B\circ \overline{\tilde{\Omega}}^A(e^{i\Phi(h)})\in \A(\RR_-)$, since $\Omega^B$ is diagonal.

    \item $\overline{\RR_{+}}\cap (\T^A)^\vee=\emptyset,\overline{\RR_{+}}\cap (\T^B)^\vee \not=\emptyset$. This case is symmetric to the previous one, with the exchange $A\leftrightarrow B$, so $\Omega(e^{i\Phi(h)})$ is also localizable in $\RR_{-}$.
    \item $\overline{\RR_{+}}\cap (\T^A)^\vee=\overline{\RR_{+}}\cap (\T^B)^\vee=\emptyset$. In this case, $[\Phi(h), \Phi(f_A)]=[\Phi(h), \Phi(f_B)]=0$, and so
    \begin{align}
    &\Omega(e^{i\Phi(h)})=\sum_{a,b,\alpha,\beta}P(a,b|\alpha,\beta)\Omega^A_{\alpha}\circ\Omega^B_{\beta}(e^{i\Phi(h)})\nonumber\\
    &=\overline{\Omega}^A\circ\overline{\Omega}^B(e^{i\Phi(h)}).
    \end{align}
    Since both $\Omega_A,\Omega_B$ are diagonal, their composition is also diagonal, and so $\Omega(e^{i\Phi(h)})$ is localizable in $\RR_{+}$ and thus in $\RR_{-}$.
\end{enumerate}
    
\end{proof}

We next prove an important property of box channels: under very mild conditions, their defining distribution $P$ can be observed in a Bell experiment. Such conditions are expressed as follows:
\begin{defin}[Bell-box compatibility]
    A Bell setup with regions $\RR_\pm^X$, $\U^X$, $X=A,B$, is called \emph{compatible} with a box setup (or channel) formed from test functions $f_X$,$g_X$, $X=A,B$, if the following symplectic condition holds:
    \begin{align}
    &\mbox{For }X=A,B,\exists\hat{f}_X\in C_0^\infty(\RR^X_{+}),\hat{g}_X\in C_0^\infty(\U^X),\mbox{ such that}\nonumber\\
    &\EE{\hat{f}_X}{f_X} = \EE{g_X}{\hat{g}_X} = 1, \nonumber\\
    &\EE{\hat{f}_X}{g_X} = \EE{\hat{f}_X}{\hat{g}_X}=0,\nonumber\\
    &\EE{s_A}{t_B} = 0,\mbox{ for }s,t\in\{f,\hat{f},g,\hat{g}\}.
    \label{sympl_cond}
    \end{align}
\end{defin}
\begin{prop}
\label{prop:recovery_P}
Let $\Omega$ be a box channel with distribution $P$, interaction zone $\OO$ and test functions $f_X$, $g_X$, for $X=A,B$ and let the regions $\RR_\pm^X$, $\U^X$ form a compatible Bell setup around $\OO$. 

Then, there exists a family of states $\omega_\lambda$, Weyl POVMs $M^A, M^B$, respectively localized in $\RR_+^A,\RR^B_+$, and displacement channels $\{D^{A,\alpha}\}_{\alpha}$, $\{D^{B,\beta}\}_{\beta}$, respectively localized in $\U^A, \U^B$, such that
\begin{equation}
P(a,b|\alpha,\beta)= \lim_{\lambda\to 0}\omega_\lambda\circ D^{A,\alpha}\circ D^{B,\beta}\circ \Omega(M^A_aM^B_b).
\label{final_distr}
\end{equation}    
\end{prop}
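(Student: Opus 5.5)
The plan is to realize the Bell experiment of \Cref{prop:recovery_P} through the conjugate quadratures supplied by the compatibility condition \eqref{sympl_cond}. Alice encodes her input $\alpha$ by displacing $\Phi(g_A)$ by $\alpha$. The box channel then reads $\Phi(g_A)$ modulo $N$ through $\Omega^{g_A,N}$ and writes $a$ into $\Phi(\hat f_A)$ through $D^{af_A}$. Alice finally reads $a$ by a discrete measurement of $\Phi(\hat f_A)$ in $\RR^A_+$, and Bob proceeds in the same way.

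\emph{Choices.} I take $D^{A,\alpha}:=D^{\pm\alpha\hat g_A}$ and $D^{B,\beta}:=D^{\pm\beta\hat g_B}$. The sign is fixed so that $D^{A,\alpha}(e^{i\Phi(h)})=e^{i\Phi(h)}$ times a phase corresponding to $\Phi(g_A)\mapsto\Phi(g_A)+\alpha$. These channels are local to $\U^X$ since $\hat g_X\in C_0^\infty(\U^X)$. For the POVMs I take $M^A_a:=\Omega^{\hat f_A,M}_a(1)=(\Lambda^{\hat f_A,M}_a)^*\Lambda^{\hat f_A,M}_a$, and $M^B_b$ analogously. These are Weyl POVMs localized in $\RR^X_+$, and they are positive and normalized by the diagonality computation following \eqref{eq:kronecker}. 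For the states, I use the fact that $\Phi(g_A),\Phi(\hat f_A),\Phi(g_B),\Phi(\hat f_B)$ pairwise commute, by $\EE{\hat f_X}{g_X}=0$ and the cross conditions $\EE{s_A}{t_B}=0$. Hence I can pick quasi-free states $\omega_\lambda$, for example squeezed Gaussian states with respect to modes built from $(g_X,\hat g_X)$ and $(\hat f_X,f_X)$, such that the joint distribution of these four commuting quadratures concentrates at $0$ with variance $O(\lambda)$ as $\lambda\to0$.

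\emph{Computation.} I expand the Heisenberg-picture expression $D^{A,\alpha}\circ D^{B,\beta}\circ\Omega(M^A_aM^B_b)$ using \eqref{counterexample_channel}. First, $D^{a'f_A}\circ D^{b'f_B}$ acts on $M^A_aM^B_b$ as $\Phi(\hat f_A)\mapsto\Phi(\hat f_A)+a'$ and $\Phi(\hat f_B)\mapsto\Phi(\hat f_B)+b'$ (up to the sign convention). This uses $\EE{\hat f_X}{f_X}=1$ and $\EE{\hat f_A}{f_B}=\EE{\hat f_B}{f_A}=0$. The result is a function of $\Phi(\hat f_A),\Phi(\hat f_B)$ alone. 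Next, $\Omega^A_{\alpha'}\circ\Omega^B_{\beta'}$ sandwiches it between Kraus operators that are functions of $\Phi(g_A),\Phi(g_B)$, which commute with it. The output is therefore the product of $M^A_a$ and $M^B_b$, shifted by $a'$ and $b'$ respectively, with $\Lambda^{g_A,N}_{\alpha'}{}^*\Lambda^{g_A,N}_{\alpha'}\Lambda^{g_B,N}_{\beta'}{}^*\Lambda^{g_B,N}_{\beta'}$. All factors are functions of the four commuting quadratures. Finally, $D^{A,\alpha}\circ D^{B,\beta}$ shifts $\Phi(g_X)$ by $\alpha$ and $\beta$ and leaves $\Phi(\hat f_X)$ untouched, since $\EE{\hat f_X}{\hat g_X}=0$ and all cross terms vanish.

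\emph{Evaluating the limit.} Every term is now $\omega_\lambda$ applied to a bounded continuous (trigonometric-polynomial) function $F$ of four commuting quadratures. This is the integral of $F$ against their joint spectral distribution in $\omega_\lambda$, which converges weakly to $\delta_0$. The limit is therefore $F(0)$. Evaluated at integer points, the Fourier sums in \eqref{kraus_disc_meas} give exact Kronecker deltas via \eqref{eq:kronecker}:
\[
|\Lambda^{g_A,N}_{\alpha'}|^2\big|_{\Phi(g_A)=\alpha}=\delta_{\alpha\alpha'},\qquad |\Lambda^{\hat f_A,M}_a|^2\big|_{\Phi(\hat f_A)=a'}=\delta_{aa'},
\]
and similarly for $B$. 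Summing over $a',b',\alpha',\beta'$ leaves exactly $P(a,b|\alpha,\beta)$.

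\emph{Main obstacle.} I expect the delicate step to be the bookkeeping of the operator expansion. I must check that the Weyl phases from $D^{a'f_A}$ acting on the Kraus operators of $\Omega^A$ do not appear; they cancel because $\EE{g_A}{f_A}$ enters only through conjugation of a function of $\Phi(\hat f_A)$, which commutes with $\Phi(g_A)$. I must also fix the signs of the displacements consistently. A secondary point is to justify the existence of $\omega_\lambda$ with the required joint concentration. For this I will use a representation in which these finitely many commuting quadratures belong to independent modes, and squeeze along them. Only expectation values of finitely many Weyl elements are needed, so the limit can be taken in the abstract Weyl algebra.
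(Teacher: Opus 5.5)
Your proposal is correct and follows the paper's proof: it uses the same squeezed quasi-free states $\omega_\lambda$, the same POVMs $M^X=\Omega^{\hat f_X,M}(1)$ and the same displacements $D^{\alpha\hat g_A}, D^{\beta\hat g_B}$, and the Kronecker deltas come from the symplectic conditions \eqref{sympl_cond} exactly as in the paper. Your exact Heisenberg-picture reduction to a trigonometric polynomial in the four commuting quadratures $\Phi(g_X),\Phi(\hat f_X)$ is a cleaner, rigorous version of the paper's recursive use of the approximate relation \eqref{comp_displ_meas_rel}; it also shows that the phase worry in your last paragraph is moot, since $D^{a'f_A}$ acts only on $M^A_aM^B_b$ and never on $\Lambda^{g_A,N}_{\alpha'}$.
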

The conditions in the statement of the proposition are not void: in \Cref{subsec:unrealizability}, we show that typical scalar QFTs realize them all.
\begin{proof}
By the symplectic condition (\ref{sympl_cond}), the quadratures $\Phi(\hat{f}_A), \Phi(g_A), \Phi(\hat{f}_B), \Phi(g_B)$ all commute. This implies that, for any $\lambda>0$, there exists a quasifree state $\omega_\lambda$ such that
\begin{equation}
\omega_\lambda(\Phi(g_X)^2),\omega_\lambda(\Phi(\hat{f}_X)^2)\leq\lambda^2,X=A,B,
\end{equation}
and such that $\omega_\lambda(Y)$ remains bounded in $\lambda$ for all quadratures $Y$ commuting with $\Phi(g_X)$ and $\Phi(\hat{f}_X)$, $X=A,B$; see, e.g., \cite{field_meas} to learn how to construct such states by applying squeezing operations to an initial, arbitrary quasi-free state.

Define next the POVMs
\begin{align}
&(M^A_a)_{a=0}^{M-1}:=(\Omega^{\hat{f}_A, M}_a(1))_{a=0}^{M-1}\nonumber\\
&(M^B_b)_{b=0}^{M-1}:=(\Omega^{\hat{f}_B, M}_b(1))_{b=0}^{M-1}.
\label{POVMs_Bell}
\end{align}
and the displacement channels
\begin{align}
&D^{A,\alpha}:=D^{\alpha\hat{g}_A},\alpha=0,...,N-1,\nonumber\\
&D^{B,\beta}:=D^{\beta\hat{g}_B},\beta=0,...,N-1,
\label{disps_Bell}
\end{align}
which clearly satisfy the localization conditions in the statement of the proposition.
The resulting distribution is:
\begin{equation}
P_\lambda(a,b|\alpha,\beta):=\omega_\lambda\circ D^{A,\alpha} \circ D^{B,\beta} \circ \Omega\left(M^A_aM^B_b\right),
\label{approx_box}
\end{equation}
for $a,b\in\{0,...,M-1\}$, $\alpha,\beta\in\{0,...,N-1\}$.

To complete the proof, we just need to show that
\begin{align}
&\lim_{\lambda \to 0} \omega_\lambda \left( D^{\alpha\hat{g}_A} \circ D^{\beta\hat{g}_B}\circ  \Omega^{g_A}_{\alpha'}\circ  \Omega^{g_B}_{\beta'}\circ  D^{a'f_A}\circ  D^{b'f_B}\circ \Omega_a^{\hat{f}_A} \circ \Omega_b^{\hat{f}_B} (1) \right)\nonumber\\
&=\lim_{\lambda \to 0} \omega_\lambda \left( D^{\alpha\hat{g}_A}\circ  \Omega^{g_A}_{\alpha'}\circ D^{\beta\hat{g}_B}\circ  \Omega^{g_B}_{\beta'}\circ  D^{a'f_A}\circ \Omega_a^{\hat{f}_A}\circ  D^{b'f_B}\circ \Omega_b^{\hat{f}_B} (1) \right)\nonumber\\
&= \delta_{aa'} \delta_{bb'} \delta_{\alpha\alpha'} \delta_{\beta\beta'},
    \label{boxdelta}
\end{align}
as this will imply 
\begin{equation}
\lim_{\lambda\to 0}P_\lambda(a,b|\alpha,\beta)=P(a,b|\alpha,\beta).
\label{limit_box}
\end{equation}

For $\lambda\ll 1$, eq. (\ref{boxdelta}) follows from the recursive application of relation (\ref{comp_displ_meas_rel}), together with the commutation relations (\ref{sympl_cond}) and the easily verifiable fact that, for any $f,g\in C_0^\infty(\M)$, $D^{f}\circ D^{g}=D^{g}\circ D^{f}$. Indeed, invoking eq. (\ref{comp_displ_meas_rel}) twice, the argument of the limit on the second line of eq. (\ref{boxdelta}) can be approximated by
\begin{equation}
\omega_\lambda \left( D^{\alpha\hat{g}_A}\circ  D^{\beta\hat{g}_B}\circ D^{a'f_A}\circ \Omega_a^{\hat{f}_A}\circ  D^{b'f_B}\circ \Omega_b^{\hat{f}_B} (1) \right)\delta_{\alpha,\alpha'}\delta_{\beta,\beta'}.
\end{equation}
Due to eq. (\ref{sympl_cond}) and the commutation relations between the displacement channels, we can permute $D^{a'f_A}\circ \Omega_a^{\hat{f}_A}\circ  D^{b'f_B}\circ \Omega_b^{\hat{f}_B}$ by $D^{\alpha\hat{g}_A}\circ  D^{\beta\hat{g}_B}$, which, applying (\ref{comp_displ_meas_rel}) twice more, leaves us with the two remaining Kronecker deltas and four displacement channels acting on the identity (and thus returning $1$).

The proposition is proven.

\end{proof}

\subsubsection{FV-unrealizability for arbitrary interaction zones}\label{subsec:unrealizability}
Having reached this point, it is easy to provide an example of a causal Weyl channel that does not admit an FV realization. Let $\OO \subset \M$ be a compact and consider zones, regions and test functions $\S^X,\T^X,\RR_{\pm}^X,\U^X, f_X,g_X$, $X = A,B$ such that $\S^X,\T^X,f_X,g_X$ form a box setup contained in $\OO$ and $\RR_\pm^X,\U^X$ form a compatible Bell setup around $\OO$. If we choose $P\in C_{\mathrm{ns}}, P\not\in C_{\mathrm{qc}}$ and consider the corresponding box channel $\Omega_P$ (using the box setup $\S^X,\T^X,f_X,g_X$), then, by Proposition \ref{prop:causal_box}, $\Omega_P$ is causal. Now, by Proposition \ref{prop:recovery_P} there exists a family of Bell experiments that generates distributions arbitrarily close to $P\not\in C_{\mathrm{qc}}$. This implies, by Proposition \ref{prop:FV_box}, that $\Omega_P$ is not FV-realizable, even asymptotically, since the corresponding POVM elements are all in $\pi(\hat{\W})$.

The above argument does show the FV-unrealizability of $\Omega_P$ for a given interaction zone $\OO$, provided that one can find zones, regions and test functions forming Bell and box setups and complying with the symplectic conditions. However, even if these conditions hold for a given interaction zone $\OO$, they might fail for larger $\OO$, in which case channel $\Omega_P$ could become FV-realizable. In this regard, in this section we argue that typical scalar QFTs satisfy the following property.

\begin{defin}[$U$ property]\label{def:uprop}
A scalar QFT satisfies \emph{the $U$ property} ($U$ from ``unbounded'') if, for $X=A,B$ there exist $\S^X,\T^X\subset \M, f_X\in C_0^\infty(\S^X), g_X\in C_0^\infty(\T^X)$ forming a spacelike box setup such that, for any compact $\OO\supset \bigcup_{X=A,B}\S^X\cup\T^X$, there exists a compatible Bell setup around $\OO$.    
\end{defin}
By the argument above, in any QFT with the $U$ property it is possible to define a causal Weyl channel that cannot be asymptotically FV-realized, no matter how large we take its interaction zone $\OO$ to be.

Luckily, the simplest QFT on the market happens to satisfy the $U$ property.
\begin{prop}
\label{prop:Bell_massless}
The massless Klein-Gordon field in Minkowski spacetime in $1+1$ dimensions has the $U$ property.
\end{prop}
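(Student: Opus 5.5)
The plan is to use the special propagation of the massless field in $1+1$ dimensions. In light-cone coordinates $u=t-x$, $v=t+x$, every solution splits as $Ef=F(u)+G(v)$ into a right-moving and a left-moving part. Moreover, one can pick test functions for which $Ef$ is a \emph{pure} compactly supported right-mover $F(u)$ or a pure left-mover $G(v)$. To see this, take Cauchy data $(\phi,\partial_t\phi)=(F(-x),F'(-x))$ with $F\in C_0^\infty(\R)$, and realize it as $Ef$ for some $f\in C_0^\infty(\M)$ by the usual cutoff construction. The support of such an $Ef$ is a single null strip, which leaves any compact set in both time directions. This is the mechanism I would use to produce, for every compact $\OO$, Bell regions arbitrarily far away that still "see" the box test functions.

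\textbf{Box setup.} I would fix it once and for all, with Alice on the left and Bob on the right.
\begin{itemize}
\item $\T^A$ and $\T^B$ are small diamonds at $t=0$, centred at $x=-L$ and $x=+L$.
\item $\S^A\subset(\T^A)^\vee$ and $\S^B\subset(\T^B)^\vee$ are small diamonds slightly to the future of $\T^A$ and $\T^B$.
\item $L$ is large enough that $\S^A\perp\S^B$, $\T^A\perp\T^B$, and the spacelike conditions \eqref{space-like_conds} hold.
\end{itemize}
Choose $f_A$ with $Ef_A$ a pure left-mover, $g_A$ with $Eg_A$ a pure right-mover, and the mirror choices $f_B$ right-moving, $g_B$ left-moving.

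\textbf{Bell setup for a given compact $\OO$.} Let $\OO\supset\bigcup_X\S^X\cup\T^X$ be arbitrary.
\begin{itemize}
\item Place $\RR^A_+$ as a small diamond far out along the future-left null strip of $Ef_A$. For $T$ large it lies outside $J^-(\OO)$, hence in $\OO^+$. Choose $\hat f_A\in C_0^\infty(\RR^A_+)$ with $\EE{\hat f_A}{f_A}=\int\hat f_A\,Ef_A=1$.
\item Place $\U^A$ far out along the past-left null strip of $Eg_A$, which is a right-mover and therefore came from the far left past. Choose $\hat g_A$ there, with $E\hat g_A$ a pure right-mover, normalized so that $\EE{g_A}{\hat g_A}=1$.
\item Take $\RR^A_-$ to be a long thin diamond hugging the left past null direction. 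It should contain $\U^A$, satisfy $\overline{\RR^A_+}\subset D^+(\RR^A_-)$, and stay inside $\OO^-$. This is possible because the left spatial infinity of a far past Cauchy surface lies outside $J^+(\OO)$, and in $1+1$ dimensions $D^+$ of a long segment contains a tall diamond.
\item Mirror all of this for Bob on the right.
\end{itemize}

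Next I would check the symplectic conditions \eqref{sympl_cond} by support arguments alone, using $\EE{s}{t}=\int s\,Et$.
\begin{itemize}
\item $\EE{\hat f_A}{g_A}=0$ because the right-moving strip of $Eg_A$ is on the right at late times.
\item $\EE{\hat f_A}{\hat g_A}=0$ because $E\hat g_A$ is a right-mover from the far left past, whose strip can be kept away from the far-left-future region $\RR^A_+$.
\item Every cross term $\EE{s_A}{t_B}$ vanishes because Alice's left-moving strips stay on the left, Bob's right-moving strips stay on the right, and the remaining strips can be kept disjoint from the relevant supports by taking the distances large relative to $\OO$.
\end{itemize}
The causal conditions of the Bell setup, namely $\overline{\RR^A_+}\perp\overline{\RR^B_+}$, $\overline{\U^A}\perp\overline{\RR^B_-}$ and $\overline{\U^B}\perp\overline{\RR^A_-}$, should follow from the separation of the left and right asymptotic null directions.

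\textbf{Main obstacle.} I expect the hard part to be the simultaneous geometric bookkeeping: for arbitrary $\OO$, the thin diamonds $\RR^X_-$ must stay in $\OO^-$, contain $\U^X$, have $\RR^X_+$ in their future domain of dependence, and remain causally separated from the other party's sets. At the same time, no null strip of one party's test function may cross a support of the other party. I would handle this with a single scaling parameter: place everything along the four asymptotic null rays at distance $T$, choose $T$ large compared with the diameter of $\OO$ and with $L$, and verify each condition in the coordinates $u,v$, where every condition reduces to an inequality between interval endpoints.
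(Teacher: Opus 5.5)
Your overall strategy is the paper's. The Bell regions are null translates of the box diamonds: $\RR^A_+$ sits future-left of $\S^A$ and $\U^A$ past-left of $\T^A$, mirrored for Bob. The test functions are chosen so that $Ef_A, Eg_B$ are left-movers and $Eg_A, Ef_B$ right-movers. Using exact pure movers is a legitimate and slightly cleaner variant of the paper's cosine profiles: $Eg_A$ and $E\hat g_A$ then vanish on $\RR^A_+$ rather than being constant there, so you need not impose $\int\hat f_A=0$. Your symplectic checks are correct. In fact every one of Alice's sets $\S^A,\T^A,\RR^A_+,\U^A$ is spacelike to every one of Bob's, so the cross terms vanish by microcausality alone.

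\textbf{The gap is the construction of $\RR^X_-$.} With it go the conditions $\overline{\U^B}\perp\overline{\RR^A_-}$ and $\overline{\U^A}\perp\overline{\RR^B_-}$, which do not follow from ``separation of the asymptotic null directions''.
\begin{itemize}
\item A causal diamond satisfies $D^+(\RR)=\RR$, while $\RR^A_+\subset J^+(\S^A)$ is disjoint from $\OO^-$. So no thin diamond can work.
\item In your coordinates, where $J^-(y)=\{u\le u_y,\,v\le v_y\}$, the past of $\overline{\RR^A_+}$ is bounded on the right by the null line $v=v_*:=\max_{\S^A}v$. This line runs along the upper-right edge of $\S^A$.
\item The past-directed null ray along this line, starting at the right corner of $\overline{\RR^A_+}$, must meet $\RR^A_-$. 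Because $\RR^A_-\subset\OO^-$, the meeting point $q$ lies where the ray has left $\overline{J^+(\OO)}$. That is past-right of $\OO$, on Bob's side, not near left spatial infinity.
\item At $q$ one has $v_q=v_*<\min_{\T^B}v$, which follows from $\S^A\perp\T^B$. If $\U^B$ is $\T^B$ translated by $T$ along the past-right null direction, then $q\notin J^-(\overline{\U^B})$ requires $u_q>\max_{\T^B}u-T$.
\item This rules out your ``far past Cauchy surface''. On a slice $t=t_0$ the crossing point has $u_q=2t_0-v_*$, which lands in $J^-(\U^B)$ once $t_0$ is too negative.
\end{itemize}

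\textbf{The fix is what the paper does.} Put $\T^A,\T^B$ at $t=0$ and shift all four sets by the same $T$. Then $\U^A$ and $\U^B$ both lie on the slice $t=-T/2$. Take $\RR^A_-$ to be a thin horizontal strip through $\U^A$, i.e.\ a union of $x$-translates of $\U^A$, reaching from far left to just short of $\U^B$. This gives:
\begin{itemize}
\item $\U^A\subset\RR^A_-$ and $\overline{\RR^A_+}\subset D^+(\RR^A_-)$;
\item $\RR^A_-\subset\OO^-$ once $T$ is large;
\item a horizontal gap between the right end of $J^-(\overline{\RR^A_+})$ on that slice and the left tip of $\U^B$ equal to $\min_{\T^B}v-\max_{\S^A}v$. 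This is independent of $T$ and positive precisely because $\S^A\perp\T^B$.
\end{itemize}
Mirror the construction for Bob. With this step supplied, your argument goes through.
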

The proof makes use of the explicit expression for the commutator function $E$ of the massless scalar field, and the reader can find it in Appendix \ref{app:massless}.

Exploiting genericity allows us to derive a variant of Proposition \ref{prop:Bell_massless} for arbitrary spacetimes.
\begin{prop}
\label{prop:Bell_massive}
Consider a generic scalar QFT, and let the associated spacetime $\M$ be such that there exist $\S^X,\T^X\subset \M$ forming a spacelike box setup, as well as unbounded regions $\J^X,\K^X\subset \M$ with a single connected component such that
\begin{align}
&J^+(\S^X)\supset \J^X, J^-(\T^X)\supset \K^X,\nonumber\\
&\overline{\J^A}\perp \overline{\J^B},\overline{\K^A}\perp \overline{\K^B},\overline{\J^A}\perp \overline{\K^B},\overline{\J^B}\perp \overline{\K^A}.
\label{conds_unbounded_O}
\end{align}
Then, the considered QFT has the $U$ property.
\end{prop}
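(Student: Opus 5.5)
The plan is to exhibit the $U$ property directly from its definition. We take the spacelike box setup $\S^X,\T^X$ from the hypothesis and choose test functions $f_X\in C_0^\infty(\S^X)$, $g_X\in C_0^\infty(\T^X)$ that are generic with respect to a suitable tuple of regular compacts. Then, given an arbitrary compact $\OO\supset\bigcup_X\S^X\cup\T^X$, we construct a Bell setup around $\OO$ compatible with this box setup.

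\emph{Placing the Bell regions.} The unbounded regions $\J^X\subset J^+(\S^X)$ and $\K^X\subset J^-(\T^X)$ are used as ``escape routes'' to infinity. Since $\OO$ is compact and $\J^X$ is unbounded and connected, we can pick a point deep in $\J^X\cap\OO^+$, and a small region $\RR^X_+$ around it with $\overline{\RR^X_+}\subset\J^X\cap\OO^+$. Similarly, $\U^X$ is placed in $\K^X\cap\OO^-$. We then take $\RR^X_-\subset\OO^-$ to be a thin region near a Cauchy surface in the past of $\OO$, containing $\U^X$ and such that $\overline{\RR^X_+}\subset D^+(\RR^X_-)$. Concretely, $\RR^X_-$ is a neighbourhood of $J^-(\overline{\RR^X_+})\cap\Sigma$ for a Cauchy surface $\Sigma$ lying to the past of $\OO$, enlarged to contain $\U^X$.

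The causal conditions \eqref{conds_Bell_QFT} should follow from the disjointness relations \eqref{conds_unbounded_O}. Here $\overline{\RR^A_+}\perp\overline{\RR^B_+}$ comes from $\overline{\J^A}\perp\overline{\J^B}$. The relations $\overline{\U^A}\perp\overline{\RR^B_-}$ and $\overline{\U^B}\perp\overline{\RR^A_-}$ come from $\overline{\K^A}\perp\overline{\J^B}$ together with choosing $\RR^B_-$ inside a small neighbourhood of $J^-(\overline{\RR^B_+})\cup\overline{\U^B}$. Some care is needed here: one may have to shrink $\RR^X_-$ and pick $\U^X$ close enough to $\Sigma$. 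This is a purely causal-geometric step.

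\emph{Symplectic compatibility.} Next we need $\hat f_X\in C_0^\infty(\RR^X_+)$ and $\hat g_X\in C_0^\infty(\U^X)$ satisfying \eqref{sympl_cond}. The cross-party conditions $\EE{s_A}{t_B}=0$ are automatic by microcausality, since every support on side $A$ is causally disjoint from every support on side $B$. This uses the spacelike box setup, $\J^A\perp\K^B$ and the rest of \eqref{conds_unbounded_O}. The nontrivial part is the single-party conditions. We need $\hat f_X$ with $\EE{\hat f_X}{f_X}=1$ and $\EE{\hat f_X}{g_X}=\EE{\hat f_X}{\hat g_X}=0$, and $\hat g_X$ with $\EE{g_X}{\hat g_X}=1$.

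We construct $\hat g_X$ first. Since $\U^X\subset\K^X\subset J^-(\T^X)$ is an open set meeting $J(\T^X)$, genericity gives $Eg_X|_{\U^X}\neq0$. Hence some $\hat g_X\in C_0^\infty(\U^X)$ has $\int\hat g_X Eg_X\neq0$, and we normalise it. For $\hat f_X$, the set $\RR^X_+$ is open and meets $J(\S^X)$ and $J(\T^X)$, and it also meets $J(\supp\hat g_X)$ provided $\hat g_X$ is chosen so that this holds. We then apply genericity to the triple $(f_X,g_X,\hat g_X)$ with respect to $(\S^X,\T^X,\supp\hat g_X)$. This makes $Ef_X,Eg_X,E\hat g_X$ linearly independent on $\RR^X_+$. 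By duality, we can then find $\hat f_X\in C_0^\infty(\RR^X_+)$ pairing to $(1,0,0)$ with them.

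\emph{Main obstacle.} The bottleneck is exactly this last step. Genericity is a property of the test functions, but $\hat g_X$ is chosen after $\OO$, so it depends on $\OO$. We cannot assume the triple is generic in advance. The first fix I would try is to perturb $\hat g_X$ inside $C_0^\infty(\U^X)$: generic tuples are dense (Remark \ref{remark:density}), and the normalisation $\EE{g_X}{\hat g_X}=1$ is an open condition that can be restored by rescaling. A second difficulty is that $f_X,g_X$ must be fixed once and for all, independently of $\OO$, so the genericity must be invoked for compacts $(\S^X,\T^X)$ only. If the triple argument fails, the fallback is to avoid needing $E\hat g_X$ to be independent. Instead one can arrange $\U^X\perp\RR^X_+$ (possible by sending $\RR^X_+$ far along $\J^X$), so that $\EE{\hat f_X}{\hat g_X}=0$ holds automatically. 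In that case only the linear independence of $Ef_X,Eg_X$ on $\RR^X_+$ is needed, and this follows directly from genericity of $(f_X,g_X)$ with respect to $(\S^X,\T^X)$.
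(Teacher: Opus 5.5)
Your architecture matches the paper's. Bell regions are taken with $\RR^X_+\subset\J^X\cap\OO^+$, $\U^X\subset\K^X\cap\OO^-$, and $\RR^X_-$ a thin neighbourhood of $\Sigma\cap J^-(\overline{\RR^X_+})$. Then $\hat g_X$ is built first and $\hat f_X$ by duality on $\RR^X_+$. However, the step you call the main obstacle is left open, and the reasoning around it has two errors.

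\emph{First error: a missing inclusion.} You do not use $\J^X\subset(\K^X)^\vee$. For $k\in\K^X$ and $x\in\J^X$, pick $t\in\T^X$ with $k\in J^-(t)$ and $s\in\S^X$ with $x\in J^+(s)$. Since $\S^X\subset(\T^X)^\vee$, we have $t\in J^-(s)$, hence $x\in J^+(k)$. This has three consequences. Your fallback of making $\U^X\perp\RR^X_+$ by going far along $\J^X$ is impossible: $\RR^X_+\subset J^+(\U^X)$ always. Your proviso about $\RR^X_+$ meeting $J(\supp\hat g_X)$ is automatic. And $\Sigma\cap\overline{\K^X}\subset\Sigma\cap J^-(\overline{\RR^X_+})$, so $\U^X$ can be a small neighbourhood of a point of $\Sigma\cap\K^X$ and no enlargement of $\RR^X_-$ is needed. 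You do need $\Sigma\subset\OO^-$ to meet $\K^A$ and $\K^B$; this is where unboundedness and connectedness of $\K^X$ enter.

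\emph{Second error: where genericity may be invoked.} It is not true that genericity ``must be invoked for $(\S^X,\T^X)$ only.'' Genericity is a property of the theory: it supplies generic test functions for any tuple of regular compacts, including $\OO$-dependent ones inside $\U^X$. Only $f_X,g_X$ must be fixed in advance. This is exactly how the paper closes the step. It takes $\hat g^1_X\in C_0^\infty(\U^X)$ with $\EE{\hat g^1_X}{g_X}\neq0$, and $\hat g^2_X,\hat g^3_X$ built from generic functions in $\U^X$ so that the three restrictions $E\hat g^i_X|_{\RR^X_+}$ are linearly independent. It then picks a normalised linear combination whose restriction to $\RR^X_+$ avoids $\mathrm{span}(Ef_X|_{\RR^X_+},Eg_X|_{\RR^X_+})$.

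\emph{Repairing your fix.} Your perturbation idea can be made to work, but not via Remark~\ref{remark:density}. That remark perturbs whole tuples and would move $f_X,g_X$. Argue directly instead. Fix a regular compact $K\subset\U^X$. Genericity of the QFT for $(K,K,K)$ shows that $h\mapsto Eh|_{\RR^X_+}$ has rank at least $3$ on $C_0^\infty(K)$. Hence $\{h: Eh|_{\RR^X_+}\in\mathrm{span}(Ef_X|_{\RR^X_+},Eg_X|_{\RR^X_+})\}$ is a proper subspace. So is $\{h:\EE{g_X}{h}=0\}$, because $Eg_X|_{\Int{K}}\neq0$ by genericity of $g_X$. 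A real vector space is not a union of two proper subspaces, so pick $h$ outside both and rescale to obtain $\hat g_X$. Then $Ef_X,Eg_X,E\hat g_X$ are independent on $\RR^X_+$, and your duality step for $\hat f_X$ goes through.
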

The proposition is proven in Appendix \ref{app:massive}. An example in Minkowski spacetime satisfying the geometric constraints in (\ref{conds_unbounded_O}) is depicted in Figure \ref{fig:tsirel_viol}.
\begin{figure}[H]
\centering
\includegraphics[width=0.6\textwidth]{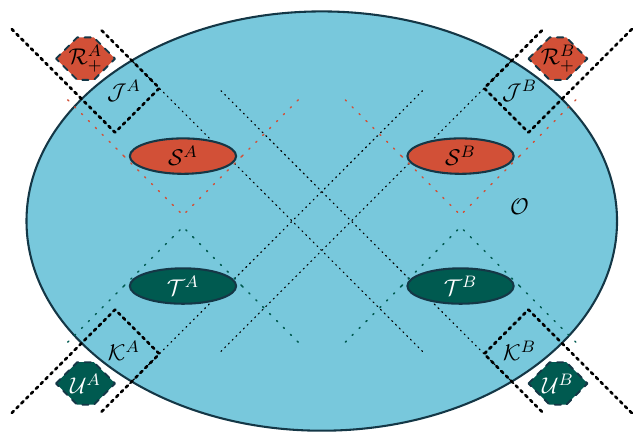}
\caption{Realization of conditions (\ref{conds_unbounded_O}) in Minkowski spacetime.}
\label{fig:tsirel_viol}
\end{figure}

\begin{remark}
\label{remark:anticip_friendly}
The constructions presented in the proofs of Propositions \ref{prop:Bell_massless}, \ref{prop:Bell_massive} allow one to set the compacts $\S^A,\T^A$ (and also $\S^B,\T^B$) arbitrarily far apart in time from each other and also arbitrarily far apart in space from $\S^B,\T^B$. For a fixed interaction region $\OO$, this allows integrating spacelike box and Bell setups satisfying the extra condition $\overline{\RR_{-}^A}\perp \overline{\RR_{-}^B}$. This last condition and the fact that, for $X=A,B$, $\S^X$ can be set apart in time from $\T^X$, are required in \Cref{sec:undecidability} to explore the undecidability of a special class of box channels, called FV-friendly, which, by the argument above, can also be realized in certain scalar QFTs.
\end{remark}

Given a generic free scalar QFT (under mild spacetime conditions) or the Klein-Gordon field in Minkowski spacetime with 1+1 dimensions, it is therefore possible to define causal Weyl channels that do not admit an FV realization, no matter how large we take their interaction region to be. Is it possible to quantify how non-FV-realizable any such channel $\Omega$ is? In the next section, we introduce a quantitative measure of approximate FV-realizability for QFT channels. For now, we conclude by deriving a simple linear constraint for FV-realizable channels that certain box channels violate. 

Consider a box channel $\Omega_P$ with $P\in C_{\mathrm{ns}},P\not\in C_{\mathrm{qc}}$. In the proof of Proposition \ref{prop:recovery_P}, it is explained how to build a family of states $\omega_\lambda$, Weyl POVMs $M^A, M^B$ and displacement channels $\{D^{A,\alpha}\}_{\alpha}$, $\{D^{B,\beta}\}_{\beta}$ such that
\begin{equation}
P(a,b|\alpha,\beta)= \lim_{\lambda\to 0}\omega_\lambda\circ D^{A,\alpha}\circ D^{B,\beta}\circ \Omega_P(M^A_aM^B_b);
\label{limit_distr_2}
\end{equation}
see eqs. (\ref{POVMs_Bell}) and (\ref{disps_Bell}) for the explicit expressions of POVMs and displacement channels in terms of the test functions $f_X,\hat{f}_X, g_X,\hat{g}_X$, $X=A,B$.

Since $C_{\mathrm{qc}}$ is convex and closed, by the Hahn-Banach theorem there exists a linear functional $\eta$ and $R\in \R$ such that \footnote{E.g.: by the Tsirelson bound (\ref{tsirelson_bound}), one can take $\eta(a,b,\alpha,\beta):=\frac{1}{4}\delta_{a\oplus b,\alpha\beta}$ and $R:=\frac{2+\sqrt{2}}{2}$.}
\begin{align}
&\eta(\tilde{P}):=\sum_{a,b,\alpha,\beta}\eta(a,b,\alpha,\beta)\tilde{P}(a,b|\alpha,\beta)\leq R,\forall \tilde{P}\in C_{\mathrm{qc}},\nonumber\\
&\eta(P)>R.
\label{witness_NL}
\end{align}
From eq. (\ref{limit_distr_2}) and the second line of (\ref{witness_NL}), it follows that
\begin{equation}
\lim_{\lambda\to 0}\omega_\lambda\circ (B(\Gamma,\Omega_P))>R,
\label{violation_witness}
\end{equation}
with
\begin{equation}
B(\Omega):=\sum_{a,b,\alpha,\beta}\eta(a,b,\alpha,\beta)D^{A,\alpha}\circ D^{B,\beta}\circ \Omega(M^A_aM^B_b).
\end{equation}
However, by Proposition \ref{prop:FV_box} and the first relation of (\ref{witness_NL}), for $\Omega$ FV realizable we have that
\begin{equation}
\omega\left(B(\Omega)\right)\leq R.
\end{equation}
Since this relation holds for all states $\omega$, it follows that
\begin{equation}
B(\Omega)\leq R,
\end{equation}
for all FV-realizable channels $\Omega$. This condition is closed, and, since the arguments of $\Omega$ in $B(\Omega)$ are elements of $\pi(\hat{\W})$, it also applies to asymptotically FV-realizable channels. By eq. (\ref{violation_witness}) this operator inequality is violated, though, by channel $\Omega_P$: the top of the spectrum of $B(\Omega_P)$ thus quantifies how much non-FV-realizable $\Omega_P$ is.

\subsection{Undecidability of the implementation of Weyl channels through FV schemes}
\label{sec:undecidability}
We just saw that there exist causal Weyl channels that do not admit an asymptotic FV representation. A natural, follow-up problem is thus how to characterize the set of FV-realizable causal Weyl channels. In this regard, we next present a strong no-go result.

First, in \Cref{sec:implement}, we propose a natural way to generate QFT channels by integrating FV schemes within a communication network that includes ancillary QFTs. We then introduce a notion of (network) implementability and approximate (network) implementability of a QFT channel. Importantly, we make the assumption that all QFTs involved in the realization of the channel (namely, the target, probe and ancillary QFTs) satisfy the split property. 

Then, in the next subsections, we show that determining whether a Weyl channel is implementable or not approximately implementable is an undecidable problem.

Here is a proof roadmap: in \Cref{sec:Bell_implement}, we prove that a certain class of Bell experiments involving an implementable channel can only generate correlations in $C_{\mathrm{qa}}$. In \Cref{sec:FV_friendly_boxes}, we define a subclass of box channels, called FV-friendly, and use results from the previous section to relate the approximate implementability of an FV-friendly box channel with distribution $P$ with the distance between $P$ and the set of distributions $C_{\mathrm{qa}}$. Finally, in \Cref{sec:charac_impl} we exploit this relation to prove that a hypothetical algorithm to distinguish implementable from far-to-implementable FV-friendly box channels would allow solving \Cref{problem:MIP} (see Section \ref{sec:quant_nl}), regarding the characterization of the set $C_{\mathrm{qa}}$ of quantum correlations. Since \Cref{problem:MIP} is undecidable, this means that said algorithm cannot exist.

\subsubsection{Implementability of quantum channels through compositions of FV schemes}
\label{sec:implement}
Taking FV schemes as basic operations, one can, as we did in the case of causal instruments, study what kind of effective channels can be generated through the integration of several FV schemes within a communication network. A controversial point is under which circumstances one can feed the outcome of an FV instrument to another. In the next lines, we outline a prescription that we believe makes physical sense.

In an FV implementation of an instrument, the target is made to interact with a probe system at $\K_1$, and then the probe is measured at a processing zone $\K_2\subset \K_1^+$. Consequently, we will assume in the following that \emph{any observer overlapping the total future of $\K_2$ can make use of that outcome to switch on a further FV scheme}. This is the only real difference with respect to the formalism for communication networks introduced in \Cref{sec:networks}: namely, that the interaction zone $\K_1$ only very indirectly determines the processing zone $\K_2\subset \K_1^+$ where the measurement outcome is generated. To each network node we will thus associate a family of FV instruments $\{\Omega^\alpha\}_{\alpha=1}^A$ with the same interaction and processing zones $\K_1$, $\K_2$. The realization of such a node will in general involve $A$ probes: one for each instrument.

Besides the probes, we will allow FV operations over multiple \emph{ancillary QFTs}, which will play the role of a QFT quantum memory. The effective channel induced through the FV scheme will be the result of tracing out all such ancillas after the last step of the communication protocol.

Finally, we will restrict to FV schemes where target, probe and ancillary QFTs satisfy the split property, introduced in \Cref{subsec:aqft}. This assumption is required in \Cref{sec:FV_friendly_boxes} to establish an equivalence between the implementability of a box channel with distribution $P$ and the property $P\in C_{\mathrm{qa}}$. 

\begin{remark}
If we accept the split property as a fundamental axiom and FV schemes are meant to represent the set of all local QFT operations available, then any implementable QFT channel should also be FV-realizable. Whether the set of FV-realizable instruments is closed under network composition, as described above, is, however, unknown.
\end{remark}

With this model of composition in mind, given some scalar QFT with a representation $\pi$ of the $\star$-algebra $\hat{W}$, we next introduce the notions of (asymptotic) implementability and approximate implementability of a QFT channel.
\begin{defin}
Let $\Omega:B(\H)\to B(\H)$ be a channel, and let $(\Omega^j)_j$ be a sequence of channels, each of which admits an FV realization, with total interaction zone $\OO$, through the composition of several FV schemes involving probes and ancillary QFTs where the split property holds. If, for all states $\omega$ and all $O\in \pi(\hat{\W})$, it holds that
\begin{equation}
\lim_{j\to\infty}\omega\circ\Omega^j(O)=\omega\circ\Omega(O),
\end{equation}
then we say that $\Omega$ is implementable in the interaction zone $\OO$.
\end{defin}

\begin{defin}
We say that channel $\Omega:B(\H)\to B(\H)$ is $\epsilon$-implementable in interaction zone $\OO$ if there exists an implementable channel $\tilde{\Omega}$, with interaction zone $\OO$, such that, for all states $\omega$ and all $O\in \pi(\hat{\W})$, with $\|O\|\leq 1$, it holds that
\begin{equation}
|\omega\circ\Omega(O)-\omega\circ\tilde{\Omega}(O)|\leq \epsilon. 
\label{approx_O}\end{equation}
\end{defin}

\subsubsection{Bell tests in implementable channels}
\label{sec:Bell_implement}
In \Cref{app:FV_wirings}, we prove a variant of Proposition \ref{prop:FV_box} for \emph{spacelike} Bell setups (complying with (\ref{cond_split})) and \emph{implementable} QFT channels.
\begin{prop}
\label{prop:FV_wirings_box}
Let $\OO\subset \M$ be compact, and let regions $\RR_\pm^X, \U^X$, with $X=A,B$, form a spacelike Bell setup around $\OO$. Let $\Omega$ be implementable in the interaction zone $\OO$, let $\omega$ be a QFT state, let $(M^A_a)_{a=0}^{M-1}$, $(M^B_b)_{b=0}^{M-1}$ be Weyl POVMs respectively localizable in $\RR^A_+$, $\RR^B_+$, and let $(\Lambda^{A,\alpha})_{\alpha=0}^{N-1}$, $(\Lambda^{B,\beta})_{\beta=0}^{N-1}$ be normal channels respectively localizable in $\U^A, \U^B$. Then, the distribution $P(a,b|\alpha,\beta)$ defined by eq. (\ref{Bell_corrs}) satisfies $P\in C_{\mathrm{qa}}$.
\end{prop}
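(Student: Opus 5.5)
The plan is to follow the proof of Proposition \ref{prop:FV_box}, replacing the single FV scheme by a network of FV schemes, and to use the split property to upgrade the commuting-operator model to a tensor-product model. First consider a single channel $\Omega^j$ of the approximating sequence. Its realization is a composition of FV schemes involving the target, finitely many probes and ancillary QFTs, and classical feed-forward. We can purify this realization as follows: every probe measurement whose outcome is fed forward is replaced by a coherent control of the subsequent couplings, and all outcomes are traced out at the end. The result is that $\Omega^j(X)=\eta_\varphi\circ\Xi(X\otimes\id)$, where $\Xi$ is a unital completely positive map on $\A_S\otimes\A_{\rm anc}$ and $\varphi$ is the joint initial state of the probes and ancillas. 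The map $\Xi$ inherits two properties from the individual scattering morphisms: locality to $\OO$, and the PSNI property with respect to $\OO$. Both hold because every interaction and processing zone lies in $\OO$. Conditions $\overline{\RR^X_+}\subset D^+(\RR^X_-)$ then show that $\tilde M^X_x:=\Xi(M^X_x\otimes\id)$ is localizable in $\RR^X_-$, now in the joint (target plus auxiliary) net.

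Second, we rewrite $P$ in a Bell form. As in eq. (\ref{bellboxcorrelations}), extend $\Lambda^{A,\alpha},\Lambda^{B,\beta}$ trivially to the auxiliary systems. The Bell conditions $\overline{\U^A}\perp\overline{\RR^B_-}$, $\overline{\U^B}\perp\overline{\RR^A_-}$ and Lemma \ref{lemma:weyllocality} imply that the Kraus operators of Alice's operations commute with Bob's effects and vice versa. This gives
\[
P_j(a,b|\alpha,\beta)=\omega\otimes\varphi\big(\tilde M^A_{a|\alpha}\tilde M^B_{b|\beta}\big),
\]
with $\tilde M^A_{a|\alpha}$ localizable in $\RR^A_-$ and $\tilde M^B_{b|\beta}$ in $\RR^B_-$. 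One subtlety arises here: $\Xi$ is merely completely positive, not an isomorphism. Hence $\tilde M^A_a$ and $\tilde M^B_b$ might fail to commute a priori, but commutation follows from microcausality in the joint net, since $\overline{\RR^A_-}\perp\overline{\RR^B_-}$ for a spacelike setup.

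Third, we use the split property. Because $\overline{\RR^A_-}\perp\overline{\RR^B_-}$, we can choose a region $\S\supset\overline{\RR^A_-}$ with $\S\perp\RR^B_-$. The split property, applied to the target together with all probes and ancillas (the tensor product of split nets is split), gives a type-I factor ${\cal C}$ with $\A(\RR^A_-)\subset{\cal C}\subset\A(\S)$. The unitary $U$ of Section \ref{subsec:aqft} then maps $\tilde M^A_{a|\alpha}$ into $B(\H_1)\otimes\id$ and $\tilde M^B_{b|\beta}$ into $\id\otimes B(\H_2)$. Any normal state is a limit of vector states, and $C_{\mathrm{qa}}$ is the closure of $C_{\mathrm{qs}}$; together these put $P_j$ in $C_{\mathrm{qa}}$. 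We then let $j\to\infty$. The Weyl POVM elements lie in $\pi(\hat{\W})$ (indeed $M^A_aM^B_b\in\pi(\hat\W)$), so the definition of implementability gives $P_j\to P$, and closedness of $C_{\mathrm{qa}}$ yields $P\in C_{\mathrm{qa}}$.

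The main obstacle is the first step, namely showing that a network of FV schemes with feed-forward (outcomes used only in total futures of processing zones) yields an effective map with the PSNI-type localization property on the enlarged algebra. We would attempt this by induction on network nodes, in the spirit of Theorem \ref{theorem:causal_network}. At each node the conditioned scattering morphisms are PSNI with respect to their coupling zone. The crucial point is that the classical control of a node lies in the total future of earlier processing zones. Consequently, for any region $\RR_+\subset\OO^+$ reached by a later node's influence, the relevant earlier outcomes are also available in $\RR_-$, so the localization of $\tilde M^X_x$ in $\RR^X_-$ survives the composition.
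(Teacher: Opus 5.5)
There is a genuine gap in your second step. Writing $P_j(a,b|\alpha,\beta)=\omega\otimes\varphi\big(\tilde M^A_{a|\alpha}\tilde M^B_{b|\beta}\big)$ requires
\[
\Xi(M^A_aM^B_b\otimes\id)=\Xi(M^A_a\otimes\id)\,\Xi(M^B_b\otimes\id).
\]
In Proposition \ref{prop:FV_box} this was automatic, because $\Theta$ is a $\star$-isomorphism. You, however, explicitly take $\Xi$ to be merely unital completely positive, and such maps are not multiplicative on products. The subtlety you do address, whether $\tilde M^A_a$ and $\tilde M^B_b$ commute, is not the real issue.

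In fact, no argument that uses only the properties you attribute to $\Xi$ (unital CP, local to $\OO$, PSNI with respect to $\OO$ in the joint net) can succeed. A box channel $\Omega_P$ with $P\in C_{\mathrm{ns}}\setminus C_{\mathrm{qc}}$, extended trivially to the auxiliary systems, has all of these properties (Proposition \ref{prop:causal_box}). Yet it reproduces $P$ in a compatible Bell setup (Proposition \ref{prop:recovery_P}). So your argument proves too much. The PSNI-type induction you single out as the main obstacle, modelled on Theorem \ref{theorem:causal_network}, is therefore not the crux.

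What is missing is a mechanism that preserves product structure through the network, and this is what the paper's proof supplies. Instead of collapsing the network into one map, it propagates $O_n=O^AO^B$ backwards node by node, using the regions $\RR^{X,j}_\pm$ of Lemma \ref{lemma:causal_ordering}. It shows (Lemma \ref{lemma:inductio}) that $O_j=O^{A}_jO^{B}_j$, with each factor localized in $\RR^{X,j}_+$ and depending only on the outcomes $c(\mathbb{A}_j)$, resp.\ $c(\mathbb{B}_j)$.

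\textbf{Scattering node:} one uses $\Theta(O^A_jO^B_j)=\Theta(O^A_j)\Theta(O^B_j)$, together with PSNI applied to each factor.

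\textbf{Probe-measurement node:} the probe POVM commutes with both factors. Lemma \ref{lemma:Aes}, which relies on the spacelike condition $\overline{\RR^A_-}\perp\overline{\RR^B_-}$, guarantees that the node's outcome is relevant to at most one side. The sum over that outcome can therefore be absorbed into a single factor.

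So the spacelike condition is used essentially here, and not only for commutation and the split property as in your proposal. If you instead kept your purification fully coherent, so that $\Xi$ were a $\star$-homomorphism, the product step would be fine. You would then have to show that coherently controlled couplings are local scattering morphisms with PSNI in the joint net, which the FV framework does not provide. Your closing steps (split property, passage to $C_{\mathrm{qs}}$, closedness of $C_{\mathrm{qa}}$ under the limit $j\to\infty$) are fine once the product form is established.
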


\subsubsection{FV-friendly box channels}
\label{sec:FV_friendly_boxes}
We next define a subclass of box channels whose implementability depends on whether their generating distribution $P$ belongs to $C_{\mathrm{qa}}$ or not. First, we must make sure that implementability is even conceivable: this forces us to demand some small margins around $\S^X,\T^X$ within $\OO$ to accommodate for extra spacetime zones needed to define the implementation. In particular, for $X=A,B$, we require to find compacts $\Q^X,\tilde{\Q}^X,\tilde{\T}^X\subset \OO$ satisfying:
\begin{align}
&\tilde{\T}^X=\overline{\tilde{\T}_0^X},\tilde{\Q}^X=\overline{\tilde{\Q}_0^X},\mbox{ for some regions } \tilde{\T}_0^X, \tilde{\Q}_0^X,\nonumber\\
&\T^X\subset D^+(\tilde{\T}_0^X) \cap (\tilde{\T}_0^X)^+,\nonumber\\
&\tilde{\Q}^X\subset (\T^X)^\vee,\nonumber\\
&\Q^X\subset D^+(\tilde{\Q}^X_0) \cap (\tilde{\Q}_0^X)^+,\nonumber\\
&\S^X\subset (\Q^X)^\vee.\nonumber\\
\label{spacetime_conds_undecid}
\end{align}
These relations are depicted in \Cref{fig:FV_undecid}. 
\begin{figure}[H]
\centering
\includegraphics[width=0.8\textwidth]{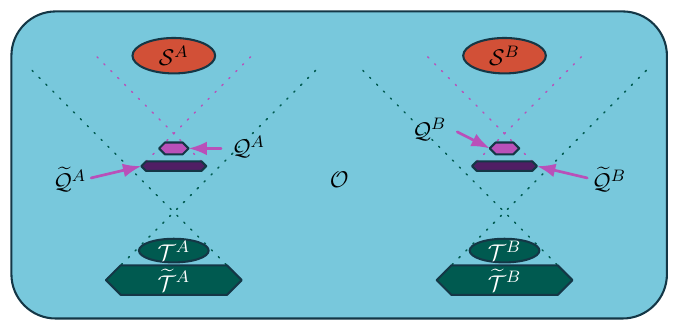}
\caption{Spacetime diagram for regions $\Q^X,\tilde{\Q}^X,\tilde{\T}^X$.}
\label{fig:FV_undecid}
\end{figure}

Now we can introduce our class of difficult-to-classify Weyl channels.
\begin{defin}
A box channel with distribution $P$ satisfying conditions (\ref{spacetime_conds_undecid}) for some compact $\Q^X,\tilde{\Q}^X,\tilde{\T}^X\subset \OO$ is called an FV-friendly box channel with distribution $P$.
\end{defin}
By Remark \ref{remark:anticip_friendly}, FV-friendly box channels exist for generic QFTs and the massless Klein Gordon field in Minkowski spacetime with $1+1$ dimensions. Being box channels, FV-friendly box channels are causal for all distributions $P\in C_{\mathrm{ns}}$. We next explore for which distributions $P$ such a channel is implementable.
\begin{prop}
\label{prop:FV_implement}
Let $\Omega$ be an FV-friendly box channel  in a Klein-Gordon field with box setup $\mathbb{S}$ and distribution $P(a,b|\alpha,\beta)\in C_{\mathrm{qa}}$. Let the compact $\OO\subset\M$ contain all the compact zones specifying $\mathbb{S}$. Then, $\Omega$ is implementable in the interaction zone $\OO$.
\end{prop}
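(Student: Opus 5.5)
Since $P\in C_{\mathrm{qa}}$, I would first reduce to $P\in C_{\mathrm{qs}}$ with a finite-dimensional tensor-product representation. Implementability is defined through limits of channels, and finite-dimensional quantum behaviours are dense in $C_{\mathrm{qa}}$. The box channel $\Omega_P$ in \eqref{counterexample_channel} depends linearly on $P$, and its action on any fixed $O\in\pi(\hat{\W})$ converges when $P$ does. So it suffices to take $P(a,b|\alpha,\beta)=\bra{\psi}E^A_{a|\alpha}\otimes E^B_{b|\beta}\ket{\psi}$ with $\ket{\psi}\in\C^d\otimes\C^d$, and then take a diagonal limit over approximating sequences.

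The implementing network mimics the quantum strategy, with Alice's part in the $A$-zones and Bob's in the $B$-zones.
\begin{enumerate}
\item[(i)] In $\tilde{\T}^X$, an FV scheme performs the diagonal instrument $\Omega^{g_X,N}$ on the target. This is asymptotically FV-realizable by \Cref{subsec:withinfv}, and $\T^X\subset D^+(\tilde{\T}^X_0)\cap(\tilde{\T}^X_0)^+$ is exactly the geometric requirement on the coupling zone stated there. Its outcome $\alpha$ becomes available in $(\T^X)^\vee\supset\tilde{\Q}^X$.
\item[(ii)] Beforehand, two ancillary Klein--Gordon QFTs $\A'_A,\A'_B$ are prepared in a state that encodes $\ket{\psi}$. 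Concretely, I would pick $d$-level encodings in finitely many modes localized near $\Q^A$ and $\Q^B$. Because $\Q^A\perp\Q^B$ by the box setup, the split property lets the joint state of these spacelike-separated mode sets be (approximately) the entangled $\ket{\psi}$, obtained as a vector/normal state of the ancillary theories.
\item[(iii)] Conditioned on $\alpha$, an FV scheme with coupling zone $\tilde{\Q}^X$ and processing zone in $\Q^X$ measures on the ancilla the POVM $(E^A_{a|\alpha})_a$. This is realizable by \cite{fewster2025measurementpreparationprotocolsquantum} as a demolition measurement on the ancilla, which is harmless because the ancilla is traced out. By Proposition \ref{prop:approx_POVM} it can be taken as a limit of Weyl POVMs in the ancillary modes.
\item[(iv)] The outcome $a$ is available in $(\Q^X)^\vee\supset\S^X$. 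There the displacement $D^{af_X}$ is applied to the target; this is realized exactly by the external-source scheme of \Cref{subsec:fv}.
\end{enumerate}

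Composing the nodes in causal order and tracing out the ancillas gives a channel on the target. On $O\in\pi(\hat{\W})$ it evaluates to
\[
\sum_{a,b,\alpha,\beta}\bra{\psi}E^A_{a|\alpha}\otimes E^B_{b|\beta}\ket{\psi}\,\Omega^A_\alpha\circ\Omega^B_\beta\circ D^{af_A}\circ D^{bf_B}(O).
\]
For this formula I need three facts. The ancillary operations commute with all target maps, since they act on different tensor factors. The $A$ and $B$ nodes commute, since the box setup gives $\T^A\perp\T^B$ and $\S^A\perp\S^B$ and the displacements commute. The ordering inside \eqref{counterexample_channel} matches the time order, as required in the Heisenberg picture. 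This sum is exactly $\Omega_P(O)$.

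The remaining work is taking limits. The diagonal instruments are only asymptotically realized, the ancillary POVMs and the entangled state only approximately, and $P$ itself is approximated. I would choose a single sequence of networks, indexed by $j$, in which all approximation parameters go to their limits together. Weak convergence on each $\omega$ and $O$ then follows from the convergence of each factor together with uniform boundedness (all maps are unital CP), which justifies interchanging the limits.

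I expect step (ii) to be the main obstacle: preparing an arbitrary finite-dimensional entangled state between ancillary fields in spacelike-separated zones, within the FV/split-property framework, and having it stay local to $\OO$. My first attempt would start from the vacuum and use the split property to identify the local ancilla algebras near $\Q^A$ and $\Q^B$ with type-I tensor factors. I would then select a normal state on the ancilla theory that restricts to $\ket{\psi}$ on chosen qudit subalgebras. The ancilla's initial state is a free choice in the network model, so this state can be taken as given; what must be checked is that the channel induced on the target is unaffected by where that state is localized.
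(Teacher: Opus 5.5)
Your proposal is correct and follows essentially the same route as the paper. Both reduce to $P\in C_{\mathrm{qs}}$, realize $\Omega^{g_X,N}$ by the diagonal-instrument FV scheme with coupling zone $\tilde{\T}^X$ and processing zone $\T^X$, feed $\alpha,\beta$ into demolition FV measurements of the strategy POVMs on an ancillary field (coupling in $\tilde{\Q}^X$, processing in $\Q^X$), feed $a,b$ into source-induced displacements in $\S^X$, and trace out the ancilla. The step you single out as the main obstacle needs no split-property argument in the paper: the von Neumann algebra generated by one ancilla mode supported in $\Q^A$ and an independent one in $\Q^B$ is already $B(\hat{H})\otimes B(\hat{H})$, so the ancilla's initial state is simply chosen to reproduce $P$ on these two factors. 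Your choice of two separate ancillary fields works equally well, and it makes the commutation of Alice's and Bob's ancilla operations automatic without any spacelike-separation assumption on $\Q^A,\Q^B$.
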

\begin{proof}
Since $\overline{C_{\mathrm{qs}}}=C_{\mathrm{qa}}$, it is enough to prove that we can realize FV-friendly box channels with distribution $P\in C_{\mathrm{qs}}$. To do so, we will need an ancillary probe $\Psi$, also a scalar field, initially in some state $\psi$. For $X=A,B$, let $q^1_X,q^2_X\in C_0^\infty(\Q^X)$ define the mode $(\Psi(q^1_X),\Psi(q^2_X))$ and choose $\psi$ such that 
\begin{equation}\label{state_povm_oherence}
\psi(M^A_{a|\alpha}M^B_{b|\alpha})=P(a,b|\alpha,\beta),
\end{equation}
for some POVMs $(M^A_{\bullet|\alpha})_x$, $(M^B_{\bullet|\alpha})_x$, such that 
\begin{align}
&M^A_{a|\alpha}\in \A(\Q^A), \forall a,\alpha\nonumber\\
&M^B_{b|\beta}\in \A(\Q^B), \forall b,\beta,
\label{POVM_Neumann}
\end{align}
with $\Q^X:=\mbox{span}\{\Psi(q^1_X),\Psi(q^2_X)\}$.

The existence of a state and POVMs obeying conditions \eqref{state_povm_oherence},\eqref{POVM_Neumann} stems from the fact that $\A(\Q^A+\Q^B)$ is isomorphic to $B(\hat{H})^{\otimes 2}$, for any separable, infinite dimensional Hilbert space $\hat{H}$. In this representation, $\A(\Q^A)$ would act trivially on the second factor; and $\A(\Q^B)$, on the first. There is enough room, therefore, to realize any distribution in $C_{\mathrm{qs}}$. Due to eq. (\ref{POVM_Neumann}), for any $\alpha$, POVM $(M^A_{a|\alpha})_a$ can be realized through an FV scheme with interaction and processing zones $\tilde{\Q}^A,\Q^A$, see \cite{fewster2025measurementpreparationprotocolsquantum}. Likewise for POVM $(M^B_{b|\beta})_b$.

We are ready to present an FV communication protocol that implements $\Omega$. 
\begin{enumerate}
    \item Measure $\Omega^A_\alpha$ and, independently, $\Omega^B_\beta$. Since both instruments are Weyl diagonal, by Section \ref{subsec:withinfv} this operation can be realized through FV schemes with respective interaction zones $\tilde{\T}^A$, $\tilde{\T}^B$. Zones $\T^A,\T^B$ are used to respectively process the outcomes $\alpha,\beta$.
    \item Feed outcome $\alpha$ to implement the POVM $(M^A_{a|\alpha})_a$ on the ancillary QFT. Similarly, feed outcome $\beta$ to implement the POVM $(M^B_{b|\beta})_a$. 
    \item The outcomes $a,b$ of the two POVMs are respectively fed to $D^{a f_A}$, $D^{b f_B}$. This is feasible because $\S^A$, $\S^B$ are respectively contained in the total future of $\Q^A,\Q^B$ and because displacements within $\S^X$ are FV-realizable through an FV scheme with interaction zone $\S^X$.
    \item Trace out the ancillary QFT.
\end{enumerate}
It is immediate to verify that this protocol is sound.

\end{proof}
Building on Proposition \ref{prop:FV_implement}, one can connect the $\epsilon$-implementability of an FV-friendly box channel with how far away from $C_{\mathrm{qa}}$ its distribution $P$ is.
\begin{prop}
\label{prop:epsilon_imp2quan}
Let $\Omega$ be an FV-friendly box channel in a Klein-Gordon field, with interaction region $\OO$ compatible with a spacelike Bell setup around it, and distribution $P\in C_{\mathrm{ns}}$ for a Bell scenario with $M$ outcomes. Let 
\begin{equation}
\mu:=\min_{\tilde{P}\in C_{\mathrm{qa}}}\|P-\tilde{P}\|.
\label{min_dist_as}
\end{equation}
Then, $\Omega$ is $\mu$-implementable. Moreover, if $\Omega$ is $\epsilon$-implementable, then $\mu\leq \epsilon M^2$.
\end{prop}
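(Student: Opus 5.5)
Both directions go through the affine, norm-controlled dependence of a box channel on its distribution $P$. For the first claim, let $\tilde P\in C_{\mathrm{qa}}$ attain the minimum in (\ref{min_dist_as}); it exists because $C_{\mathrm{qa}}$ is compact. Let $\tilde\Omega$ be the FV-friendly box channel with the same box setup and distribution $\tilde P$. By Proposition \ref{prop:FV_implement}, $\tilde\Omega$ is implementable in $\OO$. From (\ref{counterexample_channel}),
\begin{equation}
\Omega-\tilde\Omega=\sum_{a,b,\alpha,\beta}\big(P-\tilde P\big)(a,b|\alpha,\beta)\,\Omega^A_\alpha\circ\Omega^B_\beta\circ D^{af_A}\circ D^{bf_B}.
\end{equation}
For fixed $(a,b)$, the maps $D^{af_A}\circ D^{bf_B}$ are unital $\star$-automorphisms. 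The family $(\Omega^A_\alpha\circ\Omega^B_\beta)_{\alpha,\beta}$ is an instrument, so for $\|O\|\le1$ and any state $\omega$ the functionals $\omega\circ\Omega^A_\alpha\circ\Omega^B_\beta$ are positive and their norms sum to $1$ over $(\alpha,\beta)$. Hence
\begin{equation}
|\omega\circ(\Omega-\tilde\Omega)(O)|\le\sum_{\alpha,\beta}\|\omega\circ\Omega^A_\alpha\circ\Omega^B_\beta\|\sum_{a,b}|(P-\tilde P)(a,b|\alpha,\beta)|\le\max_{\alpha,\beta}\sum_{a,b}|(P-\tilde P)(a,b|\alpha,\beta)|=\mu.
\end{equation}
So $\Omega$ is $\mu$-implementable.

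For the converse, suppose $\tilde\Omega$ is implementable in $\OO$ and $\epsilon$-close to $\Omega$ in the sense of (\ref{approx_O}). Take the compatible spacelike Bell setup around $\OO$ and the ingredients of Proposition \ref{prop:recovery_P}: states $\omega_\lambda$, Weyl POVMs $M^A,M^B$ localized in $\RR^X_+$, and displacement channels $D^{A,\alpha},D^{B,\beta}$ localized in $\U^X$. Define
\begin{equation}
\tilde P_\lambda(a,b|\alpha,\beta):=\omega_\lambda\circ D^{A,\alpha}\circ D^{B,\beta}\circ\tilde\Omega(M^A_aM^B_b).
\end{equation}
By Proposition \ref{prop:FV_wirings_box}, $\tilde P_\lambda\in C_{\mathrm{qa}}$ for every $\lambda$. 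Each state $\omega_\lambda\circ D^{A,\alpha}\circ D^{B,\beta}$ is a legitimate state, and $M^A_aM^B_b\in\pi(\hat{\W})$ is a product of commuting positive contractions, so $\|M^A_aM^B_b\|\le1$. Applying (\ref{approx_O}) entrywise and summing over the $M^2$ pairs $(a,b)$ gives $\|P_\lambda-\tilde P_\lambda\|\le\epsilon M^2$, where $P_\lambda$ is the distribution (\ref{approx_box}) built from $\Omega$. By Proposition \ref{prop:recovery_P}, $P_\lambda\to P$. Passing to a convergent subsequence of $\tilde P_\lambda$, possible since $C_{\mathrm{qa}}$ is compact, yields a limit $\tilde P\in C_{\mathrm{qa}}$ with $\|P-\tilde P\|\le\epsilon M^2$, hence $\mu\le\epsilon M^2$.

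The main obstacle is checking the hypotheses of Proposition \ref{prop:FV_wirings_box}. It requires normal channels localizable in $\U^X$, and displacements by $\hat g_X\in C_0^\infty(\U^X)$ qualify. It also requires the Bell setup to be spacelike, which is exactly the extra assumption in the statement. A second subtlety is that $\omega_\lambda$ must be a state to which the definition of $\epsilon$-implementability applies; a quasi-free state in the folium suffices, since the definition quantifies over all states. The first part needs no Bell setup at all, only Proposition \ref{prop:FV_implement} and the norm estimate above.
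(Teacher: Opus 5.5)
Your proposal is correct and follows essentially the same route as the paper. For the first claim you bound $\Omega-\tilde\Omega$, where $\tilde\Omega$ is the box channel with the nearest $\tilde P\in C_{\mathrm{qa}}$, using the instrument structure of $(\Omega^A_\alpha\circ\Omega^B_\beta)_{\alpha,\beta}$; for the converse you compare the Bell distributions of Proposition \ref{prop:recovery_P} for $\Omega$ and $\tilde\Omega$, invoke Proposition \ref{prop:FV_wirings_box}, and sum the entrywise $\epsilon$-bound over the $M^2$ outcome pairs. Your compactness argument for the limit $\lambda\to 0$ simply makes explicit a step the paper states without detail.
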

The reader can find a proof in \Cref{app:epsilon_imp}.

\subsubsection{The approximate characterization of implementable channels}
\label{sec:charac_impl}
Consider the following family of decision problems, labeled by a prerequisite: 
\begin{table}[H]
    \centering
    \begin{tabular}{|ll|}
        \hline
        PREREQUISITE: & a spacetime $\M$, a Klein-Gordon scalar QFT defined therein, a compact interaction zone $\OO\subset\M$,\\
        & a spacelike box setup $\mathbb{S}:=\{\S^X,\T^X, f^X, g^X\}_{X=A,B}$ with $\OO\supset \bigcup_{X=A,B}\S^X\cup\T^X$, \\
        & complying with the friendliness condition (\ref{spacetime_conds_undecid})\\
        & and a spacelike Bell setup $\{\RR^X_{\pm}, U^X\}$ around $\OO$ compatible with $\mathbb{S}$.\\
        \hline
        INPUT:  & A bipartite Bell scenario $(M,N)$, a correlation $P\in C_{ns}$ therein, with $P(a,b|\alpha,\beta)\in\mathbb{Q}$, for all $a,b,\alpha,\beta$.\\
        \hline
        OUTPUT: &``YES", if the box channel $\Omega_P$ with distribution $P$ and box setup $\mathbb{S}$ is implementable in $\OO$,\\
        &``NO", if $\Omega_P$ is not $\epsilon$-implementable in $\OO$, with $\epsilon:=\frac{1}{3M^2}$,\\
        &``YES" or ``NO", otherwise.\\
        \hline
    \end{tabular}
    \captionsetup{name=Problem}
    \caption{Implementability of FV-friendly box channels}
    \label{undecid_problem}
\end{table}

We are ready to state our no-go result regarding the characterization of FV-implementable channels.
\begin{theorem}
\label{theo:undecidable}
For any fixed prerequisite, \Cref{undecid_problem} is undecidable. That is, no Turing machine can solve all its instances.
\end{theorem}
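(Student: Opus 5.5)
We will argue by reduction from \Cref{problem:MIP}. Suppose, for contradiction, that for some fixed prerequisite there exists a Turing machine $T$ solving all instances of \Cref{undecid_problem}. We will use $T$ to build a Turing machine that decides \Cref{problem:MIP}, contradicting its undecidability \cite{mipre}.

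The first step is to convert a nonlocal game into a rational non-signalling correlation. Take an instance $(M,N)$, $G=(p,V)$ of \Cref{problem:MIP}, promised to satisfy either $C_{\mathrm{qa}}(G)=1$ or $C_{\mathrm{qa}}(G)<\frac12$. We need a rational $P\in C_{\mathrm{ns}}$ such that $P\in C_{\mathrm{qa}}$ in the YES case and $P$ lies at box-norm distance larger than $\frac13$ from $C_{\mathrm{qa}}$ in the NO case.

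The natural first attempt is to take $P$ as a rational non-signalling correlation winning $G$ with certainty. Such a $P$ can be found algorithmically by linear programming over $C_{\mathrm{ns}}$ with rational data; if none exists, the answer is NO. In the NO case, $|G(P)-G(\tilde P)|\le\|P-\tilde P\|$ gives $\|P-\tilde P\|\ge 1-C_{\mathrm{qa}}(G)>\frac12$ for every $\tilde P\in C_{\mathrm{qa}}$, hence $\mu>\frac12>\frac13$.

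The YES case is the main obstacle. $C_{\mathrm{qa}}(G)=1$ does not force an arbitrary perfect non-signalling strategy to be quantum, and the set of perfect quantum strategies is not computable. My first remedy is to avoid choosing a single $P$ and instead dovetail over all rational $P\in C_{\mathrm{ns}}$ with $G(P)\geq 1-\delta$, for a small fixed $\delta$, running $T$ on each. This reduces the problem to showing that in the YES case some such $P$ is exactly in $C_{\mathrm{qa}}$. That needs a rational point of $C_{\mathrm{qa}}$ near the optimum, which is delicate.

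The cleaner route is to exploit a stronger form of the MIP$^*$=RE result. The reduction there can be taken to produce games together with a canonical "honest" non-signalling distribution (for instance the synchronous/uniform-marginal perfect strategy) that is itself in $C_{\mathrm{qa}}$ whenever the quantum value is $1$. Failing that, I would instead compose the game with a fixed rational mixing: use $P=\frac12 P_G+\frac12 P_U$, with $P_U$ uniform, and rely on convexity of $C_{\mathrm{qa}}$ to absorb the ambiguity. I expect this YES-case step to carry most of the difficulty.

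Once a suitable $P$ is in hand, the two cases feed into the implementability results as follows:

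1. In the YES case, $P\in C_{\mathrm{qa}}$, so \Cref{prop:FV_implement} shows that the FV-friendly box channel $\Omega_P$ is implementable in $\OO$. It remains causal by \Cref{prop:causal_box}, so $T$ answers YES.
2. In the NO case, $\mu>\frac13$. If $\Omega_P$ were $\epsilon$-implementable with $\epsilon=\frac1{3M^2}$, then \Cref{prop:epsilon_imp2quan} would give $\mu\le\epsilon M^2=\frac13$, a contradiction. So $T$ answers NO.

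The map from the game to $P$ is computable, and the prerequisite is fixed: the box setup and the compatible spacelike Bell setup do not depend on $(M,N)$, since $N$ only enters through the instruments $\Omega^{g_X,N}$ and $M$ through the displacements $D^{af_X}$. Hence $T$ composed with this map decides \Cref{problem:MIP}, which is the desired contradiction.
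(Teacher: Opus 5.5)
There is a genuine gap, and it is the step you flag yourself. You never construct, computably from $G$, a correlation that is guaranteed to lie in $C_{\mathrm{qa}}$ when $C_{\mathrm{qa}}(G)=1$, and none of your three remedies achieves this.

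\begin{itemize}
\item \emph{Dovetailing} over all rational $P$ with $G(P)\geq 1-\delta$ only halts on YES instances. On a NO instance every call to $T$ returns NO and the search runs forever. What you obtain is therefore a semi-decision procedure for the YES side of \Cref{problem:MIP}. Such a procedure exists anyway, since that problem is RE-complete, so there is no contradiction. You also leave open why a rational point of $C_{\mathrm{qa}}$ with high game value exists at all.
\item The \emph{``canonical honest'' distribution} is not something \cite{mipre} provides. Nor is it clear that it would be rational or computable from $G$.
\item The mixture $P=\frac12 P_G+\frac12 P^I$ fails in both directions. Convexity of $C_{\mathrm{qa}}$ says nothing about mixing a possibly non-quantum $P_G$ with $P^I$. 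In the NO case you only get $\|P-\tilde P\|\geq G(P)-C_{\mathrm{qa}}(G)>\frac12 G(P^I)$, which can be far below the $\frac13$ margin needed to force a NO answer.
\end{itemize}

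The missing ingredient is Lemma~\ref{lemma:non_loc}, together with giving up on a many-one reduction to a single instance. The local polytope contains a ball of computable rational radius $R(M,N)$ around the white-noise box $P^I$, within the affine hull of $C_{\mathrm{ns}}$. Hence, for the \emph{unknown} optimal $P\in C_{\mathrm{qa}}$, every non-signalling box within $\mu R(M,N)$ of $(1-\mu)P+\mu P^I$ is exactly quantum.

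The paper exploits this as follows:
\begin{itemize}
\item Set $\mu=1/(12(2+R(M,N)))$ and call the hypothetical algorithm on every element of a finite, computable, rational $\mu R(M,N)$-net $\{P^j\}$ of $C_{\mathrm{ns}}$. Let $Q$ be the set of YES answers and put $\tilde C(G)=\max_{j\in Q}G(P^j)$.
\item Some net point falls in that quantum ball. By Proposition~\ref{prop:FV_implement} it must be answered YES, which gives $\tilde C(G)\geq C_{\mathrm{qa}}(G)-\frac{1}{12}$.
\item Conversely, every YES answer certifies $\epsilon$-implementability. By Proposition~\ref{prop:epsilon_imp2quan} the corresponding $P^j$ is then within $\frac13$ of $C_{\mathrm{qa}}$, which gives $\tilde C(G)\leq C_{\mathrm{qa}}(G)+\frac13$.
\item Thresholding $\tilde C(G)$ at $\frac56$ decides \Cref{problem:MIP} after finitely many calls.
\end{itemize}

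This Turing-style estimate never needs any single instance that is $\frac13$-far from $C_{\mathrm{qa}}$ in the NO case, nor an explicitly known quantum instance in the YES case. Your two implications (items 1 and 2) and your remark that the prerequisite does not depend on $(M,N)$ are correct and match how the paper uses these propositions. What must be replaced is your game-to-instance map, by this net-based estimate.
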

To prove the theorem, we will need the following result.
\begin{lemma}
\label{lemma:non_loc}
Given a Bell scenario $(M,N)$, define the maximally mixed box
\begin{equation}
P^I(a,b|\alpha,\beta):=\frac{1}{M^2},\forall a,b,\alpha,\beta  
\end{equation}
and let $P\in C_{\mathrm{qa}}$. Then, there exists a computable quantity $R(M,N)\in\mathbb{Q}^+$, such that, for any $\mu\in [0,1]$, the distribution $P^\mu:=(1-\mu)P+\mu P ^I$ satisfies
\begin{equation}
\forall \tilde{P}\in C_{\mathrm{ns}}:\|P^\mu-\tilde{P}\|\leq \mu R(M,N)\Rightarrow\tilde{P}\in C_{\mathrm{qa}}.
\label{perturb_local}
\end{equation}

\end{lemma}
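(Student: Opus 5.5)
The idea is to show that the maximally mixed box $P^I$ sits in the \emph{relative interior} of $C_{\mathrm{qa}}$, viewed inside the affine hull of $C_{\mathrm{ns}}$, with an explicitly computable inner radius. Convexity then transfers a scaled-down ball around $P^I$ to a ball around $P^\mu$.

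\textbf{Step 1: a ball of local boxes around $P^I$.} We will show that there is a computable $r=r(M,N)\in\mathbb{Q}^+$ such that every $\tilde{Q}\in C_{\mathrm{ns}}$ with $\|\tilde{Q}-P^I\|\leq r$ lies in the local polytope $C_{\mathrm{L}}\subset C_{\mathrm{qs}}\subset C_{\mathrm{qa}}$. We parametrize the affine hull of $C_{\mathrm{ns}}$ by the standard ``collins--gisin'' coordinates, namely the marginals $P(a|\alpha)$, $P(b|\beta)$ and the joints $P(a,b|\alpha,\beta)$ for $a,b\leq M-2$. The dimensions of $C_{\mathrm{L}}$ and of $C_{\mathrm{ns}}$ coincide; this is a known fact, proven by exhibiting enough affinely independent deterministic boxes.

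Concretely, we will write any direction $\Delta$ tangent to $C_{\mathrm{ns}}$, normalized so that $\|\Delta\|\leq1$, as a fixed rational linear combination $\Delta=\sum_\lambda c_\lambda(\Delta)(D_\lambda-P^I)$ over the finitely many deterministic boxes $D_\lambda$. We choose the coefficient map $\Delta\mapsto c(\Delta)$ to be a computable linear map with an explicit bound $\sum_\lambda|c_\lambda|\leq K(M,N)$, obtained from an explicit basis of product-type boxes. The relevant directions are those of the form $\delta_{a a_0}\delta_{b b_0}$-type correlators minus their marginal corrections.

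Since $P^I=\frac{1}{|\Lambda|}\sum_\lambda D_\lambda$ is the uniform mixture of the $|\Lambda|=M^{2N}$ deterministic boxes, we have
\[
P^I+t\Delta=\sum_\lambda\Big(\tfrac{1}{|\Lambda|}+t\,c_\lambda-\tfrac{t}{|\Lambda|}\textstyle\sum_\mu c_\mu\Big)D_\lambda .
\]
All weights stay nonnegative provided $t\leq r:=1/(2K|\Lambda|)$, and $r$ is rational and computable.

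\textbf{Step 2: convexity.} Set $R(M,N):=r$. Let $\tilde{P}\in C_{\mathrm{ns}}$ with $\|P^\mu-\tilde{P}\|\leq\mu r$, and assume $\mu>0$; the case $\mu=0$ is trivial, since then $\tilde{P}=P$. Define
\[
\tilde{Q}:=P^I+\frac{\tilde{P}-P^\mu}{\mu}.
\]
Then $\tilde{Q}$ lies in the affine hull of $C_{\mathrm{ns}}$ and satisfies $\|\tilde{Q}-P^I\|\leq r$, so Step 1 gives $\tilde{Q}\in C_{\mathrm{L}}$. Moreover
\[
\tilde{P}=(1-\mu)P+\mu\tilde{Q},
\]
which is a convex combination of $P\in C_{\mathrm{qa}}$ and $\tilde{Q}\in C_{\mathrm{qa}}$. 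Since $C_{\mathrm{qa}}$ is convex, $\tilde{P}\in C_{\mathrm{qa}}$.

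\textbf{Main obstacle.} The delicate point is Step 1: producing an explicit, computable bound $K(M,N)$ on the decomposition coefficients. The plan is to use the product structure. Every no-signalling direction decomposes into local marginal directions and correlator directions $u_\alpha\otimes v_\beta$, where $u_\alpha$ and $v_\beta$ are zero-sum vectors over outcomes for fixed inputs. Each such elementary direction is a difference of two product deterministic strategies, or of small mixtures of them, scaled by $1/M$-type factors. Summing over at most $(MN)^2$ elementary directions yields a crude but explicit bound, for instance $K\leq 4(MN)^2$. Any computable bound suffices here, since the lemma only asks for some computable $R$.
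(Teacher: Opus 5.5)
Your proof is correct and has the same overall structure as the paper's. Both arguments first exhibit a box-norm ball of local boxes around $P^I$ inside the no-signalling affine hull. Both then apply the same convexity step, $\tilde P=(1-\mu)P+\mu\tilde Q$ with $\tilde Q=P^I+(\tilde P-P^\mu)/\mu$; you also treat $\mu=0$, which the paper skips.

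The difference is in how the radius is produced. The paper uses the facet description of the local polytope. Since $P^I$ is a strict mixture of all vertices, it saturates no facet inequality $B_j(P)\le K_j$. The paper then defines $R(M,N)$ as the minimum over facets of the box-norm distance from $P^I$ to the facet hyperplane, each obtained from a rational linear program. This gives the optimal radius about $P^I$, but it presupposes a facet enumeration of $C_l$.

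You use the vertex description instead. A rational linear map writes tangent directions as combinations of the $D_\lambda-P^I$ with $\ell_1$ norm at most $K$. This keeps all mixing weights nonnegative for $t\le 1/(2K|\Lambda|)$; your arithmetic is fine, since $\tfrac{1}{2|\Lambda|}+\tfrac{1}{2|\Lambda|^2}\le\tfrac{1}{|\Lambda|}$. Your route needs only linear algebra over $\mathbb{Q}$ plus $\dim C_{\mathrm{L}}=\dim C_{\mathrm{ns}}$. It costs you a smaller radius, which is harmless because only computability matters here.

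Two small points to tidy up:
\begin{itemize}
\item State Step 1 for all points of the affine hull of $C_{\mathrm{ns}}$. In Step 2 you only know that $\tilde Q$ lies there, not that it is in $C_{\mathrm{ns}}$. Your Step 1 argument never uses positivity, so nothing breaks.
\item Drop the unproved estimate $K\le 4(MN)^2$, which you do not need. By Gaussian elimination, choose a basis of the direction space among the rational vectors $D_\lambda-P^I$, and take a rational left inverse $A$ of the basis matrix. Then $K:=\sum_{i,k}|A_{ik}|$ works, because the box norm dominates the largest entry in absolute value.
\end{itemize}
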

The lemma is folklore among the quantum non-locality community, see, e.g., \cite{shrinking_factor}, for a discussion thereabout in $(2,N)$ Bell scenarios. Nevertheless, we include a general proof in \Cref{app:non_loc} for completeness.

\begin{proof}[Proof of Theorem \ref{theo:undecidable}]
Suppose that, for a given prerequisite, there exists an algorithm to solve Problem \ref{undecid_problem}. Given a Bell scenario $(M,N)$, compute $R(M,N)$ and define
\begin{equation}
\mu:=\frac{1}{12(2+R(M,N)}.
\end{equation}
Next, create a (finite) $\mu R(M,N)$-net\footnote{Given a normed vector space $V$ and a set $S\subset V$, a $\delta$-net of $S$ is a countable set $\{s_j\}_j\subset S$ such that, for all $s\in S$, there exists $j$ with $\|s_j-s\|\leq \delta$. If $V$ is finite-dimensional, any such net can be assumed finite.} $\{P^j\}_{j=1}^n$ for the set of all non-signalling boxes in the $(M,N)$ Bell scenario (with respect to the box norm (\ref{def:box_norm})), with each $P_j$ having rational probabilities. Since in the considered Bell scenario $C_{ns}$ has finitely many extreme points, all of which are computable and have rational probabilities, generating the net $\{P^j\}_{j=1}^n$ is a computable task.

Call $\Omega^j$ the FV-friendly box channel with distribution $P^j$, and use the algorithm to classify $\{\Omega^j\}_j$. Call $Q\subset\{1,...,n\}$ the set of all instances for which the solution was ``YES". 

Now, given a nonlocal game $G$, consider the problem of computing the value $C_{\mathrm{qa}}(G)$. We can estimate this quantity by calculating
\begin{equation}
\tilde{C}(G):=\max_{j\in Q}G(P^j).
\label{approx_game_value}
\end{equation}
Indeed, let $P\in C_{\mathrm{qa}}$ be such that $C_{\mathrm{qa}}(G)=G(P)$ and define $P_\mu:=(1-\mu)P+\mu P^I$. Since $P_\mu\in C_{\mathrm{ns}}$, there exists $j$ such that $\|P_\mu-P^j\|\leq \mu R(M,N)$. By Lemma \ref{lemma:non_loc}, it then follows that $P^j\in C_{\mathrm{qa}}$. This implies, by Proposition \ref{prop:FV_implement}, that $\Omega^j$ is implementable, and so $j\in Q$. Therefore,
\begin{align}
&\tilde{C}(G)\geq G(P^j)\geq G(P)-\|P-P^j\|\nonumber\\
&\geq C_{\mathrm{qa}}(G) -\|P-P_\mu\|-\|P_\mu-P^j\|\nonumber\\
&\geq C_{\mathrm{qa}}(G)-\mu(2+R(M,N))\nonumber\\
&=C_{\mathrm{qa}}(G)-\frac{1}{12},
\end{align}
where we used the inequalities $\|P-P_\mu\|\leq 2\mu$, $\|P_\mu-P^j\|\leq \mu R(M,N)$.

Now, let $j\in Q$ be the maximizer of eq. (\ref{approx_game_value}). Then, the channel $\Omega^j$ must be $\epsilon$-implementable; otherwise it would have been marked as ``NO". By Proposition \ref{prop:epsilon_imp2quan}, we thus have that there exists $\tilde{P}\in C_{\mathrm{qa}}$ such that
\begin{equation}
\|P^j-\tilde{P}\|\leq \epsilon M^2,
\end{equation}
which implies that
\begin{equation}
\tilde{C}(G)=G(P_j)\leq G(\tilde{P})+\epsilon M^2\leq C_{\mathrm{qa}}(G)+\frac{1}{3}.
\end{equation}

Hence, we can solve \Cref{problem:MIP}. For, if $C_{\mathrm{qa}}(G)<\frac{1}{2}$, we would observe that
\begin{equation}
\tilde{C}(G)< \frac{1}{2}+\frac{1}{3}=\frac{5}{6}.
\end{equation}
If, on the contrary, $C_{\mathrm{qa}}(G)=1$, we would have that
\begin{equation}
\tilde{C}(G)\geq 1-\frac{1}{12}=\frac{11}{12}>\frac{5}{6}.
\end{equation}
Since \Cref{problem:MIP} is undecidable, it follows that no algorithm can solve \Cref{undecid_problem}.

\end{proof}

\section{Discussion: the set of physical operations}
\label{sec:discussion}
Despite the fact that QFT is almost one century old, there is an ongoing debate with regards to modeling local operations within the theory. Fewster and Verch proposed a framework to that effect, which complies with relativistic covariance and at the same time is a QFT generalization of non-relativistic quantum measurement theory \cite{local_meas_QFT}. Jubb \cite{Jubb_2022} and Oeckl \cite{Oeckl_2026}, on the other hand, propose that all those operations compatible with relativistic causality should be regarded as feasible \emph{a priori}. The first approach therefore postulates an ad-hoc measurement model, while the second derives a set of operations from a fundamental physical principle, namely Einstein's causality. In \Cref{sec:causal_not_FV} we saw that these two proposals predict different sets of local operations, and so the FV framework cannot be derived from causality considerations alone.

To these authors, it is unclear whether the FV framework characterizes the set of all local operations available in QFT experiments: for all we know, the actual set of local operations could be smaller, larger, or have a non-trivial intersection with the set of FV-implementable instruments. We find it unlikely, however, that Einstein's causality will be questioned in the future, so we do not expect that non-causal instruments will play a role in a forthcoming QFT measurement theory. In fact, before this work some of us believed that Einstein's causality could be telling the whole story: namely, that a QFT operation is physically realizable iff it is causal. 

The results of \Cref{sec:box_channels} have made us revise our views. There we learned that, if all causal instruments were realizable\footnote{To observe an arbitrary distribution $P\in C_{ns}$ in a Bell experiment, it is not necessary that our QFT measurement theory posits the local realizability of \emph{all} causal QFT instruments. From the proof of Proposition \ref{prop:recovery_P}, it suffices to be able to: (a) prepare a squeezed, quasi-free local state; (b) carry out local displacements of the quantum  field; (c) implement a box channel with distribution $P$; and (d) measure the smeared field (in a demolition or a non-demolition way). Note that (a), (b), (d) are overwhelmingly regarded as physical; in fact, they describe standard operations in the quantum optics lab.}, then the correlations $P(a,b|\alpha,\beta)$ observed in a bipartite Bell experiment would only be limited by the non-signalling condition (\ref{NS_conds}). In particular, Bell experiments could generate behaviors beyond $C_{\mathrm{qc}}$, the largest of the three quantum sets of correlations considered in Bell non-locality.

The possibility of generating supra-quantum correlations (namely, distributions $P\in C_{\mathrm{ns}},P\not\in C_{\mathrm{qc}}$) in the course of a Bell experiment is an old topic in quantum foundations. While there is evidence that correlations slightly stronger than quantum would not violate any fundamental physical principle \cite{almost_quantum}, large deviations of the quantum limits are impossible, unless we are willing to let go several compelling physical intuitions. 

One of those is the principle of non-trivial communication complexity (NTCC) \cite{non_trivial}. To explain NTCC, picture a scenario where two separate parties, Alice and Bob, are each respectively handed the strings of bits $x,y\in\{0,1\}^n$ and asked to compute $f(x,y)$, for some Boolean function $f:\{0,1\}^n\times \{0,1\}^n\to \{0,1\}$. In general, this task cannot be accomplished without communication, and a large body of literature explores how much communication is needed to compute $f(x,y)$ with high probability. In a world where communication complexity were trivial, there would exist a probability $p>\frac{1}{2}$ such that, for \emph{any} $n\in\N$ and \emph{any} Boolean function $f$, by exchanging just one bit of communication, Alice and Bob could compute $f(x,y)$ with probability greater than $p$ for all inputs $x,y\in \{0,1\}^n$. To a computer scientist's ears, this is simply too good to be true. Consequently, the authors of \cite{non_trivial} postulate that our world has non-trivial communication complexity. 

NTCC is satisfied by non-relativistic quantum mechanics. Suppose, indeed, that Alice and Bob wish to compute the ``inner product function'' $f(x,y)=\sum_{i=1}^nx_iy_i\mbox{ (mod }2)$ with probability greater than $p$ for all inputs. As shown in \cite{inner_product,almost_quantum}, if Alice and Bob only use quantum resources (i.e., if they have access to entangled quantum states, which they can locally measure any way they like) and are just allowed $1$ bit of communication, then $p$ must satisfy 
\begin{equation}
p\leq \frac{1}{2}+\frac{1}{2^{\frac{n+1}{2}}}.    
\end{equation}
Taking the limit $n\to \infty$, $p$ is thus bounded by $\frac{1}{2}$, which precludes the non-trivial computation of the inner product function.

Contrarily, NTCC can be violated if one has access to arbitrary non-signalling boxes. Let $f:\{0,1\}^n\times \{0,1\}^n\to \{0,1\}$ be an arbitrary Boolean function and consider, in the Bell scenario $(2,2^n)$, the following box:
\begin{equation}
P_{f}(a,b|\alpha,\beta):=\frac{1}{2}\delta_{a\oplus b, f(\alpha,\beta)}.
\end{equation}
One can easily check that $P_f\in C_{ns}$. Now, suppose that Alice and Bob could generate $P_f$ in the course of a Bell experiment. Then Alice and Bob could run the following communication protocol: 
\begin{enumerate}
    \item Alice and Bob receive the inputs $x,y\in\{0,1\}^n$.
    \item Alice inputs $x$ in her device (namely, she sets $\alpha=x$), obtaining the result $a\in\{0,1\}$. Similarly, Bob sets $\beta=y$, obtaining the result $b\in\{0,1\}$. 
    \item Alice sends $a$ to Bob, who, by computing $a\oplus b$ obtains $f(x,y)$.
\end{enumerate}
A world where, for any $f$, the box $P_f\in C_{ns}$ were approximately realizable up to arbitrary precision would thus allow computing any Boolean function with one bit of communication with probability $p$ arbitrarily close to $1$. That world would violate NTCC.

Similarly, a world where any non-signalling distribution $P$ could be approximately observed in the course of a Bell experiment would further break the principles of No Advantage for Non-Local Computation \cite{no_advantage} and Information Causality \cite{inf_causality}. Moreover, if multiple independent instances of the process of generating $P$ could be run in parallel (e.g., by making them act on independent field modes), that world would also violate the principles of macroscopic locality \cite{mac_loc} and local orthogonality \cite{local_orth}.

Adopting Einstein's causality as the only physical principle limiting QFT local operations, as proposed in \cite{Jubb_2022, Oeckl_2026}, would therefore force us to abandon many well-established physical beliefs. If we accept that not all causal QFT operations are physical, then two obvious questions come to mind:
\begin{itemize}
\item Is there a natural QFT principle that constrains the set of feasible local operations beyond causal instruments?
\item Does any physical principle single out the set FV-implementable operations \emph{exactly}?
\end{itemize}

\section{Conclusion}
\label{sec:conclusion}
In this paper, we have studied the problem of characterizing which local operations are feasible in a QFT. Existing literature only proposes two candidate sets: the set of FV-realizable instruments, which embody a non-relativistic extension of von Neumann's measurement model; and the set of causal instruments \cite{Jubb_2022,Oeckl_2026}, which solely follows from the principle of Einstein's causality. Prior to our work, it was just known that all FV-realizable operations were causal \cite{impossible_meas}; that non-disturbing Gaussian measurements of Klein-Gordon scalar fields were FV-realizable \cite{field_meas}; and that the FV framework allowed conducting arbitrary local measurements of Klein-Gordon scalar fields in a demolition way \cite{fewster2025measurementpreparationprotocolsquantum}.

In this context, we first showed that a large class of causal operations, namely, all diagonal Weyl instruments, are FV-realizable in Klein-Gordon scalar fields. Such instruments allow the probabilistic, heralded realization of arbitrary Weyl instruments, causal or not. Moreover, they suffice to conduct local measurements in a non-demolition way. Standard tasks in a quantum optics laboratory, such as counting the photons of a given light mode without perturbing the rest, might therefore be modeled within the FV framework. 

Next, contrary to our early expectations, we found that the sets of FV-realizable and causal instruments differ. While they coincide for \emph{generic} subspaces of quadratures ${\mathcal Q}$, for \emph{general} ${\mathcal Q}$ we nonetheless constructed instances of causal QFT channels that cannot be approximated arbitrarily well through sequences of FV-realizable maps. 

This naturally prompted the question of which QFT channels/instruments are FV-realizable. In this direction, we proved that an approximate characterization of network-FV implementable Weyl channels is impossible, due to undecidability issues. This no-go result need not be interpreted as a flaw of the FV framework, though, for many physically relevant sets are similarly uncomputable, e.g., the set of bipartite quantum correlations $C_{qa}$ \cite{mipre}.

Finally, we argued that the set of causal channels includes many QFT operations that, if physically realizable, even approximately, would violate a number of physical principles \cite{non_trivial, no_advantage,inf_causality, mac_loc, local_orth}. Those principles have nothing to do with QFT, but were formulated more than a decade ago in the literature of quantum foundations. Each such principle therefore represents a compelling reason not to regard every causal instrument as a potentially feasible QFT operation.

Our work motivates two topics for further research. First, it is important to elucidate if FV operations are closed under causal composition. If they are, then our undecidability result extends to the characterization of FV-realizable channels; if they are not, then the set of FV operations cannot be regarded as physical. Second, if the set of FV instruments is, as we suspect, closed under causal composition, then it would be very convenient to prove that it follows from a set of well-accepted physical principles. With regards to a future measurement theory of QFT, any such physical derivation would promote the status of the FV framework from \emph{possible} to \emph{inevitable}.

\begin{acknowledgments}
This work was funded in whole or in part by the Austrian Science Fund (FWF) 10.55776/COE1 and the European Union – NextGenerationEU. M.N. was supported by the Swedish Research Council under grant no. 2021-06594 while in residence at Institut Mittag-Leffler in Djursholm, Sweden during April, 2026.
\end{acknowledgments}
\bibliographystyle{apsrev4-2}
\bibliography{article}

\begin{appendix}
\crefalias{section}{appsec}

\section{Archimedeanity of the Weyl algebra: Proof of Lemma \ref{lemma:Archimedeanity}}
\label{app:Archimedean}

We first prove that for any Hermitian $w\in \hat{\W}$ there exists $\lambda > 0$ such that $\lambda- w$ is a sums of Hermitian squares, hence proving that $\pi(w)\leq \lambda$. If we restrict to $w$ of the form $v v^\ast$ for some $v \in \hat{\W}$, this property is known as \emph{Archimedeanity} in non-commutative algebraic geometry \cite{Schmudgen2009}. The argument to prove the claim is simple: first, note that any Hermitian element $w$ of $\hat{\W}$ can be written as
\begin{equation}
w=\sum_{j=1}^N \left( \mu_j e^{-iR_j}+\mu_j^*e^{iR_j}\right),
\end{equation}
where $\{R_j\}_j$ are quadratures and $\{\mu_j\}_j\subset\C$. Define $w_j:=\mu_j e^{-iR_j}+\mu_j^*e^{iR_j}$. It can then be verified that
\begin{equation}
1+|\mu_j|^2-w_j=(1-\mu_j e^{-iR_j})(1-\mu_j e^{-iR_j})^*,
\end{equation}
which implies that $N+\sum_{j=1}^N |\mu_j|^2- w$ is SOS.

Next, we prove that Lemma \ref{lemma:Archimedeanity} holds when $\Q$ is the space of all quadratures. As explained in the main text, $\hat{\W}$ is simple. Hence, for any $w\in \hat{W}$, the condition $\pi(w)>0$ implies that $\tilde{\pi}(w)>0$ in all representations $\tilde{\pi}$ of $\hat{\W}$. Together with the archimedeanity of $\hat{\W}$, this implies that $w$ admits an SOS decomposition, see \cite[Prop.~12]{Schmudgen2009}. Since $\hat{\W}$ was defined over an arbitrary symplectic space, our derivation also holds whenever $\Q$ is a symplectic subspace of the space of all quadratures, i.e., when $\Q$ is the span of a number of independent modes.

It remains to check our statement in the case where $\Q$ is the span of the quadratures $\vec{R}=(R_1,...,R_{2m})$, which we can organize in $m$ modes, and the extra quadratures $\vec{S}=(S_1,...,S_r)$, which commute with all the $R$'s. Then it is possible to find $T_1,...,T_r$ commuting with all the $R$'s such that $(S_1, T_1, S_2,T_2,...,S_r,T_r)$ are $r$ modes. Define the extended quadrature space $\tilde{\Q}=\Q+\mbox{span}_\R\{T_j:j\}$. Then, by the previous arguments, there exist $\{d_j\}_j\subset\hat{\W}(\tilde{\Q})$ such that eq. (\ref{SOS_decomp_q}) holds and one can express $w$ as
\begin{equation}
w=\sum_{j,k}Z(j,k)e^{-i(\vec{f}_j\cdot \vec{R}+\vec{g}_j\cdot \vec{S}+\vec{h}_j\cdot \vec{T})}e^{i(\vec{f}_k\cdot \vec{R}+\vec{g}_k\cdot \vec{S}+\vec{h}_k\cdot \vec{T})},
\label{decomp_w}
\end{equation}
for some positive semidefinite matrix $Z$ and real vectors $\vec{f},\vec{g},\vec{h}$.

Since $w\in\hat{\W}(\Q)$, and $S_1, ...,S_r$ commute with $\Q$, for all $\vec{\theta}\in\R^r$, it holds that
\begin{align}
w&=e^{-i\vec{\theta}\cdot \vec{S}}we^{i\vec{\theta}\cdot \vec{S}} \nonumber\\
&=\sum_{j,k}Z(j,k)e^{-i(\vec{f}_j\cdot \vec{R}+\vec{g}_j\cdot \vec{S}+\vec{h}_j\cdot \vec{T})}e^{i(\vec{f}_k\cdot \vec{R}+\vec{g}_k\cdot \vec{S}+\vec{h}_k\cdot \vec{T})}e^{-i\vec{\theta}\cdot(\vec{h}_j-\vec{h}_k)}.
\label{gran_theta}
\end{align}

It is easy to find a finitely supported distribution $p(\vec{\theta})d\vec{\theta}$ such that
\begin{equation}
\int e^{-i\vec{\theta}\cdot(\vec{h}_j-\vec{h}_k)} p(\vec{\theta})d\vec{\theta} =0, \mbox{ for } \vec{h}_j\not=\vec{h}_k.
\end{equation}
Indeed, it suffices to take $\vec{\theta}=\sum_{j,k}\vec{\theta}_{jk}$, where $\{\vec{\theta}_{jk}:j<k,\vec{h}_j\not=\vec{h}_k\}$ are two-valued independent random variables with $\langle e^{-i\vec{\theta}_{jk}\cdot(\vec{h}_j-\vec{h}_k)}\rangle=0$. 

Computing the average on $\vec{\theta}$ of eq. (\ref{gran_theta}) ---this amounts to taking a finite convex combination---, we find that we can replace $Z$ in eq. (\ref{decomp_w}) by another positive semidefinite matrix $\tilde{Z}$ with the particularity that 
\begin{equation}
\tilde{Z}=\bigoplus_{\vec{h}} Z\bigg|_{\vec{h}},
\end{equation}
where $Z\bigg|_{\vec{h}}$ represents the matrix $Z$, restricted to the rows and columns $\{j:\vec{h}_j=\vec{h}\}$. Thus, 
\begin{equation}
w=\sum_{\vec{h}}\sum_jd_j(\vec{h})d_j(\vec{h})^*,
\end{equation}
with 
\begin{equation}
d_j(\vec{h})=\tilde{d}_j(\vec{h})e^{-i\vec{h}\cdot \vec{T}},
\end{equation}
for some $\tilde{d}_j(\vec{h})\in \hat{W}(\Q)$. Hence, 
\begin{equation}
w=\sum_{\vec{h}}\sum_j\tilde{d}_j(\vec{h})e^{-i\vec{h}\cdot \vec{T}}e^{i\vec{h}\cdot \vec{T}}\tilde{d}_j(\vec{h})^*=\sum_{\vec{h}}\sum_j\tilde{d}_j(\vec{h})\tilde{d}_j(\vec{h})^*.
\end{equation}
This completes the proof.

\section{Approximability of general POVMs and instruments through their Weyl counterparts}\label{app:approx}
The goal of this Appendix is to prove Propositions \ref{prop:approx_POVM}, \ref{prop:approx_instr} from the main text. To do so, we will rely on Lemma \ref{lemma:Archimedeanity} and the following two results:
\begin{lemma}
\label{lemma:bounded_convergence}
Let $(c^k)_k$ be a net of bounded operators that strongly converges to the bounded operator $c$. Then, there exists $C>0$ such that $\|c^k\|\leq C$, for all $k$.
\end{lemma}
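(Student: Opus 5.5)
The plan is to reduce the claim to the uniform boundedness principle (Banach--Steinhaus) \cite{Rudin1991FunctionalAnalysis}, applied to the family $\{c^k\}_k\subset B(\H)$ regarded as bounded linear maps from the Banach space $\H$ into itself. Banach--Steinhaus states that if a family of bounded operators is \emph{pointwise} bounded, i.e.\ $\sup_k\|c^k x\|<\infty$ for every $x\in\H$, then it is uniformly bounded in operator norm: $\sup_k\|c^k\|=:C<\infty$. So the whole proof consists in establishing pointwise boundedness from strong convergence.

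First, fix $x\in\H$. Strong convergence means $c^kx\to cx$ in norm. Taking $\epsilon=1$, there is an index $k_0(x)$ such that $\|c^kx\|\le\|cx\|+1\le\|c\|\|x\|+1$ for all $k\succeq k_0(x)$. In the situation where the statement is used, namely the sequences $(\pi(M_a^n))_n$ and $(\Omega^{n,\pi}_a(X))_n$ appearing in Propositions \ref{prop:approx_POVM} and \ref{prop:approx_instr}, the index set is $\N$. Then the indices not beyond $k_0(x)$ are finitely many, each $c^k$ is bounded, and hence $\sup_k\|c^kx\|\le\max\{\|c\|\|x\|+1,\max_{k<k_0(x)}\|c^k\|\|x\|\}<\infty$. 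Second, feeding this pointwise bound into Banach--Steinhaus yields the constant $C$.

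The main obstacle is the step from ``eventually bounded'' to ``bounded for all $k$''. For a general directed set, the set of indices not beyond $k_0(x)$ need not be finite, and, worse, $k_0(x)$ depends on $x$. So I will state explicitly that the argument requires the net to be pointwise bounded on its whole index set. This holds automatically for sequences, and also for any net indexed by a directed set in which every initial segment is finite. If one insists on arbitrary nets, I would instead prove the weaker conclusion with a tail bound: pick a countable cofinal subsequence, or restrict to a tail on which the bound holds. I would then check that only such a tail is used downstream, since in the approximation arguments of this appendix only the limit behaviour of the operators enters.
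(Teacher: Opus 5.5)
Your main argument is the paper's own. Strong convergence gives $\sup_k\|c^k\psi\|<\infty$ for each $\psi$, and Banach--Steinhaus upgrades this to $\sup_k\|c^k\|<\infty$. For sequences (index set $\N$) your write-up is correct and complete. Your suspicion about nets is justified, and in fact understated: the statement as written is \emph{false} for general nets. The paper's sentence ``$(c^k\psi)_k$ converges and thus the supremum of $\{\|c^k\psi\|\}_k$ is bounded'' is exactly the step that only holds for sequences. For instance, the $\Z$-indexed net with $c^k=|k|\,\id$ for $k<0$ and $c^k=c$ for $k\geq 0$ converges in norm to $c$ and is unbounded. So the correct lemma is the sequence version, or the version for nets assumed pointwise bounded on the whole index set, and that is what you prove.

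Your fallback for arbitrary nets does not work, because a strongly convergent net need not be bounded on \emph{any} tail. Take $\H$ infinite-dimensional, index by pairs $(F,n)$ with $F\subset\H$ finite and $n\in\N$, ordered by inclusion and $\leq$, and set $c^{(F,n)}:=n(\id-P_F)$, where $P_F$ is the projection onto $\mathrm{span}\,F$. For each $x$, $c^{(F,n)}x=0$ as soon as $x\in F$, so the net converges strongly (and strongly$^*$) to $0$. Yet every tail $\{(F,n)\succeq(F_0,n_0)\}$ contains $(F_0,n)$ for all $n\geq n_0$, which has norm $n$. This directed set also has no countable cofinal subset.

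Your plan to check that ``only a tail is used downstream'' would also fail. Besides the genuine sequence $(\pi(c^j))_j$ in the proof of Proposition~\ref{prop:approx_POVM}, where your sequence version suffices, the lemma is invoked in Lemma~\ref{lemma:fromnetstosequences}. There it is applied to the extracted operators $c^{\alpha_j}$ \emph{before} their strong convergence is known, precisely in order to prove it. No version of the present lemma supplies that bound. Indeed, keeping only the indices with $n\geq|F|$ in the example above gives a strongly convergent net from which no sequence converges strongly to the limit $0$. The bound has to be built in from the start, e.g.\ via Kaplansky's density theorem together with metrizability of the strong$^*$ topology on bounded subsets of $B(\H)$ for separable $\H$.
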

\begin{proof}
The lemma is a direct consequence of the Uniform Boundedness Principle \cite{Rudin1991FunctionalAnalysis}, which states that, given a collection ${\cal O}$ of continuous linear functionals from some Banach space ${\cal X}$ to a metric space ${\cal Y}$, if it holds that $\sup_{O\in {\cal O}}\|Ox\|<\infty$, for all $x\in {\cal X}$, then $\sup_{O\in {\cal O}}\|O\|<\infty$. In our case, for any vector $\ket{\psi}\in\H$ it holds that $(c^k\ket{\psi})_k$ converges and thus the supremum of $\{\|c^k\ket{\psi}\|\}_k$ is bounded. Therefore, $\sup_k\|c^k\|<\infty$.    
\end{proof}

\begin{lemma}
\label{lemma:fromnetstosequences}
Let $(c^\alpha)_\alpha$ be a net of bounded operators on a separable Hilbert space $\H$ such that $\slim{\alpha\to\infty} c^\alpha=c$, $\slim{\alpha\to\infty} (c^\alpha)^*=c^*$, for some $c\in B(\H)$. Then, there exists a sequence $(c^j)_{j\in\N}$ of bounded operators (taken from the net) such that $\slim{j\to\infty} c^j=c$, $\slim{j\to\infty} (c^j)^*=c^*$.
\end{lemma}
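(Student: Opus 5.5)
The plan is a diagonal extraction along a countable dense set, followed by an extension from the dense set to all of $\H$ using uniform norm bounds. The extension step is the real obstacle, because the lemma does not assume that the net is bounded.

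\textbf{Step 1 (dense set and diagonal choice).} Since $\H$ is separable, fix a countable dense set $\{\psi_n\}_{n\in\N}\subset\H$. For each $j\in\N$ the finitely many conditions
\begin{equation}
\|(c^\alpha-c)\psi_n\|<\tfrac1j,\qquad \|((c^\alpha)^*-c^*)\psi_n\|<\tfrac1j,\qquad n\le j,
\end{equation}
hold for all $\alpha$ beyond some index $\alpha_0(j)$, by the two strong limits in the hypothesis. Using directedness, I choose $\alpha_j\ge\alpha_0(j)$ and $\alpha_j\ge\alpha_{j-1}$, and set $c^j:=c^{\alpha_j}$. Then $c^j\psi_n\to c\psi_n$ and $(c^j)^*\psi_n\to c^*\psi_n$ for every fixed $n$. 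This part is routine.

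\textbf{Step 2 (extension to all vectors).} For arbitrary $\psi\in\H$, the $\epsilon/3$ argument
\begin{equation}
\|(c^j-c)\psi\|\le \|c^j\|\,\|\psi-\psi_n\|+\|(c^j-c)\psi_n\|+\|c\|\,\|\psi_n-\psi\|
\end{equation}
would finish the proof, and likewise for the adjoints. However, it needs $\sup_j\|c^j\|<\infty$. Lemma \ref{lemma:bounded_convergence} only gives this a posteriori, for a net or sequence that is already known to converge, so it cannot be invoked here. Instead I will build the bound into the choice of $\alpha_j$: I require in addition $\|c^{\alpha_j}\|\le C$ for a fixed constant $C$. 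This is legitimate provided that every tail $\{\alpha\ge\alpha_0\}$ of the net contains indices with $\|c^\alpha\|\le C$. Since a strongly convergent sequence is automatically bounded, this tail condition is in fact necessary for the conclusion to hold at all, so it is the crux of the lemma.

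\textbf{Step 3 (the main obstacle: bounded elements in every tail).} To prove the tail condition I would argue by contradiction. Suppose that for every $M$ there is a tail on which $\|c^\alpha\|>M$. I would then construct, inductively and by a gliding-hump argument, unit vectors $\phi_k$ and indices $\beta_k$ deep in the net with $\|c^{\beta_k}\phi_k\|$ growing fast, and set $\psi:=\sum_k 2^{-k}\phi_k$. The aim is to contradict the convergence $c^\alpha\psi\to c\psi$ (or the convergence of the adjoints on a suitable vector), since that convergence forces $\|c^\alpha\psi\|$ to stay near $\|c\psi\|$ on a tail. The difficulty is that later indices $\beta_{k'}$ with $k'>k$ must not cancel the hump from earlier $\phi_k$. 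I would control this using the strong convergence on the finitely many previously chosen vectors, which is exactly where both the $c^\alpha$ and the $(c^\alpha)^*$ convergence will be used.

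I am least certain that this gliding-hump step goes through for a general net, since uniform boundedness can genuinely fail for nets. If it cannot be closed, I would fall back on proving the lemma for the nets that actually occur in the proofs of Propositions \ref{prop:approx_POVM} and \ref{prop:approx_instr}, where the operators can be arranged to have norms bounded by construction.
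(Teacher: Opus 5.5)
Your Steps 1 and 2 are exactly the paper's argument, and you have identified the real issue: everything hinges on a uniform norm bound. Step 3, however, cannot be carried out, because the statement is false for general nets. Here is a counterexample.

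Let $\{\ket{n}\}_{n\in\N}$ be an orthonormal basis of $\H$ and set $d_n:=\sqrt{n}\proj{n}$, so that $d_n=d_n^*$ and $\|d_n\|=\sqrt{n}$. Take finitely many vectors $\psi_1,\dots,\psi_m$ and $\epsilon>0$, and put $s_n:=\sum_k|\braket{n}{\psi_k}|^2$. Since $\sum_n s_n<\infty$ while $\sum_n 1/n=\infty$, one has $n s_n<\epsilon^2$ for infinitely many $n$. For each such $n$, $\|d_n\psi_k\|<\epsilon$ for all $k$.

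Now index by pairs $(F,\epsilon)$, with $F\subset\H$ finite and $\epsilon>0$, directed by enlarging $F$ and shrinking $\epsilon$. Choosing such a $d_n$ for each index yields a net with $c^\alpha\to 0$ and $(c^\alpha)^*\to 0$ strongly. Suppose a sequence taken from this net converged strongly. Then it would be norm bounded (Banach--Steinhaus, valid for sequences), so it would take only finitely many values. It would therefore have a constant subsequence equal to some $d_n\neq 0$, which is absurd.

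This example also shows exactly why Step 3 fails. Beyond the index $(\{\ket{1},\dots,\ket{m}\},1)$, only $d_n$ with $n>m$ occur. So every bound $C$ is exceeded on an entire tail, and no gliding-hump vector $\psi=\sum_k 2^{-k}\phi_k$ can produce your tail condition. Strong convergence of a net controls $\|c^\alpha\psi\|$ only on a tail that depends on $\psi$. The same example shows that Lemma \ref{lemma:bounded_convergence} is false for nets, which your remark about using it ``a posteriori'' takes for granted. Incidentally, your tail condition is sufficient but not strictly necessary, since an extracted sequence need not be cofinal in the net.

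The paper's proof has precisely the same hole. It obtains $\sup_j\|c^j\|\le C$ by citing Lemma \ref{lemma:bounded_convergence}, which holds only for sequences, and at that point the extracted sequence $(c^j)_j$ is not yet known to converge. So your fallback is the correct repair rather than a retreat.

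Add the hypothesis that $\|c^\alpha\|\le C$ on some tail of the net and choose all $\alpha_j$ in that tail. Then your Steps 1 and 2 form a complete proof; the adjoints are covered because $\|(c^\alpha)^*\|=\|c^\alpha\|$.

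For the use made in Propositions \ref{prop:approx_POVM} and \ref{prop:approx_instr}, bounded approximants come from Kaplansky's density theorem. The unit ball of the unital $^*$-algebra $\pi(\hat{\W}(\Q))$ is strong$^*$-dense in the unit ball of $\A_\pi(\Q)$. For separable $\H$, the strong$^*$ topology is metrizable on bounded sets. Together these directly give a sequence $(c_j)_j\subset\hat{\W}(\Q)$ with $\|\pi(c_j)\|\le\|c\|$ converging strongly$^*$ to $c$, which is all those proofs need.
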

\begin{proof}
    Let $\H$, $(c^{\alpha})_\alpha$, $c$ satisfy the conditions of the lemma, and let $\{\ket{n}\}_{n\in\N}$ be an orthonormal basis for $\H$. For each $j\in\N$, choose $\alpha_j$ such that, for $\alpha\geq \alpha_j$, it holds that $\|(c_j-c)\ket{n}\|,\|((c_j)^*-c^*)\ket{n}\|\leq\frac{1}{j}$, for $n=1,...,j$. Define $c^j:=c^{\alpha_j}$. From the above we have that $\|(c^j-c)\ket{\psi}\|, \|((c^j)^*-c^*)\ket{\psi}\|\leq\frac{1}{\sqrt{j}}$, for any $\ket{\psi}\in \H_j:=\mbox{span}\{\ket{n}:n=1,...,j\}$. Now, consider any $\ket{\psi}\in\H$. For any $\epsilon>0$, there exists $j$ such that $\|\ket{\psi}-\ket{\tilde{\psi}}\|\leq \epsilon$, for some $\ket{\tilde{\psi}}\in\H_j$. By Lemma \ref{lemma:bounded_convergence}, we know that there exists $C>0$ with $|c^j|\leq C$ for all $j$. For $k\geq j$, this implies that 
    \begin{align}
    \|(c^k-c)\ket{\psi}\|\leq \|(c^k-c)\ket{\tilde{\psi}}\|+2\epsilon C\leq \frac{1}{k}+2\epsilon C.  
    \end{align}
    Since $\epsilon$ was arbitrary, this implies that $\lim_{k\to\infty}\|(c^k-c)\ket{\psi}\|=0$ and thus $(c^j)_j$ converges to $c$ strongly. Mutatis mutandis, the same argument shows that $((c^j)^*)_j$ converges to $c^*$.
\end{proof}

In the following, we fix $\pi$ to be an irreducible representation of $\hat{\W}$ on $B(\H)$, i.e., $\A_\pi = \overline{\pi(\hat{\W})} = B(\H)$, where the overline indicates closure with respect to the strong operator topology. Thus, implied by the above Lemmas \ref{lemma:bounded_convergence}, \ref{lemma:fromnetstosequences}, for any element $c$ of $B(\H)$ we find a uniformly bounded sequence $(c_j)_j$ within $\hat{\W}$ such that $\slim{j\to\infty} \pi(c_j) = c$. We can even demand $\sstarlim{j\to \infty} \pi(c_j) = c$, where $\sstarlim{\,}$ refers to the strong$^\ast$-topology, which implies that $\slim{j\to\infty} \pi(c_j) = c$ and $\slim{j\to\infty} \pi(c_j)^\ast = c^\ast$. Further, we state here once that, if $c \in \A_\pi(\Q)$ for some subspace of quadratures $\Q$, then $(c_j)_j$ can be chosen as a subset of $\hat{\W}(\Q)$ instead of merely $\hat{\W}$, thus implying the latter parts of the statements in the propositions.

\begin{proof}[Proof of Proposition \ref{prop:approx_POVM}]
Since $M_a\geq 0$, there exists $c_a\in B(\H)$ such that $c_a(c_a)^*=M_a$ and a sequence $(c_a^j)_j$ of elements in $\hat{\W}$ such that 
\begin{equation}
\sstarlim{j\to \infty}\pi(c_a^j)=c_a.  
\end{equation}
Generally, though,
\begin{equation}
    c^j :=\sum_a c_a^j (c_a^j)^\ast \neq 1,    
\end{equation}
which implies that $(c_a^j (c_a^j)^\ast)_a$ is not a Weyl POVM. However, since the limiting tuple $(M_a)_a$ is a POVM, we do have that $\slim{j\to \infty} \pi(c^j) = 1$, so, by Lemma \ref{lemma:bounded_convergence}, there exists $C \geq 1$ such that
\begin{equation}
\pi(c^j)\leq C, \forall j.
\label{bound_norm}
\end{equation}

For $\epsilon>0$, let $\tilde{p}(t;\epsilon)$ be a polynomial in $t$ satisfying 
\begin{equation}\label{eq:polydemand}
    |\tilde{p}(t;\epsilon) - \tilde{p}(t)| < \epsilon, \quad \tilde{p}(t) := \left\lbrace \begin{matrix} 1, & \quad t\in [0,1] \\ t^{-\frac{1}{2}}, & \quad t \in [1,C] \end{matrix} \right. ,
\end{equation}
The existence of $\tilde{p}(t;\epsilon)$ for any $\epsilon > 0$ follows from the Stone-Weierstrass theorem.

Using $|t\tilde{p}(t)| \leq \sqrt{C}$ and $|t\tilde{p}(t)^2| \leq 1$, we find that
\begin{equation}
    t \tilde{p}(t;\epsilon)^2 \leq t \tilde{p}(t)^2 + t |\tilde{p}(t;\epsilon)^2- \tilde{p}(t)^2| < 1 + \epsilon', t\in [0,C]  
\end{equation}
with $\epsilon' := 2\epsilon \sqrt{C} + \epsilon^2 C$. Thus, the polynomial
\begin{equation}
    p(t;\epsilon) := (1+\epsilon')^{-1} \tilde{p}(t;\epsilon)
\end{equation}
satisfies $tp(t;\epsilon)^2<1$, for $t\in [0,C]$. 

Now, define
\begin{align}
&c_a^j(\epsilon) := p(c^j;\epsilon) c_a^j \in \hat{\W},\nonumber\\
&c^j(\epsilon) := \sum_a c_a^j(\epsilon) (c_a^j(\epsilon))^\ast = c^j p(c^j;\epsilon)^2.
\end{align}
Then it follows that
\begin{equation}\label{eq:papprox}
\pi(c^j(\epsilon)) = \left. t p(t;\epsilon)^2 \right|_{t=\pi(c^j)} < 1.
\end{equation}
Lemma \ref{lemma:Archimedeanity} then implies that $1-c^j(\epsilon)$ admits a Hermitian sum of squares decomposition (SOS) so that
\begin{equation}
    M^j_a(\epsilon) := c^j_a(\epsilon) (c_a^j(\epsilon))^\ast + \delta_{a1} (1- c^j(\epsilon))    
\end{equation}
forms a Weyl POVM; in particular,
\begin{equation}
    \sum_a M_a^j(\epsilon) = 1.
\end{equation}

We next show that, for any sequence of positive numbers $(\epsilon_j)_j$ with $\epsilon_j \to 0$, it holds that
\begin{equation}\label{eq:slimcj}
    \sstarlim{j\to \infty} \pi(c_a^j(\epsilon_j)) = c_a,
\end{equation}
implying $\slim{j\to \infty} \pi(M_a^j(\epsilon)) = M_a^j$.

For simplicity we set $\hat{c}^j := \pi(c^j)$ and $\hat{c}_a^j := \pi(c_a^j)$. For arbitrary $\psi \in \H$ and $\epsilon \leq \epsilon_0$, we estimate
\begin{align}
    \norm{(\hat{c}_a^j(\epsilon) - \hat{c}_a)\psi} &\leq \norm{p(\hat{c}^j;\epsilon) (\hat{c}_a^j - \hat{c}_a)\psi} + \norm{(p(\hat{c}^j;\epsilon) -1)\hat{c}_a \psi} \\
    &\leq A_1 \norm{(\hat{c}_a^j - \hat{c}_a)\psi} +  A_2 \epsilon\norm{\hat{c}_a\psi} + A_3 \norm{(\tilde{p}(\hat{c}^j) - 1)\hat{c}_a \psi}
    \label{norma_contenida}
\end{align}
for constants $A_1,A_2,A_3 > 0$ (independent from $\epsilon$, $j$ and $\psi$), where we used that $p(\cdot;\cdot)$ is uniformly bounded on $[0,C]\times [0,\epsilon_0]$ to obtain $A_1$ and we inferred from \eqref{eq:polydemand} that
\begin{equation}
    \norm{(p(\hat{c}^j;\epsilon) - 1)\chi} \leq (1+\epsilon')^{-1} (\epsilon+\epsilon') \norm{\chi} + (1+\epsilon')^{-1} \norm{(\tilde{p}(c^j)-1)\chi}.
\end{equation}
Setting $\epsilon=\epsilon_j$ in eq. (\ref{norma_contenida}), we take the limit $j\to\infty$. Using that $\tilde{p}$ is continuous, with $\tilde{p}(1) = 1$, we have that the second line of eq. (\ref{norma_contenida}) vanishes, and thus
\begin{equation}
    \norm{(\hat{c}_a^j(\epsilon_{j}) - \hat{c}_a)\psi} \to_{j \to \infty} 0.
\end{equation}
The same argument can be easily adapted to prove that
\begin{equation}
    \norm{(\hat{c}_a^j(\epsilon_{j})^\ast - \hat{c}^\ast_a)\psi} \to_{j \to \infty} 0.
\end{equation}
\end{proof}

\begin{proof}[Proof of Proposition \ref{prop:approx_instr}] Following the proof of Proposition \ref{prop:approx_POVM}, it is easy to show that any $A$-outcome instrument $\tilde{\Omega}^{(N)}$ (on $B(\H)$) such that $\tilde{\Omega}^{(N)}_a$ has at most $N$ Kraus operators can be approximated strongly through a sequence of Weyl instruments. Now, consider a general normal instrument $\Omega$, with
\begin{equation}
\Omega_a(\bullet)=\sum_{j=1}^{\infty}K^{j}_a\bullet (K^{j}_a)^*, \quad K^j_a \in B(\H),
\label{normal_app}
\end{equation}
and define its $N$-truncation $\tilde{\Omega}^{(N)}$ as
\begin{equation}
\tilde{\Omega}^{(N)}_a(\bullet):=\sum_{j=1}^{N} K^{j}_a\bullet (K^{j}_a)^* + \delta_{a1} \sqrt{R^N}\bullet \sqrt{R^N}
\end{equation}
with
\begin{equation}
R^N:=\sum_{j=N+1}^\infty \sum_a K_a^j(K_a^j)^*.
\end{equation}
Then $\tilde{\Omega}^{(N)}$ is an instrument for all $N$. Moreover, due to the strong convergence of (\ref{normal_app}), we have that
\begin{equation}
\slim{N\to \infty} \tilde{\Omega}_a^{(N)}(X)=\Omega_a(X)
\label{conv_finite_terms}
\end{equation}
strongly, for all $X\in B(\H)$ and $a \neq 1$. For $a=1$ one needs, in addition, to consider the term $\sqrt{R^N}X\sqrt{R^N}$. Let $\psi,\chi \in\H$ with $\norm{\chi} = 1$. Then, for $X\in B(\H)$, and any sequence of unit vectors $(\ket{\chi_N})_N$, we have that
\begin{equation}
    |\braket{\chi_N}{ \sqrt{R^N} X \sqrt{R^N} \psi}| \leq  \norm{X} \sqrt{\braket{\chi_N}{R^N \chi_N}\braket{\psi}{R^N \psi}} \leq  \norm{X} \sqrt{\braket{\psi}{R^N \psi}},
\end{equation}
using that $\braket{\chi_N}{R^N \chi_N} \leq \braket{\chi_N}{R^0 \chi_N} = 1$. By the strong convergence of the series that defines $\Omega_a$, the r.h.s. tends to zero as $N$ tends to infinity. Choosing $\ket{\chi_N}$ to be parallel to $\sqrt{R^N} X \sqrt{R^N}\ket{\psi}$, the above implies that $\slim{N\to\infty}\sqrt{R^N}X\sqrt{R^N}= 0$.  

The fact that for, each such $\tilde{\Omega}^{(N)}$, there exists a converging sequence of Weyl instruments implies, through a diagonal argument, that there exists a sequence of Weyl instruments $(\Omega^n)_n$ that converges to $\Omega$ strongly.
\end{proof}

\section{Asymptotic implementation of Weyl POVMs via diagonal Weyl instruments: proof of Lemma \ref{lemma_asymp}}
\label{app:structural}
First a word on notation: in this Appendix, given a vector $\vec{z}\in\R^n$, by $\|\vec{z}\|$ we will denote its $l_\infty$ norm, i.e., $\|\vec{z}\|:=\max_j|z_j|$. For $\vec{z}\in\R^n,\delta>0$, the closed ball of radius $\delta$ centered at $\vec{z}$ will be denoted $B(\vec{z},\delta)$. Namely, $B(\vec{z},\delta)=\{\vec{y}:\|\vec{y}-\vec{z}\|\leq \delta\}$.

Before we proceed to the proof of Lemma \ref{lemma_asymp}, we need to define a special representation for Weyl instruments.
\begin{defin}
Let $\Omega$ be a Weyl instrument over $\Q$, and for some $n \in \NN$ let $R_1,...,R_n\in\Q$ be linearly independent over the integers, i.e., if there exist integers $\{z_j\}\subset \Z$ such that
\begin{equation}
\sum_j z_jR_j=0,
\end{equation}
then $z_1=...=z_n=0$.
Then, if for some $m\in\NN$ and some $(2m+1)^n\times (2m+1)^n$ positive semidefinite matrices $(Z_a)_a$ it holds that
\begin{equation}
\Omega_a(X)=\sum_{\vec{j},\vec{k}\in \{-m,...,m\}^n}Z_a(\vec{j},\vec{k})e^{i\vec{j}^T\cdot\vec{R}} Xe^{-i\vec{k}^T\cdot\vec{R}},
\end{equation}
for $a=1,...,A$, then we say that $(Z,R)$, with $Z=(Z_1,...,Z_A)$ and $R=(R_1,...,R_n)$, is a \emph{lattice representation} of $\Omega$.
\end{defin}
\begin{remark}
Any Weyl instrument over a finite subspace of quadratures $\Q$ admits a lattice representation. Indeed, given a standard representation (\ref{SOS_intr}) of $\Omega$ in terms of the quadratures $Q_1,...,Q_s \in \Q$, it is enough to find a subset of indices $J\subset\{1,...,s\}$ such that $\{Q_j:j\in J\}$ are linearly independent over the integers and, for any $j\not\in J$, $Q_j$ can be expressed as a linear combination of $\{Q_j:j\in J\}$ with integer coefficients.
\end{remark}

In lattice representation, the normalization condition (\ref{znorm}) reads:
\begin{equation}
1=\sum_a\Omega_a(1)=\sum_{\vec{j},\vec{k}}\bar{Z}(\vec{j},\vec{k})e^{i(\vec{j}-\vec{k})^T\cdot\vec{R}}e^{i\frac{\vec{j}^T\cdot\sigma\vec{k}}{2}},
\label{normal_instr2}
\end{equation}
where $\bar{Z}:=\sum_aZ_a$ and $\sigma$ denotes the symplectic matrix of $R$, i.e., $[R_j, R_k]=i\sigma_{jk}$. Due to the linear independence of $R_1,...,R_n$ over the integers, it follows that the exponents $(\vec{j}^T \cdot \vec{R})_{\vec{j}}$ are mutually different. As a consequence, the elements $(e^{i \vec{j}^T \cdot \vec{R}})_{\vec{j}}$ are linearly independent (over $\C$) and condition \eqref{normal_instr2} is equivalent to:
\begin{align}
&1=\sum_{\vec{j}}\overline{Z}(\vec{j},\vec{j}),\nonumber\\
&0=\sum_{\vec{j},\vec{k} \in\Z^n: \vec{j}-\vec{k} = \vec{d}, \|\vec{j}\|, \|\vec{k}\|\leq m}\overline{Z}(\vec{j},\vec{k})e^{-\frac{i}{4}(\vec{j}+\vec{k})^T\cdot\sigma (\vec{j}-\vec{k})},\quad \forall \vec{d}\in\Z^n,\vec{d}\neq 0.
\label{sum_one_primes}
\end{align}
The above equalities are equivalent to
\begin{equation}
    \delta_{\vec{j}\vec{k}} =\sum_{\vec{c}\in\Z^n:\|\vec{j}-\vec{c}\|, \|\vec{k}-\vec{c}\|\leq m}\overline{Z}(\vec{j}-\vec{c},\vec{k}-\vec{c})e^{\frac{i}{2}\vec{c}^T\cdot\sigma (\vec{j}-\vec{k})},\quad \forall \vec{j},\vec{k}\in\Z^n
    \label{sum_lemma}
\end{equation}
where the global phase factor $e^{-\frac{i}{4}(\vec{j}+\vec{k})^T \cdot \sigma (\vec{j}-\vec{k})}$ has been factored out.

With this notation in place, we are ready to prove the Lemma.
\begin{proof}[Proof of Lemma \ref{lemma_asymp}]
Let $(Z,R)$ be a lattice representation of $\Omega$, with $Z$ a $(2m+1)^n \times (2m+1)^n$ positive semidefinite matrix and $R=(R_1,...,R_n)$. Fix $N\in\NN$, define the (causal) Weyl channel
\begin{equation}
\Omega'(X):=\frac{1}{(2N+1)^n}\sum_{\|\vec{c}\| \leq N} e^{i\vec{c}^T\cdot \vec{R}}Xe^{-i\vec{c}^T\cdot \vec{R}}.
\end{equation}
With this definition, we have that:
\begin{align}
&\Omega\circ \Omega'(X)=\frac{1}{(2N+1)^n}\sum_{\|\vec{j}\|,\|\vec{k}\|\leq m}\sum_{\|\vec{c}\| \leq N}Z(\vec{j},\vec{k})e^{i\vec{j}^T\cdot\vec{R}}e^{i\vec{c}^T\cdot\vec{R}}Xe^{-i\vec{c}^T\cdot\vec{R}}e^{-i\vec{k}^T\cdot\vec{R}}\nonumber\\
&=\frac{1}{(2N+1)^n}\sum_{\|\vec{j}\|,\|\vec{k}\|\leq m}\sum_{\|\vec{c}\| \leq N}Z(\vec{j},\vec{k})e^{i(\vec{j}+\vec{c})^T\cdot\vec{R}}Xe^{-i(\vec{k}+\vec{c})^T\cdot\vec{R}}e^{-\frac{i(\vec{j}-\vec{k})^T\sigma\vec{c}}{2}}.
\end{align}
Changing our summation variables to $\vec{j}' := \vec{j}+\vec{c}$ and $\vec{k}':=\vec{k}+\vec{c}$, the above expression equals
\begin{align}
&\sum_{\|\vec{j}'\|,\|\vec{k}'\| \leq N+m}\frac{1}{(2N+1)^n}\sum_{\substack{\vec{c}:\|\vec{c}\|\leq N,\\\|\vec{j}'-\vec{c}\|, \|\vec{k}'-\vec{c}\|\leq m}}Z(\vec{j}'-\vec{c}, \vec{k}'-\vec{c})e^{i(\vec{j}')^T\cdot\vec{R}}Xe^{-i(\vec{k}')^T\cdot\vec{R}}e^{-\frac{i}{2}(\vec{j}'-\vec{k}')^T \sigma \vec{c}}\nonumber\\
&=\sum_{\|\vec{j}\|,\|\vec{k}\| \leq N+m} \tilde{Z}(\vec{j},\vec{k}) \, e^{i\vec{j}^T\cdot\vec{R}}Xe^{-i\vec{k}^T\cdot\vec{R}},
\label{comp_int_expr}
\end{align}
with 
\begin{equation}
\tilde{Z}(\vec{j},\vec{k}):= \frac{1}{(2N+1)^n}\sum_{\substack{\vec{c}:\|\vec{c}\|\leq N,\\\|\vec{j}-\vec{c}\|, \|\vec{k}-\vec{c}\|\leq m}}Z(\vec{j}-\vec{c}, \vec{k}-\vec{c}) e^{-\frac{i}{2}(\vec{j}-\vec{k})^T \sigma \vec{c}},
\end{equation}
keeping the dependence on $N$ and $m$ implicit. 

We note that $\tilde{Z}(\vec{j},\vec{k})$ matches the r.h.s. of \eqref{sum_lemma} apart from the prefactor and the restriction, $\|\vec{c}\| \leq N$. Consequently, for all pairs of unequal vectors $(\vec{j},\vec{k})$ for which
\begin{equation}
\|\vec{j}-\vec{c}\|, \|\vec{k}-\vec{c}\|\leq m
\end{equation}
implies that $\|\vec{c}\|\leq N$, it holds that $\tilde{Z}(\vec{j},\vec{k})=0$. The set of such vectors corresponds to
\begin{equation}
A:=\{(\vec{j},\vec{k}): B(\vec{j},m)\cap B(\vec{k},m)\subset B(\vec{0}, N), ~ \vec{j}\neq \vec{k}\}.    
\end{equation}
In particular, any pair of unequal vectors $(\vec{j},\vec{k})$ such that either $\vec{j}\in B(\vec{0}, N-m)$ or $\vec{k}\in B(\vec{0}, N-m)$ will comply. Let us thus define the `bulk' and 'boundary' sets of vectors 
\bea
    V &:=& B(\vec{0}, N-m)\cap \Z^n, \\
    B &:=& B(\vec{0}, N+m)\cap \Z^n \setminus V,
\eea
and note that 
\begin{equation}
\tilde{Z}(\vec{j},\vec{k})=\frac{1}{(2N+1)^n} \delta_{\vec{j}\vec{k}}, \quad \text{if }\vec{j} \in V \vee \vec{k}\in V.
\end{equation}

Therefore,
\begin{equation}
\tilde{Z}=\tilde{Z}_V\oplus \tilde{Z}_B,
\end{equation}
where $\tilde{Z}_V$ is a diagonal matrix. Note also that
\begin{equation}
    \tr \tilde{Z}_V = \frac{|V|}{(2N+1)^n} = (1- \frac{2m}{2N+1})^n = 1 - \frac{2mn}{2N+1} + O(N^{-2})
\end{equation}
so that
\begin{equation}
    \tr \tilde{Z}_B = 1 - \tr \tilde{Z}_V = O(\tfrac{1}{N}).
\end{equation}
That is good news, because $\tilde{Z}_B$ is the only part of $\tilde{Z}$ that is not causal. We next construct another positive semidefinite matrix $\hat{Z}_B$, with trace $\tr(\hat{Z}_B)=O\left(\frac{1}{N}\right)$, such that $\tilde{Z}_B+\hat{Z}_B$ is diagonal. The channel represented by the matrix
\begin{align}
&\frac{1}{\tr(\tilde{Z}_V)+\tr(\tilde{Z}_B)+\tr(\hat{Z}_B)}\left(\tilde{Z}_V\oplus (\tilde{Z}_B+\hat{Z}_B)\right)=\nonumber\\
&\frac{1}{1+\tr(\hat{Z}_B)}\left(\tilde{Z}_V\oplus (\tilde{Z}_B+\hat{Z}_B)\right)
\end{align}
is therefore diagonal, and can be expressed as
\begin{align}
\frac{1}{1+\tr(\hat{Z}_B)}\Omega\circ\Omega'+\frac{\tr(\hat{Z}_B)}{1+\tr(\hat{Z}_B)}\Omega'',
\end{align}
where $\Omega''$ is the (possibly non-diagonal) Weyl channel defined by the matrix $\frac{0 \oplus \hat{Z}_B}{\tr(\hat{Z}_B)}$. As $N$ grows, the term $\tr(\hat{Z}_B)$ tends to zero, recovering the statement of the lemma.

To build $\hat{Z}_B$, first note that the condition $\|\vec{j}-\vec{c}\|, \|\vec{k}-\vec{c}\|\leq m$ implies
\begin{equation}
\|\vec{j}-\vec{k}\|\leq 2m.
\end{equation}
Thus, it holds that $\tilde{Z}(\vec{j},\vec{k}) = 0$ for $\|\vec{j}-\vec{k}\|> 2m$ and, accordingly,
\begin{equation}
    \tilde{Z}_B(\vec{j},\vec{k})=0 \text{ for } \vec{j}, \vec{k}\in B, \|\vec{j}-\vec{k}\|> 2m.
\end{equation}
Namely, the structure of correlations in $\tilde{Z}_B$ is local: just near neighbours have nonzero correlations.

We partition the boundary set $B$ according to $B = \cup_{c=1}^{\alpha} B_c$ into $\alpha:=(2m+1)^n$ sets, where we defined $B_c = B \cap (\vec{c} + (2m+1) \Z^n)$ for $c \in \{ 1, 2,\ldots ,\alpha\}$ and the vector $\vec{c}=(c_1,\ldots,c_n)^t$ is defined via $c = \sum_{j=1}^{n} c_j (2m+1)^{j-1}$. Note that there are no correlations within each of these sets, i.e., $\tilde{Z}_B(\vec{j},\vec{k}) = 0$ if $\vec{j},\vec{k} \in B_c$ and $\vec{j}\neq \vec{k}$. Now, let $\Lambda(r_1,...,r_{\alpha})$ be the diagonal matrix such that
\begin{align}
\Lambda_{\vec{j},\vec{j}}(r_1,...,r_{\alpha})=r_c,\mbox{ if }\vec{j}\in B_c.
\end{align}
Then we define the positive semidefinite matrices
\bea
\check{Z}_B &:=& \sum_{\vec{r}\in \{-1,1\}^\alpha}\Lambda(r_1,...,r_{\alpha})\tilde{Z}_B\Lambda(r_1,...,r_{\alpha})^\ast, \\
\hat{Z}_B &:=& \sum_{\vec{r}\in \{-1,1\}^\alpha\setminus \{(1,1,...,1)\}}\Lambda(r_1,...,r_{\alpha})\tilde{Z}_B\Lambda(r_1,...,r_{\alpha})^\ast,
\eea
and note that
\begin{equation}
    \check{Z}_B = \hat{Z}_B + \tilde{Z}_B,
\end{equation}
is a diagonal matrix. Indeed, let $\vec{j}\in B_c,\vec{k}\in B_d$, with $\vec{j}\not=\vec{k}$. Then,
\begin{equation}
\check{Z}_B(\vec{j},\vec{k})=\sum_{\vec{r}\in \{-1,1\}^\alpha}r_cr_d\tilde{Z}_B(\vec{j},\vec{k})
\end{equation}
If $c=d$, we have already noted that $\tilde{Z}_B(\vec{j},\vec{k})=0$. If $c\not=d$, then the sum over values of $r_c, r_d$ will be zero. Either way, $\check{Z}_B(\vec{j},\vec{k})=0$. We also check that 
\begin{equation}
\tr(\hat{Z}_B)=(2^\alpha-1)\tr(\tilde{Z}_B)=O(\tfrac{1}{N}).
\end{equation}
It remains to check that
\begin{equation}
\Omega''(X):=\frac{1}{\tr(\hat{Z}_B)}\sum_{\vec{j}, \vec{k}}\hat{Z}_B(\vec{j}, \vec{k})e^{i\vec{j}^T\cdot\vec{R}}Xe^{-i\vec{k}^T\cdot\vec{R}}
\end{equation}
satisfies the channel normalization condition $\Omega''(1)=1$. This follows from the fact that $(\tr(\hat{Z}_B))^{-1} \hat{Z}_B$ has the correct normalization and that $\hat{Z}_B = \tilde{Z}_B - \check{Z}_B$ decomposes into representing matrices of maps that are proportional to a channel.
\end{proof}

\section{Causality: two useful lemmas}\label{app:lemmas_causal}

In this section we prove two lemmas, formulated in Section \ref{sec:causality}, that are invoked throughout this article in the context of the PSNI condition \eqref{PSNI_def}. The first one, Lemma \ref{lemma:psni_simpler}, allows simplifying the PSNI condition, by restricting to pairs of regions $\RR_\pm \subset \OO^\pm$ with $\overline{\RR}_+ \subset D^+(\RR_-)$ instead of merely $\overline{\RR}_+ \subset D(\RR_-)$. The second one, Lemma \ref{lemma:weyllocality}, establishes that the Kraus operators of any Weyl instrument local to a compact $\OO\subset\M$ are localizable in any region containing $\OO$.

Lemma \ref{lemma:psni_simpler} follows from the next two propositions:
\begin{prop}
\label{prop:D_plus}
Let $\RR\subset \M$ be a region. Then, $D^+(\RR)$ is also a region.
\end{prop}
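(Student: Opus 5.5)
We need to show that $D^+(\RR)$ is nonempty, open, causally convex, and has finitely many connected components, given that $\RR$ is a region in a globally hyperbolic $\M$. Nonemptiness is immediate, since $\RR\subset D^+(\RR)$: any past-inextendible causal curve starting in $x\in\RR$ meets $\RR$ at $x$ itself.

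For causal convexity, let $\gamma$ be a causal curve from $x\in D^+(\RR)$ to $y\in D^+(\RR)$, and let $z$ lie on $\gamma$. We must show that every past-inextendible causal curve $\mu$ from $z$ meets $\RR$. Concatenating the segment of $\gamma$ from $y$ to $z$ with $\mu$ gives a past-inextendible causal curve from $y$, which therefore meets $\RR$ at some point $w$. If $w\in\mu$ we are done. Otherwise $w$ lies on $\gamma$ between $z$ and $y$, so $w\in\RR$ and $w\in J^+(z)$. Now concatenate the segment of $\gamma$ from $x$ to $z$ with $\mu$; this is a past-inextendible causal curve from $x$ (or, if $x$ lies to the future of $z$, one takes the appropriate segment) and so it meets $\RR$. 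If it meets $\RR$ on $\mu$ we are done. Otherwise it meets $\RR$ at some $v$ with $z\in J^+(v)$. Then $v\le z\le w$ along a causal curve with both endpoints in $\RR$, and causal convexity of $\RR$ gives $z\in\RR$, which is again done. I will write this as a short case analysis on the orientation of $\gamma$.

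Openness is the main obstacle, since $D^+$ of an open set need not be open in general. Here I would use that $\RR$ is open and causally convex. I first claim $D^+(\RR)\setminus\RR\subset \Int{D^+(\RR)}$, and will prove it via $D^+(\RR)=\RR\cup \operatorname{int}D^+(\RR)$. Take $x\in D^+(\RR)\setminus\RR$. By global hyperbolicity, $J^-(x)\cap J^+(\RR)$-type compactness arguments show that the set $K$ of points reached by past causal curves from $x$ before first hitting $\RR$ is compact (it lies in $J^-(x)\cap\overline{J^+(\RR)}$, using a Cauchy time function). I would then argue by contradiction: if $x_n\to x$ with $x_n\notin D^+(\RR)$, pick past-inextendible causal curves $\mu_n$ from $x_n$ avoiding $\RR$, and by the limit-curve theorem (available in globally hyperbolic spacetimes, e.g.\ via \cite{Min19}) extract a past-inextendible limit causal curve $\mu$ from $x$. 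That curve meets the open set $\RR$ at some point $p$, and openness of $\RR$ together with uniform convergence on compact parameter intervals forces $\mu_n$ to meet $\RR$ for large $n$, a contradiction. This argument needs care in showing the limit curve is inextendible and that convergence is uniform near $p$; that is the technical heart, and I would follow the standard proof that $\operatorname{int}D(S)$ is globally hyperbolic.

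Finally, for finitely many components: each component of $D^+(\RR)$ contains a point of $\RR$. Indeed, from any $x\in D^+(\RR)$ a past-directed timelike curve reaches $\RR$, and by causal convexity (shown above) its segment lies in $D^+(\RR)$, so $x$ is path-connected within $D^+(\RR)$ to a component of $\RR$. Hence the number of components of $D^+(\RR)$ is at most that of $\RR$, which is finite.
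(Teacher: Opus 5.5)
Your proof is correct, and its core, the causal-convexity argument, is the paper's. Like you, the paper follows $\gamma$ from its future endpoint. If the first hit of $\RR$ occurs at or beyond the intermediate point, concatenation puts that point in $D^+(\RR)$. Otherwise the point is trapped between two points of $\RR$, and causal convexity of $\RR$ finishes.

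One cosmetic fix: once $y$ is the future endpoint and the hit $w$ lies strictly between $z$ and $y$, the other endpoint $x$ lies in $J^-(z)$. So ``the segment of $\gamma$ from $x$ to $z$ followed by $\mu$'' is not a past-directed curve. Instead, take any past-inextendible causal curve from $x$; it meets $\RR$ at some $v\in J^-(x)\subset J^-(z)$, and you conclude as you do.

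The real difference is scope. The paper declares openness ``clear'' and says nothing about nonemptiness or the number of components, whereas you prove all of them.

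Your limit-curve argument is the right justification for openness, and it needs only the limit-curve theorem, which holds in any strongly causal, in particular globally hyperbolic, spacetime. The compactness of the set $K$ plays no role and can be dropped. Contrary to your remark, the same argument shows that $D^+$ of any \emph{open} set is open. The familiar failure of openness concerns $D^+$ of achronal sets.

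Your component count is correct: by causal convexity, each point of $D^+(\RR)$ is joined inside $D^+(\RR)$ to a point of $\RR$ along a past-directed timelike curve, so every component of $D^+(\RR)$ contains a component of $\RR$. This fills a real omission in the paper's proof, since its definition of region requires finitely many components.
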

\begin{proof}
$D^+(\RR)$ is clearly open, so we just need to prove that it is causally convex. Let $x_0,x_1\in D^+(\RR)$, and let $\gamma:\R\to\M$ be a causal, past-directed curve with $\gamma(0)=x_0, \gamma(1)=x_1$. We next prove that $y=\gamma(1/2)\in D^+(\RR)$. Since $x_0,x_1\in D^+(\RR)$, there exist $t_0\geq 0$, $t_1\geq 1$ such that $\gamma(t_0), \gamma(t_1)\in \RR$. Suppose that $t_0\geq \frac{1}{2}$. In that case, starting on $y$, any past-inextendible curve $\gamma'(t)$ with $\gamma'(1/2)=y$ will hit $\RR$, since the curve $\{\gamma(t):t\in [0,\frac{1}{2}]\}\cup \{\gamma'(t):t\geq\frac{1}{2}\}$ is a past-directed curve starting from $x_0$ that has not hit $\RR$ by $t=\frac{1}{2}$. This implies that $y\in D^+(\RR)$. Conversely, suppose that $t_0< \frac{1}{2}$. Then, $y$ is on a time-like curve that connects $\gamma(t_0)\in\RR$ with $\gamma(t_1)\in\RR$. Since $\RR$ is a region, this implies that $y\in \RR\subset D^+(\RR)$.
\end{proof}

\begin{prop}\label{proposition:psniplus}
Let $\OO$ be a compact subset of a globally hyperbolic spacetime $\M$ and consider arbitrary regions $\RR_\pm \subset \OO^\pm$ with $\overline{\RR}_+ \subset D(\RR_-)$. Then, there exists a region $\T_+ \subset \OO^+$ with $\overline{\T}_+ \subset D^+(\RR_-)$ such that $\RR_+ \subset D(\T_+)$.
\end{prop}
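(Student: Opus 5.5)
Given $\RR_\pm\subset\OO^\pm$ with $\overline{\RR_+}\subset D(\RR_-)$, I will take $\T_+$ to be the portion of the Cauchy development of $\RR_+$ lying to the future of $\RR_-$, cut off by a Cauchy surface. Concretely, I fix a Cauchy temporal function $t$ on $\M$ with foliation $(\Sigma_t)_t$. The compact set $K:=\overline{\RR_+}$ has the property that every inextendible causal curve through a point of $K$ meets $\RR_-$. Points of $K\cap D^+(\RR_-)$ are already fine; the difficulty is the points of $K\cap D^-(\RR_-)$, which lie to the past of $\RR_-$.

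The idea is to push $\RR_+$ forward along its own domain of dependence. I set $\T_+:=D(\RR_+)\cap \{t>\tau\}\cap W$, possibly after taking a causal hull. Here $\tau$ is chosen so large that every point of $D(\RR_+)\cap\{t>\tau\}$ that is causally related to $K$ lies in $D^+(\RR_-)$. The set $W$ is a small open neighbourhood, chosen so that $\T_+$ stays inside $\OO^+$ and its closure is compact.

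The key steps run in the following order.

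First, I show $D(\RR_+)\subset\OO^+$. Indeed, $\OO^+=\M\setminus\overline{J^-(\OO)}$, and a point $x\in D(\RR_+)$ lying in $\overline{J^-(\OO)}$ would force some causal curve from $x$ to $\OO$ to be extendible through $\RR_+$. That would put a point of $\RR_+$ in $J^-(\OO)$ (up to closures), which is a contradiction. Closures require care, so I would work with a slightly smaller $\RR_+$ and rely on compactness.

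Second, I show $\RR_+\subset D(\T_+)$. It suffices that $\T_+$ contains a Cauchy surface of the globally hyperbolic spacetime $D(\RR_+)$. Since $D(\RR_+)$ is globally hyperbolic, a slice $\Sigma\cap D(\RR_+)$ of it is a Cauchy surface there, and every point of $\RR_+$ lies in its Cauchy development. I then thicken this slice into the open set $\T_+$.

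Third, I show $\overline{\T_+}\subset D^+(\RR_-)$. A point $y\in\overline{\T_+}$ lies in $\overline{D(\RR_+)}\subset \overline{J(K)}$. Any past-inextendible causal curve from $y$ either enters $K$, and then meets $\RR_-$ because $K\subset D(\RR_-)$, or it does not.

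Making the second case of that third step rigorous is the main obstacle. If $y$ lies in $D^-(\RR_-)$ rather than $D^+(\RR_-)$, its past curves need not reach $\RR_-$. My remedy is to choose the slice $\Sigma$ of $D(\RR_+)$ to lie entirely to the future of the compact set $\overline{\RR_-}\cap \overline{J(K)}$, which is compact by global hyperbolicity. This should be possible provided $D(\RR_+)$ extends sufficiently far to the future, and it does so exactly because $\overline{\RR_+}\subset D(\RR_-)$ and $\RR_-\subset\OO^-$.

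Concretely, I would prove the key claim: for $y\in D(\RR_+)$ with $t(y)>\max t|_{\overline{\RR_-}\cap J(K)}$, every past-inextendible causal curve from $y$ crosses $\Sigma_{t}$ at some time within $D(\RR_-)$, and hence meets $\RR_-$. The supporting step, which is the one I would attack first, is a compactness argument using limit curves. It shows that points of $D(\RR_-)$ lying to the future of all of $\RR_-$ necessarily belong to $D^+(\RR_-)$. Once this is in place, the stated inclusion $\overline{\T_+}\subset D^+(\RR_-)$ follows by choosing $\T_+$ with compact closure inside this future part.
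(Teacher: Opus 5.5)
Your construction has a genuine gap: it confines $\T_+$ to $D(\RR_+)$, but the hypotheses do not force $D(\RR_+)$ to reach $D^+(\RR_-)$ at all. You claim that $D(\RR_+)$ ``extends sufficiently far to the future exactly because $\overline{\RR_+}\subset D(\RR_-)$ and $\RR_-\subset\OO^-$''. This is false, because $D(\RR_+)$ is determined by $\RR_+$ alone.

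Concretely, in $1+1$ Minkowski space take
\[
\OO=\{|t|+|x|\le 1\},\quad \RR_+=\{|t|+|x-50|<1\},\quad \RR_-=\{2<t<3\}\cap D(\{t=\tfrac52,\ |x-50|<20\}).
\]
All hypotheses hold. Every future-inextendible causal curve from $\overline{\RR_+}$ crosses $\{t=\tfrac52\}$ inside $\{|x-50|<20\}$, hence meets $\RR_-$, so $\overline{\RR_+}\subset D^-(\RR_-)$. Yet $D(\RR_+)=\RR_+\subset\{|t|<1\}$, while $D^+(\RR_-)\subset\{t>2\}$. Your $\T_+$ is therefore empty for every choice of $\tau$ and $W$.

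This is precisely the case $K\cap D^-(\RR_-)\neq\emptyset$ that you identified as the difficulty, and your key claim does not resolve it. That claim is in fact trivial and needs no limit curves: a point of $D^-(\RR_-)$ lies in $J^-(\RR_-)$, so a point later than all of $\RR_-$ cannot belong to it. The real problem is that $D(\RR_+)$ need contain no such point.

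Your first step is also false as stated. Take $\RR_+=\{t>5\}\cap\{|x|+|t-10|<100\}$ and $\RR_-=\{-3<t<-2\}\cap D(\{t=-\tfrac52,\ |x|<1000\})$. These satisfy the hypotheses, but $(0,0)\in D^-(\RR_+)\cap\OO$. This error is harmless once you restrict to $\{t>\tau\}$ with $\tau>\max_{\OO}t$.

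The missing idea is to push $\RR_+$ forward using the domain of dependence of $\RR_-$, not that of $\RR_+$, and to let $\T_+$ leave $D(\RR_+)$. The paper's construction runs as follows.

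First shrink $\RR_-$ to a region $\RR'_-$ with $\overline{\RR'_-}\subset\RR_-$ and still $\RR_+\subset D(\RR'_-)$. Split $\RR_{+\pm}:=\RR_+\cap D^\pm(\RR'_-)$ and set $\T_{+\pm}:=J^+(\RR_{+\pm})\cap D^+(\RR'_-)$. Then:
\begin{itemize}
\item $\T_{+\pm}\subset J^+(\RR_+)\subset\OO^+$;
\item $\overline{\T_{+\pm}}\subset D^+(\RR_-)$, thanks to the shrinking;
\item $\RR_{++}\subset\T_{++}$ trivially;
\item $\RR_{+-}\subset D^-(\T_{+-})$, because every future-inextendible causal curve from a point of $\RR_{+-}$ meets $\RR'_-$, and the meeting point lies in $J^+(\RR_{+-})\cap D^+(\RR'_-)=\T_{+-}$.
\end{itemize}
Finally take $\T_+:=\chull(\T_{++}\cup\T_{+-})$ and use that $\OO^+$ and $D^+(\RR_-)$ are causally convex (Proposition~\ref{prop:D_plus}).

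So the inclusion $\RR_+\subset D(\T_+)$ comes from the domain-of-dependence property of $\RR'_-$, not from a Cauchy slice of $D(\RR_+)$. In the example above, $\T_+$ is the part of the causal future of $\RR_+$ lying in $D^+(\RR'_-)$ around $2<t<3$, which is exactly where any subset of $D(\RR_+)$ can never go.
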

\def\RRp{\mathcal{T}}
\begin{proof}
    Since $\overline{\RR}_+ \subset D(\RR_-)$ it follows that there is $\RR'_- \subset \OO^-$ with $\overline{\RR'_-} \subset \RR_-$ and $\RR_+ \subset D(\RR'_-)$. Then, let $\RR_{+\pm} = \RR_+ \cap D^\pm(\RR'_-)$ and $\RRp_{+\pm} = J^+(\RR_{+\pm}) \cap D^+(\RR'_-)$. Clearly, $\RRp_{+\pm} \subset J^+(\RR_{+\pm}) \subset J^+(\RR_+) \subset \OO^+$ and $\RRp_{+\pm} \subset D^+(\RR'_-)$, thus $\overline{\RRp_{+\pm}} \subset D^+(\RR_-)$. It is also clear that $\RR_{++} \subset \RRp_{++} \subset D(\RRp_{++})$. Now, for $\RR_{+-}$, if $\gamma$ is a causal future-inextendible curve starting from a point in $\RR_{+-}$ then it will intersect $\RR'_-$ by definition and the intersection point will lie in $\RRp_{+-}$. Thus, it also holds that $\RR_{+-} \subset D(\RRp_{+-})$. We next define $\tilde{\T}_+:=\RRp_{+-} \cup \RRp_{++}$, $\T_+ := \chull \left(\tilde{\T}_+\right)$. Since $\OO^+, D^+(\RR_-)$ are regions (the last one, by Proposition \ref{prop:D_plus}), the relations $\tilde{\T}_+\subset \OO^+$ and $\overline{\tilde{\T}_+} \subset D^+(\RR_-)$ imply that $\T_+\subset \OO^+$ and $\overline{\T_+} \subset D^+(\RR_-)$. Finally, we have that 
    \begin{equation}
    \RR_+=\RR_{++}\cup \RR_{+-}\subset D(\RRp_{+-}) \cup D(\RRp_{++}) \subset D(\RRp_{+-} \cup \RRp_{++})\subset D(\T_+).
    \end{equation}
    The proposition is proven.
\end{proof}
\begin{proof}[Proof of Lemma \ref{lemma:psni_simpler}]
If $\Omega$ is a causal channel, then it trivially satisfies the conditions of the lemma. Conversely, suppose that $\Omega$ satisfies the conditions of the lemma, and let $\RR_\pm\subset\OO^\pm$, with $\overline{\RR_+}\subset D(\RR_-)$. By Proposition \ref{proposition:psniplus}, there exists a region $\T_+\subset \OO^+$ with $\overline{\T_+}\subset D^+(\RR_-)$, $\RR_+\subset D(\T_+)$. Then, on one hand we have that $\Omega(\A(\T_+)) \subset \A(\RR_-)$ by assumption. On the other hand, $\A(\RR_+) \subset \A(\T_+)$ by the timeslice condition and isotony, which implies $\Omega(\A(\RR_+))\subset \Omega(\A(\T_+))$. Hence, $\Omega(\A(\RR_+)) \subset \A(\RR_-)$. Since $\RR_{\pm}$ were arbitrary, $\Omega$ is causal.
\end{proof}

To prove Lemma \ref{lemma:weyllocality}, we start by showing that the Kraus operators of normal instruments local to a compact $\OO$ can be localized in any (connected) region containing $\OO$.
\begin{prop}\label{prop:normallocality}
Let $\Omega$ be a normal instrument, i.e., of the form\footnote{In case $\A$ is not just a $\star$-algebra but also a von-Neumann algebra, we may allow for sums over countably many items.} $\Omega_a(\bullet)=\sum_j K_a^j \bullet (K_a^j)^*$ with $K_a^j \in \A$. Then, $\Omega$ is local to a compact $\OO\subset \M$ iff $K_a^j \in \A(\OO^\perp)^\perp,\forall a,j$. In particular, if $K_a^j \in \A(\OO),\forall a,j$, then $\Omega$ is local to $\OO$. If $\Omega$ is local to $\OO$ and $\A$ has the Haag property, then $K_a^j \in \A(\RR)$ for any connected region $\RR \supset \OO$.
\end{prop}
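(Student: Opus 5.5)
Let me write $\bar{\Omega}=\sum_a\Omega_a=\sum_{a,j}K_a^j\bullet(K_a^j)^*$; by definition, $\Omega$ is local to $\OO$ iff $\bar{\Omega}(X)=X$ for all $X\in\A(\OO^\perp)$. The plan is to treat the equivalence as a statement about fixed points of a unital completely positive map, in the spirit of the standard result that the fixed points of a normal unital CP map commute with its Kraus operators on a $\star$-subalgebra. The ``if'' direction is immediate: when every $K_a^j$ commutes with $X\in\A(\OO^\perp)$, we get $\bar{\Omega}(X)=\sum K_a^j (K_a^j)^*X=\bar{\Omega}(1)X=X$ by normalization. Since $\A(\OO)$ commutes with $\A(\OO^\perp)$ by microcausality, the ``in particular'' clause then follows at once.

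For the ``only if'' direction, I will fix $X\in\A(\OO^\perp)$. Because $\A(\OO^\perp)$ is a $\star$-algebra, $X^*$ and $X^*X$ belong to it as well, so all three are fixed points of $\bar{\Omega}$. I will then compute
\begin{equation}
\sum_{a,j}[K_a^j,X]^*\,\cdot\,\text{(adjoint form)}:\quad \sum_{a,j}\bigl(X^*K_a^j-K_a^jX^*\bigr)\bigl(X^*K_a^j-K_a^jX^*\bigr)^*=\bar{\Omega}(1)X^*X-X^*\bar{\Omega}(X)-\bar{\Omega}(X^*)X+\bar{\Omega}(X^*X).
\end{equation}
Using the fixed-point property, the right-hand side equals $X^*X-X^*X-X^*X+X^*X=0$. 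Each summand on the left is a positive element (in a representation, or in the operator case, directly), so every summand vanishes. This yields $[K_a^j,X^*]=0$ for all $X$, and hence $K_a^j\in\A(\OO^\perp)'$. The step that needs care is the positivity argument at the abstract $\star$-algebra level: a sum of terms $yy^*$ equal to zero need not force each $y=0$ unless one passes to a faithful representation (for the Weyl algebra, which is simple, any representation is faithful). For countable sums in the von Neumann setting, I will instead use strong convergence and take $\langle\psi,\cdot\,\psi\rangle$ termwise.

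For the final clause, I will apply the Haag property stated in \Cref{subsec:freeqft}. Since $K_a^j$ commutes with $\A(\OO^\perp)$, I will pick a region $\T$ with $\OO\subset\T$ and $\overline{\T}\subset\RR$; for instance, a thin neighborhood obtained as the causal hull of a small open neighborhood of $\OO$, which is possible because $\OO$ is compact and $\RR$ is an open connected region containing it. Because $\T^\perp\subset\OO^\perp$, the element $K_a^j$ commutes with $\A(\T^\perp)$, and the Haag property then gives $K_a^j\in\A(\RR)$. The main obstacle I anticipate is this geometric step of fitting $\T$ between $\OO$ and $\RR$ so that it is a genuine region (causally convex, with finitely many components) and satisfies $\overline{\T}\subset\RR$. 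My first attempt will be to take the causal hull of a compact neighborhood of $\OO$ chosen small enough, using global hyperbolicity to ensure that this hull is compact and still contained in the causally convex set $\RR$.
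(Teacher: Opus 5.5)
Your proposal is correct and follows the paper's proof. The core is the same sum-of-squares identity built from the fact that $1$, $X$, $X^*$, $X^*X$ are fixed by $\bar{\Omega}$; your $zz^*$ with $z=[X^*,K_a^j]$ is exactly the paper's $y^*y$ with $y=K_n^*R-RK_n^*$. The converse goes through microcausality and the last clause through the Haag property, as in the paper. Your two extra precautions fill steps the paper leaves implicit: faithfulness when arguing positivity in the abstract $\star$-algebra, and inserting a genuine region $\T$ with $\OO\subset\T$, $\overline{\T}\subset\RR$ before invoking the Haag property, which is stated for regions while $\OO$ is only compact.
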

\begin{proof}
To shorten notation, let us write $n=(j,a)$ and $K_n := K_a^j$.

By microcausality it is clear that if $K_n \in \A(\OO^\perp)^\perp$ for all $n$, then $\Omega$ acts trivially on $\A(\OO^\perp)$ and is therefore local to $\OO$. Further, $K_n \in \A(\OO)$ is clearly sufficient for this conclusion since $\A(\OO)\subset \A(\OO^\perp)^\perp$ and, if we already know that $K_n \in \A(\OO^\perp)^\perp$, then $K_n \in \A(\RR)$ for any connected region $\RR \supset \OO$ follows by the Haag property. 

It remains to check $K_n \in \A(\OO^\perp)^\perp$ when assuming that $\Omega$ is local to $\OO$. Let $\RR$ be a region with $\RR\perp \OO$, and let $R \in \A(\RR)$ be arbitrary. It follows that
\begin{align}
    & \sum_n (K_n^\ast R - R K_n^\ast)^\ast (K_n^\ast R - R K_n^\ast) \nonumber \\
    &=\sum_n ( R^\ast K_n K_n^\ast R + K_n R^\ast R K_n^\ast - R^\ast K_n R K_n^\ast - K_n R^\ast K_n^\ast R) \nonumber \\
    &= 0
\end{align}
since $1$, $R$, $R^\ast$, and $R^\ast R$ are in $\A(\RR) \subset \A(\OO^\perp)$ implying that $\overline{\Omega} = \sum_n K_n \bullet K_n^\ast$ acts trivially on them. Thus, $K_n^\ast R - R K_n^\ast = 0$ or, equivalently,  $[K_n,R] = [K_n^\ast, R] = 0$ for all $j$ and all $R \in \A(\RR)$. Since $\RR\perp\OO$ was arbitrary, we conclude that $K_n$ and $K_n^\ast$ commute with all elements from $\A(\OO^\perp)$.
\end{proof}

\begin{proof}[Proof of Lemma \ref{lemma:weyllocality}]
If $\Omega$ is local to $\OO$, applying Proposition \ref{prop:normallocality} to $\Omega$, we have that any Kraus operator $K_a^j$ thereof must commute with $\A(\OO^\perp)$. In the setting of free scalar fields, as given in \Cref{subsec:freeqft}, this implies that $K_a^j\in \A(\RR)$ for any region $\RR\supset \OO$, connected or not.

Given a Weyl instrument $\Omega$, we can always choose its matrix representation $(Q,Z)$, with $Z=(Z_1,...,Z_A)$ to be such that, for all $j$, there exists $a$ such that $Z_a(j,j)\not=0$ (otherwise, we could simply remove the quadrature $Q_j$ from the matrix representation). In that case, Proposition \ref{prop:normallocality} implies that $[e^{iQ_j},e^{i\Phi(h)}]=0$, for all $h\in C_0^\infty(\OO^\perp)$. If $Q_j=\Phi(f_j)$, this implies that $\mbox{Supp}(f_j)\subset \OO$.

The lemma is proven.

\end{proof}

\section{Characterization of causal Weyl channels: proof of Theorem \ref{theo:causal}}
\label{app:causal_charact}
To prove the theorem, we need the following lemma.
\begin{lemma}
\label{lemma:causal}
Let $\OO\subset \M$ be a compact set, let $d\in C_0^\infty(\OO)$ and let $\RR_+\in \OO^+$ be a region. Then, there exists a region $\RR_{-}\subset \OO^-$, with $\overline{\RR_{+}}\subset D^+(\RR_{-})$, where $d$ is not localizable iff $J^-(\overline{\RR_{+}})\not\supset G_d$. 
\end{lemma}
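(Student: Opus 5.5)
The plan is to prove both directions through Lemma \ref{lemma:localization}: $d$ is localizable in $\RR_-$ iff some Cauchy surface $\Sigma$ satisfies $\supp Ed\cap\Sigma\subset\RR_-$. I will repeatedly use two support facts. First, if $\supp Ed\cap\Sigma\subset\RR_-$, then, since $Ed$ solves $K\phi=0$ and is determined by its Cauchy data on $\Sigma$, we get $\supp Ed\subset J(\RR_-)$. Second, by Remark \ref{remark:G_d}, $G_d\subset J^-(\OO)$, and $J^-(\OO)$ is disjoint from $\OO^+$.

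\emph{($\Leftarrow$) Construction of a non-localizing $\RR_-$.} Suppose $x\in G_d\setminus J^-(\overline{\RR_+})$. Since $J^-$ of a compact set is closed in a globally hyperbolic spacetime, I will choose:
\begin{itemize}
\item a compact neighborhood $\K$ of $\overline{\RR_+}$ inside $\OO^+$;
\item a compact neighborhood $B$ of $x$ inside $\Int{\supp Ed}$,
\end{itemize}
such that $B\cap J^-(\K)=\emptyset$. Then I set
\begin{equation*}
\RR_-:=\Int{J^-(\K)}\cap\OO^-\setminus J^-(B).
\end{equation*}
This set is open. It is causally convex as an intersection of a past set, a causally convex set, and the future set $\M\setminus J^-(B)$; if needed, I shrink it to have finitely many components.

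To check $\overline{\RR_+}\subset D^+(\RR_-)$, take $y\in\overline{\RR_+}$ and a past-inextendible causal curve from $y$. Its points just past $y$ lie in $\Int{J^-(\K)}$, because $y\in\Int\K$. They are not in $J^-(B)$: since $B\subset J^-(\OO)$, we have $J^-(B)\cap\OO^+=\emptyset$, and $J^-(B)$ is closed. Finally, by global hyperbolicity the curve cannot remain in the compact set $J^+(\OO)\cap J^-(y)$, so it eventually enters $\OO^-$. I have to make sure these three conditions hold simultaneously at some parameter value. This is the delicate bookkeeping here; I would handle it by choosing $\K$ large enough that the first exit from $J^+(\OO)$ still happens inside $\Int{J^-(\K)}$.

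Non-localizability then follows from the first support fact. We have $x\notin J^-(\RR_-)$, because $\RR_-\subset J^-(\K)$. We also have $x\notin J^+(\RR_-)$, because $\RR_-\cap J^-(x)=\emptyset$. Hence $x\notin J(\RR_-)$, whereas $x\in\supp Ed$. So no Cauchy surface $\Sigma$ can satisfy $\supp Ed\cap\Sigma\subset\RR_-$.

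\emph{($\Rightarrow$) Every admissible $\RR_-$ localizes $d$.} Assume $G_d\subset J^-(\overline{\RR_+})$, and let $\RR_-\subset\OO^-$ be any region with $\overline{\RR_+}\subset D^+(\RR_-)$. I will build a Cauchy surface $\Sigma$ in three pieces:
\begin{itemize}
\item inside $J^-(\overline{\RR_+})$, it runs through $\RR_-$;
\item outside $J^-(\overline{\RR_+})$, it runs through $\OO^-$;
\item it is chosen so that it avoids $\supp Ed$ on the second piece.
\end{itemize}
The starting point is a Cauchy surface $\Sigma_0$ lying entirely to the past of $\OO$, hence inside $\OO^-$. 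I then deform $\Sigma_0$ upward into $\RR_-$ over the compact set $J^-(\overline{\RR_+})\cap J^+(\Sigma_0)$. This is possible because every past-inextendible causal curve from $\overline{\RR_+}$ meets $\RR_-$, so I can take $\Sigma$ to be, locally, the past boundary of $J^+$ of a suitable achronal slice of $\RR_-$ glued to $\Sigma_0$.

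On $\Sigma\cap\OO^-$, the support of $Ed$ lies in $\overline{G_d}$. Here I need to pass from $\Int{\supp Ed}$ to $\supp Ed$, taking closures and using that $\supp Ed$ is regular closed. Then $\overline{G_d}\subset J^-(\overline{\RR_+})$, since the latter set is closed. That portion of $\Sigma$ lies in $\RR_-$, so $\supp Ed\cap\Sigma\subset\RR_-$, and $d$ is localizable in $\RR_-$ by Lemma \ref{lemma:localization}.

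I expect the main obstacle to be this gluing construction: producing a genuine Cauchy surface that lies in $\RR_-$ over all of $J^-(\overline{\RR_+})$ and still sits in $\OO^-$ elsewhere. The boundary points of $\supp Ed$ that are not in its interior need separate care. My first attempt would use the time function of the foliation $(\Sigma_t)_t$, taking the maximum of $t=t_0$ and a smoothed "first-entry time into $\RR_-$" function along the flow lines.
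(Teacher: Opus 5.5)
Your ($\Rightarrow$) direction is the paper's argument: a Cauchy surface $\Sigma\subset\OO^-$ with $J^-(\overline{\RR_+})\cap\Sigma\subset\RR_-$, combined with $\supp Ed\cap\OO^-\subset\overline{G_d}\subset J^-(\overline{\RR_+})$. The paper does no gluing. It simply takes a Cauchy surface $\Sigma\subset\OO^-$ with $\Sigma\cap\RR_-$ Cauchy for $\RR_-$, and uses that past-inextendible curves from $\overline{\RR_+}$ meet $\RR_-$. Note that a graph built from a maximum of time functions need not even be achronal.

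Your ($\Leftarrow$) direction takes a different route. The paper chooses a Cauchy surface $\Sigma\subset\OO^-$ through $x$ and thickens the compact set $J^-(\overline{\RR_+})\cap\Sigma$ into a region $\RR_-\subset\OO^-$ with $x\notin J(\overline{\RR_-})$. It then concludes by microcausality, using some $h$ supported near $x$, spacelike to $\RR_-$, with $\langle h,Ed\rangle\neq0$. Your explicit set $\RR_-=\Int{J^-(\K)}\cap\OO^-\setminus J^-(B)$, with the support argument, avoids choosing Cauchy surfaces, and your step $x\notin J(\RR_-)$ is correct. However, the verification of $\overline{\RR_+}\subset D^+(\RR_-)$, which you flagged, has a genuine gap, and your proposed fix does not close it.

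\emph{Why the step fails.} The only reason you give for points of $\gamma$ to avoid $J^-(B)$ is $J^-(B)\cap\OO^+=\emptyset$. This covers only points still in $\OO^+$. A past-directed curve from $y$ through the causal hull of $\OO$ enters $J^-(\OO)$, and so leaves $\OO^+$, well before it leaves $J^+(\OO)$. Where it finally enters $\OO^-$, nothing you wrote keeps it out of $J^-(B)$. Enlarging $\K$ addresses the wrong condition. $\Int{J^-(\K)}$ is a past set: if $q\in J^-(p)$ with $p$ in it, pick $p'\in I^+(p)\cap\Int{J^-(\K)}$, and then $I^-(p')\subset J^-(\K)$ is an open neighbourhood of $q$. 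It contains $y\in\Int{\K}$, so all of $J^-(y)$ already lies in it. Besides, $\K$ cannot be enlarged freely, since you need $B\cap J^-(\K)=\emptyset$.

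\emph{The missing idea.} Take $B\subset G_d$, not just $B\subset\Int{\supp Ed}$. Then $B\cap J^+(\OO)=\emptyset$, and therefore $J^-(B)\cap J^+(\OO)=\emptyset$.
\begin{itemize}
\item The parameters with $\gamma(t)\in J^+(\OO)$ form a closed initial segment $[0,T]$, since $J^+(\OO)$ is a closed future set.
\item This segment is bounded, because $\gamma$ cannot be imprisoned in the compact set $J^+(\OO)\cap J^-(y)$.
\item Then $\gamma(T)\notin J^-(B)$. As $J^-(B)$ is closed, $\gamma(t)\in\Int{J^-(\K)}\cap\OO^-\setminus J^-(B)=\RR_-$ for $t>T$ close to $T$.
\item If $\gamma$ never meets $J^+(\OO)$, then $y\in\OO^-$. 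Also $y\notin J^-(B)$, so $y\in\RR_-$ itself.
\end{itemize}
Incidentally, $\K\subset\OO^+$ is impossible if $\overline{\RR_+}$ touches $\partial J^-(\OO)$, but you do not need it. Since $G_d\subset\Int{J^-(\OO)}=I^-(\OO)$, a point of $\overline{\RR_+}\cap J^-(B)$ would lie in the open set $I^-(\OO)$, which misses $\RR_+$. With this step supplied, your construction goes through.
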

\begin{proof}
Suppose that $G_d\subset J^-(\overline{\RR_+})$ and $\RR_{-}\subset \OO^-$, with $\overline{\RR_+}\subset D^+(\RR_{-})$. We next prove that $d$ is localizable in $\RR_{-}$. Let $\Sigma\subset \OO^-$ be a Cauchy surface such that $\Sigma\cap \RR_{-}$ is a Cauchy surface for $\RR_{-}$. From $\overline{\RR_+}\subset D^+(\RR_{-})$ we have that $J^-(\overline{\RR_{+}})\cap \Sigma$ is contained in $\RR_{-}$. Indeed, let $x\in \overline{\RR_+}$. If $x\prec \Sigma$, then $\Sigma\cap J^-(x)=\emptyset$, so let us assume that $x$ is in the future of $\Sigma$. Let $\gamma(t)$ be any past-directed, past-inextendible, time-like curve with $\gamma(0)=x$, and let $\tau$ be such that $y:=\gamma(\tau)\in \Sigma$. From $\overline{\RR^+}\subset D^+(\RR-)$ we have that there exists $\tau'\geq 0$ such that $\gamma(\tau')\in \RR_-$. Thus, there exists a time-like curve connecting a point in $\RR_-$ with $y\in \Sigma$. Since $\Sigma$ is a Cauchy surface for $\RR_-$, this implies that $y\in \RR_-$. Hence, $\RR_-\supset J^-(\overline{\RR_{+}})\cap \Sigma$.

Now, $G_d\subset J^-(\overline{\RR_+})$ implies that $\overline{G_d}\subset J^-(\overline{\RR_+})$, and so $\mbox{Supp}(Ed)\cap\OO^-\subset J^-(\overline{\RR_+})$. Consequently, 
\begin{equation}
\mbox{Supp}(Ed)\cap \Sigma=\mbox{Supp}(Ed)\cap\OO^-\cap \Sigma\subset J^-(\overline{\RR_+})\cap\Sigma\subset \RR_-.    
\end{equation}
By Lemma \ref{lemma:localization}, it follows that $d$ is localizable in $\RR_{-}$.

Suppose, on the contrary, that $G_d\not\subset J^-(\overline{\RR_+})$. We next construct $\RR_{-}\subset \OO^-$, with $\overline{\RR_+}\subset D^+(\RR_{-})$, such that $d$ is not localizable in $\RR_{-}$. Let $x\in G_d$, $x\not\in J^-(\overline{\RR_{+}})$ and let $\Sigma\subset \OO^-$ be any Cauchy surface that contains $x$. Then it holds that $x\not\in \overline{S_+}:=J^-(\overline{\RR_{+}})\cap \Sigma$, and hence there exists a region $\RR_-\subset \OO^-$, with $\RR_-\supset \overline{S_+}$, such that $x\not\in J(\overline{\RR_-})$. Since $x\in G_d$ and $G_d$ is an open set, there exists a neighborhood $B$ of $x$ such that $B\subset G_d$, with $B\perp \RR_-$. Choose $h\in C_0^\infty(B)$ such that $\langle h, Ed\rangle\not=0$. Then we have that $[\Phi(h),\Phi(d)]\not= 0$. Since $B\perp \RR_{-}$, this implies that $e^{i\Phi(d)}\not\in \A(\RR_{-})$ by Einstein's causality.

\end{proof}

\begin{proof}[Proof of Theorem \ref{theo:causal}]
We now prove the first implication of the theorem. Suppose that $\Omega$ is causal. Given any two partitions $P_1,P_2$, let us write $P_1>P_2$ if $P_1$ is a refinement on $P_2$, and, for any finite set of partitions $\P$, define $\mbox{Ref}(\P)$ as the set of maximally refined partitions within $\P$. Now, let $d\in {\cal D}$, with $d\not=0$. Let $\{A_1,...,A_n\}\in \mbox{Ref}({\cal P}(d))$, and let $\RR_{+}\subset\OO^+$, with $J^-(\overline{\RR_{+}})\not\supset G_d$, be such that, for some $\{x_{jk}\}_{j>k}\subset \RR_{+}$, it holds that $(Es)(x_{jk})\not=(Et)(x_{jk})$, for all $t\in A_j,s\in A_k$. Note that the maximality of $\{A_1,...,A_n\}$ implies that
\begin{equation}
(Es)(x)=(Et)(x),\forall s,t\in A_j, x\in \RR_{+},
\end{equation}
for all $j$; otherwise we could further refine the partition by breaking $A_j$. From the existence of $\{x_{jk}\}$, it is straightforward to construct smooth functions of compact support $\{h_{jk}\}_{j>k}$ such that $\mbox{Supp}(h_{jk})\subset \RR_{+}$, $w^j_{jk}\not=w^k_{jk}$, for all $j>k$, where
\begin{equation}
w^l_{jk}:=E(h_{jk},s_l)=\langle h_{jk}, E s_l\rangle,
\end{equation}
for some representative $s_l\in A_l$ of the subset $A_l$. For real $\alpha=(\alpha_{jk})_{j>k}$, define the family of test functions:
\begin{equation}
h(\alpha):=\sum_{j>k}\alpha_{jk}h_{jk}.
\end{equation}

Now, by Lemma \ref{lemma:causal}, there exists a region $\RR_{-}\subset \OO^-$, with $\overline{\RR_{+}}\subset D^+(\RR_{-})$ and where $\Phi(d)$ is not localizable. The causality of $\Omega$ implies, however, that $\Omega(e^{i\Phi(h(\alpha))})$ must be localizable in $\RR_{-}$. It thus follows that
\begin{equation}
\Omega^{(d)}(e^{i\Phi(h(\alpha))})=0;
\label{omega_d}
\end{equation}
otherwise, there would be a term of the form $e^{-i\Phi(h(\alpha)+d)}$, not localizable in $\RR_{-}$, in the Weyl decomposition of $\Omega(e^{i\Phi(h(\alpha))})$. Condition (\ref{omega_d}) can be rewritten as:
\begin{equation}
\sum_{l=1}^n e^{-i\vec{\alpha}\cdot \vec{w}^l}\sum_{s\in A_l}c^d_s=0,
\end{equation}
where we wrote $\vec{\alpha}\cdot \vec{w}^l:=\sum_{j>k}\alpha_{jk}w_{jk}^l$. Note that $\vec{w}^l\not=\vec{w}^m$, for $l>m$, because $w^l_{lm}\not=w^m_{lm}$; the functions (of $\alpha$) $\{e^{-i\vec{\alpha}\cdot \vec{w}^l}\}_l$ are thus linearly independent. Since the relation above holds for arbitrary $\alpha$, it follows that
\begin{equation}
\sum_{s\in A_l}c^d_s=0,\forall l.
\end{equation}
We have just proven the result when the partition $\{A_1,...,A_n\}$ is an element of $\mbox{Ref}(\P)$. To prove it for general $P\in \P$, it is enough to find $P'\in \mbox{Ref}(\P)$ with $P'>P$. Then that eq. (\ref{eq:null_sum}) holds for $P$ is a consequence of eq. (\ref{eq:null_sum}) holding for $P'$.

Let us now prove the opposite implication of the Theorem. Let eq. (\ref{eq:null_sum}) hold, let $\RR_{\pm}\subset \OO^\pm$ be regions such that $\overline{\RR_{+}}\subset D^+(\RR_{-})$ and let $h\in C_0^\infty(\RR_{+})$. For $d\in {\cal D}$, suppose that $J^-(\overline{\RR_{+}})\supset G_d$. In that case, by Lemma \ref{lemma:causal} $d$ can be localized in $\RR_{-}$, and thus so can $\Omega^{(d)}(e^{i\Phi(h)})$. Suppose, on the contrary, that $J(\overline{\RR_{+}})\not\supset G_d$. Then there exists a partition $\{A_1,...,A_n\}\subset \mbox{Ref}({\cal P}(d))$ such that, for $l=1,...,n$, it holds that
\begin{equation}
Es(x)=E s_l(x),\forall x\in \RR_{+}, \forall s\in A_l,
\end{equation}
for some representative $s_l\in A_l$ of the subset $A_l$. From equation (\ref{expr_omega_d}), we therefore infer that
\begin{align}
\Omega^{(d)}\left(e^{i\Phi(h)}\right)=\sum_{l=1}e^{-i\langle h, E s_l\rangle}\sum_{s\in A_l}c_s^d.
\end{align}
The last sum equals zero by (\ref{eq:null_sum}).

In sum, for all $d\in {\cal D}$ it holds that $\Omega^{(d)}\left(e^{i\Phi(h)}\right)$ is localizable in $\RR_{-}$. So is $\Omega\left(e^{i\Phi(h)}\right)=\sum_{d\in{\cal D}} \Omega^{(d)}\left(e^{i\Phi(h)}\right)$. Since $\RR_{\pm}, h$ were arbitrary, it follows that  $\Omega$ is causal.

\end{proof}

\section{Composition of causal Weyl channels: proof of Theorem \ref{theorem:causal_network}}
\label{app:causal_networks}
To prove Theorem \ref{theorem:causal_network}, we first demonstrate a useful geometric lemma.
\begin{lemma}
\label{lemma:causal_ordering}
Given compact causally orderable subsets $\OO_1 \prec \ldots \prec \OO_n$ of a globally hyperbolic spacetime $\M$, denote $\OO = \cup_j \OO_j$. Then, given regions $\RR_{\pm} \subset \OO^\pm$ satisfying $\overline{\RR_{+}} \subset D^+(\RR_{-})$, there exist regions $\{\RR^j_\pm\}_{j=1,\ldots,n}$ such that:
\begin{enumerate}
    \item $\overline{\RR^j_+} \subset D^+(\RR^j_-),\forall j$,
    \item $\RR_\pm^j \subset \OO^\pm_j,\forall j$,
    \item $\RR^j_{-} = \RR^{j-1}_+,\forall j$ with $\RR^n_+ := \RR_+$ and $\RR^1_- := \RR_-$.
\end{enumerate}    
\end{lemma}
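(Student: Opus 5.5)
The plan is to build the chain $\RR_- = \RR^1_- , \RR^1_+ = \RR^2_-, \ldots, \RR^{n-1}_+ = \RR^n_-, \RR^n_+ = \RR_+$ by slicing the Cauchy development $D^+(\RR_-)$ with a family of Cauchy surfaces that separate the successive compacts. Causal orderability $\OO_j \prec \OO_{j+1}$ means $J^+(\OO_{j+1})\cap J^-(\OO_j)=\emptyset$. Using global hyperbolicity, I would first pick Cauchy surfaces $\Sigma_1,\ldots,\Sigma_{n-1}$, ordered from past to future, such that $\Sigma_j$ lies in $\OO_j^+\cap\OO_{j+1}^-$. More precisely, $\Sigma_j$ should avoid $J^-(\OO_j)$ from the future side and $J^+(\OO_{j+1})$ from the past side. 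Such a surface should exist because the compact sets $J^-(\OO_j)\cap J^+(\OO_{j+1})$-type obstructions are empty. Concretely, I would take $\Sigma_j$ to separate $J^-(\OO_1\cup\dots\cup\OO_j)$ from $J^+(\OO_{j+1}\cup\dots\cup\OO_n)$; the standard "Cauchy surface between a past set and a future set with empty intersection" result (e.g. Bernal–Sánchez, or the methods of \cite{Min19}) supplies it.

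With these surfaces fixed, I define the intermediate regions as pieces of $D^+(\RR_-)$ near the surfaces. Set
\[
\RR^{j}_+ := \RR^{j+1}_- := \Int{D^+(\RR_-)\cap I^-(\RR_+)\cap V_j},
\]
where $V_j$ is a thin open slab around $\Sigma_j$ lying inside $\OO_j^+\cap\OO_{j+1}^-$. Better, one can simply take $\RR^{j}_+ := \chull\bigl(D^+(\RR_-)\cap J^-(\RR_+)\cap V_j\bigr)$, shrunk slightly so that closures behave well. The intended picture is that every past-directed causal curve from $\overline{\RR_+}$ crosses each slab in order and then enters $\RR_-$.

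Next I would verify the three properties. Property 3 holds by construction. For property 2, I need $\RR^j_+\subset\OO_j^+$ and $\RR^j_-=\RR^{j-1}_+\subset\OO_j^-$. The slab $V_{j-1}$ lies in $\OO_{j-1}^+\cap\OO_j^-$, and $V_j\subset\OO_j^+$, so these follow directly. At the ends I use $\RR_\pm\subset\OO^\pm\subset\OO_n^\pm,\OO_1^\pm$, since $\OO^\pm\subset\OO_j^\pm$.

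Property 1 requires $\overline{\RR^{j}_+}\subset D^+(\RR^{j}_-)$. I would argue by curves: a past-inextendible causal curve starting at a point of $\overline{\RR^j_+}$ (in the slab $V_j$, or in $\overline{\RR_+}$ when $j=n$) must reach $\RR_-$ and must cross the earlier slab $V_{j-1}$ before doing so. The crossing point lies in $J^-(\RR_+)\cap D^+(\RR_-)\cap V_{j-1}$, hence in $\RR^{j-1}_+=\RR^j_-$.

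The main obstacle is openness and closure. Points on the closure boundary and curves that only graze the slab may not land in the open set. I would handle this as the paper does in Proposition \ref{proposition:psniplus}. First, pass to an $\RR'_-$ with $\overline{\RR'_-}\subset\RR_-$, and slightly enlarge $\RR_+$ to an open $\tilde\RR_+$ with $\overline{\RR_+}\subset\tilde\RR_+\subset\OO^+$, $\overline{\tilde\RR_+}\subset D^+(\RR'_-)$; this is possible by compactness of $\overline{\RR_+}$ and openness of $D^+$. Then build each slab region from the enlarged data, so that the closures of the smaller sets sit inside the open larger ones, nesting the margins successively over $j$. Causal convexity is obtained via $\chull$ together with Proposition \ref{prop:D_plus}.
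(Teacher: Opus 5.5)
Your construction has the same skeleton as the paper's proof. The paper treats $n=2$ with one Cauchy surface $\Sigma\subset\OO_1^+\cap\OO_2^-$, thickens the compact set $\Sigma_R=J^-(\overline{\RR_+})\cap\Sigma$ into a region with closure in $D^+(\RR_-)\cap\OO_1^+\cap\OO_2^-$, and then inducts via $\OO_1\prec\bigcup_{j\ge 2}\OO_j$; you make all $n-1$ cuts at once. However, the step that carries property 1 fails as stated. You choose the $\Sigma_j$ using only the $\OO_k$, and then claim three things: a past-inextendible causal curve from $\overline{\RR^j_+}$ crosses $V_{j-1}$ \emph{before} reaching $\RR_-$; it does so at a point of $D^+(\RR_-)$; and, for $j=n$, curves from $\overline{\RR_+}$ reach $V_{n-1}$ at all. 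Nothing forces any of this. The conditions $\RR_-\subset\OO^-$ and $\RR_+\subset\OO^+$ only exclude $J^+(\OO)$ and $J^-(\OO)$, so in regions spacelike to $\OO$ both sets can lie on either side of your surfaces.

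Here is a concrete example in $1+1$ Minkowski. Let $\OO_1,\OO_2$ be small diamonds around $(t,x)=(\mp 1,0)$, let $\Sigma_1=\{t=0\}$ with a thin slab $V_1$, and let $\RR_+\subset\RR_-$ be concentric open diamonds around $(1,10)$ of sizes $0.1$ and $0.3$. All hypotheses hold, and the lemma is true here (take an intermediate concentric diamond). Yet $D^+(\RR_-)\subset J^+(\RR_-)\subset\{t>1/2\}$, so $D^+(\RR_-)\cap J^-(\RR_+)\cap V_1=\emptyset$ and your $\RR^1_+=\RR^2_-$ is empty. Centering both diamonds at $(-1,10)$ instead gives the mirror failure: no past causal curve from $\overline{\RR_+}$ meets $V_1$. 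The paper's assertion $\Sigma_R\subset D^+(\RR_-)$, made for an arbitrary separating $\Sigma$, breaks down in the same example, so this step cannot simply be imported from it.

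The missing idea is that the separating surfaces must be adapted to $\RR_\pm$. Each $\Sigma_j$ has to pass to the past of $\overline{\RR_+}$ and meet $J^-(\overline{\RR_+})$ only inside $D^+(\RR_-)$. One way to arrange this is as follows.
\begin{enumerate}
\item First produce a Cauchy surface $\Sigma_0$ with $J^+(\OO)\cup\overline{\RR_+}\subset I^+(\Sigma_0)$ and $\Sigma_0\cap J^-(\overline{\RR_+})\subset\RR_-$. Then $J^+(\Sigma_0)\cap J^-(\overline{\RR_+})\subset D^+(\RR_-)$.
\item Let $\Sigma_j$ separate the closed past set $J^-(\Sigma_{j-1})\cup J^-(\OO_1\cup\dots\cup\OO_j)$ from the closed future set $J^+(\overline{\RR_+})\cup J^+(\OO_{j+1}\cup\dots\cup\OO_n)$.
\end{enumerate}
With such surfaces your curve argument goes through. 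So does the paper's thickening of the compact sets $J^-(\overline{\RR_+})\cap\Sigma_j$, which also spares you the slab-and-shrink bookkeeping.

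The disjointness needed in step 2 requires $\overline{\RR_+}\cap J^-(\OO)=\emptyset$. Your margin step ``$\overline{\RR_+}\subset\tilde{\RR}_+\subset\OO^+$'' tacitly assumes this, but it does not follow from $\RR_+\subset\OO^+=\M\setminus J^-(\OO)$. This is not cosmetic. Suppose $p\in\overline{\RR_+}\cap J^-(\OO_j)$ for some $j<n$. Chaining property 1 gives $p\in D^+(\RR^j_+)$. That is impossible, because $J^-(p)\subset J^-(\OO_j)$ is disjoint from $\RR^j_+\subset\OO_j^+$. So this case must be excluded by hypothesis. That should be harmless in the paper's applications, where one may first shrink $\RR_+$ around the compact supports involved.
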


\begin{proof} There is nothing to prove for $n=1$. Take $n=2$, then we have to choose a region $\RR:= \RR^1_+ = \RR^2_-$ such that $\overline{\RR} \subset \T:= D^+(\RR_-) \cap \OO_1^+ \cap \OO_2^-$ and $\overline{\RR_+} \subset D^+(\RR)$. For the proof we note that since $\OO_1 \prec \OO_2$, there exists a Cauchy surface $\Sigma \subset \OO_1^+ \cap \OO_2^-$ and $\Sigma_R := J^-(\overline{\RR_+}) \cap \Sigma$ is contained in $\OO_1^+ \cap \OO_2^-$ since $\Sigma$ is. Moreover, $\Sigma_R\subset D^+(\RR_-)$ by the following argument: as $\overline{\RR_+} \subset D^+(\RR_-)$, any past-inextendible causal curve $\gamma$ emanating from $\overline{\RR_+}$ must hit $\RR_-$. Since $\Sigma$ is a Cauchy surface, $\gamma$ must hit $\Sigma$ exactly once and this within $\Sigma_R$ since $\gamma$ is causal. Then, we note that $\T$ is a globally hyperbolic spacetime in its own, and $\Sigma_R$ is a compact subset contained in it. By Lemma $6$ of \cite{impossible_meas} (it is easily extended to finitely many connected components), there is a region $\RR$ containing $\Sigma_R$ with $\overline{\RR} \subset \T$. Finally, we note $\overline{\RR_+} \subset D^+(\Sigma_R) \subset D^+(\RR)$. It remains to prove that the statement holds for $n+1$ if it holds for $n$ for all suitable regions $\RR_\pm$. For this we simply define $\OO' = \cup_{j=2}^{n+1} \OO_j$ and note that we get the regions $\RR_\pm^1$ by applying the statement for $n=2$ to $\OO_1 \prec \OO'$ (with respect to $\RR_\pm$) and the regions $\RR_\pm^j, j=2,\ldots,n+1$ by applying the induction hypothesis to $\OO_2 \prec \ldots \prec \OO_{n+1}$ (with respect to $\RR_+$ and $\RR^1_-$).  
\end{proof}

\begin{proof}[Proof of Theorem \ref{theorem:causal_network}]
Let $\RR_\pm\subset \OO^\pm$, such that $\overline{\RR_{+}}\subset D^+(\RR_{-})$ and let 
\begin{equation}
\overline{\Omega}(\bullet):=\sum_{c} \Omega^{1,c(\wedge_1)}_{c_1} \circ \ldots \circ \Omega^{n,c(\wedge_n)}_{c_n}(\bullet),
\end{equation}
denote the measurement channel associated with $\Omega$, whereas $\overline{\Omega}^{j,c(\wedge_j)} := \sum_{c_j} \Omega^{j,c(\wedge_j)}_{c_j}$ denotes the measurement channel associated with $\Omega^{j,c(\wedge_j)}$. 

We next prove that, for $O\in \A(\RR_{+})$, it holds that $\overline{\Omega}(O)\in \A(\RR_-)$. First, given $\OO_1,...,\OO_n$, we invoke Lemma \ref{lemma:causal_ordering} to produce regions $\{\RR_{\pm}^j\}_{j=1}^n$ satisfying the constraints
\begin{enumerate}
    \item $\overline{\RR^j_+} \subset D^+(\RR^j_-),\forall j$,
    \item $\RR_\pm^j \subset \OO^\pm_j,\forall j$,
    \item $\RR^j_{-} = \RR^{j-1}_+,\forall j$ with $\RR^n_+ := \RR_+$ and $\RR^1_- := \RR_-$.
\end{enumerate}  

Second, define the operators $\{O_j\}_j$ also by induction, as:
\begin{align}
&O_n=O,\nonumber\\
&O_{j-1}=\sum_{c_j} \Omega_{c_j}^{j,c(\wedge_j)} (O_j), \label{eq:Odef}
\end{align}
where it is understood that each operator $O_j$ depends on the indices $c_1,...,c_{\mathrm{j}}$ and, clearly, one obtains $O_0 = \overline{\Omega}(O)$.

Finally, define the sets $(\AA_j)_{j=1}^n \subset \{ 1,\ldots,n\}$ by induction, as:
\begin{align}
\AA_n=&\emptyset,\nonumber\\
\AA_{j-1}=& \left\lbrace \begin{matrix} \AA_j\setminus\{j\}, &\quad \mbox{ if }\OO_j\perp \RR_{+}^{j},\\
 (\AA_j\cup \wedge_j)\setminus\{j\}, &\quad \mbox{ otherwise}. \end{matrix} \right.  \label{cindexrecursion}
\end{align}

We claim that $O_j$ depends only on $c(\AA_j)$ and that $O_j \in \A(\RR^{j+1}_-)$, for all $j$, which implies that $O_0\in \A(\RR_{-})$ and thus the causality of $\Omega$. We prove the claim by induction on $j$. The initial case $j=n$ is obviously satisfied. Now, let us assume that the claim holds beyond $j-1$ and show that it also holds for $j-1$. 

By the induction hypothesis, we have that
\begin{equation}
O_{j-1}=\sum_{c_j}\Omega^{j,c(\wedge_j)}_{c_j}(O_j),
\end{equation}
with $O_j \in\A(\RR^{j}_+) = \A(\RR^{j+1}_-)$ depending only on $c(I_j)$. There are two possibilities:
\begin{enumerate}
    \item $j\not\in\AA_j$. In that case, $O_j$ does not depend on $c_j$, which means that $O_{j-1}$ is the result of acting on $O_j$ with the causal measurement channel $\overline{\Omega}^{j,c(\wedge_j)}$. Thus, $O_{j-1}$ must be localizable in $\RR^{j}_-$, since $\overline{\RR^{j}_{+}} \subset D^+(\RR^{j}_-)$. 
    \item $j\in\AA_j$. In that case, there must have existed $k>j$, with $\OO_k\not\perp \RR_+^{k}$, thus $\RR_+^{k}\cap J(\OO_k)\not=\emptyset$, and $j\in \wedge_k$, hence $\OO_k \subset \OO_j^\vee$. This implies that $\RR_+^k\cap \OO_j^\vee \supset \RR_+^k \cap J^+(\OO_k) \not=\emptyset$. Since $\overline{\RR_+^k} \subset D^+(\RR_-^j)$ and $\OO_j\cap J^-(\RR^j_-)=\emptyset$, it follows that $\OO_j \subset D^+(\RR_-^j) \subset D(\RR_-^j)$. By Proposition \ref{prop:D_plus}, $D^+(\RR^j_-)$ is a region: this allows us to invoke Lemma \ref{lemma:weyllocality} and conclude that the Kraus operators of $\Omega^{j,c(\wedge_j)}$ are localizable in $D^+(\RR_-^j)$ and, by the timeslice property, also in $\RR_-^j$. Consequently, $\Omega^{j,c(\wedge_j)}_{c_j}(O_j)\subset \A(\RR_-^j)$, for all $c_j$, which implies that $O_{j-1}$ is localizable in $\RR_-^j=\RR_+^{j-1}$.
\end{enumerate}
In both cases, if $\OO_j\perp \RR^j_+$, instrument $\Omega^{j,c(\wedge_j)}$ will act trivially on $O_j$, and so $O_{j-1}$ will not inherit the indices $c(\wedge_j)$ that the instrument carries. Otherwise, $O_{j-1}$ will depend, in general, on $c(\wedge_j)$ and on those indices $O_j$ depends on. Either way, $O_{j-1}$ only depends on the indices $c(\AA_{j-1})$.
\end{proof}

We remark that the proof may be generalized to networks of normal instruments acting on an AQFT: The only required modification is to argue localizability of the Kraus operators within $D(\RR_-^j)$ in Item 2. This follows by the Haag property if $D(\RR_-^j)$ and thus $D(\RR_-)$ are connected. Following the proof in \cite{Jubb_2022} that the PSNI condition implies causality, one sees that it is sufficient to require the PSNI condition to hold for connected regions $\RR^\pm$.

\section{Generic test functions}\label{app:generic}

In this section we prove that a large class of QFTs is generic or null-generic. We also prove that (null) generic test functions in (null) generic QFTs are ubiquitous or, in other words, `generic'. Our main results are:

\begin{prop}\label{prop:generic}
Let $\Psi$ be a Green-hyperbolic QFT with commutator function $E:C_0^\infty(\M)\to C^\infty(\M)$ of the form:
\begin{equation}
E(x,y) = \tilde{E}(x,y)I(x,y) + H(x,y),
\label{cond_analytic}
\end{equation}
 where, for all $y \in \M$, $\tilde{E}(\bullet,y)$ is an entire function, $I$ is an indicator function satisfying 
 \begin{equation}
I(x,y) = \left\lbrace \begin{matrix} \pm 1, & \text{for }x \in J^\pm(y), x\neq y \\
0, & \text{otherwise}. \end{matrix} \right.,
\end{equation}
and $H(x,y)$ is a distribution that vanishes if $x \not\in N(y)$. 

Let $E$ furthermore satisfy the following property: for any open $Y \subset \M$, it holds that
\begin{equation}
\mathrm{dim} \left[ \mathrm{span} \left( \{ \tilde{E}(\bullet, y): y\in Y\} \right)\right] = \infty.
\label{cond_li}
\end{equation}

Then, $\Psi$ is generic.
\end{prop}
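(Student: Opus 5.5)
To show $\Psi$ is generic, fix $m\in\N$ and regular compacts $(\K_j)_{j=1}^m$. We will construct test functions $f_j\in C_0^\infty(\K_j)$ inductively, or simultaneously by a random choice, so that for every open $V$ with $V\cap J(\K_j)\neq\emptyset$ for all $j$, the restrictions $\{Ef_j|_V\}_j$ are linearly independent. The idea is to reduce linear independence on an arbitrary $V$ to linear independence of the analytic parts $\tilde{E}f_j$ at a single point, and then use analyticity to propagate.

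\textbf{Step 1 (reduction to analytic parts).} Given $V$, choose a point $x_0\in V$ and a small open set $W\subset V$ around it such that $W$ lies in the chronological future (or past) of a nonempty open piece of each $\K_j$, with $W$ disjoint from $N(\K_j)$ where possible. Near such points, the decomposition (\ref{cond_analytic}) gives
\begin{equation}
(Ef_j)(x)=\int_{\K_j} \tilde{E}(x,y)I(x,y)f_j(y)\,dy+\int H(x,y)f_j(y)\,dy .
\end{equation}
The subtlety is that $W$ may only see part of $\K_j$ timelike, and $I$ is not constant on $\K_j$. To avoid this, we choose the $f_j$ as sums of bumps supported in tiny balls $B_{j,l}\subset \Int{\K_j}$. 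For each such $V$ we then find an open $W\subset V$ from which a specific subcollection of balls, say $B_{j,l_j}$, is entirely in the causal past (or future), with $W\cap N(B_{j,l_j})=\emptyset$. The remaining balls are either fully timelike-related or fully causally disconnected. On $W$ each $Ef_j$ is then a finite sum $\pm\int\tilde{E}(x,y)f_{j,l}(y)dy$ over the relevant bumps, which is real-analytic in $x$ because $\tilde{E}(\bullet,y)$ is entire.

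\textbf{Step 2 (choice of coefficients).} By (\ref{cond_li}), for any open $Y$ the functions $\tilde{E}(\bullet,y)$, $y\in Y$, span an infinite-dimensional space. Hence smeared bumps $\int\tilde{E}(\bullet,y)b(y)dy$ with $b\in C_0^\infty(Y)$ also span an infinite-dimensional space, since approximate deltas converge. We pick the bump profiles within each ball so that the resulting entire functions $F_{j,l}:=\int\tilde{E}(\bullet,y)f_{j,l}(y)dy$ are jointly linearly independent over all $j,l$. This is possible by choosing them successively outside the finite-dimensional span of the previously chosen ones.

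\textbf{Step 3 (conclusion).} On $W$, $Ef_j=\sum_l \epsilon_{j,l}F_{j,l}$ with signs $\epsilon_{j,l}\in\{0,\pm1\}$, and $\epsilon_{j,l_j}\neq0$. A vanishing combination $\sum_j\lambda_jEf_j=0$ on $W$ extends by analyticity to all of $\M$. The joint independence of the $F_{j,l}$ then forces $\lambda_j\epsilon_{j,l_j}=0$, so $\lambda_j=0$. Since restrictions to $W\subset V$ are independent, so are restrictions to $V$.

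The main obstacle is Step 1. One must guarantee that every admissible $V$ contains such a $W$ for a fixed, $V$-independent choice of finitely many balls. The concern is $V$ meeting $J(\K_j)$ only near its boundary, e.g.\ near $N(\K_j)$. We would handle this by covering $\K_j$ by finitely many balls whose union has closure $\K_j$, which is possible by regularity, and using that any open set meeting $J(\K_j)$ meets the open set $I^\pm(B_{j,l})$ minus the closed null cones of the other balls for some $l$. Failing that, we would use a countable dense family of balls and a Baire-type genericity argument instead.
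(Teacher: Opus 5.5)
Your Step 1 is not a technicality, and it cannot be patched: it fails for every $V$-independent finite family of balls. Take a point $p\notin\K_j$ whose causal cone meets $\K_j$ only at a boundary point $q$. For example, let $\K_j$ be a closed disk in $1+1$ Minkowski spacetime and let $p$ lie on a null line tangent to $\partial\K_j$ at $q$, chosen so that $J(p)\cap\K_j=\{q\}$.

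Since $J(\supp f_j)$ is closed, if $q\notin\supp f_j$ then a small neighbourhood $V$ of $p$ misses $J(\supp f_j)$. Then $Ef_j|_V=0$, although $V\cap J(\K_j)\neq\emptyset$. So some ball must have $q$ in its closure. (Your claim that finitely many balls with union dense in $\K_j$ exist ``by regularity'' is also false for general regular compacts, e.g.\ ones with infinitely many components.)

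But if some ball has $q$ in its closure, take $V$ small enough relative to the fixed balls. For every $x\in V$, $J^-(x)\cap\K_j$ then lies in a tiny neighbourhood of $q$. So no ball lies entirely in the timelike past of $x$. Meanwhile, any ball with $q$ in its closure straddles $\partial J^-(x)$ whenever $x$ is in the timelike future of $\Int{\K_j}$. On every open $W\subset V$, $Ef_j$ is therefore built from terms $\int_{B\cap J^-(x)}\tilde{E}(x,y)f_{j,l}(y)\,dy$ whose domains depend on $x$. It is not a signed finite sum of the entire functions $F_{j,l}$, and Step 3 has nothing to continue analytically.

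Passing to countably many balls does not obviously help. The balls must then accumulate at every such grazing point, and the light cones of points in any open $W$ near $p$ typically slice through some of them. Separately, Step 1 tacitly needs a single $x_0\in V$ causally connected to all $\K_j$. This fails when $V$ meets the various $J(\K_j)$ in disjoint pieces.

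The missing idea is to separate ``for each $V$'' from ``one tuple for all $V$'', as the paper does.

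First, for a fixed $V$ your analytic argument is fine once the support is allowed to depend on $V$. Choose open $X\subset V$ and $Y\subset\K_j$ with $X$ in the total future (or past) of $Y$ and $X\cap N(Y)=\emptyset$. Then $Ef=\pm\tilde{E}f$ on $X$ for $f\in C_0^\infty(Y)$. Now (\ref{cond_li}), the identity theorem and approximate deltas give any number of functions in $C_0^\infty(Y)\subset C_0^\infty(\K_j)$ with linearly independent restrictions to $X$, hence to $V$. This is two-site genericity.

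Second, for fixed $V$, the set of tuples $(g_j)_j\in C_0^\infty(\K_1)\times\cdots\times C_0^\infty(\K_m)$ with $\{Eg_j|_V\}_j$ linearly independent is dense. To see this, add to $g_j$ a small multiple of a combination of those local functions whose restriction to $V$ lies outside the span of the previously fixed $Eg_k|_V$, $k<j$. This step is pure linear algebra, so the non-analytic, ``cut'' contributions of the $g_j$ do no harm. The set is also open, since independence is witnessed by $\det[\langle h_a,Eg_b\rangle]\neq 0$ for some $h_a\in C_0^\infty(V)$.

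Third, apply Baire's theorem in this Fréchet space, with $V$ ranging over the countably many finite unions of sets from a countable basis. Every admissible $V$ contains an admissible such union, so this yields tuples that work for all $V$ simultaneously. Your suggested fallback applies Baire to a family of balls. It has to be applied to the space of test-function tuples, indexed by a countable basis of observation regions.
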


\begin{prop}\label{prop:null-generic}
Let $\Psi$ be a scalar Green-hyperbolic QFT with a commutator function $E:C_0^\infty(\M)\to C^\infty(\M)$ that is singular along null directions, namely, its wavefront set $WF(E)$ satisfies:
\begin{equation}\label{eq:wavefront}
WF(E) \supset ({\cal N}^+ \times {\cal N}^-)_r \cup ({\cal N}^- \times {\cal N}^+)_r,
\end{equation}
for the r.h.s. being nonempty; here ${\cal N}^\pm$ are the nonzero null future/past-directed covectors associated with $\M$ and the subscript $r$ indicates that we consider only such vectors $(x,k_x,y,-k_y)$ where $x,y \in \M$ are connected by a null geodesic segment with cotangent vector $k_x$ at $x$ and parallel-transported cotangent vector $k_y$ at $y$.

Then, $\Psi$ is null-generic.
\end{prop}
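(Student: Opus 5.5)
The plan is to use the wavefront-set condition \eqref{eq:wavefront} to force linear independence through propagation of singularities. Fix regular compacts $(\K_j)_{j=1}^m$. The test functions $(f_j)_j$ will be chosen so that each $Ef_j$ carries, on $N(\K_j)$, a singular front at a location that no other $Ef_k$ reproduces. Nonzero smooth test functions cannot produce true singularities, since $Ef$ is smooth. So the working notion will be a \emph{quantitative} singularity. One route is to take $f_j$ sharply concentrated near a boundary point of $\K_j$. The other, which I prefer, is to argue via analytic-type independence: any linear relation $\sum_j \lambda_j Ef_j|_V=0$ can be tested against oscillatory localizers $\chi e^{i\tau\langle k,\cdot\rangle}$ with $(x,k)$ null and $x\in V\cap N(\K_j)$.

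First, I will reduce the problem to independence on small open sets. It suffices to show that for each $j$ there is a family of points $x$, dense in $N(\K_j)$, such that the Fourier--Bros--Iagolnitzer/$C^\infty$ behaviour of $Ef_j$ near $x$ in a null codirection differs from every combination of $Ef_k$ with $k\neq j$. Second, I will choose the $f_j$ generically, for instance by placing a conormal-type profile of $f_j$ concentrated on a hypersurface inside $\K_j$, so that the mixed-order contributions along distinct null geodesics do not cancel. Using \eqref{eq:wavefront} and the composition $WF(Ef)\subset WF'(E)\circ WF(f)$ together with its sharpness, I get that $WF(E u_j)$ contains the null bicharacteristics emanating from $WF(u_j)$, where $u_j$ is a distributional germ approximating $f_j$. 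For genuinely smooth $f_j$, I would instead replace wavefront sets by high-frequency decay rates along null covectors, and choose $f_j$ from a countable dense family with pairwise distinct, nonzero asymptotic rates. Third, given a relation $\sum_j\lambda_jEf_j|_V=0$, I pick a $j$ with $\lambda_j\neq0$ and a point in $V\cap N(\K_j)$ where only $Ef_j$ has the prescribed rate. Localizing and taking the rate as $\tau\to\infty$ then forces $\lambda_j=0$, and an induction over $j$ finishes the argument. The required intersection $V\cap N(\K_j)\neq\emptyset$ is exactly the hypothesis of null-genericity.

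The main obstacle is the second step: smooth compactly supported $f_j$ have empty wavefront set, so \eqref{eq:wavefront} cannot be applied directly to $Ef_j$. I would first try to sidestep this by a Baire-category argument. In the Fréchet space $\prod_j C_0^\infty(\K_j)$, for each fixed open $V$ taken from a countable base, the set of tuples that are dependent on $V$ is a countable union of closed sets with empty interior. Emptiness of interior would follow because any tuple can be perturbed by adding a small multiple of a function $\epsilon\,\rho_n$, whose $E\rho_n$ approximates, on $V$, the propagated singular solution guaranteed by \eqref{eq:wavefront} in a region reached only through $N(\K_j)$. The intersection over the countable base is then residual, hence nonempty, and this gives null-genericity.
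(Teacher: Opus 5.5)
There is a genuine gap, and it sits at the step you yourself flag. The whole proof hinges on one fact. For an open $V$ with $V\cap N(\K)\neq\emptyset$, the space $\{Ef|_V : f\in C_0^\infty(\K)\}$ must be infinite-dimensional, so that a perturbation from $C_0^\infty(\K_j)$ can always leave the finite-dimensional span of the other $Ef_k|_V$.

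Your main route does not deliver this:
\begin{itemize}
\item Each $Ef_j$ is smooth, and \eqref{eq:wavefront} is a statement about the bi-distribution $E$ only. It gives no control over ``high-frequency decay rates'' of localized Fourier transforms of $Ef_j$, and certainly no family with pairwise distinct rates.
\item FBI/analytic information is unavailable, since nothing is assumed analytic.
\item You cannot transfer singularities to a specific $Eu$ either. The estimate $WF(Eu)\subset WF'(E)\circ WF(u)$ is only an upper bound, and there is no general ``sharpness''.
\end{itemize}
Your Baire fallback has the right architecture. It is essentially Lemmas \ref{lemma:two_site_generic_density} and \ref{lemma:density_generic} of the paper, provided you note that independence on $V$ is an open condition, detected by $\det\langle g_i,Ef_j\rangle\neq 0$ for some $g_1,\dots,g_m\in C_0^\infty(V)$. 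But its ``empty interior'' step rests on the same unproved existence of a ``propagated singular solution''.

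The missing idea is to argue at the level of the kernel rather than of individual solutions. Put $W:=\mathrm{int}(\K)$, and suppose $f\mapsto Ef|_V$ on $C_0^\infty(W)$ had finite-dimensional range spanned by $u_1,\dots,u_l\in C^\infty(V)$. The coefficient functionals are continuous, i.e.\ they are distributions $v_i\in\mathcal{D}'(W)$, so $E|_{V\times W}=\sum_i u_i\otimes v_i$. By H\"ormander's tensor-product theorem (Thm.~8.2.9) and $WF(u_i)=\emptyset$, every covector in $WF(E|_{V\times W})$ has vanishing $x$-component. On the other hand, $V\cap N(W)\neq\emptyset$, because $\K$ is regular and $V$ is open. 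So \eqref{eq:wavefront} places covectors $(x,k_x;y,-k_y)$ with $k_x\neq 0$ over $V\times W$, a contradiction. This gives two-site null-genericity directly, and your perturbation-plus-Baire argument then finishes the proof.

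If you prefer your own formulation, the existence of $u\in\mathcal{E}'(W)$ with $Eu|_V\notin C^\infty(V)$ also follows from a kernel argument. Otherwise a closed-graph argument makes the kernel of $E$ smooth on $V\times W$, again contradicting \eqref{eq:wavefront}. Approximating such a $u$ by test functions then escapes any finite-dimensional subspace, but this is no simpler than the finite-rank argument.
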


In Appendix \ref{app:generic_minkowski} it is proven that the massive Klein-Gordon field on 1+1- and 1+3-dimensional Minkowski spacetime satisfies conditions (\ref{cond_analytic}) and (\ref{cond_li}). Condition (\ref{eq:wavefront}) is satisfied by the Klein-Gordon theory in $d \geq 2$ spacetime dimensions and more generally all commutator functions associated with normally hyperbolic equations of motion. For these \eqref{eq:wavefront} is an equality (see e.g. \cite[Thm.~16]{QFTCS} for the Klein-Gordon case and \cite[Thm.~5.1]{Greenhyperbolic} for generality). As a corollary, we have that:
\begin{corollary}
The massive Klein-Gordon fields in 1+1 and 1+3-dimensional Minkowski spacetimes are generic QFTs. Klein-Gordon fields in $d\geq 2$ spacetime dimensions and arbitrary spacetimes are null generic.
\end{corollary}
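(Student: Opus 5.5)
The statement to prove is the Corollary: Klein-Gordon fields in dimensions $d \geq 2$ on arbitrary spacetimes are null-generic, and massive Klein-Gordon fields on 1+1 and 1+3 Minkowski spacetime are generic. The plan is to reduce it to Propositions \ref{prop:generic} and \ref{prop:null-generic}, taking both as given. So the only work is checking their hypotheses for the Klein-Gordon commutator function $E$. The two halves are handled separately.

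\emph{Null-genericity.} Here I would invoke Proposition \ref{prop:null-generic}. Its hypothesis \eqref{eq:wavefront} is a statement about $WF(E)$. For the Klein-Gordon operator $K=\square_g+m^2$, and more generally any normally hyperbolic operator, propagation of singularities (Duistermaat--H\"ormander, Radzikowski) gives
\[
WF(E)=({\cal N}^+\times{\cal N}^-)_r\cup({\cal N}^-\times{\cal N}^+)_r .
\]
References for this are \cite[Thm.~16]{QFTCS} for Klein-Gordon and \cite[Thm.~5.1]{Greenhyperbolic} in general. The inclusion $\supset$ is all we need. The right-hand side is nonempty as soon as $d\geq 2$, since null geodesics and null covectors then exist. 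The theory is scalar and Green-hyperbolic, so Proposition \ref{prop:null-generic} applies directly.

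\emph{Genericity in Minkowski spacetime.} Here I would invoke Proposition \ref{prop:generic}, which needs the decomposition \eqref{cond_analytic} and the infinite-dimensionality condition \eqref{cond_li}. The explicit commutator functions are as follows:
\begin{itemize}
\item In 1+1 dimensions, $E(x,y)=\pm\frac12 J_0(m\sqrt{(x-y)^2})$ inside the light cone, with the sign given by the time orientation, and $0$ outside.
\item In 1+3 dimensions, $E$ is a sum of two pieces. One is a null-supported term $\frac{1}{2\pi}\,\mathrm{sgn}(x^0-y^0)\,\delta((x-y)^2)$, which is taken as $H$. The other is $\frac{m}{4\pi}\,\mathrm{sgn}(x^0-y^0)\,\theta((x-y)^2)\,\frac{J_1(m\sqrt{(x-y)^2})}{\sqrt{(x-y)^2}}$ (up to sign convention), whose sign factor and cone restriction give $I$.
\end{itemize}
In both cases $\tilde E(\cdot,y)$ is a function of $s=(x-y)^2$. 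Since $J_0(\sqrt{s})$ and $J_1(\sqrt{s})/\sqrt{s}$ are even power series in $\sqrt{s}$, they are entire functions of $s$. The squared interval $(x-y)^2$ is polynomial in $x$, so $\tilde E(\cdot,y)$ is entire in $x$, which establishes \eqref{cond_analytic}.

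For \eqref{cond_li}, I expect the main obstacle to be showing that the translates $\{F((x-y)^2): y\in Y\}$ of a fixed nonconstant entire function $F$ span an infinite-dimensional space. I would argue by contradiction. Suppose the span were finite-dimensional. It is invariant under small translations (shifting $y$ stays within the open $Y$ locally), so by a standard argument it would consist of exponential-polynomials $\sum p_k(x)e^{\lambda_k\cdot x}$. But $x\mapsto F((x-y)^2)$ depends only on the Lorentzian interval, and its restriction to a line such as $x=y+(t,0,\dots)$ is $F(t^2)$. A Bessel function has infinitely many real zeros with non-exponential asymptotics ($\sim t^{-1/2}\cos$), so $F(t^2)$ cannot be an exponential-polynomial. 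The mass $m>0$ guarantees that $F$ is nonconstant; this is why only the massive case is claimed. If the translation-invariance step proves awkward, a fallback is to pick infinitely many points $y_k$ along a timelike line in $Y$ and show linear independence directly by comparing the zero sets of $F((x-y_k)^2)$, which are hyperboloids centred at distinct $y_k$. With both hypotheses verified, Proposition \ref{prop:generic} gives genericity and the corollary follows.
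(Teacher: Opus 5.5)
Your proposal is correct and follows essentially the same route as the paper. You reduce to Propositions~\ref{prop:generic} and~\ref{prop:null-generic}, cite the wavefront-set result for the null part, verify \eqref{cond_analytic} with the $J_0$ and $J_1(m\sqrt{s})/\sqrt{s}$ kernels, and obtain \eqref{cond_li} by showing that a finite-dimensional span would force exponential-polynomial behaviour, which the polynomial-decay Bessel asymptotics rule out. The only difference is cosmetic: the paper runs the translation/derivative argument along a single direction $v$, which yields a constant-coefficient ODE in one variable. You instead invoke the multidimensional characterization of translation-invariant finite-dimensional spaces; because the span is only locally translation-invariant, the paper's one-direction version is the clean way to make your ``standard argument'' rigorous.
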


In the following lines, we prove Propositions \ref{prop:generic} and \ref{prop:null-generic}. To this end, given an implicit commutator function $E$ and a set $X\subset \M$, it is convenient to introduce the map $\xi_X:C^\infty_0(\mathcal M)\rightarrow C^\infty(X)$, defined as:
\begin{equation}\label{eq:xi}
f \mapsto \xi_X(f) := \left. Ef\right|_{X}.
\end{equation}
The next notion will also be very handy.
\begin{defin}
A free scalar QFT with commutator function $E$ is (null) two-site generic if, for any $n\in\N$, any regular compact $\K\subset \M$ and any open set $V\subset \M$, with $V\cap J(\K)\not=\emptyset$ ($V\cap N(\K)\not=\emptyset$), there exist $f_1,...,f_n\in C_0^\infty(\K)$ such that
\begin{equation}
\{\xi_{V}(f_j)\}_j
\end{equation}
are linearly independent functions.
\end{defin}

In Appendix \ref{app:generic_two_site}, it is proven that any free scalar QFT satisfying conditions (\ref{cond_analytic}), (\ref{cond_li}) is two-site generic. Correspondingly, any free scalar QFT whose commutator function is singular along null directions [eq. (\ref{eq:wavefront})] is null two-generic. With this result, Propositions \ref{prop:generic} and \ref{prop:null-generic} are a consequence of the next two lemmas.

\begin{lemma}
\label{lemma:two_site_generic_density}
Let $\Phi$ be a (null) two-site generic QFT. Then, for any vector of regular compacts $\K:=(\K_j)_{j=1}^m$ and any open set $\V\subset \M$ with $V\cap J(\K_j)\not=\emptyset$ ($V\cap N(\K_j)\not=\emptyset$), for $j=1,...,m$, it holds that the set $\mathbb{T}^V_{\K}$ of $m$-tuples of test functions
\begin{equation}
\mathbb{T}^V_{\K}:=\{(f_j\in C^\infty_0(\K_j))_j:\xi_V(f_1),...,\xi_V(f_m),\mbox{ l.i.}\}    
\end{equation}
is dense in $C_0^\infty(\K_1)\times...\times C_0^\infty(\K_m)$.
\end{lemma}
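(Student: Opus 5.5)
Given an arbitrary tuple $(f_j)_j \in C_0^\infty(\K_1)\times\cdots\times C_0^\infty(\K_m)$ and a neighbourhood of it in the product test-function topology, I will exhibit an element of $\mathbb{T}^V_{\K}$ inside that neighbourhood. The idea is to perturb the $f_j$ one at a time, $\tilde f_j := f_j + \epsilon_j g_j$, with $g_j \in C_0^\infty(\K_j)$ chosen via two-site genericity and $\epsilon_j$ small. Since $f\mapsto \xi_V(f)=Ef|_V$ is linear, we have $\xi_V(\tilde f_j)=\xi_V(f_j)+\epsilon_j\xi_V(g_j)$, and the problem becomes purely finite-dimensional linear algebra.

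The induction runs over $j=1,\dots,m$. Assume $\tilde f_1,\dots,\tilde f_{j-1}$ are already chosen with $W_{j-1}:=\mathrm{span}\{\xi_V(\tilde f_1),\dots,\xi_V(\tilde f_{j-1})\}$ of dimension $j-1$. Apply two-site genericity to the regular compact $\K_j$ and the open set $V$, which satisfy $V\cap J(\K_j)\neq\emptyset$ (resp. $V\cap N(\K_j)\neq\emptyset$ in the null case), with $n=m+1$. This gives $g^{(1)},\dots,g^{(m+1)}\in C_0^\infty(\K_j)$ whose images $\xi_V(g^{(i)})$ are linearly independent. Their span has dimension $m+1 > j-1$, so some $g:=g^{(i)}$ satisfies $\xi_V(g)\notin W_{j-1}+\mathbb{R}\,\xi_V(f_j)$; this holds because the latter space has dimension at most $j\le m$, while $m+1$ independent vectors cannot all lie in it.

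With this $g$, the vector $\xi_V(f_j)+\epsilon\,\xi_V(g)$ lies outside $W_{j-1}$ for every $\epsilon\neq 0$. Indeed, if it lay in $W_{j-1}$ we would get $\xi_V(g)\in W_{j-1}+\mathbb{R}\xi_V(f_j)$, a contradiction. So I set $\tilde f_j:=f_j+\epsilon_j g$ with $\epsilon_j\ne0$ small enough that $\tilde f_j$ lies in the prescribed neighbourhood of $f_j$. This is possible because $\epsilon\mapsto f_j+\epsilon g$ is continuous into $C_0^\infty(\K_j)$ with its (Fréchet) topology, and $\supp g\subset\K_j$ keeps $\tilde f_j\in C_0^\infty(\K_j)$. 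After $m$ steps the tuple $(\tilde f_j)_j$ belongs to $\mathbb{T}^V_{\K}$ and is arbitrarily close to $(f_j)_j$, which proves density.

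The main point needing care is the choice of $g$ avoiding the subspace $W_{j-1}+\mathbb{R}\xi_V(f_j)$. The argument works precisely because two-site genericity supplies arbitrarily many independent directions supported in a single compact $\K_j$. Beyond that, I only need to check that the hypothesis of the two-site property applies to each $\K_j$ separately, which is exactly the assumption $V\cap J(\K_j)\neq\emptyset$ for every $j$. Everything else is linear algebra and continuity of scalar multiplication in the test-function space.
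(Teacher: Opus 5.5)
Your proof is correct and follows essentially the same route as the paper: an inductive, one-function-at-a-time perturbation $f_j\mapsto f_j+\epsilon g$, with $g\in C_0^\infty(\K_j)$ taken from the independent directions that two-site genericity supplies. The only difference is minor but slightly cleaner. Choosing $\xi_V(g)\notin W_{j-1}+\mathbb{R}\,\xi_V(f_j)$ makes every $\epsilon\neq 0$ work, so you avoid the paper's case split and its orthonormalization in $L^2(V)$, which needs some care when $V$ is unbounded.
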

\begin{proof}
By two-site (null) genericity, for each $j=1,...,m$, there exist $j$ functions $f_j^1,...,f_j^j\in C_0^\infty(\K_j)$ such that 
\begin{equation}
\{\xi_V(f^k_j):k=1,...,j\}
\end{equation}
are linearly independent. By taking linear combinations thereof, we redefine the functions $f^1_j,...,f^j_j$ such that the equation above represents an orthonormal set of functions in ${\cal L}^2(V)$.

Next, for arbitrary $\epsilon>0$, and given functions $(g_j\in C_0^\infty(\K_j))_j$, we next perturb the $g$'s with the $f$'s by an amount $\epsilon$ to arrive at some $\tilde{g}$'s such that $\{\xi_V(\tilde{g}^j)\}_j$ are linearly independent. That will prove the lemma's claim.

We start with $j=1$. If $\xi_V(g_1)\not=0$, then our perturbation is $\delta g^1=0$; otherwise, we set $\delta g^1=\epsilon f_1^1$. Given the perturbations $\delta g_1,...\delta g_{j-1}$, we decide the next perturbation $\delta g_j$ as follows: if $\xi_V(g_j)\not\in G_j:=\mbox{span}\{\xi_V(g_k+\delta g_k)\}_{k=1}^{j-1}$, then we set $\delta g_j=0$. Otherwise, there is a linear combination $f_j$ of $\{f^l_j\}_l$ (with bounded coefficients), with $\|f_j\|=1$ (we take the natural norm in $L^2(V)$), that is orthogonal to $G_j$, and hence to $\xi_V(g_j)\in G_j$. Choosing $\delta g_j=\epsilon f_j$, we therefore make sure that $\xi_V(g^j+\delta g^j)\not\in G_j$. 

Iterating till $j=m$, we therefore construct functions $(\tilde{g}_j:=g_j+\delta g_j\in C_0^\infty(K_j))_j$ with $\{\xi_V(g_j)\}_j$ linearly independent. Since both the magnitude $\epsilon$ of the perturbation and the test functions $g_1,...,g_m$ were arbitrary, it follows that $\mathbb{T}^V_K$ is indeed dense in $C_0^\infty(\K_1)\times...\times C_0^\infty(\K_m)$.
    
\end{proof}

\begin{lemma}
\label{lemma:density_generic}
Let $\Phi$ be a two-site (null) generic QFT. Then, $\Phi$ is (null) generic. Moreover, for any vector of regular compacts $\K:=(\K_j)_{j=1}^m$ the set of generic functions with respect to $\K$ is dense in $C_0^\infty(\K_1)\times...\times C_0^\infty(\K_m)$.
\end{lemma}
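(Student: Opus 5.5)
The idea is to upgrade Lemma \ref{lemma:two_site_generic_density}, which gives density of good tuples for one fixed open set $V$, to density of tuples that are good simultaneously for \emph{all} admissible open sets $V$. This is done with a countable base of the topology plus a Baire category argument. Genericity of $\Phi$ then follows for free, since a dense set is nonempty.

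\textbf{Reduction to a countable base.} Fix a countable base $\{V_k\}_{k\in\N}$ of the topology of $\M$, consisting of relatively compact open sets. If $(f_j)_j$ is such that $\{\xi_{V_k}(f_j)\}_j$ is linearly independent for every $k$ with $V_k\cap J(\K_j)\neq\emptyset$ for all $j$, then $(f_j)_j$ is generic. Indeed, take any open $V$ meeting each $J(\K_j)$ (or $N(\K_j)$ in the null case), and pick points $x_j\in V\cap J(\K_j)$. The difficulty is that a single basis element $V_k\subset V$ need not meet all $J(\K_j)$ at once. The fix is to enlarge the index family to finite unions of base elements, which is still countable: choose $V_{k_j}\ni x_j$ with $V_{k_j}\subset V$ and set $W=\bigcup_j V_{k_j}\subset V$. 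Then $W$ meets every $J(\K_j)$, and linear independence on $W$ implies linear independence on $V$, because restriction from $V$ to $W$ is linear and a dependence on $V$ restricts to one on $W$. So it suffices to control countably many open sets $W_k$.

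\textbf{Baire argument.} Work in $X:=C_0^\infty(\K_1)\times\dots\times C_0^\infty(\K_m)$. Each $C_0^\infty(\K_j)$, the space of smooth functions supported in the compact $\K_j$, is a Fréchet space, so $X$ is Fréchet and hence Baire. For each admissible $W_k$ consider $\mathbb{T}^{W_k}_{\K}$. By Lemma \ref{lemma:two_site_generic_density} it is dense, so it remains to show it is open. Linear independence of $\xi_{W_k}(f_1),\dots,\xi_{W_k}(f_m)$ is equivalent to the existence of a compact $C\subset W_k$ on which the restrictions are independent. For that compact $C$, independence is in turn equivalent to nonvanishing of some Gram-type determinant $\det\big(\int_C \xi(f_i)\xi(f_l)\big)$, or to nonvanishing of $\det\big((Ef_i)(y_l)\big)$ for suitable points $y_l\in W_k$. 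The latter is continuous in $(f_j)_j$, since $E:C_0^\infty(\K_j)\to C^\infty(\M)$ is continuous and point evaluation is continuous. So $\mathbb{T}^{W_k}_{\K}$ contains an open neighbourhood of each of its points and is open. The intersection $\bigcap_k \mathbb{T}^{W_k}_{\K}$ is therefore a countable intersection of dense open sets, hence dense by Baire. By the reduction it consists of generic tuples. This proves the density claim, and nonemptiness gives that $\Phi$ is (null) generic. The null case is verbatim, with $J$ replaced by $N$.

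\textbf{Main obstacle.} The delicate step is the openness of the linear-independence condition. I will handle it by the point-evaluation determinant. If functions $g_1,\dots,g_m$ are linearly independent on $W$, then there exist points $y_1,\dots,y_m\in W$ with $\det(g_i(y_l))\neq0$; this is a standard induction on $m$. The set where this determinant is nonzero is open in $X$ by continuity of $f\mapsto (Ef)(y)$. A further check is that $C_0^\infty(\K_j)$ carries a complete metrizable topology in which Lemma \ref{lemma:two_site_generic_density}'s density holds. The perturbations constructed there are of the form $\epsilon f$ with $\epsilon\to0$, so density holds in the Fréchet topology as well.
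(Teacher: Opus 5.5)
Your proposal is correct and follows the paper's proof. Both use a countable base and finite unions $V(\vec{k})$ of base elements, so that every admissible open set contains an admissible member of a countable family. Both then get density of each $\mathbb{T}^{V(\vec{k})}_{\K}$ from Lemma~\ref{lemma:two_site_generic_density} and apply the Baire category theorem on the product of the Fréchet spaces $C_0^\infty(\K_j)$.

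Your extra verification that each $\mathbb{T}^{W}_{\K}$ is open fills a step the paper leaves implicit. You do this via a nonvanishing determinant $\det\big((Ef_i)(y_l)\big)$ and continuity of $f\mapsto (Ef)(y)$. Baire's theorem genuinely needs this step, because countable intersections of sets that are merely dense need not be dense.
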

\begin{proof}
Define $\mathbb{V}:=C_0^\infty(\K_1)\times...\times C_0^\infty(\K_m)$, and, for $j=1,...,m$, let $\{V^k_j\}_k$ be any countable basis of open sets in $\M$ satisfying $V^k_j\cap J(\K_j)\not=\emptyset$ ($V^k_j\cap N(K_j)\not=\emptyset$). Note that any open set $V\subset \M$ causally connected to $\{\K_j\}_j$ (by null curves) must contain the open set $V(\vec{k}):=\cup_{j=1}^mV^{k_j}_j$, for some $\vec{k}\in \N^m$.

By Lemma \ref{lemma:two_site_generic_density}, we have that, for any $\vec{k}\in \N^m$, the set of functions $\mathbb{T}^{V(\vec{k})}_{\K}$ is dense in $\V$. By the Baire category theorem (see e.g. \cite[Section 2]{baire}), it then follows that
\begin{equation}
\{ g \in \mathbb{V} : g\text{ generic} \} = \bigcap_{\vec{k}\in \N^m} \mathbb{T}^{V(\vec{k})}_{\K}
\end{equation}
is similarly dense in $\mathbb{V}$, whereas its complement is nowhere dense.
\end{proof}

\begin{remark}
\label{remark:density}
(Null) generic QFTs are also two-site (null) generic. Hence, by the last lemma we have that, in any (null) generic QFT, the set of  (null) generic test functions is dense among the set of test functions.
\end{remark}

\subsection{Two scalar QFTs satisfying conditions (\ref{cond_analytic}), (\ref{cond_li})}
\label{app:generic_minkowski}
Here we will prove that the Klein-Gordon field in Minkowski spacetime in dimensions $1+1$ and $1+3$ satisfies the conditions of Proposition \ref{prop:generic}. 

First, it is immediate to see that the commutator functions of scalar QFTs in $1+1$- or $1+3$-dimensional Minkowski spacetime satisfy condition (\ref{cond_analytic}), for some $\tilde{E}(x,y)=\hat{E}(x-y)$. Indeed, the corresponding functions $\hat{E}$ are, respectively:
\begin{equation}
\hat{E}_{1+1}(t,x)=J_0(m\sqrt{t^2-x^2}), \hat{E}_{1+3}(t,x)=\frac{J_1(m\sqrt{t^2-x^2})}{\sqrt{t^2-x^2}}.
\label{commutator_func}
\end{equation}
Since $J_0$ ($J_1$) is an even (odd) analytic function, it follows that the above two functions admit an analytic continuation to all of $\M$.

For such translation-invariant kernels $\tilde{E}(x,y)=\hat{E}(x-y)$, the next lemma provides a simple, sufficient criterion to certify condition (\ref{cond_li}).

\begin{lemma}
Let $\hat{E}$ be a real-valued entire function on Minkowski spacetime $\mathbb{M}$ and let there be a nonzero vector $v \in \mathbb{M}$ such that, for all $z\in\mathbb{M}$, $f: \R \to \R, s \mapsto \hat{E}(z+s v)$ is not a finite sum of products of exponentials and polynomials in $s$. Then, condition (\ref{cond_li}) holds for $\tilde{E}(x,y)=\hat{E}(x-y)$.
\end{lemma}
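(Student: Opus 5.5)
We argue by contradiction. Suppose condition (\ref{cond_li}) fails for some open $Y\subset\M$. Then the span of $\{\hat{E}(\bullet-y):y\in Y\}$ is a finite-dimensional space $W$ of entire functions. This space is invariant under small translations: for $y\in Y$ and $u$ small, $\hat{E}(\bullet-(y+u))$ lies in $W$ as long as $y+u\in Y$. We would like to exploit this translation invariance along the distinguished direction $v$ to force $s\mapsto\hat{E}(z+sv)$ to solve a linear constant-coefficient ODE. That would make it a finite sum of exponential-polynomials, contradicting the hypothesis.

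The concrete steps are as follows. Fix $y_0\in Y$ and $\delta>0$ with $y_0+tv\in Y$ for $|t|<\delta$. Let $d=\dim W<\infty$ and consider the functions
\[
F_t(x):=\hat{E}(x-y_0-tv)\in W,\qquad |t|<\delta .
\]
Since $t\mapsto F_t$ is smooth (indeed analytic) as a map into the finite-dimensional space $W$, all its $t$-derivatives at $t=0$ lie in $W$, the latter being closed. Explicitly, $\partial_t^kF_t|_{t=0}=(-1)^k(v\cdot\nabla)^k\hat{E}(\bullet-y_0)\in W$ for all $k\ge 0$. Because $\dim W=d$, the $d+1$ functions $(v\cdot\nabla)^k\hat{E}(\bullet-y_0)$, $k=0,\dots,d$, are linearly dependent. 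Hence there is a nonzero polynomial $p$ of degree $\le d$ with
\[
p(v\cdot\nabla)\hat{E}(\bullet-y_0)=0
\]
identically on $\M$. To justify differentiating inside $W$, we can pick a basis $w_1,\dots,w_d$ of $W$ and points $x_1,\dots,x_d$ where the matrix $(w_i(x_l))$ is invertible. Then the coefficients of $F_t$ are obtained by inverting this matrix against $F_t(x_l)$, which is analytic in $t$.

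Now, for arbitrary $z\in\M$, restrict this identity to the line $x=y_0+z+sv$. The function $g(s):=\hat{E}(z+sv)$ then satisfies the linear ODE $p(d/ds)g=0$ with constant coefficients, and $p\neq 0$. Every solution of such an ODE is a finite sum $\sum_j q_j(s)e^{\lambda_j s}$ with polynomials $q_j$ and (possibly complex) $\lambda_j$. This is a finite sum of products of exponentials and polynomials, contradicting the hypothesis for this $z$. If real-valued combinations are required, complex exponentials can be rewritten via $\cos$ and $\sin$; we read the hypothesis as allowing complex exponents, which is the natural interpretation. Note that we do not even need the dependence $p(v\cdot\nabla)\hat{E}=0$ to be global: a contradiction at a single $z$ suffices, since the hypothesis holds for all $z$.

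The main subtlety is ensuring that $p$ is nontrivial and that the derivative functions genuinely lie in $W$, so that a single annihilating polynomial works uniformly. The basis-and-evaluation-points argument above handles this. A minor point is that $\hat{E}$ being entire guarantees smoothness along $v$, so all derivatives exist. Otherwise the argument is straightforward.
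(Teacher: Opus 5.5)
Your proof is correct and follows essentially the paper's route: finite-dimensionality of the span of the translates along $v$ forces closure under $v$-derivatives, hence a constant-coefficient linear ODE for $s\mapsto\hat{E}(z+sv)$, contradicting the hypothesis. Your evaluation-point argument is a cleaner justification of the differentiation step, which the paper handles via difference quotients and closure of finite-dimensional spaces under pointwise limits; like the paper, you need to read ``exponentials'' as allowing complex exponents.
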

\begin{proof}
Given open sets $X,Y\subset \mathbb{M}$, choose $x\in X$, $y\in Y$, and let $\lambda\in \R,\Lambda > 0$ be such that, for $\lambda\in (-\Lambda,\Lambda)$, $y+\lambda s\in Y$. Assume that the set of functions
\begin{equation}
\{f_\lambda(s):=\hat{E}(x-y+(s-\lambda)v):\lambda\in (-\Lambda,\Lambda)\},
\label{space_func}
\end{equation}
is finite dimensional, i.e., there exist functions $g_1,...,g_n:\R \to\R$, spanning the function space $\mathbb{F}$, such that
\begin{equation}
f_\lambda \in \mathbb{F},\forall \lambda\in (-\Lambda,\Lambda).
\end{equation}
In that case, the differentials of $f_\lambda(s)$ with respect to $s$ are also in $\mathbb{F}$. Indeed, for $\lambda\in (-\Lambda,\Lambda)$ we have that
\begin{equation}
\frac{\partial f_\lambda(s)}{\partial s}=\lim_{\epsilon\to 0}\frac{f_\lambda(s+\epsilon)-f_\lambda(s)}{\epsilon}=\lim_{\epsilon\to 0}\frac{f_{\lambda-\epsilon}(s)-f_\lambda(s)}{\epsilon}\in \mathbb{F},
\end{equation}
where the last relation follows from the fact that, for sufficiently small $\epsilon$, we have $\lambda-\epsilon \in (-\Lambda,\Lambda)$ and that $\mathbb{F}$ is closed with respect to pointwise limits. Higher order derivatives (well-defined, due to $\hat{E}$ being entire) can be similarly expressed as a limit of linear combinations of $f_\lambda$'s and thus satisfy $\frac{\partial^k f_\lambda}{\partial s^k}\in \mathbb{F}$. Since $\mathbb{F}$ is $n$-dimensional, though, there must exist coefficients $c_0,...,c_n$ such that 
\begin{equation}
\sum_{k=0}^n c_k\frac{\partial^k f_\lambda(s)}{\partial s^k}=0.
\end{equation}
This would imply that $f_0(s)$ is a finite sum of products of exponentials and polynomials in $s$, which contradicts the assumption of the lemma. It follows, therefore, that (\ref{space_func}) is infinite dimensional. So is then the set of functions $\{f^y:X\to\R\}_{y\in Y}$, with
\begin{equation}
f^{y}(x):=\hat{E}(x-y+sv)),
\end{equation}
and thus so is the set of functions in (\ref{cond_li}).
\end{proof}

To see that the functions in (\ref{commutator_func}) cannot be expressed as finite linear combinations of exponentials multiplied by polynomials, it suffices to evaluate them with complex variables and invoke the asymptotic properties of Bessel functions for large arguments:
\begin{equation}
J_\alpha(z)\sim \sqrt{\frac{2}{\pi z}}\left(\cos\left(z-\frac{\alpha \pi}{2}-\frac{\pi}{4}\right)+e^{|\mbox{Im}(z)|}O\left(|z|^{-1}\right)\right).
\end{equation}
Finite linear combinations of exponentials multiplied by polynomials, if they decay, they do so exponentially, not as $O(z^{-1/2})$. It follows that both $\hat{E}_{1+1}$ and $\hat{E}_{1+3}$ satisfy the conditions of the lemma.

\subsection{Two-site (null) genericity}
\label{app:generic_two_site}
In this section, we will prove that all QFTs satisfying the conditions of Propositions \ref{prop:generic} or \ref{prop:null-generic} are, respectively, two-site generic and two-site null-generic QFTs.

\begin{lemma}
\label{lemma:n_indep}
Let $\Phi$ be a scalar QFT with commutator function $E$ satisfying conditions (\ref{cond_analytic}), (\ref{cond_li}). Then, $\Phi$ is two-site generic. 
\end{lemma}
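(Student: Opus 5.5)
We must show: for any $n$, any regular compact $\K$ and any open $V$ with $V\cap J(\K)\neq\emptyset$, there are $f_1,\dots,f_n\in C_0^\infty(\K)$ with $\{\xi_V(f_j)\}_j$ linearly independent. The approach is to shrink to a setting where $E$ reduces to its analytic part $\tilde E$, and then transfer the infinite-dimensionality condition (\ref{cond_li}) from kernels to smeared functions.

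\emph{Step 1 (geometry).} Since $V\cap J(\K)\neq\emptyset$ and $\K$ is regular (the closure of its interior), I would pick $x_0\in V$ and $y_0\in\Int{\K}$ with $x_0$ in the \emph{open} chronological future or past of $y_0$. This uses that the interior of $\K$ is dense in $\K$, and that $V$ is open, so a point of $V\cap J(\K)$ can be perturbed into $I^\pm(\Int{\K})$.

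Next I choose small open neighbourhoods $X\subset V$ of $x_0$ and $Y\subset\Int{\K}$ of $y_0$ such that every $x\in X$ lies in $I^+(y)$, say, for every $y\in Y$. On $X\times Y$ we then have $I(x,y)=1$, and $H(x,y)=0$ because $x\notin N(y)$. So for $f\in C_0^\infty(Y)$,
\[
(Ef)(x)=\int_Y \tilde E(x,y)f(y)\,dy \quad\text{for } x\in X.
\]
It therefore suffices to make the restrictions to $X$ of these functions linearly independent, because independence on $X\subset V$ implies independence on $V$.

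\emph{Step 2 (from kernels to smeared functions).} Let $T:C_0^\infty(Y)\to C^\infty(X)$ be the map $f\mapsto \int\tilde E(\cdot,y)f(y)dy$. I claim $\dim T(C_0^\infty(Y))=\infty$. Suppose instead the range were contained in a finite-dimensional space $F$. Take mollifiers $f_k\to\delta_y$; then $Tf_k\to\tilde E(\cdot,y)|_X$ pointwise, using the continuity of $\tilde E$ (it is entire in its first argument, and I need joint continuity in $y$, which I would assume is part of the meaning of (\ref{cond_analytic})). Since $F$ is closed under pointwise limits, $\tilde E(\cdot,y)|_X\in F$ for all $y\in Y$.

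By analyticity of $\tilde E(\cdot,y)$, a linear relation among finitely many of these restricted functions on the open set $X$ propagates to all of $\M$ (assuming $\M$ connected, or working componentwise). So $\mathrm{span}\{\tilde E(\cdot,y):y\in Y\}$ would be finite dimensional, contradicting (\ref{cond_li}). Once $\dim T(C_0^\infty(Y))\geq n$, I pick $f_1,\dots,f_n\in C_0^\infty(Y)\subset C_0^\infty(\K)$ whose images are linearly independent.

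\emph{Main obstacle.} I expect the analytic-continuation step to be the hardest: passing from a finite-dimensionality statement on the small set $X$ to one on all of $\M$, where (\ref{cond_li}) is posed. This requires the identity theorem for real-analytic (entire) functions on a connected domain, together with care about the regularity of $\tilde E$ in $y$. A secondary concern is the geometric claim that $x_0$ can be chosen timelike-related to an interior point of $\K$, which needs regularity of $\K$ and openness of $V$. If only null relations were available, the argument would not work, and that case belongs to the null-generic setting instead.
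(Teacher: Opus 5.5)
Your proof is correct and follows essentially the same route as the paper: you choose timelike-related open sets $X\subset V$ and $Y\subset\Int{\K}$ so that $E$ reduces to $\pm\tilde{E}$ on $X\times Y$, use analyticity to transfer linear (in)dependence of the kernels $\tilde{E}(\bullet,y)$ between $X$ and $\M$, and approximate the point sources $\delta_y$ by test functions supported in $Y$. The only difference is packaging. The paper argues directly, picking $n$ independent kernels, a dual family $g^j\in L^2(X)$, and using continuity of $\det M(\vec{f})$, whereas you argue by contradiction via closedness of finite-dimensional function spaces under pointwise limits; both versions tacitly need continuity of $\tilde{E}$ in $y$ and connectedness of $\M$.
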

\begin{proof}
Let $\K$ be a regular, compact set, and let $V\subset \M$ be an open set with $V\cap J(\K)\not=\emptyset$. Then, there exist open sets $X\subset V, Y\subset \K$ such that either $X\subset Y^\vee\setminus N(Y)$ or $X\subset Y^\wedge\setminus N(Y)$. Indeed, from $V\cap J(\K)\not=\emptyset$ there exists $y\in \mbox{int}(\K)$ such that either $J^+(y)\cap (V\setminus N(y))\not=\emptyset$ or $J^-(y)\cap (V\setminus N(y))\not=\emptyset$. Let us assume the former possibility, which can be rewritten as $\{y_j\}^\vee\cap (V\setminus N(y))\not=\emptyset$. Since the first set is closed; and the second, open, there exists a neighborhood $Y$ of $y$ such that $Y\subset \K$, $Y^\vee\cap (V\setminus N(y))\not=\emptyset$. Thus, there exists an open set $X\subset V$ with $X\subset Y^\vee\setminus N(Y)$. Had we assumed $J^-(y)\cap V\setminus N(y)\not=\emptyset$, we would have similarly concluded that $X\subset Y^\wedge\setminus N(Y)$.

Now, by condition \eqref{cond_li}, there exist $y^1,\ldots,y^n \in Y$ such that 
    \begin{equation}
        \{ \tilde{E}(\bullet,y^j) \}_{j=1}^n
    \end{equation}
    are linearly independent. Since these functions are entire \eqref{cond_analytic}, their restrictions to $X$,
    \begin{equation}
        \{ \left. \tilde{E}(\bullet,y^j)\right\rvert_{X} \}_{j=1}^n,
    \end{equation}
    are also linearly independent. Thus, there exist functions $g^1,...,g^n\in L^2(X)$ such that the $n\times n$ matrix $M$ with entries
    \begin{equation}
    M_{jk}:=\int_X \mathrm{d}x g^j(x) \tilde{E}(x,y^k)
    \end{equation}
    satisfies $M=\id_n$. In particular, $\mbox{det}(M)=1$. 

For $j=1,...,n$, let $(f^j_k)_k$ denote a sequence of smooth functions with support within $Y$ such that (in the distributional sense) $\lim_{k\to\infty}f^j_k(y)=\delta(y-y^j)$. Then we have that
\begin{equation}
    \lim_{k\to\infty} \mbox{det}(M(\vec{f}_k)) =1,
\end{equation}
where, for any set of smooth functions $s^1,...,s^n$, the matrix $M(\vec{s})$ is defined as $M_{jk}(\vec{s}):=\langle g^j, \tilde{E}s^k\rangle$. Thus, for sufficiently large $k$, it follows that $\vec{f}:=\vec{f}_k$ satisfies $\mbox{det}(M(\vec{f}))\not=0$ and hence 
\begin{equation}
\{\left.\tilde{E}f^j\right|_{X}\}_{j=1}^n
\end{equation}
are linearly independent. Finally, since $\mbox{supp}(f^j)\subset Y$, it follows, by eq. (\ref{cond_analytic}), that 
\begin{equation}
\left.\tilde{E}f^j\right|_{X}= \left\lbrace \begin{matrix} +\left.Ef^j\right|_{X},&\mbox{ if } X\subset Y^\vee\setminus N(Y)\\
-\left.Ef^j\right|_{X},&\mbox{ if } X\subset Y^\wedge\setminus N(Y). \end{matrix}\right.
\end{equation}
Therefore, $\{ \left.Ef^j\right|_{X} \}_{j=1}^n$, and hence $\{ \left.Ef^j\right|_{V} \}_{j=1}^n$, forms a linearly independent set. 
\end{proof}

\begin{lemma}\label{lem:nullgeneric}
Let $\Phi$ be a scalar QFT whose commutator function $E$ is singular along null directions. Then, $\Phi$ is a two-site null-generic QFT.
\end{lemma}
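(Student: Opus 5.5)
We must show: given a regular compact $\K$, an open $V$ with $V\cap N(\K)\neq\emptyset$, and any $n$, there exist $f_1,\dots,f_n\in C_0^\infty(\K)$ such that $\{\xi_V(f_j)\}_j$ are linearly independent. The plan mirrors Lemma \ref{lemma:n_indep}, with singularities playing the role that analyticity played there.

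First, the geometric reduction. Pick $x\in V\cap N(\K)$. Then there is a point $y\in\K$ connected to $x$ by a null geodesic segment that is not timelike-connected. Since $\K$ is regular, I will perturb $y$ into $\mathrm{int}(\K)$ and $x$ within $V$, keeping them null-related. This should be possible because null-relatedness is a codimension-one condition and $V$ is open. So I assume $y\in\mathrm{int}(\K)$.

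Next, choose distinct points $y^1,\dots,y^n\in\mathrm{int}(\K)$ near $y$, together with points $x^1,\dots,x^n\in V$ such that:
\begin{itemize}
\item $x^j$ is joined to $y^j$ by a null geodesic;
\item the null geodesics through different $y^k$ do not pass through $x^j$ for $k\neq j$.
\end{itemize}
Varying $y^j$ along a spacelike direction transverse to the null cone at $x$ should achieve this.

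Then take $f_j\in C_0^\infty(B_j)$ with small neighbourhoods $B_j\ni y^j$ inside $\K$, and with $f_j(y^j)\neq0$. By the wavefront condition \eqref{eq:wavefront} and the propagation-of-singularities calculus, $Ef_j$ is singular at $x^j$: its wavefront set contains the null covector there obtained by transporting the conormal direction of the nondegenerate source. Meanwhile, $Ef_k$ for $k\neq j$ is smooth near $x^j$, because $x^j\notin N(\overline{B_k})$ once $B_k$ is small enough.

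Linear independence then follows. If $\sum_k c_k\,\xi_V(f_k)=0$, restrict to a neighbourhood of $x^j$. Every term with $k\neq j$ is smooth there, so $c_j Ef_j$ is smooth there too. This forces $c_j=0$, and repeating for each $j$ gives independence.

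The main obstacle is rigorously guaranteeing that $Ef_j$ really is singular at $x^j$. The wavefront set of $E$ is only stated as a superset, and smearing against a smooth compactly supported $f_j$ generally smooths singularities. In fact, $Ef$ for smooth $f$ is smooth, which would ruin the argument.

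So the plan must be modified. Instead of exact singularities, I would use a sequence $f_j^{(m)}\to\delta_{y^j}$, as in Lemma \ref{lemma:n_indep}. Rather than a determinant of a Gram matrix with $L^2$ test functions, I would use microlocal test functionals: choose cutoffs $\chi_j$ near $x^j$ and high-frequency oscillatory probes $e^{i\lambda k\cdot x}$ along the null covector $k$. By \eqref{eq:wavefront}, the quantities
\[
\langle \chi_j e^{i\lambda k\cdot}, Ef^{(m)}_l\rangle
\]
exhibit non-rapid decay in $\lambda$ for $l=j$ in the limit $m\to\infty$, and rapid decay for $l\neq j$. At suitably large fixed $m$ and $\lambda$, the resulting matrix is diagonally dominant, hence invertible, which yields linear independence. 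Making this uniform in $m$ versus $\lambda$ is the delicate step, and I would try first to exploit the nonempty r.h.s. of \eqref{eq:wavefront} via a Fourier-transform characterization with fixed, carefully scaled bumps.
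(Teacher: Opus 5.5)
There is a genuine gap, and it is not the $m$-versus-$\lambda$ ordering you flag. That part is harmless: for a fixed probe $\phi$ one has $\langle \phi, Ef^{(m)}_l\rangle=-\langle E\phi, f^{(m)}_l\rangle\to-(E\phi)(y^l)$. So you may choose the probes using the limiting distributions $E\delta_{y^l}$ and then take $m$ large, as in the determinant argument of Lemma \ref{lemma:n_indep}.

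The real problem is that your diagonal-dominance argument needs two microlocal facts about these slices that the hypothesis does not give you:
\begin{itemize}
\item The off-diagonal entries must decay rapidly in $\lambda$, i.e.\ $E\delta_{y^l}$ must be microlocally smooth at $(x^j,k)$ for $l\neq j$. This is an \emph{upper} bound on wavefront sets. But \eqref{eq:wavefront} is only an inclusion $WF(E)\supset\ldots$ and says nothing about $E$ being smooth away from null-related pairs. Equality holds for normally hyperbolic operators, but the lemma does not assume it.
\item The diagonal entries must not decay rapidly, i.e.\ $(x^j,k)\in WF(E\delta_{y^j})$. But a wavefront point of a bidistribution need not be visible in its slices at fixed $y$. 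For example, $u(x,y)=y\,(x-y)_+$ has $(0,\xi;0,-\xi)\in WF(u)$ for $\xi\neq0$, yet $u(\cdot,0)\equiv0$. So this step also needs an argument you have not supplied.
\end{itemize}

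The missing idea is to avoid point sources and estimates altogether and argue by contradiction at the level of the kernel. Take $W=\mathrm{int}(\K)$, so that $V\cap N(W)\neq\emptyset$ since $\K$ is regular. Suppose $\xi_V$ had finite-dimensional range on $C_0^\infty(W)$. Then on $V\times W$ the kernel of $E$ is a finite sum $\sum_j u_j(x)v_j(y)$ with $u_j\in C^\infty(V)$ and $v_j\in{\cal D}'(W)$. The wavefront bound for tensor products (H\"ormander, Thm.~8.2.9) then confines $WF(E|_{V\times W})$ to covectors of the form $(x,0;y,\eta)$. But \eqref{eq:wavefront} supplies a covector $(x,k_x;y,-k_y)$ over $V\times W$ with $k_x\neq0$, a contradiction.

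Hence the range is infinite-dimensional, which immediately gives $f_1,\dots,f_n\in C_0^\infty(\K)$ with $\{\xi_V(f_j)\}_j$ linearly independent. This is the paper's proof, and it uses only the lower bound on $WF(E)$ that the lemma actually assumes.
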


\begin{proof}
     Let $E$ be singular along null directions. Assuming that $\operatorname{Ran} \xi_V$ is finite-dimensional implies that $\xi_V$ is finite-rank, i.e., for any open set $W \subset K$ there exist $u_j \in C^\infty(V)$ and $v_j \in {\cal D}'(W)$ (smooth distributions) such that
    \begin{equation}
        E(x,y) = \sum_{j=1}^l u_j(x) v_j(y), \quad \forall x \in V, y \in W.
    \end{equation}
    But then, by \cite[Theorem 8.2.9]{Hrmander1990TheAO}
    \begin{align} \label{eq:wfe}
        WF(\left. E\right|_{V \times W}) \subset &\bigcup_{j=1}^l \bigg(\left(WF(u_j) \times WF(v_j) \right)\cup (\supp{u_j}\times \{0\}\times WF(v_j)) \nonumber\\ &\cup (WF(u_j)\times \supp (v_j)\times \{0\})\bigg)\subset \{ (x,0; y, \eta) | \, x \in V, (y,\eta) \in T^\ast W \}
    \end{align}
    using that $WF(u_j) = \emptyset$. However, since $V \cap N(K) \neq \emptyset$ and $V$ is open, also $V \cap N(W) \neq \emptyset$, choosing $W$ to be the interior of $K$. Thus, in this case, $V$ and $W$ are connected by a null geodesic segment and, by the assumption that $E$ is singular along null directions, the wavefront set of the restriction of $E$ to $V\times W$ must contain nonzero null covectors contradicting \eqref{eq:wfe}. Thus, $\operatorname{Ran} \xi_V$ is infinite-dimensional and for any $n \in \N$, $j=1,..,n$, we find functions $f_j \in C_K^\infty(\M)$ such that the functions $\{ \left. Ef_j \right|_{V} \}_{j=1}^n$ are linearly independent.
\end{proof}

\section{A lengthy integral calculation in QFT: proof of Proposition \ref{prop:lengthy_calculation}}
\label{app:lengthy_calculation}

To prove the Proposition, we will rely on the following result:
\begin{prop}
\label{prop:integral_tocha}
Let $\varphi:\R\times \R^{2n}\to\H_\Phi$ be a continuous, bounded function, let $T_\lambda:=\Psi(g'_\lambda)$, for some $g'_\lambda\in C_0^\infty(\M)$ such that $\lim_{\lambda\to 0}g'_\lambda$ exists in $C_0^\infty(\M)$ and $[T_\lambda,R_k]=0,\forall k$. Let $\vec{\alpha}_1,\vec{\alpha}_2,\vec{\beta}_1, \vec{\beta}_2\in \R^{2n}$. The expression
\begin{align}
\lim_{\lambda\to 0}&\frac{1}{(2\pi)^n}\int d\vec{\xi}e^{-\frac{\vec{\xi}^2}{4}}\omega'\left(e^{-i\frac{\cos(\lambda)}{\lambda}\vec{\alpha}_1\cdot \vec{R}}e^{i\frac{\cos(\lambda)}{\lambda}\vec{\beta}_1\cdot \vec{R}}e^{-i\cos(\lambda)\vec{\xi}\cdot\vec{R}}e^{-i\sin(\lambda)\vec{g}\cdot\vec{R}}\right.\nonumber\\
&\left.e^{-i\frac{\cos(\lambda)}{\lambda}\vec{\beta}_2\cdot \vec{R}}e^{i\frac{\cos(\lambda)}{\lambda}\vec{\alpha}_2\cdot \vec{R}}e^{-i\lambda T_\lambda}\right)\varphi(\lambda, \sin(\lambda)\vec{\xi})
\label{nasty_expr}
\end{align}
equals $\varphi(0,0)$, if
\begin{equation}
\vec{\alpha}_1=\vec{\beta}_1, \vec{\alpha}_2=\vec{\beta}_2
\end{equation}
or $0$, otherwise.
\end{prop}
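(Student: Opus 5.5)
The plan is to reduce the expression to explicit Gaussian integrals, using the quasi-free structure of $\omega'$ from eq.~(\ref{cova_prime}). First, I will use the Weyl relations (\ref{eq:abstractweylrel}) to merge the six Weyl factors in $\vec{R}$ into a single exponential $e^{i\vec{v}\cdot\vec{R}}$, times an explicit phase $e^{i\theta}$. Here
\begin{equation}
\vec{v}=\frac{\cos\lambda}{\lambda}\vec{\delta}-\cos(\lambda)\vec{\xi}-\sin(\lambda)\vec{g},
\qquad
\vec{\delta}:=\vec{\beta}_1-\vec{\alpha}_1+\vec{\alpha}_2-\vec{\beta}_2 .
\end{equation}
The phase $\theta$ is a sum of symplectic products $\vec{a}\cdot\sigma\vec{b}$ between the six vectors. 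Since $[T_\lambda,R_k]=0$, the factor $e^{-i\lambda T_\lambda}$ commutes with $e^{i\vec{v}\cdot\vec{R}}$. By (\ref{cova_prime}) we have $\Gamma'(\vec{h},g'_\lambda)=0$, so $\omega'$ factorizes between the modes and $T_\lambda$. Together with (\ref{values_vacuum}) this gives
\begin{equation}
\omega'(\cdots)=e^{i\theta}\,e^{-\frac{|\vec{v}|^2}{4}}\,e^{-\frac{\lambda^2\Gamma'(g'_\lambda,g'_\lambda)}{4}} .
\end{equation}
The last factor tends to $1$, because $g'_\lambda$ converges in $C_0^\infty(\M)$ and $\Gamma'$ is continuous.

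Next I split into cases according to the terms of $\theta$ and $|\vec{v}|^2$ that are singular in $1/\lambda$. If $\vec{\delta}\neq 0$, then $|\vec{v}|^2\geq c/\lambda^2$ outside a ball of $\xi$ of radius $\sim|\vec{\delta}|/\lambda$. Inside that ball the weight $e^{-\vec{\xi}^2/4}$ is itself of order $e^{-c'/\lambda^2}$. Since $\varphi$ is bounded, dominated convergence then gives the limit $0$. If $\vec{\delta}=0$ but $(\vec{\alpha}_1,\vec{\alpha}_2)\neq(\vec{\beta}_1,\vec{\beta}_2)$, then $\vec{\beta}_1-\vec{\alpha}_1=\vec{\beta}_2-\vec{\alpha}_2=:\vec{u}\neq 0$. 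In this case $\theta$ contains a term linear in $\vec{\xi}$ of the form $\frac{\cos^2\lambda}{\lambda}\,\vec{\xi}\cdot\sigma\vec{w}$, with $\vec{w}$ a nonzero combination of the $\alpha$'s and $\beta$'s proportional to $\vec{u}$ (invertibility of $\sigma$ is used here). The $\xi$-integral is then a Gaussian Fourier transform evaluated at frequency $\sim|\sigma\vec{w}|/\lambda$. To handle the $\lambda$- and $\xi$-dependence of $\varphi(\lambda,\sin(\lambda)\vec{\xi})$, I will write the integrand as the Gaussian times a bounded continuous function and argue via a Riemann--Lebesgue type estimate, uniform in $\lambda$. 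Concretely, I will approximate $\varphi$ by a function independent of $\xi$: this is possible since $\sin(\lambda)\vec{\xi}\to 0$ on the effective support of the Gaussian, and continuity of $\varphi$ controls the error. The remaining explicit Gaussian integral decays like $e^{-c/\lambda^2}$, so the limit is $0$.

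Finally, if $\vec{\alpha}_1=\vec{\beta}_1$ and $\vec{\alpha}_2=\vec{\beta}_2$, the adjacent pairs $e^{-i\frac{\cos\lambda}{\lambda}\vec{\alpha}_j\cdot\vec{R}}e^{i\frac{\cos\lambda}{\lambda}\vec{\alpha}_j\cdot\vec{R}}$ cancel exactly, so $\theta$ retains only the $O(\lambda)$ cross term between $\vec{\xi}$ and $\vec{g}$, and $\vec{v}=-\cos(\lambda)\vec{\xi}-\sin(\lambda)\vec{g}$. Dominated convergence then gives
\begin{equation}
\frac{1}{(2\pi)^n}\int d\vec{\xi}\,e^{-\frac{\vec{\xi}^2}{4}}e^{-\frac{\vec{\xi}^2}{4}}\,\varphi(0,0)=\varphi(0,0),
\end{equation}
because $\int_{\R^{2n}}e^{-\vec{\xi}^2/2}\,d\vec{\xi}=(2\pi)^n$. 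I expect the main obstacle to be the second case: tracking the Weyl phases exactly enough to identify the surviving oscillatory term linear in $\vec{\xi}$, and showing that it is not cancelled, together with the uniform control of $\varphi$ needed for the Riemann--Lebesgue step.
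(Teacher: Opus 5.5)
Your proposal is correct and follows essentially the same route as the paper. Both arguments merge the Weyl factors and evaluate the quasi-free state $\omega'$ to get a Gaussian in $\frac{\cos\lambda}{\lambda}(\vec{\beta}_1-\vec{\alpha}_1+\vec{\alpha}_2-\vec{\beta}_2)-\cos(\lambda)\vec{\xi}-\sin(\lambda)\vec{g}$ times phases. They dispose of the case where the $1/\lambda$ part of this vector is nonzero by dominated convergence, and of the remaining off-diagonal case via the phase linear in $\vec{\xi}$ with frequency of order $1/\lambda$, which is nonzero because $\sigma$ is invertible. They then recover $\varphi(0,0)$ in the diagonal case, again by dominated convergence.

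The only minor difference is in the oscillatory case. The paper changes variables and truncates to a ball $\|\vec{\xi}\|\leq r$ before a Riemann--Lebesgue-type estimate. You instead replace $\varphi$ by a constant on all of $\R^{2n}$ (by dominated convergence) and evaluate the remaining Gaussian Fourier integral explicitly, which is if anything slightly cleaner.
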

\begin{proof}
By eq. (\ref{cova_prime}), we have that:
\begin{align}
&\frac{1}{(2\pi)^n}\int d\vec{\xi}e^{-\frac{\vec{\xi}^2}{4}}\omega'\left(e^{-i\frac{\cos(\lambda)}{\lambda}\vec{\alpha}_1\cdot \vec{R}}e^{i\frac{\cos(\lambda)}{\lambda}\vec{\beta}_1\cdot \vec{R}}e^{-i\cos(\lambda)\vec{\xi}\cdot\vec{R}}e^{-i\sin(\lambda)\vec{g}\cdot\vec{R}}\right.\nonumber\\
&\left.e^{-i\frac{\cos(\lambda)}{\lambda}\vec{\beta}_2\cdot \vec{R}}e^{i\frac{\cos(\lambda)}{\lambda}\vec{\alpha}_2\cdot \vec{R}}e^{-i\lambda T_\lambda}\right)\varphi(\lambda, \sin(\lambda)\vec{\xi})\nonumber\\
&=\frac{1}{(2\pi)^n}\int d\vec{\xi}e^{-\frac{\vec{\xi}^2}{4}}e^{\frac{i}{2}\frac{\cos(\lambda)^2}{\lambda^2}(\vec{\alpha}_1\cdot \sigma\vec{\beta}_1+\vec{\beta}_2\cdot \sigma\vec{\alpha}_2+(\vec{\alpha}_1-\vec{\beta}_1)\cdot \sigma (\vec{\alpha}_2-\vec{\beta}_2)}e^{-\frac{i}{2}\frac{\cos(\lambda)\sin(\lambda)}{\lambda}(\vec{\alpha}_1-\vec{\beta}_1+\vec{\alpha}_2-\vec{\beta}_2)\cdot \sigma \vec{g}}\nonumber\\
&e^{\frac{i}{2}\frac{\cos(\lambda)^2}{\lambda}(\vec{\alpha}_1-\vec{\beta}_1+\vec{\alpha}_2-\vec{\beta}_2)\cdot \sigma \vec{\xi}}e^{\frac{i}{2}\sin(\lambda)\cos(\lambda)\vec{\xi}\cdot \sigma \vec{g}}\omega'\left(e^{-i\left[\frac{\cos(\lambda}{\lambda}(\vec{\alpha}_1-\vec{\beta}_1-\vec{\alpha}_2+\vec{\beta}_2)-\cos(\lambda)\vec{\xi}+\sin(\lambda)\vec{g}\right]\cdot \vec{R}}e^{-i\lambda T_\lambda}\right)\nonumber\\
&\varphi(\lambda, \sin(\lambda)\vec{\xi})\nonumber\\
&=\frac{e^{-\lambda^2 c(\lambda)}}{(2\pi)^n}\int d\vec{\xi}e^{-\frac{\vec{\xi}^2}{4}}e^{\frac{i}{2}\frac{\cos(\lambda)^2}{\lambda^2}(\vec{\alpha}_1\cdot \sigma\vec{\beta}_1+\vec{\beta}_2\cdot \sigma\vec{\alpha}_2+(\vec{\alpha}_1-\vec{\beta}_1)\cdot \sigma (\vec{\alpha}_2-\vec{\beta}_2)}e^{-\frac{i}{2}\frac{\cos(\lambda)\sin(\lambda)}{\lambda}(\vec{\alpha}_1-\vec{\beta}_1+\vec{\alpha}_2-\vec{\beta}_2)\cdot \sigma \vec{g}}\nonumber\\
&e^{\frac{i}{2}\frac{\cos(\lambda)^2}{\lambda}(\vec{\alpha}_1-\vec{\beta}_1+\vec{\alpha}_2-\vec{\beta}_2)\cdot \sigma \vec{\xi}}e^{\frac{i}{2}\sin(\lambda)\cos(\lambda)\vec{\xi}\cdot \sigma \vec{g}}e^{-\frac{1}{4}\left[\frac{\cos(\lambda}{\lambda}(\vec{\alpha}_1-\vec{\beta}_1-\vec{\alpha}_2+\vec{\beta}_2)-\cos(\lambda)\vec{\xi}+\sin(\lambda)\vec{g}\right]^2}\nonumber\\
&\varphi(\lambda, \sin(\lambda)\vec{\xi}),
\label{comp_expr}
\end{align}
for some bounded $c(\lambda)>0$.

For fixed $\vec{\xi}$, the last exponential tends to zero in the limit $\lambda\to 0$, unless
\begin{equation}
\vec{\alpha}_1-\vec{\beta}_1-(\vec{\alpha}_2-\vec{\beta}_2)=0.
\label{first_vector_cond}
\end{equation}
This, together with the integration measure $d\xi e^{-\frac{\vec{\xi}^2}{4}}$, implies that the left-hand side of eq. (\ref{comp_expr}) tends to zero as $\lambda\to 0$ if the vector condition (\ref{first_vector_cond}) is violated.

Hence, we next assume that eq. (\ref{first_vector_cond}) holds, in which case eq. (\ref{comp_expr}) simplifies to:
\begin{align}
&\frac{e^{-\lambda^2 c(\lambda)}}{(2\pi)^n}\int d\vec{\xi}e^{-\frac{\vec{\xi}^2}{4}}e^{i\frac{\cos(\lambda)^2}{2\lambda^2}(\vec{\alpha}_1\cdot \sigma\vec{\beta}_1+\vec{\beta}_2\cdot \sigma\vec{\alpha}_2)}e^{-\frac{i}{2}\frac{\cos(\lambda)\sin(\lambda)}{\lambda}(\vec{\alpha}_1-\vec{\beta}_1+\vec{\alpha}_2-\vec{\beta}_2)\cdot \sigma \vec{g}}\nonumber\\
&e^{i\frac{\cos(\lambda)^2}{2\lambda}(\vec{\alpha}_1-\vec{\beta}_1+\vec{\alpha}_2-\vec{\beta}_2)\cdot \sigma \vec{\xi}}e^{\frac{i}{2}\sin(\lambda)\cos(\lambda)\vec{\xi}\cdot \sigma \vec{g}}e^{-\frac{(\sin(\lambda)\vec{g}-\vec{\xi})^2}{4}}\varphi(\lambda, \sin(\lambda)\vec{\xi}).
\end{align}

Taking its vector norm, eliminating the term $\frac{e^{-\lambda^2 c(\lambda)}}{(2\pi)^n}$ and rearranging, we arrive at:
\begin{align}
&\gamma_\lambda:=\left\|\int d\vec{\xi}e^{-\frac{\vec{\xi}^2}{4}}e^{i\frac{\cos(\lambda)^2}{2\lambda}(\vec{\alpha}_1-\vec{\beta}_1+\vec{\alpha}_2-\vec{\beta}_2)\sigma\vec{\xi}}e^{-\frac{(\sin(\lambda)\vec{g}-\cos(\lambda)\vec{\xi})^2}{4}}\tilde{\varphi}(\lambda,\sin(\lambda)\vec{\xi})\right\|,
\end{align}
with
\begin{equation}
\tilde{\varphi}(\lambda,\sin(\lambda)\vec{\xi}):=e^{\frac{i}{2}\cos(\lambda)\sin(\lambda)\vec{\xi}\cdot \sigma \vec{g}}\varphi(\lambda,\sin(\lambda)\vec{\xi}).
\end{equation}
Note that $\tilde{\varphi}$ is also continuous and bounded.

Making the change of variables 
\begin{equation}
\vec{\xi}\to \vec{\xi}':=\sqrt{1+\cos(\lambda)^2}\vec{\xi}-\frac{\sin(\lambda)\cos(\lambda)}{\sqrt{1+\cos(\lambda)^2}}\vec{g},    
\end{equation}
we obtain, after eliminating irrelevant complex phases, that
\begin{align}
&\gamma_\lambda=s(\lambda)\left\|\int d\vec{\xi}e^{-\frac{\vec{\xi}^2}{4}}e^{i\frac{\cos(\lambda)^2}{2\lambda \sqrt{1+\cos(\lambda)^2}}(\vec{\alpha}_1-\vec{\beta}_1+\vec{\alpha}_2-\vec{\beta}_2)\sigma\vec{\xi}}\tilde{\varphi}\left(\lambda,\sin(\lambda)\vec{\xi}_\lambda\right)\right\|,
\end{align}
with 
\begin{align}
s(\lambda):=\frac{e^{-\frac{\sin(\lambda)^2\vec{g}^2}{4(1+\cos(\lambda)^2)}}}{(1+\cos(\lambda)^2)^n},
\vec{\xi}_\lambda:=\frac{1}{\sqrt{1+\cos(\lambda)^2}}\vec{\xi}+\frac{\sin(\lambda)\cos(\lambda)}{1+\cos(\lambda)^2}\vec{g}.
\end{align}
Note that 
\begin{align}
&\lim_{\lambda\to 0}s(\lambda)=\frac{1}{2^n},\lim_{\lambda\to 0}\vec{\xi}_\lambda=\frac{\vec{\xi}}{\sqrt{2}}.
\end{align}

Define $\tilde{\gamma}_\lambda:=\frac{\gamma_\lambda}{s(\lambda)}$. Thanks to the continuity and boundedness of $\tilde{\varphi}$, the value of $\tilde{\gamma}_\lambda$ can be upper bounded by dividing the range of integration over $\vec{\xi}$. Let $\|\tilde{\varphi}(\bullet)\|\leq C$ and, for any $\delta>0$ sufficiently small, let $r>0$ be such that
\begin{equation}
\int_{\|\xi\|>r} d\vec{\xi}e^{-\frac{\vec{\xi}^2}{4}}=\delta.
\end{equation}
Then we have that
\begin{align}
\tilde{\gamma}_\lambda&\leq \left\|\int_{\|\xi\|\leq r} d\vec{\xi}e^{-\frac{\vec{\xi}^2}{4}}e^{i\frac{\cos(\lambda)^2}{2\lambda \sqrt{1+\cos(\lambda)^2}}(\vec{\alpha}_1-\vec{\beta}_1+\vec{\alpha}_2-\vec{\beta}_2)\sigma\vec{\xi}}\tilde{\varphi}\left(\lambda,\sin(\lambda)\vec{\xi}_\lambda\right)\right\|+ C\delta\nonumber\\
&\leq \left\|\int_{\|\xi\|\leq r} d\vec{\xi}e^{-\frac{\vec{\xi}^2}{4}}e^{i\frac{\cos(\lambda)^2}{2\lambda \sqrt{1+\cos(\lambda)^2}}(\vec{\alpha}_1-\vec{\beta}_1+\vec{\alpha}_2-\vec{\beta}_2)\sigma\vec{\xi}}\tilde{\varphi}(0,0)\right\|\nonumber\\
&+\int_{\|\xi\|\leq r} d\vec{\xi}e^{-\frac{\vec{\xi}^2}{4}}\left\|\tilde{\varphi}(0,0)-\tilde{\varphi}\left(\lambda,\sin(\lambda)\vec{\xi}_\lambda\right)\right\|+C\delta.
\end{align}

In the limit $\lambda\to 0$, the second term of the last upper bound tends to zero. Moreover, if the vector 
\begin{equation}
\vec{\alpha}_1-\vec{\beta}_1+\vec{\alpha}_2-\vec{\beta}_2    
\label{second_vector}
\end{equation}
is not zero, so does the first term too. In that case, $\lim_{\lambda\to 0}\tilde{\gamma}_\lambda\leq \delta C$. Since $\delta$ was arbitrary, it follows that $\lim_{\lambda\to 0}\tilde{\gamma}_\lambda=0$.

In other words: in the limit $\lambda\to 0$, eq. (\ref{comp_expr}) vanishes, unless eq. (\ref{first_vector_cond}) holds and the vector (\ref{second_vector}) vanishes. This would imply that $\vec{\alpha}_1=\vec{\beta}_1$, $\vec{\alpha}_2=\vec{\beta}_2$, in which case eq. (\ref{nasty_expr}) reduces to:
\begin{align}
&s(\lambda)\frac{e^{-c(\lambda)\lambda^2}}{(2\pi)^n}\int d\vec{\xi}e^{-\frac{\vec{\xi}^2}{4}}\tilde{\varphi}(\lambda,\sin(\lambda)\vec{\xi}_\lambda).
\end{align}
Dividing the integral into integration regions like above, one finds that, in the limit $\lambda\to 0$, this expression tends to:
\begin{align}
&\frac{1}{(4\pi)^n}\int d\vec{\xi}e^{-\frac{\vec{\xi}^2}{4}}\tilde{\varphi}(0,0)=\varphi(0,0).
\end{align}
    
\end{proof}

\begin{proof}[Proof of Proposition \ref{prop:lengthy_calculation}]
Let $g=\vec{g}\cdot\vec{h}+\tilde{g}$, with $E(\tilde{g},h_j)=0$, for all $j$, and let $\ket{\phi}\in\H_\Phi$. Then,
\begin{align}
&(\omega_\lambda'\otimes \id_\Phi)\circ\Theta_\lambda\left(e^{i\frac{\Psi(f_j)}{\lambda}}\Pi e^{-i\frac{\Psi(f_k)}{\lambda}} e^{i\Phi(g)}\right)\ket{\phi}\nonumber\\
&=\frac{1}{(2\pi)^n}\int d\vec{\xi} e^{-\frac{\vec{\xi}^2}{4}}(\omega_\lambda'\otimes \id_\Phi)\circ\Theta_\lambda\left(e^{i\frac{\Psi(f_j)}{\lambda}} e^{-i\vec{\xi}\cdot \Psi(\vec{h})}e^{i\Phi(g)}e^{-i\frac{\Psi(f_k)}{\lambda}} \right)\ket{\phi}\nonumber\\
&=\frac{1}{(2\pi)^n}\int d\vec{\xi} e^{-\frac{\vec{\xi}^2}{4}}\omega'_\lambda\left(e^{i\frac{\cos(\lambda)}{\lambda}\Psi(f_j)}e^{-i\cos(\lambda)\vec{\xi}\cdot \vec{R}}e^{-i\sin(\lambda)\vec{g}\cdot\vec{R}}e^{-i\frac{\cos(\lambda)}{\lambda}\Psi(f_k)}e^{-i\lambda T_\lambda}\right)\cdot\nonumber\\
&\left(e^{i\frac{\sin(\lambda)}{\lambda}\Phi(f_j)} e^{-i\sin(\lambda)\vec{\xi}\cdot \Phi(\vec{h})}e^{i\Phi(g_\lambda)}e^{-i\frac{\sin(\lambda)}{\lambda}\Phi(f_k)}\right)\ket{\phi},
\label{long_eq}
\end{align}
where $T_\lambda:=\Psi(g'_\lambda)$ is defined through the relation:
\begin{align}
\Theta_\lambda(\Phi(\tilde{g}))=\Phi(\tilde{g}_\lambda)+\lambda\Psi(g'_\lambda),
\end{align}
and 
\begin{equation}
g_\lambda:=\tilde{g}_\lambda+\cos(\lambda)\vec{g} \cdot \vec{h}.
\end{equation}
Note that, for $0<|\lambda|< \frac{\pi}{4}$, $E(g'_\lambda,h_j)=0$ for all $j$. This follows from the fact that 
\begin{equation}
\Theta_\lambda\left(-\sin(\lambda)\Phi(h_j)+\cos(\lambda)\Psi(h_j)\right)=\cos(2\lambda)\Psi(h_j).
\end{equation}
Since $E(\tilde{g},h_j)=0$ and $\Theta_\lambda$ is an isomorphism, we have that
\begin{align}
0&=[-\sin(\lambda)\Phi(h_j)+\cos(\lambda)\Psi(h_j),\Phi(\tilde{g})]\nonumber\\
&=[\cos(2\lambda)\Psi(h_j),\Phi(\tilde{g}_\lambda)+\lambda\Psi(g'_\lambda)]\nonumber\\
&=\lambda\cos(2\lambda)[\Psi(h_j),\Psi(g'_\lambda)].
\end{align}

By eq. (\ref{eq:weylcontinuity}), we have that the function
\begin{equation}
\varphi(\lambda,\vec{\eta}):=\left(e^{i\frac{\sin(\lambda)}{\lambda}\Phi(f_j)} e^{-i\vec{\eta}\cdot \Phi(\vec{h})}e^{i\Phi(g_\lambda)}e^{-i\frac{\sin(\lambda)}{\lambda}\Phi(f_k)}\right)\ket{\phi}
\end{equation}
is continuous. Moreover, $\|\varphi(\lambda,\vec{\eta})\|\leq 1$. 

Now, let $\Psi(f_k)=Q_k=\vec{f}_k\cdot\vec{R}$. With this notation, and the definition of $\omega'_{\lambda}$, eq. (\ref{long_eq}) can be rewritten as:
\begin{align}
&(\omega_\lambda'\otimes \id_\Phi)\circ\Theta_\lambda\left(e^{i\frac{\Psi(f_j)}{\lambda}}\Pi e^{-i\frac{\Psi(f_k)}{\lambda}}e^{i\Phi(g)}\right)\ket{\phi}\nonumber\\
&=\frac{1}{(2\pi)^n}\int d\vec{\xi} e^{-\frac{\vec{\xi}^2}{4}}(\omega_\lambda'\otimes \id_\Phi)\circ\Theta_\lambda\left(e^{i\frac{\Psi(f_j)}{\lambda}} e^{-i\vec{\xi}\cdot \Psi(\vec{h})}e^{-i\frac{\Psi(f_k)}{\lambda}} e^{i\Phi(g)}\right)\ket{\phi}\nonumber\\
&=\sum_{l,m}\sqrt{p_l}\sqrt{p_m}\frac{1}{(2\pi)^n}\int d\vec{\xi}e^{-\frac{\vec{\xi}^2}{4}}\omega'\left(e^{-i\frac{\cos(\lambda)}{\lambda}\vec{f}_l\cdot \vec{R}}e^{i\frac{\cos(\lambda)}{\lambda}\vec{f}_j\cdot \vec{R}}e^{-i\vec{\xi}\cdot\vec{R}}e^{-i\sin(\lambda)\vec{g}\cdot\vec{R}}\right.\nonumber\\
&\left.e^{-i\frac{\cos(\lambda)}{\lambda}\vec{f}_k\cdot \vec{R}}e^{i\frac{\cos(\lambda)}{\lambda}\vec{f}_m\cdot \vec{R}}e^{-i\lambda T_\lambda}\right)\varphi(\lambda,\sin(\lambda)\vec{\xi}).
\end{align}
Now we invoke Proposition \ref{prop:integral_tocha}: in the limit $\lambda\to 0$, each of the summands above respectively tends to either $\varphi(0,0)$ or to $0$, depending on whether $\vec{f}_l=\vec{f}_j,\vec{f}_m=\vec{f}_k$ or not. The result is thus
\begin{align}
&\sqrt{p_j}\sqrt{p_k}\varphi(0,0)=\sqrt{p_j}\sqrt{p_k}\left(e^{i\Phi(f_j)} e^{i\Phi(g)}e^{-i\Phi(f_k)}\right)\ket{\phi}.
\end{align}

Since $\ket{\phi}$ was an arbitrary vector, it follows that the limit holds strongly.
    
\end{proof}

\section{The massless scalar field: proof of Proposition \ref{prop:Bell_massless}}
\label{app:massless}
Before we proceed, we introduce some basic notation and a useful fact regarding the massless scalar field.

In the following, we will be dealing with Minkowski spacetime in $1+1$ dimensions. Traditionally a point $y\in \M$ of this manifold is denoted as $y=(t,x)$,  with the first (second) coordinate indicating time (space). The standard way to express the point $y\in\M$ is thus: $y=t\hat{t}+x\hat{x}$, with $\hat{t}:=(1,0)$, $\hat{x}:=(0,1)$. Sometimes it will be convenient, though, to write $y$ in the coordinates $y=u\hat{u}+v\hat{v}$, with 
\begin{equation}
\hat{u}=\frac{1}{\sqrt{2}}(-1,1), \hat{v}=\frac{1}{\sqrt{2}}(1,1), u=\frac{x-t}{\sqrt{2}},v=\frac{x+t}{\sqrt{2}}.
\end{equation}
In Minkowski spacetime, the quintessential causally convex set is the causal diamond. A causal diamond of size $\Delta>0$ centered in $y\in\M$ is the subset of $\M$ given by
\begin{equation}
\{y+(t,x):\frac{|x\pm t|}{\sqrt{2}}\leq \Delta\}.
\end{equation}
In $(u,v)$ coordinates:
\begin{equation}
\{y+u\hat{u}+v\hat{v}:|u|,|v|\leq \Delta\}.
\end{equation}

We conclude this short review by reminding the reader that, for a massless free scalar field in Minkowski spacetime in $1+1$ dimensions, the commutator operator $E$ has the very simple form \cite{Jubb_2022}:
\begin{align}
Ef(y)&=\frac{1}{2}\int_{J^+(y)}f(z)\mathrm{d}z-\frac{1}{2}\int_{J^-(y)}f(z)\mathrm{d}z \nonumber \\
&= \frac{1}{2} \int_{u\leq 0,v\geq 0} (f((u^y+u)\hat{u}+(v^y+v)\hat{v})-f((u^y+u)\hat{u}+(v^y+v)\hat{v})) \mathrm{d} u \mathrm{d} v.
\label{propagator_massless}
\end{align}

\begin{proof}[Proof of Proposition \ref{prop:Bell_massless}]
For $X=A,B$, consider compact causal diamonds $\S^X,\T^X\subset\M$ of size $\Delta$, respectively centered in $y^X_S=(t_S^X, x_S^X),y^X_T=(t_T^X, x_T^X)\in\M$, with
\begin{align}
&x^A_{S}=x^A_{T} \not=x^B_{S}=x^B_{T},t^A_{S}=t^B_{S} \not=t^A_{T}=t^B_{T},
\end{align}
and satisfying conditions (\ref{conds_box_channel}), (\ref{space-like_conds}), see \Cref{fig:massless_zones}.
\begin{figure}
\centering
\includegraphics[width=0.9\textwidth]{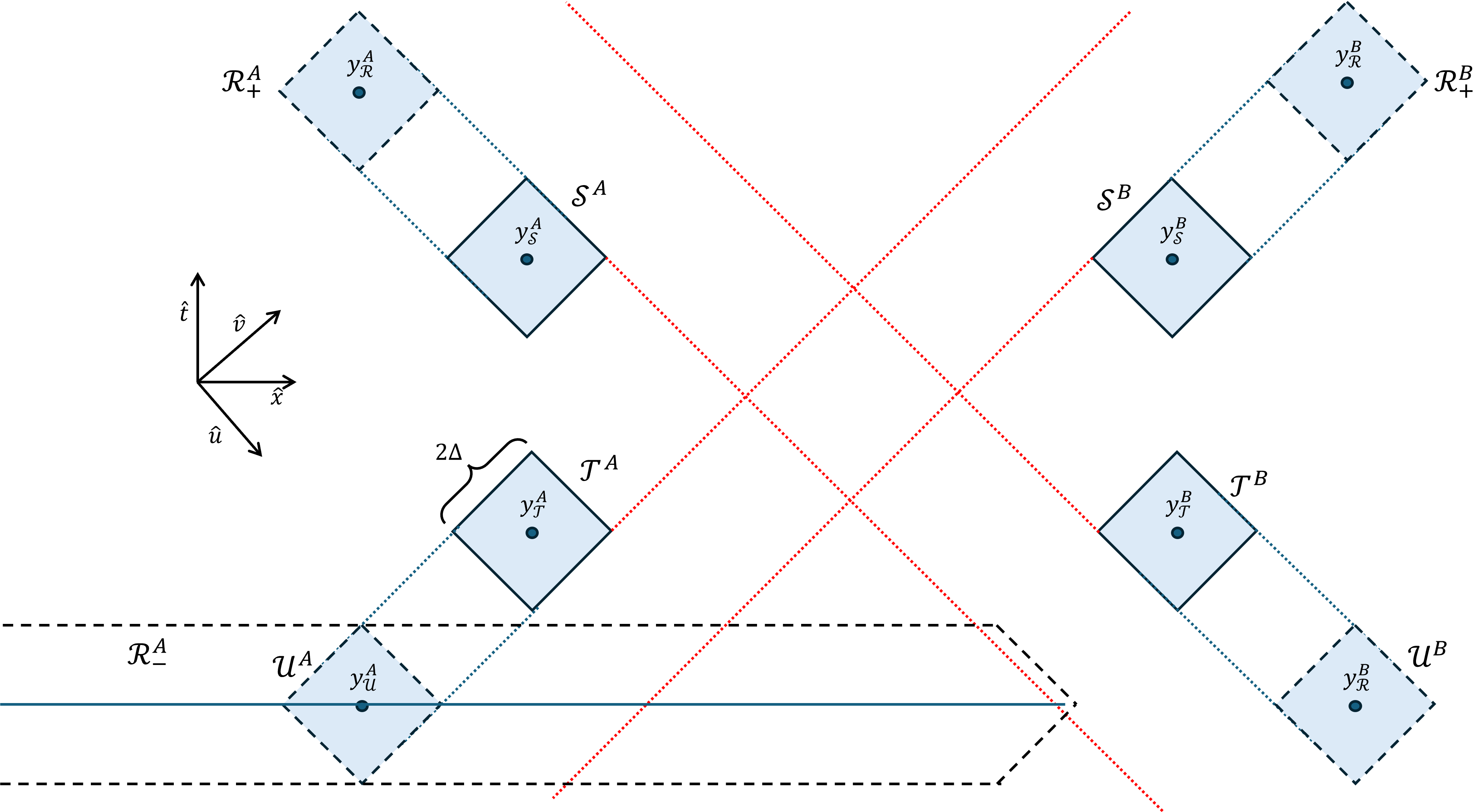}
\caption{Interaction zones and Bell regions for a box channel for massless scalar fields. Note that $\RR_-^A$ contains $\Sigma$, a Cauchy surface for $\overline{\RR_+^A}$.}
\label{fig:massless_zones}
\end{figure}
Let $\chi^\epsilon$ be a non-negative bump function with $|\chi^\epsilon(\bullet)|\leq 1$ and
\begin{align}
\chi^\epsilon(z)=&1,\mbox{ for }  |z|\leq \Delta-\epsilon,\nonumber\\
&0,\mbox{ for }  |z|\geq \Delta-\frac{\epsilon}{2}.
\end{align}
We define the channel's test functions (in $(u,v)$ coordinates) as
\begin{align}
&f_A(u,v):=\chi^\epsilon(u-u^A_S)\cos\left(\frac{2\pi(v-v^A_S)}{\Delta}\right)\chi^\epsilon(v-v^A_S),\nonumber\\
&g_A(u,v):=-\chi^\epsilon(u-u^A_T)\cos\left(\frac{2\pi(u-u^A_T)}{\Delta}\right)\chi^\epsilon(v-v^A_T),\nonumber\\
&f_B(u,v):=\chi^\epsilon(u-u^B_S)\cos\left(\frac{2\pi(u-u^B_S)}{\Delta}\right)\chi^\epsilon(v-v^B_S),\nonumber\\
&g_B(u,v):=-\chi^\epsilon(u-u^B_T)\cos\left(\frac{2\pi(v-v^B_T)}{\Delta}\right)\chi^\epsilon(v-v^B_T),
\end{align}
for $\epsilon>0$ small enough (soon it will be clear how small).

Now, given $\OO$, we define the regions
\begin{align}
&\RR_+^A=\Int{\S^A}-\lambda\hat{u},\U^A=\Int{\T^A}-\lambda\hat{v},\nonumber\\
&\RR_+^B=\Int{\S^B}+\lambda\hat{v},\U^B=\Int{\T^B}+\lambda\hat{u},
\end{align}
for $\lambda>0$ large enough so that $\RR_+^A, \RR_+^B\subset \OO^+, \U^A,\U^B\subset \OO^-$. The choice of vectors in the equation above, together with relations (\ref{conds_box_channel}), (\ref{space-like_conds}), implies that
\begin{equation}
\overline{{\cal F}^A}\perp \overline{{\cal G}^B},\mbox{ for }{\cal F}, {\cal G}\in\{\RR_{+},\S,\T,\U\}.
\label{all_commute}
\end{equation}
Moreover, the union of displacements of $\U^X$ along a part of the $\hat{x}$ axis allows one to define regions $\RR_-^X$ such that 
\begin{equation}
\overline{\RR_-^A}\perp \overline{\U^B},\overline{\RR_-^B}\perp \overline{\U^A},\overline{\RR^X_+}\subset D^+(\RR_-^X),
\end{equation}
see \Cref{fig:massless_zones}. With such a choice of Bell regions, conditions (\ref{conds_Bell_QFT}) are thus satisfied.

Let us now show that condition (\ref{sympl_cond}) also holds. For $y =u\hat{u}+v\hat{v} \in \U^A$, equation (\ref{propagator_massless}) implies that\begin{equation}
E g_A(y)=\frac{1}{2}\int_{-\Delta}^{u-u_U^A} \cos\left(\frac{2\pi\tilde{u}}{\Delta}\right)\chi^\epsilon(\tilde{u}) \mathrm{d}\tilde{u} \int_{-\Delta}^\Delta \chi^\epsilon(\tilde{v}) \mathrm{d}\tilde{v},
\end{equation}
where $u_U^A$ denotes the $u$-variable for the center of $\U^A$; see \Cref{fig:integral} for a graphical proof.
\begin{figure}
\centering
\includegraphics[width=0.5\textwidth]{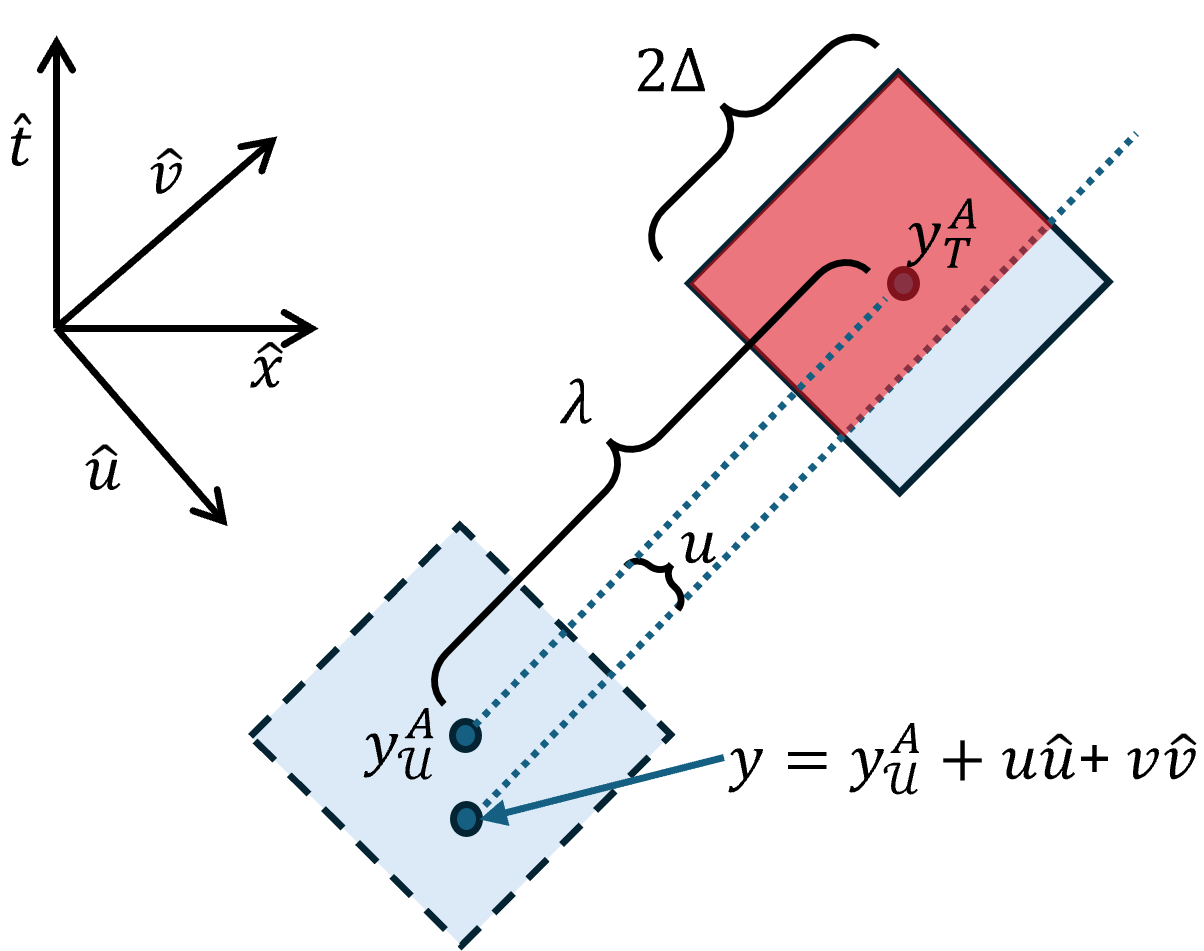}
\caption{Constructing test functions in the massless scalar field. The integrated area is shown in in red.}
\label{fig:integral}
\end{figure}
Thus, for $\epsilon\ll 1$, it holds that
\begin{equation}
\left. Eg_A\right|_{\U^A}(u,v)\approx \frac{\Delta^2}{2\pi}\sin\left(\frac{2\pi (u-u^A_U)}{\Delta}\right).
\label{proj_1}
\end{equation}
Similarly, 
\begin{equation}
\left. Ef_A\right|_{\RR_+^A}(u,v)\approx \frac{\Delta^2}{2\pi}\sin\left(\frac{2\pi (v-v^A_S)}{\Delta}\right),
\label{proj_2}
\end{equation}
Hence, we can choose 
\begin{equation}
\hat{g}_A(u,v)\propto \sin\left(\frac{2\pi (u-u^A_U)}{\Delta}\right)\chi^\epsilon(u-u^A_U)\chi^\epsilon(v-v^A_U).
\end{equation}
to ensure that
\begin{equation}
\langle g_A, E\hat{g}_A\rangle=1.
\end{equation}
By $\S^A\subset (\T^A)^\vee$, any point in $\RR_{+}^A$ is in the causal future of the whole of $\T^X,\U^X$. Eq. (\ref{propagator_massless}) therefore implies that 
\begin{equation}
\left.Eg_A\right|_{\RR^A_+} = \text{const.}, \left.E\hat{g}_A\right|_{\RR^A_+} = \text{const}.
\end{equation}
For $\epsilon$ sufficiently small, it holds that
\begin{equation}
\left.Ef_A\right|_{\RR_+^A}
\end{equation}
is linearly independent from the constant function on $\RR_+^A$. Thus, one can find $\hat{f}_A\in C_0^\infty(\RR_+^A)$ such that
\begin{equation}
 \langle \hat{f}_A,Ef_A\rangle =1,\langle g_A,E\hat{f}_A\rangle =\langle \hat{g}_A,E\hat{f}_A\rangle=0.
\end{equation}

One can do similar operations to construct $f_B,\hat{f}_B,g_B,\hat{g}_B$ with the analogous symplectic relations. 
\begin{equation}
\langle \hat{f}_B,Ef_B\rangle =1, \langle g_B,E\hat{g}_B\rangle =1,\langle g_B,E\hat{f}_B\rangle=\langle \hat{g}_B,E\hat{f}_B\rangle =0.
\end{equation}
Ultimately, $\epsilon$ must be chosen small enough to guarantee that the left-hand sides of eqs. (\ref{proj_1}), (\ref{proj_2}) and their $B$-analogs are linearly independent from the constant function. This property is independent of the parameter $\lambda$ used to quantify how far from the origin the Bell regions $\RR^X_\pm,\U^X$ are.

Finally, due to eq. (\ref{all_commute}) we furthermore have that symplectic terms mixing $A$ and $B$ functions vanish. Condition (\ref{sympl_cond}) is hence proven.
\end{proof}

\section{Generic QFTs: proof of Proposition \ref{prop:Bell_massive}}
\label{app:massive}
The Proposition follows from the next two lemmas.

\begin{lemma}
Let conditions (\ref{conds_unbounded_O}) hold. Then, for any $\OO\supset \bigcup_{X=A,B}\S^X\cup\T^X$, there exist regions $\RR_+^X\subset \J^X\cap \OO^+, \U^X\subset \K^X\cap \OO^-$, $\RR_-^X\subset\OO^-$ that form a Bell setup around $\OO$ [eq. (\ref{conds_Bell_QFT})]. In addition:
\begin{equation}
\overline{\RR_{+}^A}\perp \overline{\U^B}, \overline{\RR_{+}^B}\perp \overline{\U^A}.
\label{perpRUs}
\end{equation}

\end{lemma}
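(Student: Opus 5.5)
The plan is to build the regions directly from the unbounded sets $\J^X,\K^X$, exploiting that they are unbounded and connected while $\OO$ is compact. First, since $\OO$ is compact and $\M$ is globally hyperbolic, $J(\OO)$ is closed and its intersection with any Cauchy surface is compact. Since $\J^X\subset J^+(\S^X)$ is unbounded and open, and $\K^X\subset J^-(\T^X)$ is unbounded and open, I will argue that $\J^X$ contains points arbitrarily far to the future, hence outside $J^-(\OO)$. Then I pick a small causal-diamond-type region $\RR_+^X\subset \J^X\cap\OO^+$ around such a point. Symmetrically, $\K^X$ contains points far to the past, outside $J^+(\OO)$, and I pick a small region $\U^X\subset\K^X\cap\OO^-$. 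The precise claim needed is that an unbounded connected open subset of $J^+(\S^X)$ cannot be contained in the compact set $J^-(\OO)\cap J^+(\S^X)$. This holds because the causal hull of a compact set is compact, so $J^+(\S^X)\cap J^-(\OO)$ is compact, and unbounded means not relatively compact.

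Second, the pairwise causal relations follow from \eqref{conds_unbounded_O}. Since $\RR_+^X\subset\J^X$, $\U^X\subset\K^X$ and the closures of $\J^A,\J^B,\K^A,\K^B$ are mutually causally disconnected across $A,B$, I get $\overline{\RR^A_+}\perp\overline{\RR^B_+}$ and \eqref{perpRUs}, as well as $\overline{\U^A}\perp\overline{\U^B}$. I will take these subregions with compact closure inside the open sets, so that closures stay within $\J^X$ or $\K^X$, using the perpendicularity of the closures.

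Third, and hardest, I need $\RR_-^X\subset\OO^-$ with $\U^X\subset\RR_-^X$, $\overline{\RR^X_+}\subset D^+(\RR_-^X)$ and $\overline{\U^X}\perp\overline{\RR^{Y}_-}$ for $Y\neq X$. I will choose a Cauchy surface $\Sigma\subset\OO^-$ lying to the past of $\OO$ and passing through $\U^X$; this is possible after shrinking $\U^X$ so that it sits far enough in the past. I then set $\RR_-^X$ to be a thin neighborhood of the compact set $J^-(\overline{\RR^X_+})\cap\Sigma$, joined to $\U^X$ and then causally hulled, using Lemma 6 of \cite{impossible_meas} as in the proof of Lemma \ref{lemma:causal_ordering}. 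This gives $\overline{\RR_+^X}\subset D^+(\RR_-^X)$, and $\RR_-^X\subset\OO^-$ because $\Sigma\subset\OO^-$.

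The obstacle is the cross condition $\overline{\U^B}\perp\overline{\RR^A_-}$. By \eqref{perpRUs} and $\overline{\U^A}\perp\overline{\U^B}$, the points $\overline{\U^B}$ are spacelike to both $\overline{\RR^A_+}$ and $\overline{\U^A}$. I would try to show that $\overline{\U^B}$ is then spacelike to all of $J^-(\overline{\RR^A_+})\cap\Sigma$. The argument: a point there causally related to $\U^B$ would, since both lie near $\Sigma$, essentially lie on $\Sigma$ near $\U^B$. But then it would be in the past of $\RR^A_+$, so $\U^B$ would be in the past of $\RR^A_+$, contradicting \eqref{perpRUs}. Making this rigorous requires taking $\Sigma$ through $\U^A$ and $\U^B$ simultaneously, for instance by choosing $\U^X$ as thin neighborhoods on a common Cauchy surface deep in the past inside $\K^X$, which is possible by unboundedness and connectedness. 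Then $J^-(\overline{\RR^A_+})\cap\Sigma$ and $\overline{\U^B}\subset\Sigma$ are disjoint compact subsets of $\Sigma$, hence spacelike, and a sufficiently thin neighborhood $\RR_-^A$ keeps this separation. If this route fails, I would fall back on choosing $\RR^X_+$ and $\U^X$ far enough apart along $\J^X,\K^X$ that the required separations are inherited from \eqref{conds_unbounded_O}.
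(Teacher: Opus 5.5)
Your proposal is correct and follows essentially the same route as the paper. You pick $\RR_+^X\subset\J^X\cap\OO^+$, take a Cauchy surface $\Sigma\subset\OO^-$, thicken $\Sigma\cap J^-(\overline{\RR_+^X})$ together with $\U^X$ and take causal hulls, and get the cross conditions from the disjointness on $\Sigma$ forced by $\overline{\J^A}\perp\overline{\K^B}$ and $\overline{\J^B}\perp\overline{\K^A}$. Your use of unboundedness and connectedness to guarantee $\J^X\cap\OO^+\neq\emptyset$, and to find a common $\Sigma$ meeting both $\K^A$ and $\K^B$, fills in steps the paper leaves implicit. Two details to tidy: phrase $\U^B$ as a thin neighborhood of a compact subset of $\Sigma\cap\K^B$ (it cannot be a subset of $\Sigma$), and take $\Sigma$ spacelike (acausal). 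Then disjoint compact pieces of $\Sigma$ are causally disconnected, and closedness of the causal relation lets both thickenings keep that separation.
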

\begin{proof}
Choose arbitrary regions $\RR_+^X\subset \J^X\cap \OO^+$. Then, by $\overline{\J^A}\perp \overline{\J^B}$ it holds that $\overline{\RR^A_{+}}\perp \overline{\RR^B_{+}}$. Next, consider any Cauchy surface $\Sigma\subset \OO^-$. We start by defining the surfaces $\Sigma^X_{-}:=\Sigma\cap J^-(\overline{\RR_{+}^X})$, $\Sigma^X_{U}:=\Sigma\cap \overline{\K^X}$. From $\J^X\subset J^+(\S^X)$, $\K^X\subset J^-(\T^X)$ and $\S^X\subset (\T^X)^\vee$ it follows that $\J^X\subset (\K^X)^\vee$. Consequently, it holds that $\Sigma^X_{U}\subset J^-(\RR^X_{+})$, and so $\Sigma^X_U\subset \Sigma_{-}^X$.

Due to the conditions $\overline{\J^A}\perp\overline{\K^B}$, $\RR^A_{+}\subset\J^A$, we have that $\Sigma^A_{-}\perp\Sigma^B_U$. Similarly, $\Sigma^B_{-}\perp\Sigma^A_U$. Taking the union of sufficiently small neighborhoods of points in $\Sigma^X_{-}, \Sigma^X_{U}$ and then their causal hull, we construct regions $\RR^X_{-}\supset\Sigma^X_{-}\cap \OO^-$, $\U^X\subset\RR^X_{-}$ such that $\overline{\RR^A_{-}}\perp \overline{\U^B}$, $\overline{\RR^B_{-}}\perp \overline{\U^A}$. Moreover, it holds that $\overline{\RR_{+}^X}\subset D^+(\RR_{-})$, since $\Sigma^X_{-}$ is a Cauchy surface for $\overline{\RR_{+}^X}$. 

Eq. (\ref{perpRUs}) follows from $\overline{\J^A}\perp \overline{\K^B}$, $\overline{\J^B}\perp \overline{\K^A}$ and the inclusion relations $\RR_+^X\subset \J^X$, $\U^X\subset \K^X$.
    
\end{proof}

\begin{lemma}
Consider a generic scalar QFT, and let $\S^X,\T^X$ satisfy (\ref{conds_box_channel}), (\ref{space-like_conds}). Let $f_X\in C^\infty_0(\S^X)),g_X\in C^\infty_0(\T^X))$ be generic test functions. Then, for any $\OO\subset \M$ and $\RR_{\pm}^X,\U^X\subset \M$ forming a Bell setup around $\OO$ and satisfying condition (\ref{perpRUs}), there exist test functions $\hat{f}_X,\hat{g}_X$ such that condition (\ref{sympl_cond}) holds.
\end{lemma}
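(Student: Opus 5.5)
The plan is to construct $\hat{f}_X\in C_0^\infty(\RR^X_+)$ and $\hat{g}_X\in C_0^\infty(\U^X)$ separately for $X=A,B$, solving a finite linear system in each case. Genericity supplies the needed linear independence, and the geometric hypotheses make the $A$--$B$ cross terms vanish automatically.

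I would begin with the cross terms $\EE{s_A}{t_B}=0$ for $s,t\in\{f,\hat f,g,\hat g\}$. These follow from microcausality, since all relevant supports are pairwise causally disconnected:
\begin{itemize}
\item $\S^A\perp\S^B$ and $\T^A\perp\T^B$ by (\ref{conds_box_channel});
\item $\T^A\perp\S^B$ and $\T^B\perp\S^A$ by (\ref{space-like_conds});
\item $\overline{\RR^A_+}\perp\overline{\RR^B_+}$ and $\overline{\U^A}\perp\overline{\U^B}$ by the Bell conditions;
\item $\overline{\RR^A_+}\perp\overline{\U^B}$ by (\ref{perpRUs}).
\end{itemize}
For the pairs mixing an interaction-zone support with a Bell-region support (for instance $\S^A$ vs.\ $\RR^B_+$ or $\U^B$, and $\T^A$ vs.\ $\RR^B_+$ or $\U^B$), I would use that $\RR^B_+\subset\OO^+$ and $\U^B\subset\RR^B_-\subset\OO^-$. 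These pairs are not automatically spacelike, so they need care. If a pair is causally connected, I would add the corresponding values $\EE{f_A}{\hat{g}_B}=0$, etc., as extra linear constraints in the systems below, imposed on the hatted functions. Genericity is meant to make room for these extra constraints.

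For fixed $X$, I would first construct $\hat g_X$. I need $\EE{g_X}{\hat g_X}=1$, plus vanishing against the finitely many other fixed functions ($f_X$, $f_{\bar X}$, $g_{\bar X}$) whenever their supports are causally connected to $\U^X$. The map $h\mapsto(\langle h,Eg_X\rangle,\langle h,Ef_X\rangle,\dots)$ on $C_0^\infty(\U^X)$ is onto $\R^k$ exactly when the functions $Eg_X|_{\U^X},Ef_X|_{\U^X},\dots$ are linearly independent. Here I would invoke Definition~\ref{def:generic}, with $V=\U^X$ or an open subset meeting the relevant $J(\K_j)$. Functions whose $E$-image vanishes on $\U^X$ impose no constraint and can simply be dropped. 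Note that $\U^X\cap J(\T^X)\neq\emptyset$ is needed for $\langle \hat g_X,Eg_X\rangle\neq0$ to be possible. This holds because $\U^X\subset\K^X\subset J^-(\T^X)$ in the application; in the lemma it should follow from compatibility of the setup, otherwise I would add it as an assumption.

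I would then construct $\hat f_X\in C_0^\infty(\RR^X_+)$ solving
\[
\langle \hat f_X,Ef_X\rangle=1,\qquad \langle\hat f_X,Eg_X\rangle=0,\qquad \langle\hat f_X,E\hat g_X\rangle=0,
\]
together with the cross-constraints. The required independence of $Ef_X|_{\RR^X_+}$ from $Eg_X|_{\RR^X_+}$, $E\hat g_X|_{\RR^X_+}$, and the other fixed functions comes again from genericity, applied to the tuple $(f_A,f_B,g_A,g_B,\hat g_A,\hat g_B)$.

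The main obstacle is that $\hat g_X$ was chosen afterwards and need not be generic. I would handle this by perturbing $\hat g_X$ within the open set of solutions of its own affine system, using the density result in Remark~\ref{remark:density} and Lemma~\ref{lemma:two_site_generic_density}. Those results guarantee a generic choice among test functions supported in $\overline{\U^X}$, and this choice still satisfies the finitely many linear equations, because the solution set is an affine subspace of finite codimension on which I can perturb along directions in the kernel. The delicate part is checking that genericity restricted to that affine subspace still holds; I expect the density argument (Baire-type, as in Lemma~\ref{lemma:density_generic}) to carry over with minor changes.
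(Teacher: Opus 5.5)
Your architecture is the paper's. You fix $\hat g_X\in C_0^\infty(\U^X)$ first, obtain $\hat f_X\in C_0^\infty(\RR^X_+)$ from a finite linear system whose solvability is a linear-independence statement about restrictions to $\RR^X_+$, and dispose of the $A$--$B$ pairings by causal disjointness. You are also right that the hypotheses do not make pairs such as $(\S^Y,\RR^X_+)$ or $(\T^Y,\U^X)$ causally disjoint, with $Y$ the other party. The paper's proof only cites $\overline{\S^A}\perp\overline{\T^B}$, $\overline{\RR^A_+}\perp\overline{\RR^B_+}$, $\overline{\U^A}\perp\overline{\U^B}$, (\ref{perpRUs}) and their mirror images, so imposing those pairings as extra linear constraints is a reasonable repair.

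The gap is the step you flag yourself. Once $\hat g_X$ is fixed you need $Ef_X|_{\RR^X_+}\notin\mathrm{span}\{Eg_X,E\hat g_X,Ef_Y,Eg_Y\}|_{\RR^X_+}$, and your route to it fails for two reasons. First, density of generic tuples in the full product space says nothing about the closed affine solution set of the constraints on $\hat g_X$. That set has positive codimension and is nowhere dense (its open dense complement is already residual), so the Baire argument would have to be redone inside it, which you do not do. Second, and more seriously, even a successful perturbation would not help. Definition~\ref{def:generic} only constrains open sets $V$ meeting $J(\K_j)$ for \emph{every} $j$, and by (\ref{perpRUs}) $\RR^X_+$ does not meet $J(\overline{\U^Y})\supset J(\supp\hat g_Y)$. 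Genericity of the six-tuple $(f_A,f_B,g_A,g_B,\hat g_A,\hat g_B)$ is therefore vacuous precisely on $V=\RR^X_+$.

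No genericity of $\hat g_X$ is needed, only one finite-dimensional avoidance condition. Put $S:=\mathrm{span}\{Ef_X,Eg_X,Ef_Y,Eg_Y\}|_{\RR^X_+}$. If $Ef_X|_{\RR^X_+}$ is independent of the other three and $E\hat g_X|_{\RR^X_+}\notin S$, your system for $\hat f_X$ is solvable. To arrange this, take any particular solution $\hat g_0$ of the constraints on $\hat g_X$ and add $th$, where $h$ lies in the common kernel of the finitely many constraint functionals and $Eh|_{\RR^X_+}\notin S$. Such an $h$ exists because $h\mapsto Eh|_{\RR^X_+}$ has infinite-dimensional range on $C_0^\infty(\U^X)$: the QFT is generic and, in the intended geometry, $\RR^X_+$ lies in the total future of $\U^X$. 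The range therefore stays infinite-dimensional on any finite-codimension subspace. Finally, $E(\hat g_0+th)|_{\RR^X_+}\in S$ for at most one value of $t$.

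The paper does exactly this in coordinates. It takes three functions in $\U^X$ with linearly independent $E$-images on $\RR^X_+$, one of them pairing nontrivially with $g_X$. It then picks a combination obeying the normalization whose image avoids the two-dimensional span of $Ef_X|_{\RR^X_+}$ and $Eg_X|_{\RR^X_+}$.
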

\begin{proof}
For $X=A,B$, find three test functions $\hat{g}^1_X, \hat{g}^2_X,\hat{g}^3_X\in C_0^\infty(\T^X)$ such that
\begin{align}
&\langle \hat{g}^1_X, E g_X\rangle\not=0,\nonumber\\
&\left.E\hat{g}_X^1\right|_{\RR_+^X},\left.E\hat{g}_X^2\right|_{\RR_+^X},\left.E\hat{g}_X^3\right|_{\RR_+^X},\mbox{ linearly independent}.
\label{generic_g}
\end{align}
This is easy to do: one can, e.g., take $\hat{g}_X^1$ to be
\begin{equation}
\hat{g}^1_X(x)=\chi(x)Eg_X,
\end{equation}
where $\chi$ is a non-negative bump function that vanishes outside $\U^X$ and approximates the constant function $1$ in most of $\U^X$. That suffices to ensure that the first line of eq. (\ref{generic_g}) is satisfied. Taking three generic functions in $\U^X$, one can define two linear combinations thereof $\hat{g}_X^2, \hat{g}_X^3$ such that the second line of (\ref{generic_g}) is satisfied as well.

Next, find a linear combination $\hat{g}_X$ of $\hat{g}^1_X, \hat{g}^2_X,\hat{g}^3_X$ such that
\begin{align}
&\langle g_X, E \hat{g}_X\rangle=1,\nonumber\\
&\left.Ef_X\right|_{\U^X},\left.Eg_X\right|_{\U^X}, \left.E\hat{g}_X\right|_{\U^X},\mbox{ linearly independent}.
\label{generic_g2}
\end{align}
This is possible thanks to eq. (\ref{generic_g}) and the linear independence of $\left.Ef_X\right|_{\U^X},\left.Eg_X\right|_{\U^X}$ due to genericity.

Finally, choose $\hat{f}_X\in C_0^\infty(\U^X)$ such that
\begin{equation}
\langle \hat{f}_X, E f_X\rangle=1, \langle \hat{f}_X, E g_X\rangle=\langle\hat{f}_X, E \hat{g}_X\rangle=0.
\end{equation}
The existence of $\hat{f}_X$ follows from the second line of eq. (\ref{generic_g2}).

To finish the proof, it remains to realize that the test functions $\{\hat{f}_X,\hat{g}_X:X=A,B\}$ so constructed satisfy
\begin{equation}
\langle s_A, Et_B\rangle=0,
\end{equation}
for $s,t\in\{f, \hat{f}, g, \hat{g}\}$. Indeed, such relations follow from the support of these functions and conditions 
\begin{equation}
\overline{\S^A}\perp \overline{\T^B}, \overline{\S^B}\perp \overline{\T^A}, \overline{\RR_{+}^A}\perp\overline{\RR^B_+}, \overline{\U^A}\perp \overline{\U^B},\overline{\RR_{+}^A}\perp \overline{\U^B}, \overline{\RR_{+}^B}\perp \overline{\U^A}.
\end{equation}
\end{proof}

\section{Bell experiments around implementable channels: proof of Proposition \ref{prop:FV_wirings_box}}
\label{app:FV_wirings}
To simplify the proof, we consider the effect of the communication protocol generating $\Omega$, not just on the target theory, but also on the probe and ancillary systems. In this regard, our proposal to feedforward the outcomes of FV schemes implies that we can model each FV scheme as a channel with interaction zone $\K_1$ followed by an instrument with interaction zone $\K_2$ on the probe. To model FV schemes conditioned on feedforwarded information, we assume that, at every network node $j$, the corresponding family of FV-realizable instruments $\{\Omega^{j,a}\}_a$ is such that the interaction (processing) zone $\K_1^j$ ($\K_2^j$) is the same for all $a$'s. We will use, however, a different probe to model each of the instruments $\{\Omega^{j,a}\}_a$. This means that the actual interaction at node $j$ will be a morphism $\Theta^a$, connecting probe(s) $a$ with the target and the ancillary QFTs. The measurement at $\K_2^j$ will be carried out on all the probes, and the outcomes of all the probes will be directed to each node in the total future of $\K_2^j$. We model the measurement of the probes through a (non-causal) instrument $(\Gamma_c)_c$, of the form
\begin{equation}
\Gamma_c(\bullet)=\sqrt{M_c}\bullet\sqrt{M_c},
\end{equation}
where $(M_c)_c$ is the POVM to be conducted on the probe QFT. To the effect of characterizing the correlations induced through wirings of FV schemes, we will not consider FV schemes where $\K_1\perp\K_2$. In that case, one can avoid the measurement of the probes altogether by choosing a joint outcome $c$ thereof, replacing the initial state $\varphi$ of the probes by the state $\frac{1}{p_c}\varphi\circ \Gamma_c$, with $p_c=\omega\circ \Gamma_c(1)$, and feeding the outcome $c$ to every later instrument that depended on the outcome of the measurement. This brings about a distribution $P(a,b|\alpha,\beta,c)$, such that
\begin{equation}
\sum_cp_cP(a,b|\alpha,\beta,c)=P(a,b|\alpha,\beta,\Omega).
\end{equation}
Since $C_{\mathrm{qa}}$ is convex, it thus suffices to prove that each $P(a,b|\alpha,\beta,c)\in C_{\mathrm{qa}}$ to show that $P(a,b|\alpha,\beta,\Omega)\in C_{\mathrm{qa}}$.

Let $\OO_1 \prec \ldots \prec \OO_n$ be the interaction zones of all the probe measurements and scattering morphisms involved in the implementation of channel $\Omega$. Respectively applying Lemma \ref{lemma:causal_ordering} (\Cref{app:causal_networks}) to regions $\RR^A_\pm, \RR^B_\pm$, we have that there exist regions $(\RR^{A,j}_{\pm})_j$, $(\RR^{B, j}_{\pm})_j$ satisfying:
\begin{align}
&\RR^{X,j}_{\pm}\subset \OO_j^\pm,\nonumber\\
&\overline{\RR^{X,j}_{+}}\subset D^+(\RR^{X,j}_{-}),\nonumber\\
&\RR^{X,j}_{-}= \RR^{X,j-1}_{+},\nonumber\\
&\RR^{X,n}_{+}=\RR^X_{+},\RR^{X,1}_{-}=\RR^X_{-},
\label{regions_ordered}
\end{align}
for $j=1,...,n$, $X=A,B$. Note that the conditions include $\RR_+^X\subset D^+(\RR^X_-)$, for $X=A,B$. 

Following the proof of Theorem \ref{theorem:causal_network}, we introduce the sets $\wedge_j:=\{l\in \{1,...,n\}:\OO_j\subset \OO^\vee_l\}$. For $\mathbb{X}=\mathbb{A},\mathbb{B}$, we further define the sets of indices:
\begin{align}
\mathbb{X}_m=&\{\emptyset\},\nonumber\\
\mathbb{X}_{j-1}=&(\mathbb{X}_{j}\cup 
\wedge_j)\setminus \{j\},\mbox{ if }\OO_j\not\perp \RR^{X,j}_+,\noindent\\
&\mathbb{X}_{j}\setminus\{j\},\mbox{ otherwise}.
\end{align}
Note that $\mathbb{X}_1\in \{\emptyset, \{1\}\}$. Since the instrument associated to $\OO_1$ is a scattering morphism (and thus has only one outcome), it follows that $c(\mathbb{X}_1)$ does not take more than one value.

Then the overall action of the protocol on target, probes and ancillas is given by
\begin{equation}
\hat{\Omega}(\bullet)=\sum_{c} \hat{\Omega}^{1,c(\wedge_1)}_{c_1} \circ \ldots \circ \hat{\Omega}^{n,c(\wedge_n)}_{c_n}(\bullet),
\end{equation}
where each instrument $(\hat{\Omega}^{j,c(\wedge_j)}_{c_j})_{c_j}$ is either a scattering morphism or a measurement of one or several probes. Consequently, if we denote by $\zeta$ the overall initial state of target, probes and ancillas, the observed distribution will be:
\begin{equation}
P(a,b|\alpha,\beta)=\zeta\circ \Lambda^{A,\alpha}\circ \Lambda^{B,\beta}\circ\hat{\Omega}(M^A_a M^B_b),
\label{expr_box}
\end{equation}
where we trivially extend the action of the instruments $\Lambda^{A,\alpha}, \Lambda^{B,\beta}$ from the target QFT to the composite system formed by target, ancillary and probe QFTs.

To progress further, we need the following two lemmas:
\begin{lemma}
\label{lemma:Aes}
Let $j\in\mathbb{A}_j$. Then $J^-(\RR^{A,j}_+)\supset \OO_j$,  $j\not\in\mathbb{B}_j$ and $\OO_j\perp \RR^{B,j}_+$. The same statement holds with the replacements $A\leftrightarrow B$, $\mathbb{A}\leftrightarrow\mathbb{B}$.    
\end{lemma}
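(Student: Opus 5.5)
The plan is to unfold the recursive definition of the index sets, turn ``$j\in\mathbb{A}_j$'' into a causal statement about some later node $k>j$, and then transport that statement back along the chain of regions \eqref{regions_ordered}. I will use the domain-of-dependence arguments already used in the proofs of Lemma \ref{lemma:causal_ordering} and Theorem \ref{theorem:causal_network}. The three claims will be proven in the order stated, since each uses the previous one.

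\emph{Step 1: unfolding the recursion.} Since $\mathbb{A}_n=\emptyset$ and an index can only enter through a set $\wedge_k$, $j\in\mathbb{A}_j$ means the following. There is some $k>j$ with $\OO_k\not\perp\RR^{A,k}_+$ and $j\in\wedge_k$, i.e.\ $\OO_k\subset\OO_j^\vee$. Because $\RR^{A,k}_+\subset\OO_k^+$, this region cannot meet $J^-(\OO_k)$. Hence the failure of causal disconnection yields a point $x\in\RR^{A,k}_+\cap J^+(\OO_k)$. Since $J^+(\OO_k)\subset J^+(\OO_j^\vee)=\OO_j^\vee$, every point $y\in\OO_j$ lies in $J^-(x)$.

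\emph{Step 2: $\OO_j\subset J^-(\RR^{A,j}_+)$.} From \eqref{regions_ordered} we have $\overline{\RR^{A,l}_+}\subset D^+(\RR^{A,l-1}_+)$ for every $l$. Combined with $D^+(D^+(S))=D^+(S)$, iterating from $l=k$ down to $l=j+1$ gives $x\in D^+(\RR^{A,j}_+)$. Fix $y\in\OO_j$ and take a past-directed causal curve from $x$ through $y$, extended to be past-inextendible. It must meet $\RR^{A,j}_+$. It cannot do so at or after passing $y$, because everything there lies in $J^-(\OO_j)$, which is disjoint from $\RR^{A,j}_+\subset\OO_j^+$. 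So it meets $\RR^{A,j}_+$ before reaching $y$, whence $y\in J^-(\RR^{A,j}_+)$.

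The same argument, applied one step further down with $\RR^{A,j}_-\subset\OO_j^-$, shows $\OO_j\subset D^+(\RR^{A,j}_-)\subset D^+(\RR^A_-)$, exactly as in item 2 of the proof of Theorem \ref{theorem:causal_network}.

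\emph{Step 3: $\OO_j\perp\RR^{B,j}_+$.} This is the step I expect to be the main obstacle, since it is where the spacelike condition \eqref{cond_split} must enter. The past direction is free, because $\RR^{B,j}_+\subset\OO_j^+$. For the future direction, suppose $z\in\RR^{B,j}_+\cap J^+(y)$ for some $y\in\OO_j$. Take a past-inextendible causal curve from $z$ through $y$. As $y\in D^+(\RR^A_-)$, the curve meets $\RR^A_-$. As $z\in\overline{\RR^{B,j}_+}\subset D^+(\RR^B_-)$, which follows by iterating the chain again, the same curve also meets $\RR^B_-$. Two points on one causal curve, one in $\RR^A_-$ and one in $\RR^B_-$, contradict $\overline{\RR^A_-}\perp\overline{\RR^B_-}$. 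The points to check carefully are closures versus interiors, and that $D^+$ applied to a closure is still covered by the nesting.

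\emph{Step 4: $j\notin\mathbb{B}_j$.} If $j\in\mathbb{B}_j$, then Step 2 with $A\leftrightarrow B$ gives $\emptyset\neq\OO_j\subset J^-(\RR^{B,j}_+)$, contradicting Step 3. The symmetric statement follows by relabeling.
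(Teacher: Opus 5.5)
Your proposal is correct and follows essentially the same route as the paper. You unfold the recursion to a later node $k>j$ with $j\in\wedge_k$ and $\OO_k\not\perp\RR^{A,k}_+$, push $\OO_j\subset J^-(\cdot)$ down the chain \eqref{regions_ordered} via $D^+$, use \eqref{cond_split} to force $\OO_j\perp\RR^{B,j}_+$, and obtain $j\notin\mathbb{B}_j$ by running the first claim with $A\leftrightarrow B$.

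Your write-up is more explicit than the paper's in two places: you check that the causal curve meets $\RR^{A,j}_+$ before reaching $\OO_j$, and you replace the paper's brief appeal to $\RR^{X,j}_-\subset D^+(\RR^X_-)$ with a single-curve argument through $\OO_j\subset D^+(\RR^A_-)$. The paper glosses over the same closure-versus-interior subtlety you flag.
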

\begin{proof}
Suppose that $j\in \mathbb{A}_j$. Then, there exists $k>j$, $j\in \wedge_k$, with $\OO_k\not\perp \RR^{A,k}_+$, hence $J^-(\RR^{A,k}_+)\cap \OO_k\not=\emptyset$. This last relation implies, by $\OO_k\subset \OO_j^\vee$, that $J^-(\RR^{A,k}_+)\supset \OO_j$. Thus, $J^-(\RR^{A,j}_+)\supset \OO_j$, since $\RR^{A,k}_{+}\subset D^+(\RR^{A,j}_+)$. Also, $\RR^{B,j}_+\perp \OO_j$; otherwise there would exist $x\in\OO_j\cap J^-(\RR^{A,j}_{+})\cap J^-(\RR^{B,j}_{+})$, which would imply, by $\RR^{X,j}_{-}\subset D^+(\RR^X_{-})$, that $\RR^A_{-}\not\perp \RR^B_{-}$, hence contradicting eq. (\ref{cond_split}). Similarly, there cannot exist $k$ with $j\in \wedge_k$ such that $\RR^{B,k}_{+}\not\perp \OO_k$: otherwise, $\RR^{B,j}_+\not\perp \OO_j$ and again we would get a contradiction with $\RR^A_{-}\perp \RR^B_{-}$. Since, for all $k$ with $j\in \wedge_k$, it is the case that $\RR^{B,k}_{+}\perp \OO_k$, we have that $j\not\in \mathbb{B}_j$.     
\end{proof}

\begin{lemma}
\label{lemma:inductio}
Let $O^{A}\in \A(\RR^A_+)$, $O^{B}\in \A(\RR^B_+)$, where $\A$ denotes the algebra of the target QFT, and define
\begin{equation}
O_n=O^{A}O^{B},O_{j-1}:=\sum_{c_j}\hat{\Omega}^{j,c(\wedge_j)}_{c_j}(O_j),
\end{equation}
where $O_j$ is understood to depend (at most) on the indices $c_1,...,c_j$.
Then, it holds that
\begin{equation}
O_{j}=O^{A, c(\mathbb{A}_j)}_jO^{B, c(\mathbb{B}_j)}_j,
\end{equation}
where $O^{A, c(\mathbb{A}_j))}_j, O^{B, c(\mathbb{B}_j)}_j$ are respectively localized in $\RR^{A,j}_+$, $\RR^{B,j}_+$, act trivially on any probe system not associated to the instruments localized in $\{\OO_k:k>j\}$ and
\begin{align}
&O^{A,c(\mathbb{A}_{j})}_{j}=\theta_j^{A,c(\mathbb{A}_{j})}(O^A),\nonumber\\
&O^{B,c(\mathbb{B}_{j})}_{j}=\theta_j^{B,c(\mathbb{B}_{j})}(O^B),
\end{align}
for some completely positive and trace-preserving maps $\theta_j^{A,c(\mathbb{A}_{j})}$, $\theta_j^{B,c(\mathbb{B}_{j})}$.    
\end{lemma}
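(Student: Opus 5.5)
The plan is a backward induction on $j$ from $n$ down to $0$, mirroring the proof of Theorem \ref{theorem:causal_network}. The base case $j=n$ is immediate. We take $\mathbb{A}_n=\mathbb{B}_n=\emptyset$, $O^{A}_n=O^A$, $O^B_n=O^B$ and $\theta_n^{X}=\mathrm{id}$. Localization holds since $\RR^{X,n}_+=\RR^X_+$, and $O^X$ acts trivially on every probe.

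For the inductive step, assume the factorization $O_j=O^{A,c(\mathbb{A}_j)}_jO^{B,c(\mathbb{B}_j)}_j$ with the stated properties. Set
\begin{equation*}
O_{j-1}=\sum_{c_j}\hat{\Omega}^{j,c(\wedge_j)}_{c_j}(O_j).
\end{equation*}
We split into three cases according to Lemma \ref{lemma:Aes}.

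\textbf{Case (i): $\OO_j\perp\RR^{A,j}_+$ and $\OO_j\perp\RR^{B,j}_+$.} The Kraus operators of $\hat{\Omega}^{j}$ are localizable near $\OO_j$ by Lemma \ref{lemma:weyllocality}, or Proposition \ref{prop:normallocality} for the probe or ancilla parts. They therefore commute with both factors, so $O_{j-1}=O_j$ and neither factor acquires new indices. This matches the recursion, since $j$ is removed from both index sets.

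\textbf{Case (ii): $j\in\mathbb{A}_j$ (the case $j\in\mathbb{B}_j$ is symmetric).} By Lemma \ref{lemma:Aes}, $\OO_j\perp\RR^{B,j}_+$, so the Kraus operators commute with $O^{B}_j$. Hence
\begin{equation*}
O_{j-1}=\Big(\sum_{c_j}\hat{\Omega}^{j}_{c_j}(O^{A}_j)\Big)O^B_j,
\end{equation*}
and we define $\theta^{A}_{j-1}:=\sum_{c_j}\hat{\Omega}^{j,c(\wedge_j)}_{c_j}\circ\theta_j^{A}$ with the $c_j$-dependence summed out. This map is CP and unital; "trace-preserving" is meant in the Heisenberg sense. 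The index set becomes $(\mathbb{A}_j\cup\wedge_j)\setminus\{j\}$, as required. Localization in $\RR^{A,j-1}_+=\RR^{A,j}_-$ follows as in item 2 of the proof of Theorem \ref{theorem:causal_network}: we have $\OO_j\subset D^+(\RR^{A,j}_-)$, so the Kraus operators are localizable in $\RR^{A,j}_-$ by the timeslice property. If $\hat{\Omega}^j$ is a scattering morphism instead, localization follows from its PSNI property together with $\overline{\RR^{A,j}_+}\subset D^+(\RR^{A,j}_-)$.

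\textbf{Case (iii): $\OO_j\not\perp\RR^{A,j}_+$ but $j\notin\mathbb{A}_j$.} Here $O^A_j$ does not depend on $c_j$, so only the measurement channel acts on it. We must still ensure it does not act jointly on both factors, which would break the product form. By the spacelike condition (\ref{cond_split}), $\OO_j$ cannot meet both $J^-(\RR^{A,j}_+)$ and $J^-(\RR^{B,j}_+)$, by the same argument as in Lemma \ref{lemma:Aes}. So one factor commutes with the Kraus operators and we fall back to case (ii)'s algebra.

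For a scattering morphism $\Theta$, which is a $\star$-isomorphism, we use multiplicativity, $\Theta(O^AO^B)=\Theta(O^A)\Theta(O^B)$, together with locality of $\Theta$ to $\K$ and its PSNI property to localize each image. For probe measurements, the instrument $\Gamma_c$ acts only on probe observables. The condition that the factors "act trivially on probes not associated to later instruments" is preserved because, at step $j$, only the probe of node $j$ is touched.

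The main obstacle is case (iii) and the product structure under non-multiplicative channels. We must verify carefully, from (\ref{cond_split}) and the ordering (\ref{regions_ordered}), that at every node at least one factor lies outside $J(\OO_j)$. The recursion on $\mathbb{X}_j$ should then be double-checked against $j\in\wedge_k$ bookkeeping. If it fails for a node whose causal past touches both sides, the fallback is to exploit the split property to place the two factors in tensor factors, where channels act as $\theta^A\otimes\theta^B$.
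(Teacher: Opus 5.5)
\textbf{There is a genuine gap in your Case (iii).}

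Your handling of scattering-morphism nodes (multiplicativity of $\Theta$ plus locality and PSNI) and of the case $j\in\mathbb{A}_j$ is fine. Case (iii), however, rests on a false geometric claim, and it is exactly where the non-multiplicative maps, the probe measurements, enter.

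The argument of Lemma \ref{lemma:Aes} does not show that $\OO_j$ cannot meet both $J^-(\RR^{A,j}_+)$ and $J^-(\RR^{B,j}_+)$. It only shows that no \emph{single point} of $\OO_j$ lies in both causal pasts, since such a point would force $\RR^A_-\not\perp\RR^B_-$, contradicting (\ref{cond_split}). It yields $\OO_j\perp\RR^{B,j}_+$ only because, when $j\in\mathbb{A}_j$, one already knows that all of $\OO_j$ lies in $J^-(\RR^{A,j}_+)$. When $j\notin\mathbb{A}_j\cup\mathbb{B}_j$, nothing prevents $\OO_j$ from meeting the two pasts at different points. For example, take a coupling or processing zone containing points near both $\S^A$ and $\S^B$ in the setup of \Cref{app:massless}.

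For a measurement node of this kind, your argument produces no product form. Your fallback does not work either: after a split tensor decomposition, a general channel is not of the form $\theta^A\otimes\theta^B$ and does not map products to products.

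\textbf{The missing observation.} No geometry is needed at measurement nodes. The probe measured at node $j$ was coupled to the target at an earlier node $k<j$, because its processing zone lies in the future of its coupling zone. By the induction hypothesis, neither $O^{A,c(\mathbb{A}_j)}_j$ nor $O^{B,c(\mathbb{B}_j)}_j$ involves that probe. This is the clause about acting trivially on probes not associated with $\{\OO_k:k>j\}$; you carry it along but never use it. Hence both factors commute with $\sqrt{M^j_{c_j}}$, and
\[
O_{j-1}=\sum_{c_j}M^j_{c_j}\,O^{A,c(\mathbb{A}_j)}_j\,O^{B,c(\mathbb{B}_j)}_j ,
\]
whatever the geometry of $\OO_j$.

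Lemma \ref{lemma:Aes} is then needed only for two things:
\begin{itemize}
\item It guarantees that $c_j$ occurs in at most one factor, so $M^j_{c_j}$ can be absorbed there, with $\theta^A_{j-1}=\sum_{c_j}\sqrt{M^j_{c_j}}\,\theta^A_j(\bullet)\sqrt{M^j_{c_j}}$.
\item It localizes $M^j_{c_j}$ in $\RR^{A,j-1}_+=\RR^{A,j}_-$, via $\OO_j\subset J^-(\RR^{A,j}_+)$ and hence $\OO_j\subset D^+(\RR^{A,j}_-)$.
\end{itemize}
If $c_j$ occurs in neither factor, one simply gets $O_{j-1}=O_j$. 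This is how the paper argues, with morphisms handled by multiplicativity regardless of how many sides $\OO_j$ touches.

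\textbf{Minor points.} In Case (i), concluding $O_{j-1}=O_j$ at a measurement node also needs $j\notin\mathbb{A}_j\cup\mathbb{B}_j$, which follows from Lemma \ref{lemma:Aes}. Also, Lemma \ref{lemma:weyllocality} concerns Weyl instruments on the target. For probe POVMs and scattering morphisms you should instead invoke microcausality of the uncoupled theory and the locality of $\Theta$.
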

\begin{proof}
We prove the claim by induction. Suppose that the induction hypothesis holds from level $j$ upwards. To see how $\hat{\Omega}^{j, c(\wedge_j)}$ acts on $O_j$, we must consider the two possible types of map separately. Let $\Omega^{c(\wedge_j)}_j$ be a probe measurement. We are assuming that, in each FV scheme, the interaction zone for the probe measurement intersects the future of the interaction zone of the corresponding scattering morphism. This implies that the interaction zone $\OO_k$ for the corresponding scattering morphism satisfies $\OO_j\succ \OO_k$, and so $k<j$. Thus, by the assumptions of the lemma, $O^{A, c(\mathbb{A}_j)}_j,O^{B, c(\mathbb{B}_j)}_j$ act trivially on the probe(s) where the measurement takes place. Call $(M^j_{c_j})_{c_j}$ the POVM, localized in $\OO_j$, on the corresponding probes. Then it holds that 
\begin{equation}
\left[\sqrt{M^j_{c_j}},O^{A, c(\mathbb{A}_j))}_j\right]=\left[\sqrt{M^j_{c_j}},O^{B, c(\mathbb{B}_j)}_j\right]=0.
\end{equation}
Consequently,
\begin{align}
&\sum_{c_j}\hat{\Omega}^{j,c(\wedge_j)}_{j,c_j}(O^{A,c(\mathbb{A}_j)}_jO^{B,c(\mathbb{B}_j)}_j)\nonumber\\
&=\sum_{c_j}M^j_{c_j}O^{A,c(\mathbb{A}_j)}_jO^{B,c(\mathbb{B}_j)}_j.
\end{align}

By Lemma \ref{lemma:Aes}, either $j\in \mathbb{A}_j, j\not\in \mathbb{B}_j$ or $j\in \mathbb{B}_j, j\not\in \mathbb{A}_j$ or $j\not\in \mathbb{A}_j,\mathbb{B}_j$. In the first case, we can write
\begin{equation}
O^{A,c(\mathbb{A}_{j-1})}_{j-1}=\sum_{c_j}M^j_{c_j}O^{A,c(\mathbb{A}_j)}_j, O^{B,c(\mathbb{B}_{j-1})}_{j-1}=O^{B,c(\mathbb{B}_{j})}_{j}.
\end{equation}
In particular, one can verify that
\begin{align}
&O^{A,c(\mathbb{A}_{j-1})}_{j-1}=\theta_{j-1}^{A,c(\mathbb{A}_{j-1})}(O^A),\nonumber\\
&O^{B,c(\mathbb{B}_{j-1})}_{j-1}=\theta_{j-1}^{B,c(\mathbb{B}_{j-1})}(O^B),
\end{align}
for
\begin{align}
&\theta_{j-1}^{A,c(\mathbb{A}_{j-1})}(\bullet):=\sum_{c_j}\sqrt{M^j_{c_j}}\theta_{j}^{A,c(\mathbb{A}_{j})}(\bullet)\sqrt{M^j_{c_j}},\nonumber\\
&\theta_{j-1}^{B,c(\mathbb{B}_{j-1})}(\bullet):=\theta_{j}^{B,c(\mathbb{B}_{j})}(\bullet).
\end{align}

By $\overline{\RR_{+}^{X,j}}\subset D^+(\RR_{-}^{X,j})$, we can respectively localize $O^{A,c(\mathbb{A}_j)}_j$, $O^{B,c(\mathbb{B}_{j})}_{j}$ in $\RR_{-}^{A,j}=\RR_{+}^{A,j-1},\RR_{-}^{B,j}=\RR_{+}^{B,j-1}$. Furthermore, by Lemma \ref{lemma:Aes} it holds that $J^-(\RR^{A,j}_+)\supset \OO_j$, so we can also localize $M^j_{c_j}$ in $\RR^{A,j-1}_{+}$. We hence have that $O^{A,c(\mathbb{A}_{j-1})}_{j-1}, O^{B,c(\mathbb{B}_{j-1})}_{j-1}$ are respectively localized in $\RR_+^{A,j-1}$, $\RR_+^{B,j-1}$.

The case $j\not\in \mathbb{A}_j,j\in\mathbb{B}_j$ is treated similarly. If $j\not\in \mathbb{A}_j,\mathbb{B}_j$, then $O_j$ does not depend on $c_j$, so one can sum the outcomes of the POVM, obtaining $O_{j-1}=O_j$.

In either of the three cases, since $M^j_{c_j}$ only acts on the probes related to the instrument family at $\OO_j$, the condition that $O^{A,c(\mathbb{A}_{j-1})}_{j-1}$, $O^{B,c(\mathbb{B}_{j-1})}_{j-1}$ act trivially on the probes not disturbed by the instruments in $\{\OO_k\}_{k>j-1}$ is also satisfied.

Suppose now that $\hat{\Omega}^j$ is a morphism. Then, the index $c_j$ has only one value, so we can safely skip it and write:
\begin{equation}
O^{A,c(\mathbb{A}_{j-1})}_{j-1}=\Theta^{j,c(\wedge_j)}\left(O^{A,c(\mathbb{A}_j)}_j\right), O^{B,c(\mathbb{B}_{j-1})}_{j-1}=\Theta^{j,c(\wedge_j)}\left(O^{B,c(\mathbb{B}_{j})}_{j}\right).
\end{equation}
The index assignment is consistent: if $\OO_j\perp \RR^{A,j}_+$ ($\OO_j\perp \RR^{B,j}_+$), then the morphism will act like the identity on $O^{A,c(\mathbb{A}_j)}_j$ ($O^{A,c(\mathbb{A}_j)}_j$). In addition, due to the properties of the scattering morphism, $O^{A,c(\mathbb{A}_{j-1})}_{j-1}, O^{B,c(\mathbb{B}_{j-1})}_{j-1}$ are respectively localized in $\RR^{A,j}_-, \RR^{B,j}_-$.

Since the scattering morphism is a channel that only perturbs the target and the probe systems associated to $\OO_j$, both the probe and channel conditions in the statement of the lemma hold.
    
\end{proof}

By Lemma \ref{lemma:inductio}, the distribution (\ref{expr_box}) observed in the Bell experiment satisfies:
\begin{align}
&P(a,b|\alpha,\beta)=\zeta\circ \Lambda^{A,\alpha}\circ\Lambda^{B,\beta}\circ\hat{\Omega}(M^A_aM^B_b)\nonumber\\
&=\zeta\circ \Lambda^{A,\alpha}\circ\Lambda^{B,\beta}(\tilde{M}^A_a\tilde{M}^B_b),
\end{align}
for some POVMs $(\tilde{M}^A_a)_a, (\tilde{M}^B_b)_b$ respectively localized in $\RR^A_-$, $\RR^B_-$. By the spacetime conditions (\ref{conds_Bell_QFT}), we have that $\U^X\subset \RR_-^X$, which, together with condition (\ref{cond_split}), implies that the Kraus operators of $\Lambda^{A,\alpha}$ commute with $\tilde{M}^B_b$; and those of $\Lambda^{B,\beta}$, with $\tilde{M}^A_{a}$. It follows that
\begin{align}
&P(a,b|\alpha,\beta)=\zeta(\tilde{M}^A_{a|\alpha}\tilde{M}^B_{b|\beta}),
\end{align}
for
\begin{equation}
\tilde{M}^A_{a|\alpha}:=\Lambda^{A,\alpha}(\tilde{M}^A_a), \tilde{M}^B_{b|\beta}:=\Lambda^{B,\beta}(\tilde{M}^B_b).
\end{equation}
Since the Kraus operators of $\Lambda^{X,\alpha}$ and the operator $\tilde{M}^X_c$ are localized in $\RR_-^X$, we have that the operators $\tilde{M}^A_{a|\alpha}$, $\tilde{M}^B_{b|\beta}$ are respectively localized in the strictly space-like separated regions $\RR^A_-$, $\RR^B_-$. Hence, by the split property there exist Hilbert spaces $\H_A,\H_B$ and a unitary $U:\H\to \H_A\otimes\H_B$ such that
\begin{equation}
U\tilde{M}^A_{a|\alpha}U^*=M^A_{a|\alpha}\otimes \id_B,U\tilde{M}^B_{b|\beta}U^*=\id_A\otimes M^B_{b|\beta}.
\end{equation}
Defining $\zeta'(\bullet)=\zeta(U^*\bullet U)$, we thus have that
\begin{equation}
P(a,b|\alpha,\beta)=\zeta'\left(M^A_{a|\alpha}\otimes M^B_{b|\beta}\right),  
\end{equation}
which implies that $P(a,b|\alpha,\beta)\in C_{\mathrm{qs}}$. 

We have just shown that, for $\Omega$ admitting a realization through compositions of FV schemes with QFTs satisfying the split property, it is the case that any distribution obtained in a Bell test satisfies $P(a,b|\alpha,\beta)\in C_{\mathrm{qs}}$. For $\Omega$ implementable and $M^A, M^B$ having POVM elements in $\pi(\hat{\W})$, there exists a sequence $(\Omega^j)_j$ of such channels, with $\lim_{j\to\infty}\Omega^j=\Omega$, and so the obtained distribution $P$ satisfies $P=\lim_{j\to\infty}P^j$. Since $\overline{C_{\mathrm{qs}}}=C_{\mathrm{qa}}$, it must be the case that $P(a,b|\alpha,\beta)\in C_{\mathrm{qa}}$.

\section{$\epsilon$-implementability and nonlocality: proof of Proposition~\ref{prop:epsilon_imp2quan}}
\label{app:epsilon_imp}
Let $\tilde{P}\in C_{\mathrm{qa}}$ be the minimizer of eq. (\ref{min_dist_as}), and consider the implementable, FV-friendly box channel $\tilde{\Omega}$ with distribution $\tilde{P}$. Let $\omega, O$ respectively denote a QFT state and an operator of norm smaller than or equal to $1$. Then we have that
\begin{align}
&\omega\circ(\Omega-\tilde{\Omega})(O)\nonumber\\
&=\omega\circ\sum_{a,b,\alpha,\beta}(P(a,b|\alpha,\beta)-\tilde{P}(a,b|\alpha,\beta))\Omega^A_\alpha\circ\Omega^B_\beta\circ D^{a g_A+bg_B}(O)\nonumber\\
&=\sum_{\alpha,\beta}Q(\alpha,\beta)\Delta P(a,b|\alpha,\beta)\omega_{\alpha,\beta}(O_{a,b}),
\end{align}
with
\begin{align}
&\Delta P(a,b|\alpha,\beta):=P(a,b|\alpha,\beta)-\tilde{P}(a,b|\alpha,\beta),\nonumber\\
&Q(\alpha,\beta):=\omega\circ \Omega^A_\alpha\circ\Omega^B_\beta(1),\nonumber\\
&\omega_{\alpha,\beta}(\bullet):=\frac{1}{Q(\alpha,\beta)}\omega\circ \Omega^A_\alpha\circ\Omega_\beta^B(\bullet),\nonumber\\
&O_{a,b}:=D^{a g_A+bg_B}(O)
\end{align}
Note that $Q(\alpha,\beta)$ is a distribution, that $\{\omega_{\alpha,\beta}\}_{\alpha,\beta}$ are normalized states and that $\|O_{ab}\|\leq 1,\forall a,b$. It follows that
\begin{align}
&|\omega\circ(\Omega-\tilde{\Omega})(O)|\leq \sum_{\alpha,\beta,a,b}|\Delta P(a,b|\alpha,\beta)|Q(\alpha,\beta)\nonumber\\
&\leq \max_{\alpha,\beta}\sum_{a,b}|\Delta P(a,b|\alpha,\beta)|=\|P-\tilde{P}\|=\mu.
\end{align}
Hence, $\Omega$ is $\mu$-implementable.

Next, suppose that $\Omega$ is $\epsilon$-implementable. Then, there exists an implementable channel $\tilde{\Omega}$ satisfying eq. (\ref{approx_O}) for all $O\in\pi(\hat{\W})$, $\|O\|\leq 1$. Now we apply Proposition \ref{prop:recovery_P} to $\Omega$: there exists a family of states $\omega_\lambda$, Weyl POVMs $M^A, M^B$, respectively localized in $\RR_+^A,\RR^B_+$, and displacement channels $\{D^{A,\alpha}\}_{\alpha}$, $\{D^{B,\beta}\}_{\beta}$, respectively localized in $\U^A, \U^B$, such that
\begin{equation}
P(a,b|\alpha,\beta)= \lim_{\lambda\to 0}\omega_\lambda\circ D^{A,\alpha}\circ D^{B,\beta}\circ \Omega(M^A_aM^B_b).
\end{equation}    
Define then
\begin{equation}
P_\lambda(a,b|\alpha,\beta):=\omega_\lambda\circ D^{A,\alpha}\circ D^{B,\beta}\circ \Omega(M^A_aM^B_b)
\end{equation}
and 
\begin{equation}
\tilde{P}_\lambda(a,b|\alpha,\beta):=\omega_\lambda\circ D^{A,\alpha}\circ D^{B,\beta}\circ \tilde{\Omega}(M^A_aM^B_b).
\end{equation}
By Proposition \ref{prop:FV_wirings_box}, we have that $\tilde{P}_\lambda(a,b|\alpha,\beta)\in C_{\mathrm{qa}}$. Moreover, from the definition of $\epsilon$-implementability it follows that
\begin{equation}
|P_\lambda(a,b|\alpha,\beta)-\tilde{P}_\lambda(a,b|\alpha,\beta)|=\left|\omega_\lambda\circ(\Omega-\tilde{\Omega})(M^A_{a|\alpha}M^B_{b|\beta})\right|\leq \epsilon.
\end{equation}
This implies that
\begin{equation}
\|P_\lambda-\tilde{P}_\lambda\|= \max_{\alpha,\beta}\sum_{a,b}|P_\lambda(a,b|\alpha,\beta)-\tilde{P}_\lambda(a,b|\alpha,\beta)|\leq \epsilon M^2,
\end{equation}
for all $\lambda$. Taking the limit $\lambda\to\infty$, we have that $\mu$ is bounded by the right-hand side of the equation above.

\section{Perturbations of quantum boxes: proof of Lemma \ref{lemma:non_loc}}
\label{app:non_loc}
To prove the lemma, we define the set of local correlations $C_l$ as the set of boxes $P$ admitting a local hidden variable model:
\begin{equation}
P(a,b|x,y)=\sum_\lambda p_\lambda P_A(a|x,\lambda)P_B(b|y,\lambda),
\end{equation}
for some probability distributions $p_\lambda,P_A, P_B$. This set is known to be a closed polytope, with $C_l\subset C_{\mathrm{qs}}$ \cite{rev_nl2}. Moreover, it is the case that $\mbox{span}(C_l)=\mbox{span}(C_{\mathrm{ns}})=:\mathbb{S}(M,N)$ \cite{neg_probs}. 

In any Bell scenario $(M,N)$, the local polytope is defined as:
\begin{equation}
\mathbb{S}(M,N)\cap \{P:\sum_{a,b}P(a,b|x,y)=1\}\cap\bigcap_{j=1}^n\{P:B_j(P)\leq K_j\},
\end{equation}
for some computable linear functionals $\{B_j\}_j$ with rational coefficients and rational numbers $\{K_j\}$, satisfying
\begin{equation}
\mbox{dim}\left(C_l\cap \{P:B_j(P)\leq K_j\}\right)=\mbox{dim}(C_l)-1.
\end{equation}
They are called facet inequalities \cite{rev_nl2}, and the condition above precludes the possibility that one of them is saturated by all extreme points of $C_l$. This implies that $P^I$ does not saturate any facet inequality, since $P^I$ can be expressed as a strict convex combination of all extreme points of $C_l$.

Now, consider the effect of perturbing $P^I$ by $\Delta P$, with
\begin{equation}
\Delta P\in \mathbb{S}(M,N),\sum_{a,b}\Delta P(a,b|\alpha,\beta)=0.
\label{delta_constr}
\end{equation}
Note that $P^I+\Delta P$ is normalized and satisfies $P^I+\Delta P\in \mathbb{S}(M,N)$. We are interested in finding the maximum real number $R(M,N)>0$ such that, for any $\Delta P(a,b|\alpha,\beta)$ with $\|\Delta P(a,b|\alpha,\beta)\|\leq R(M,N)$, it is the case that
\begin{equation}
P^I+\Delta P\in C_l.    
\end{equation}
From the above, it suffices to compute
\begin{align}
R_j(M,N):=&\min \|\Delta P\|,\nonumber\\
\mbox{such that }&B_j(P^I+\Delta P)=K_j,\nonumber\\
&\Delta P\in \mathbb{S}(M,N),\sum_{a,b}\Delta P(a,b|0,0)=1,
\end{align}
for $j\in\{1,...,n\}$: indeed, $R(M,N)$ corresponds to the minimum of $\{R_j(M,N)\}_j$. For each $j$, the calculation of $R_j(M,N)$ can be formulated as a linear program with rational entries \cite{linear_programming}, which implies that $R(M,N)\in \mathbb{Q}$ and that $R(M,N)$ is computable for any Bell scenario $(M,N)$.

We next argue that, for any box $P_\mu=(1-\mu) P+\mu P^I$, with $P\in C_{\mathrm{qa}}$, eq. (\ref{perturb_local}) holds. Let $\tilde{P}\in C_{\mathrm{ns}}$ satisfy $\|P_\mu-\tilde{P}\|\leq \mu R(M,N)$. Then we have that
\begin{align}
&\tilde{P}=P_\mu +\tilde{P}-P_\mu=(1-\mu)P+\mu\left(P^I+\frac{\tilde{P}-P_\mu}{\mu}\right).
\end{align}
Note that $\Delta P:=\frac{\tilde{P}-P_\mu}{\mu}$ satisfies the constraints (\ref{delta_constr}) and that $\|\Delta P\|\leq R(M,N)$: it follows that $P^I+\Delta P\in C_l$. The box $\tilde{P}$ is therefore a convex combination of $P\in C_{\mathrm{qa}}$ and $P^I+\Delta P\in C_l\subset C_{\mathrm{qa}}$. Since $C_{\mathrm{qa}}$ is convex, we arrive at $\tilde{P}\in C_{\mathrm{qa}}$.
\end{appendix}

\end{document}